\documentclass{llncs}
\usepackage{hyperref}
\usepackage{graphicx}
\usepackage{amssymb} 
\usepackage{url}

\usepackage{xspace}
\usepackage{graphicx}
\usepackage{url}
\usepackage{amsmath}
\usepackage{xcolor}
\usepackage{algorithm2e}
\usepackage{mathptmx}
\usepackage{tabularx}
\usepackage{subcaption}
\usepackage{array}

\usepackage{placeins}

\usepackage{wasysym}
\usepackage{wrapfig}

\usepackage{pgf}
\usepackage{tikz}
\usepackage{pgfplots}
\usetikzlibrary{matrix,trees,arrows,arrows.meta,fit,shapes,positioning}
\usetikzlibrary{decorations.pathmorphing,decorations.pathreplacing,decorations.markings,decorations} 


\urldef{\mailsa}\path|{alfred.hofmann, ursula.barth, ingrid.haas, frank.holzwarth,|
\urldef{\mailsb}\path|anna.kramer, leonie.kunz, christine.reiss, nicole.sator,|
\urldef{\mailsc}\path|erika.siebert-cole, peter.strasser, lncs}@springer.com|   


\newcommand{\kwd}[1]{\ensuremath{\mathsf{#1}}}

\newcommand{\Lit}{\ensuremath{\mathcal{L}}\xspace}
\newcommand{\ltrl}{\ensuremath{l}\xspace}

\newcommand{\update}{\oplus}

\newcommand{\Obl}{\ensuremath{\mathcal{O}}\xspace}

\newcommand{\Obls}{\circledcirc}
\newcommand{\Ob}{\ensuremath{\odot}}


\newcommand{\ltrls}{\ensuremath{L}\xspace}
\newcommand{\set}[1]{\{ #1 \}}
\newcommand{\seq}[1]{( #1 )}

\newcommand{\exe}{\ensuremath{\epsilon}}
\newcommand{\Exe}[1]{\Sigma(#1)}

\newcommand{\trace}{\ensuremath{\theta}}
\newcommand{\traces}[1][]{%
  \@ifmtarg{#1}%
  {\ensuremath{\Theta}}%
  {\ensuremath{\Theta}(#1)}%
  }
  

\newcommand{\PCompliance}[3][]{#2%
  \;\vdash^{\hspace{-0.1cm}\raisebox{0.5mm}{\tiny #1}}\; #3}
\newcommand{\NPCompliance}[3][]{#2%
  \;\not\vdash^{\hspace{-0.1cm}\raisebox{0.5mm}{\tiny #1}}\; #3}

\newcommand{\red}[1]{{\color{red} #1}}


\newcolumntype{?}{!{\vrule width 1pt}}

\newcolumntype{L}[1]{>{\raggedright\let\newline\\\arraybackslash\hspace{0pt}}m{#1}}
\newcolumntype{C}[1]{>{\centering\let\newline\\\arraybackslash\hspace{0pt}}m{#1}}
\newcolumntype{R}[1]{>{\raggedleft\let\newline\\\arraybackslash\hspace{0pt}}m{#1}}

\addtolength{\topmargin}{-1\baselineskip}
\addtolength{\textheight}{2\baselineskip}
\linespread{0.95}

\newtheorem{assumption}{Assumption}

\hyphenation{over-node}

\begin{document}

\pgfdeclarelayer{background}
\pgfdeclarelayer{foreground}
\pgfsetlayers{background,main,foreground}

\markboth{\LaTeXe{} Class for Lecture Notes in Computer
Science}{\LaTeXe{} Class for Lecture Notes in Computer Science}
\title{Business Process Full Compliance with Respect to a Set of Conditional Obligation in Polynomial Time}


%
%
\author{Silvano Colombo Tosatto\and Guido Governatori \and Nick Van Beest}


\authorrunning{Silvano Colombo Tosatto \and Guido Governatori \and Nick Van Beest}

\institute{Data61, CSIRO, Dutton Park, Australia\\
\email{\{silvano.colombotosatto;guido.governatori;nick.vanbeest\}@data61.csiro.au}}

%
%

\toctitle{Lecture Notes in Computer Science}
\tocauthor{Authors' Instructions}
\maketitle

\begin{abstract}
In this paper, we present a new methodology to evaluate whether a business process model is fully
compliant with a regulatory framework composed of a set of conditional obligations. The methodology
is based failure $\Delta$-constraints that are evaluated on bottom-up aggregations of a tree-like representation of business process
models.
While the generic problem of proving full compliance is in co{\bf NP}-complete, we show that
verifying full compliance can be done in polynomial time using our methodology, for an acyclic
structured process model given a regulatory framework composed by a set of conditional obligations,
whose elements are restricted to be represented by propositional literals.

\keywords{Regulatory Compliance, Business Process Models, Computational Complexity}
\end{abstract}

\section{Introduction}

Business process models are often used by companies and organisations to compactly represent how their business goals and objectives are achieved. It is, therefore, vital for the stakeholders to show that such business process models comply with the relevant regulations in place.

Various approaches have been proposed in the past, proving whether business process model are compliant with a regulatory framework~\cite{knuplesch2010enabling,elgammal2016formalizing,hoffmann11compliance,Governatori2013RegorousAB}. However, as shown by Colombo Tosatto et al.~\cite{10.1109/TSC.2014.2341236}, the generic problem of proving regulatory compliance is complex enough to be in \textbf{NP}-complete. Colombo Tosatto et al.~\cite{BPM2019_complexity} also show that when considering the problem of proving \emph{partial compliance} of business process models (i.e. checking whether the model contains at least one compliant execution), the majority of the variants of the problem are indeed in \textbf{NP}-complete. 

In this paper, we consider the problem of proving \emph{full compliance} of a business process, to ensure that every execution of the model is compliant with the given regulatory framework.
While the generic problem of proving full compliance is in co\textbf{NP}-complete~\cite{10.1109/TSC.2014.2341236}, we show that verifying full compliance can be done in polynomial time for an acyclic structured process model given a regulatory framework composed by a set of conditional obligations, whose elements are restricted to be represented by propositional literals.

The proposed approach is based on verifying an alternative representation of the constraints defined by the regulatory frameworks, referred to as failure $\Delta$-constraints by Colombo Tosatto et al.~\cite{Mirel2019_delta_c}. These constraints are evaluated on a tree-like structure representing the business process model, adopting a \emph{divide \& conquer} strategy by independently evaluating the leaves and cumulatively aggregating the results towards the root of the process tree.

The remainder of the paper is structured as follows: Section \ref{sec:problem} introduces the studied variant of the compliance problem. Section \ref{sec:deltaconstraints} introduces the failure $\Delta$-constraints. Subsequently, the basic principles of the evaluation procedure is described in Section~\ref{sec:stem}. Next, the aggregation of the results towards the root of the process tree is presented in Section~\ref{sec:aggregations}, followed by a conclusion of the work in Section~\ref{sec:conclusion}.

\section{Proving Regulatory Compliance}\label{sec:problem}

In this preliminary section, we introduce the problem of proving regulatory compliance of business process models. As the goal of this paper is to provide a solution, which is in polynomial time with respect to the size of the problem, we focus on a simplified variant rather than the general problem. Similar to every regulatory compliance problem, this variant is composed by two distinct components, which we introduce separately. First, we introduce the business process model, describing the possible executions capable of achieving an objective. Second, we introduce the regulatory framework, describing the compliance requirements that the executions of a process model must follow.

\subsection{Acyclic Structured Business Processes}

We focus our approach on solving a variant of the problem involving acyclic structured process models. We limit our approach to structured process models, as the soundness\footnote{A process is sound, as defined by van der Aalst \cite{Aalst:1997:VWN:647744.733919,Aalst1998workflow}, if it avoids livelocks and deadlocks.} of such models can be verified in polynomial time with respect to their size. These processes are similar to structured workflows defined by Kiepuszewski et al. \cite{Kiepuszewski:2000:SWM:646088.679917}. An additional advantage of limiting ourselves to deal with these kind of process models is that their structure is composed by properly nested components\footnote{Considering a structured process model as a set of nested components, by properly nested we mean that every component, except the outer process model itself, always has exactly one parent compoent.}, can be exploited to navigate the structure using an alternative \emph{tree-like} representation.

Finally, the advantage of considering acyclic process models, is that the number of possible executions of a model is always finite. When considering process models including cycles, on the other hand, unless an upper limit of iteration is enforced on a cycle, a cycle can be run any number of times and to each number of possible iterations corresponds a potential execution of the model. While an infinite execution does not make sense, as to be considered a possible execution of the model, it must be allowed to terminate. It is still easy to see how such cycles can lead to models containing an infinite amount of executions. In this paper, we focus on models foregoing such structures. 


\begin{definition}[Process Block]\label{def:pb}
A process block $B$ is a directed graph: the nodes are called \emph{elements} and the directed edges
are called \emph{arcs}. The set of elements of a process block are identified by the function $V(B)$
and the set of arcs by the function $E(B)$. The set of elements is composed of tasks and
coordinators. There are 4 types of coordinators: \kwd{and\_split}, \kwd{and\_join}, \kwd{xor\_split}
and \kwd{xor\_join}. Each process block $B$ has two distinguished nodes called the \emph{initial}
and \emph{final} element. The initial element has no incoming arc from other elements in $B$ and is
denoted by $b(B)$. Similarly the final element has no outgoing arcs to other elements in $B$ and is
denoted by $f(B)$.

A directed graph composing a process block is defined inductively as follows:
\begin{itemize}
\item A single task constitutes a process block. The task is both initial and final element of the block.
\item Let $B_1, \dots , B_n$ be distinct process blocks with $n > 1$:
\begin{itemize} 
\item $\kwd{SEQ}(B_1, \dots, B_n)$ denotes the process block with node set $\bigcup_{i=0}^{n}V(B_i)$ and edge set $\bigcup_{i=0}^{n}(E(B_i) \cup  \{(f(B_i), b(B_{i+1})): 1 \leq i < n\})$. \\The initial element of $\kwd{SEQ}(B_1, \dots, B_n)$ is $b(B_1)$ and its final element is $f(B_n)$.
\item $\kwd{XOR}(B_1, \dots, B_n)$ denotes the block with vertex set $\bigcup_{i=0}^{n} V(B_i) \cup \{\kwd{xsplit}, \kwd{xjoin}\}$ and edge set $\bigcup_{i=0}^{n}( E(B_i) \cup  \{(\kwd{xsplit}, b(B_i)), (f(B_i), \kwd{xjoin}): 1 \leq i \leq n\})$ where $\kwd{xsplit}$ and $\kwd{xjoin}$ respectively denote an \kwd{xor\_split} coordinator and an \kwd{xor\_join} coordinator, respectively. The initial element of $\kwd{XOR}(B_1, \dots, B_n)$ is $\kwd{xsplit}$ and its final element is $\kwd{xjoin}$.
\item $\kwd{AND}(B_1, \dots, B_n)$ denotes the block with vertex set $\bigcup_{i=0}^{n} V(B_i) \cup \{\kwd{asplit}, \kwd{ajoin}\}$ and edge set $\bigcup_{i=0}^{n}( E(B_i) \cup  \{(\kwd{asplit}, b(B_i)), (f(B_i), \kwd{ajoin}): 1 \leq i \leq n\})$ where $\kwd{asplit}$ and $\kwd{ajoin}$ denote an \kwd{and\_split} and an \kwd{and\_join} coordinator, respectively. The initial element of $\kwd{AND}(B_1, \dots, B_n)$ is $\kwd{asplit}$ and its final element is $\kwd{ajoin}$.
\end{itemize}
\end{itemize}

\end{definition}

By enclosing a process block as defined in Definition~\ref{def:pb} along with a \kwd{start} and \kwd{end} task in a sequence block, we obtain a \emph{structured process model}.
The effects of executing the tasks contained in a business process model are described using annotations as shown in Definition \ref{def:annpro}.

\begin{definition}[Annotated process]\label{def:annpro}
Let $P$ be a structured process and let $T$ be the set of tasks contained in $P$. An annotated process is a pair $(P, \kwd{ann})$, where \kwd{ann} is a function associating a consistent set of literals to each task in $T$: $\kwd{ann}:T\mapsto 2^{\Lit}$. The function $\kwd{ann}$ is constrained to the consistent literals sets in $2^\Lit$.
\end{definition}

The update between the states of a process during its execution is inspired by the AGM belief revision operator~\cite{Alchourron1985-ALCOTL-2} and is used in the context of business processes to define the transitions between states, which in turn are used to define the \emph{traces}.

\begin{definition}[State update]\label{def:litupd}
Given two consistent sets of literals $\ltrls_1$ and $\ltrls_2$, representing the process state and the annotation of a task being executed, the
  update of $\ltrls_1$ with $\ltrls_2$, denoted by $\ltrls_1\update \ltrls_2$ is a set of literals defined as
  follows:
$$ \ltrls_1\update \ltrls_2 = \ltrls_1\setminus\set{\overline{\ltrl}\mid
  \ltrl\in \ltrls_2}\cup \ltrls_2
$$
\end{definition}

\begin{definition}[Executions and Traces]\label{def:ser}
Let $\mathbb{P}_1 = (\mathcal{S}_1, \prec_{s_1})$ and $\mathbb{P}_2 = (\mathcal{S}_2, \prec_{s_2})$ be partial ordered sets, we define the following four operations:
\begin{itemize}
\item Union: $\mathbb{P}_1 \cup_\mathbb{P} \mathbb{P}_2 = (\mathcal{S}_1 \cup \mathcal{S}_2, \prec_{s_1} \cup \prec_{s_2})$, where $\cup$ is the disjoint union.
\item Intersection:  $\mathbb{P}_1 \cap_\mathbb{P} \mathbb{P}_2 = (\mathcal{S}_1 \cap \mathcal{S}_2, \prec_{s_1} \cap \prec_{s_2})$
\item Concatenation: $\mathbb{P}_1 +_\mathbb{P} \mathbb{P}_2 = (\mathcal{S}_1 \cup \mathcal{S}_2, \prec_{s_1} \cup \prec_{s_2} \cup \{s_1 \prec s_2 | s_1 \in \mathcal{S}_1 \mbox{ and } s_2 \in \mathcal{S}_2\})$.
\item Linear Extensions: $\mathcal{E}(\mathbb{P}_1) = \{(\mathcal{S}, \prec_s)| \mathcal{S} = \mathcal{S}_1, (\mathcal{S}, \prec_s) \mbox{ is a sequence and }$ $\prec_{s_1}$ $\subseteq$ $\prec_s\}$.
\end{itemize}
The \emph{associative} property holds for Union, Intersection and Concatenation.

Given a structured process model identified by a process block $B$, the set of its executions, written $\Exe{B} = \{\exe | \exe \mbox{ is a sequence and is an execution of $B$}\}$. The function $\Exe{B}$ is defined as follows:

\begin{enumerate}
\item If $B$ is a task  $t$, then $\Exe{B} = \{(\{t\}, \emptyset)\}$
\item if $B$ is a composite block with subblocks $B_1,\dots,B_n$ let $\exe_i$ be the projection of $\exe$ on block $B_i$ (obtained by ignoring all tasks which do not belong to $B_i$)
\begin{enumerate}
\item If $B = \kwd{SEQ}(B_1, \dots, B_n)$, then $\Exe{B} = \{\exe_1 +_\mathbb{P} \dots +_\mathbb{P}  \exe_n | \exe_i \in \Exe{B_i}\}$ 
\item If $B = \kwd{XOR}(B_1, \dots, B_n)$, then $\Exe{B} = \Exe{B_1} \cup \dots \cup \Exe{B_n}$
\item If $B = \kwd{AND}(B_1, \dots, B_n)$, then $\bigcup_{\exe_1,\dots,\exe_n  } \mathcal{E}(\exe_1\cup_\mathbb{P} \dots \cup_\mathbb{P} \exe_n | \forall\exe_i \in \Exe{B_i})$
\end{enumerate}
\end{enumerate}

Given an annotated process $(B, \kwd{ann})$ and an execution $\exe = \seq{t_1, \ldots, t_n}$ such that $\exe \in \Exe{B}$. A trace $\trace$ is a finite sequence of states: $\seq{ \sigma_1, \ldots ,\sigma_n}$. Each state of $\sigma_i \in \trace$ contains a set of literals $\ltrls_i$ capturing what holds after the execution of a task $t_i$.  Each $\ltrls_i$ is a set of literals such that:

  \begin{enumerate}
  \item $\ltrls_0 = \emptyset$
  \item $\ltrls_{i+1} = \ltrls_i\update\kwd{ann}(t_{i+1})$, for $1\le i < n$.

  \end{enumerate}

To denote the set of possible traces resulting from an process model $(B,\kwd{ann})$, we use $\Theta(B,\kwd{ann})$.
\end{definition}


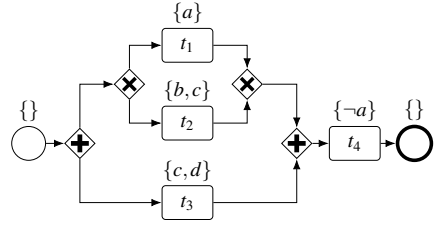
\begin{wrapfigure}{r}{6cm}
\centering
\scalebox{0.45}{\begin{tikzpicture}[thick,font=\LARGE]

\newcommand{\gatesize}{0.9cm}

\node[draw,align=center,circle, minimum size=1cm] (start) at (0,0) {};

\node[xshift=1cm,draw,diamond,align=center,minimum height=\gatesize,minimum width=\gatesize] (and1) at (start.east) {};
\draw[line width=4pt] ([yshift=-2mm]and1.north) -- ([yshift=2mm]and1.south);
\draw[line width=4pt] ([xshift=-2mm]and1.east) -- ([xshift=2mm]and1.west);

\node[xshift=1cm,yshift=1.75cm,draw,diamond,align=center,minimum height=\gatesize,minimum width=\gatesize] (xor1) at (and1.east) {};
\draw[line width=3pt] ([xshift=0.5mm,yshift=-0.5mm]xor1.north west) -- ([xshift=-0.5mm,yshift=0.5mm]xor1.south east);
\draw[line width=3pt] ([xshift=-0.5mm,yshift=-0.5mm]xor1.north east) -- ([xshift=0.5mm,yshift=0.5mm]xor1.south west);

\node[xshift=1.25cm,yshift=1.15cm,draw,rounded corners=3pt,align=center,minimum height=1cm,minimum width=1.5cm] (t1) at (xor1.east) {$t_1$};
\node[xshift=1.25cm,yshift=-1.15cm,draw,rounded corners=3pt,align=center,minimum height=1cm,minimum width=1.5cm] (t2) at (xor1.east) {$t_2$};
\node[yshift=-1.75cm,draw,rounded corners=3pt,align=center,minimum height=1cm,minimum width=1.5cm] (t3) at (t2 |- and1) {$t_3$};

\node[xshift=1cm,draw,diamond,align=center,minimum height=\gatesize,minimum width=\gatesize] (xor2) at (t1.east |- xor1) {};
\draw[line width=3pt] ([xshift=0.5mm,yshift=-0.5mm]xor2.north west) -- ([xshift=-0.5mm,yshift=0.5mm]xor2.south east);
\draw[line width=3pt] ([xshift=-0.5mm,yshift=-0.5mm]xor2.north east) -- ([xshift=0.5mm,yshift=0.5mm]xor2.south west);

\node[xshift=1cm,draw,diamond,align=center,minimum height=\gatesize,minimum width=\gatesize] (and2) at (xor2.east |- and1) {};
\draw[line width=4pt] ([yshift=-2mm]and2.north) -- ([yshift=2mm]and2.south);
\draw[line width=4pt] ([xshift=-2mm]and2.east) -- ([xshift=2mm]and2.west);

\node[xshift=1.25cm,draw,rounded corners=3pt,align=center,minimum height=1cm,minimum width=1.5cm] (t4) at (and2.east) {$t_4$};

\node[xshift=1cm,draw,line width=3pt,align=center,circle, minimum size=1cm] (end) at (t4.east) {};

\node[anchor=south] at (start.north) {\{\}};
\node[anchor=south] at (t1.north) {\{$a$\}};
\node[anchor=south] at (t2.north) {\{$b,c$\}};
\node[anchor=south] at (t3.north) {\{$c,d$\}};
\node[anchor=south] at (t4.north) {\{$\neg a$\}};
\node[anchor=south] at (end.north) {\{\}};

\draw[->,-{Latex[length=3mm]}] (start) -- (and1);
\draw[->,-{Latex[length=3mm]}] (and1) -- (and1 |- xor1) -- (xor1);
\draw[->,-{Latex[length=3mm]}] (and1) -- (and1 |- t3) -- (t3);
\draw[->,-{Latex[length=3mm]}] (xor1) -- (xor1 |- t1) -- (t1);
\draw[->,-{Latex[length=3mm]}] (xor1) -- (xor1 |- t2) -- (t2);
\draw[->,-{Latex[length=3mm]}] (t1) -- (xor2 |- t1) -- (xor2);
\draw[->,-{Latex[length=3mm]}] (t2) -- (xor2 |- t2) -- (xor2);
\draw[->,-{Latex[length=3mm]}] (xor2) -- (and2 |- xor2) -- (and2);
\draw[->,-{Latex[length=3mm]}] (t3) -- (and2 |- t3) -- (and2);
\draw[->,-{Latex[length=3mm]}] (and2) -- (t4);
\draw[->,-{Latex[length=3mm]}] (t4) -- (end);

\end{tikzpicture}}
\caption{An annotated process}\label{ex:processexample}
\label{f:p03}
\vspace{-18pt}
\end{wrapfigure}
\vspace{-12pt}
\textit{\example Annotated Process Model}. Fig. \ref{f:p03} shows a structured process containing four tasks labelled $t_1, t_2, t_3$ and $ t_4$ and their annotations. The process contains an \kwd{AND} block followed by a task and an \kwd{XOR} block nested within the \kwd{AND} block. The annotations indicate what has to hold after a task is executed. If $t_1$ is executed, then the literal $a$ has to hold in that state of the process.


\begin{table*}
\center
\begin{tabular}{ccc}
$\Exe{B}$  &\vline& $\Theta(B, \kwd{ann})$ \\\hline
\small$(\kwd{start},t_1, t_3, t_4,\kwd{end})$&\vline  & \small$((\kwd{start}, \emptyset), (t_1, \{a\}), (t_3, \{a,c,d\}), (t_4, \{\neg a,c,d\}), (\kwd{end}, \{\neg a,c,d\}))$\\
\small$(\kwd{start},t_2, t_3, t_4\kwd{end})$&\vline &  \small$((\kwd{start}, \emptyset), (t_2, \{b,c\}), (t_3,\{b,c,d\}), (t_4,\{\neg a,b,c,d\}), (\kwd{end},\{\neg a,b,c,d\}))$ \\
\small$(\kwd{start},t_3, t_1, t_4\kwd{end})$&\vline & \small$((\kwd{start}, \emptyset), (t_3, \{c,d\}), (t_1, \{a,c,d\}), (t_4,\{\neg a,c,d\}), (\kwd{end},\{\neg a,c,d\}))$ \\
\small$(\kwd{start},t_3, t_2, t_4\kwd{end})$&\vline & \small$((\kwd{start}, \emptyset), (t_3, \{c,d\}), (t_2, \{b,c,d\}), (t_4.\{\neg a,b,c,d\}), (\kwd{end},\{\neg a,b,c,d\}))$
\vspace{0.3cm}
\end{tabular}
\caption{Executions and Traces of the annotated process in Fig. \ref{f:p03}.}\label{tab:01}
\vspace{-18pt}
\end{table*}


\subsection{Regulatory Framework}\label{s:basic}

We now introduce now the regulatory framework, the requirements that a business process model needs to follow to be considered compliant. We aim to solve a variant of the problem, hence we introduce in this section conditional obligations. Considering that a regulatory framework is composed by a set of them, then an execution is required to comply with every obligation belonging to the set. One of the limitations we are applying to the regulatory framework is that the components defining the obligations are limited to propositional literals. We use a subset of Process Compliance Logic, introduced by Governatori and Rotolo~\cite{PCL}, to describe the conditional obligations composing the regulatory framework.

\begin{definition}[Conditional Obligation]\label{def:obligation} A local obligation $\Ob$ is a tuple $\langle o,r,t,d\rangle$, where $o\in\{a,m\}$ and represents the type of the obligation. The elements $c, t$ and $d$ are propositional literals in $\mathcal{L}$. $r$ is the requirement of the obligation, $t$ is the trigger of the obligation and $d$ is the deadline of the obligation. 

We use the notation $\Ob = \Obl^o \langle r, t, d\rangle$ to represent a conditional obligation.
\end{definition}

Recalling one of the limitations of the variant of the problem being solved in this paper, the elements of a conditional obligation are limited to be represented by propositional literals. Thus, the \emph{requirement}, \emph{trigger}, and \emph{deadline} of a conditional obligation are represented using propositional literals. These elements are evaluated against the states of a trace, as illustrated for instance in Table \ref{tab:01}, as the elements are propositional literals, they are considered to be positively satisfied if they are contained in a state.

Given a trace of a business process model and a conditional obligation, if a state of the trace satisfies the the obligation's triggering element, then the obligation is set in force. Additionally, when an obligation is in force over a trace, and the deadline element is satisfied by a state of a trace, then conditional obligation ceases to be in force. Figure \ref{f:obl} graphically illustrates the in force interval of a conditional obligation over a trace of a business process model. 

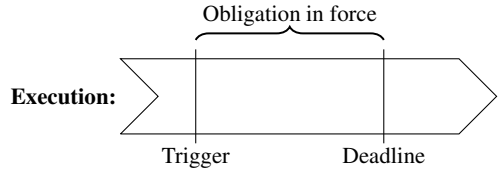
\begin{figure}[h]
\centering
\scalebox{0.5}{
\begin{tikzpicture}[thick, font=\LARGE]

\linespread{0.75}
\newcommand{\minsize}{1.5cm}
\newcommand{\wi}{10cm}
\newcommand{\hi}{2cm}
\newcommand{\incline}{1cm}
\newcommand{\trigger}{0.2*\wi}
\newcommand{\deadline}{0.7*\wi}

\draw (0,0) -- ([xshift=-\incline]\wi,0) -- (\wi, 0.5*\hi) -- ([xshift=-\incline]\wi,\hi) -- (0,\hi) -- (\incline,0.5*\hi) -- (0,0);
\node[anchor=east] at (0,0.5*\hi) {\bf Execution:};

\draw ([yshift=0.25cm]\trigger,\hi) -- (\trigger,-0.25);
\node[anchor=north] at (\trigger,-0.25) {Trigger};

\draw ([yshift=0.25cm]\deadline,\hi) -- (\deadline,-0.25);
\node[anchor=north] at (\deadline,-0.25) {Deadline};

\draw [decorate,decoration={brace,amplitude=10pt},line width=1.5pt] ([yshift=0.5cm]\trigger,\hi) -- ([yshift=0.5cm]\deadline,\hi);

\node[yshift=0.75cm,anchor=south] at (0.5*\deadline+0.5*\trigger,\hi) {Obligation in force};

\end{tikzpicture}
}
\caption{Obligation in force}\label{f:obl}
\end{figure}


Notice that when a conditional obligation is in force, then the requirement element of the obligation is evaluated within the in force interval to determine whether the obligation is satisfied or violated in that particular in force interval. How the requirement element is evaluated depends on the type of the obligation. We consider two types of obligations, \emph{achievement} and \emph{maintenance}:

\begin{description}
\item[Achievement] When this type of obligation is in force, the requirement specified by the regulation must be satisfied by at least one state within the in force interval, in other words, before the deadline is satisfied. When this is the case, the obligation in force is considered to be satisfied, otherwise it is violated.
\item[Maintenance] When this type of obligation is in force, the requirement must be satisfied \emph{continuously} in every state of the in force interval, until the deadline is satisfied. Again, if this is the case, the obligation in force is then satisfied, otherwise it is violated.
\end{description}

Before proceeding with the formal introduction of the different types of obligations, we introduce two syntactical shorthands to keep the subsequent definitions more compact.

\begin{definition}[Syntax Shorthands]\label{def:abbr}
To avoid cluttering, we adopt the following shorthands:
\begin{itemize}
\item $\sigma \in \trace \mbox{ such that } \sigma \models l$ is abbreviated as: $\sigma_l$
\item A task-state pair appearing in a trace: $(t, \sigma)$ such that $l \in \kwd{ann}(t)$, is abbreviated as: $\kwd{contain}(l, \sigma)$
\end{itemize}
\end{definition}

Notice that an \emph{in force} interval instance of an obligation, having $l$ as its trigger, is always started from a state $\sigma$, where $\kwd{contain}(l, \sigma)$ is true. Therefore, potentially multiple in force intervals of an obligation can co-exist at the same time. However, multiple in force intervals can be fulfilled by a single event happening in a trace, as shown in the following definitions.


An achievement obligation is fulfilled by a trace if the requirement holds at least in one of the trace states when the obligation is in force.

 \begin{definition}[Comply with Achievement]\label{def:laof}
Given an achievement obligation $\Obl^a$ $\langle r,$ $t,$ $d\rangle$ and a trace $\trace$, $\trace$ is compliant with $\Obl^a \langle r, t, d\rangle$ if and only if:
$\forall \sigma_t,$ $\exists \sigma_r | \kwd{contain}(t, \sigma_t)$ $and$ $\sigma_t \preceq \sigma_r$ $and$ $\neg \exists \sigma_d | \sigma_t \preceq \sigma_d \prec \sigma_r$.

A trace not complying with an achievement obligations is considered to violate it.
\end{definition}

A maintenance obligation, on the other hand, is fulfilled if the requirement holds in each of the states where the obligation is in force.

\begin{definition}[Comply with Maintenance]\label{def:lmof}
Given a maintenance obligation $\Obl^m$ $\langle r,$ $t,$ $d\rangle$ and a trace $\trace$, $\trace$ is compliant with $\Obl^m \langle r, t, d\rangle$ if and only if:
$\forall \sigma_t,$ $\exists \sigma_d | \kwd{contain}(t, \sigma_t)$ $\mbox{ and }$ $\sigma_t \preceq \sigma_d \mbox{ and } \forall \sigma | \sigma_t \preceq \sigma \preceq \sigma_d, r \in \sigma$.

A trace not complying with a maintenance obligations is considered to violate it.
\end{definition}

Notice that according to Definition \ref{def:laof} and Definition \ref{def:lmof}, an obligation can be still in force when the end of a trace is reached, in other words, none of the tasks before and including the last one in a trace contained the \emph{deadline condition} of the obligation being evaluated. While this can be a viable scenario in the real world, in this paper we are interested on verifying whether a trace is compliant with an obligation, hence we force the semantics of the tasks belonging to a trace to always allow that. More precisely, we assume that the last task of a trace always satisfies the \emph{deadline} of an obligation in force, in this way we can always decide at the end of a trace whether an obligation is fulfilled or violated. Finally, notice that according to how we defined \emph{structured process models}, the last task in a trace is always the \emph{end task}.

\begin{assumption}[Final Deadline]\label{ass:final_deadline}
Given a structured process model $(P, \kwd{ann})$ and an obligation $\Obl^o$ $\langle r,$ $t,$ $d\rangle$. For every $\trace \in \Theta(P, \kwd{ann})$:
\begin{itemize}
\item if $e$ is the last task in $\trace$, then $d \in \kwd{ann}(e)$. 
\end{itemize}
\end{assumption}

\subsubsection{Process Compliance}

Considering a trace to be compliant with a regulatory framework, if it complies with every obligation belonging to the set composing the framework. A business process model can be compliant with a regulatory framework to various degrees, depending on how many of the traces of the model comply with the regulatory framework. A process model is said to be fully compliant if every trace of the process is compliant with the regulatory framework. A process model is partially compliant if there exists at least one trace compliant with the regulatory framework, and not compliant if there is no trace complying with the framework.

\begin{definition}[Process Compliance]$\:$\label{def:pcompliance}
Given a process $(P, \kwd{ann})$ and a regulatory framework composed by a set of obligations $\Obls$, the compliance of $(P, \kwd{ann})$ with respect to $\Obls$ is determined as follows:
  \begin{itemize}
 \item \textbf{Full compliance} $\PCompliance[F]{(P, \kwd{ann})}{\Obls}$ if and only if\\ $\forall \trace \in \Theta[(P, \kwd{ann})], \PCompliance{\trace}{\Obls}$.
 \item\textbf{Partial compliance} $\PCompliance[P]{(P, \kwd{ann})}{\Obls}$ if and only if\\ $\exists \trace \in \Theta[(P, \kwd{ann})], \PCompliance{\trace}{\Obls}$.
  \item\textbf{Not compliant} $\NPCompliance{(P, \kwd{ann})}{\Obls}$ if and only if\\ ${\neg \exists} \trace \in \Theta[(P, \kwd{ann})], \PCompliance{\trace}{\Obls}$.
  \end{itemize}
\end{definition}

Notice that in this paper we focus on verifying whether a given business process model is \emph{fully} compliant with a regulatory framework. In particular, the approach relies on verifying whether the complementary condition of full compliance is true: in other words, whether a business process model contains a trace not compliant with the regulatory framework. This is equivalent to finding a trace in the process model violating at least an obligation composing the regulatory framework.


\section{Failure $\Delta$-constraints}\label{sec:deltaconstraints}

As full compliance of a business process model can be verified by checking whether it contains a trace violating at least one of the obligations composing the regulatory framework, we formally introduce the definitions for violating the different types of obligations.
For each of the two types of obligations, achievement and maintenance, we define a set of \emph{Failure $\Delta$-constraints}, an alternative representation of the requirements required to be satisfied by a trace to violate a given obligation. These sets of constraints are the ones going to be verified by the approach proposed in the paper.

Before introducing the \emph{Failure $\Delta$-constraints} we need to consider an assumption, as such constraints are verifying the complement of the fulfilment condition of the obligation being checked. In order for the complementary condition of an obligation to be correctly verified, it is necessary that the complement of the requirement condition belongs to the state in order to satisfy the \emph{Failure $\Delta$-constraints}. As it is not always the case that the complementary condition is introduced by the tasks executed in a trace, we assume that the initial task contain the complementary required condition. While we are aware that introducing a syntactical representation for missing literals in a state would have provided the same result, we opted for this solution as it does not introduce additional notation.

\begin{assumption}[Complementary Start]\label{ass:starting_complementary}
Given a structured process model $(P, \kwd{ann})$ and an obligation $\Obl^o$ $\langle r,$ $t,$ $d\rangle$. For every $\trace \in \Theta(P, \kwd{ann})$:
\begin{itemize}
\item if $s$ is the first task in $\trace$, then $\{\neg r\} \in \kwd{ann}(s)$.
\end{itemize}
\end{assumption}

\subsection{Achievement}

We describe in the following, Definition \ref{def:laof_fail}, the complement of Definition \ref{def:laof}, the failure condition for an achievement conditional obligation.

\begin{definition}[Achievement Failure]\label{def:laof_fail}
Given an achievement obligation $\Obl^a \langle r, t, d\rangle$ and a trace $\trace$, $\trace$ is not compliant with $\Obl^a \langle r, t, d\rangle$ if and only if:
$\exists \sigma_t,$ $\sigma_d | \kwd{contain}(t, \sigma_t)$ $\mbox{ and }$ $\sigma_t \preceq \sigma_d \mbox{ and } \neg \exists \sigma_r | \sigma_t \preceq \sigma_r \preceq \sigma_d$.
\end{definition}

In the following definition we translate the requirements described in Definition \ref{def:laof_fail} into its Failure $\Delta$-constraints representation.

\begin{definition}[Achievement Failure $\Delta$-constraints]\label{def:af_delta}

Given a conditional achievement obligation $\Obl^a \langle r, t, d\rangle$ and an execution $\exe$, $\exe$ is compliant with $\Obl^a \langle r, t, d\rangle$, if and only if one of the following conditions is satisfied:

\bigskip
\noindent\begin{tabular}{@{}ll}
$\overline{A}\Delta1$ & $\exists t_t$ such that: \\
& $\exists t_{\neg r}, t_d | t_{\neg r} \preceq t_t \preceq t_d$ and \\
& $\neg \exists t_{r} | t_{\neg r} \preceq t_{r} \preceq t_d$ and \\
& $\exists t_{\neg d}, \neg \exists t_d | t_{\neg d} \preceq t_d \preceq t_t$ \\
\end{tabular}

\bigskip
\noindent\begin{tabular}{@{}ll}
$\overline{A}\Delta2$ & $\exists t_t$ such that: \\
& $\exists t_{\neg r}, \neg \exists t_{r}  | t_{\neg r} \preceq t_{r} \preceq t_t$ and \\
& $\exists t_d, \neg \exists t_{\neg d} | t_d \preceq t_{\neg d} \preceq t_t$ \\
\end{tabular}


\end{definition}

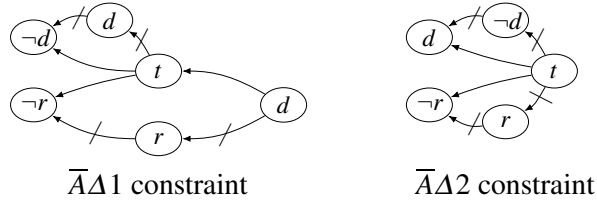
\begin{figure}[h!]
\centering
\scalebox{0.45}{
\begin{tikzpicture}[thick, font=\huge]

\node[ellipse,draw,align=center,minimum size=1cm,text width=0.75cm] (t1) at (0,0) {$t$};
\node[xshift=-3cm,yshift=1cm,ellipse,draw,align=center,minimum size=1cm,text width=0.75cm] (notd1) at (t1.west) {$\neg d$};
\node[xshift=-3cm,yshift=-1cm,ellipse,draw,align=center,minimum size=1cm,text width=0.75cm] (notc1) at (t1.west) {$\neg r$};
\node[xshift=3cm,ellipse,draw,align=center,minimum size=1cm,text width=0.75cm] (d1a) at (t1.east |- notc1) {$d$};
\node[yshift=-2cm,ellipse,draw,align=center,minimum size=1cm,text width=0.75cm] (c1) at (t1) {$r$};
\node[xshift=-0.75cm,yshift=1.5cm,ellipse,draw,align=center,minimum size=1cm,text width=0.75cm] (d1b) at (t1.west) {$d$};

\draw[->,-{Latex[length=2.5mm]}] (t1) to[out=180,in=330] (notd1);
\draw[->,-{Latex[length=2.5mm]}] (t1) to[out=120,in=330] (d1b) node[xshift=0.125cm,yshift=-0.25cm] {$\not$};
\draw[->,-{Latex[length=2.5mm]}] (d1b) to[out=170,in=30] (notd1) node[xshift=0.5cm,yshift=0.25cm] {$\not$};

\draw[->,-{Latex[length=2.5mm]}] (t1) to[out=190,in=20] (notc1);
\draw[->,-{Latex[length=2.5mm]}] (d1a) to[out=150,in=0] (t1);
\draw[->,-{Latex[length=2.5mm]}] (d1a) to[out=210,in=0] (c1) node[xshift=1cm,yshift=0.125cm] {$\not$};
\draw[->,-{Latex[length=2.5mm]}] (c1) to[out=180,in=330] (notc1) node[xshift=1cm,yshift=-0.5cm] {$\not$};

\node[xshift=11cm,ellipse,draw,align=center,minimum size=1cm,text width=0.75cm] (t2) at (t1.east) {$t$};
\node[xshift=-3cm,yshift=1cm,ellipse,draw,align=center,minimum size=1cm,text width=0.75cm] (d2) at (t2.west) {$d$};
\node[xshift=-3cm,yshift=-1cm,ellipse,draw,align=center,minimum size=1cm,text width=0.75cm] (notc2) at (t2.west) {$\neg r$};
\node[xshift=-0.75cm,yshift=1.5cm,ellipse,draw,align=center,minimum size=1cm,text width=0.75cm] (notd2) at (t2.west) {$\neg d$};
\node[yshift=-1.5cm,ellipse,draw,align=center,minimum size=1cm,text width=0.75cm] (c2) at (notd2 |- t2) {$r$};

\draw[->,-{Latex[length=2.5mm]}] (t2) to[out=170,in=340] (d2);
\draw[->,-{Latex[length=2.5mm]}] (t2) to[out=120,in=330] (notd2) node[xshift=0.125cm,yshift=-0.25cm] {$\not$};
\draw[->,-{Latex[length=2.5mm]}] (notd2) to[out=170,in=30] (d2) node[xshift=0.5cm,yshift=0.25cm] {$\not$};

\draw[->,-{Latex[length=2.5mm]}] (t2) to[out=190,in=20] (notc2);
\draw[->,-{Latex[length=2.5mm]}] (t2) to[out=240,in=30] (c2) node[xshift=0.5cm,yshift=0.125cm,rotate=90] {$\not$};
\draw[->,-{Latex[length=2.5mm]}] (c2) to[out=190,in=330] (notc2) node[xshift=0.5cm,yshift=-0.25cm] {$\not$};

\node[yshift=-0.75cm] (l1) at (c1.south) {\Huge $\overline{A}\Delta1$ constraint};
\node (l2) at (c2.south |- l1) {\Huge $\overline{A}\Delta2$ constraint};

\end{tikzpicture}
}
\caption{Achievement constraints}\label{f:aDC}
\end{figure}

Figure \ref{f:aDC} illustrates a graphical representation of the Failure $\Delta$-constraints for achievement obligations. The nodes represent annotations in the tasks and the arrows represent the ordering relations between the tasks that must be fulfilled by an execution of the process model to fulfil the Failure $\Delta$-constraint. A slashed arrow denotes the required absence of the respective element in the interval identified by the surrounding elements.

\begin{theorem}
Given an execution $\exe$, represented as a sequence of tasks, if $\exe$ satisfies either of the Achievement Failure $\Delta$-Constraints (Definition \ref{def:af_delta}), then $\exe$ satisfies the obligation related to the Achievement Failure $\Delta$-Constraints.
\end{theorem}

\subsection{Maintenance}

The translation to Failure $\Delta$-constraints is also applied to local maintenance obligations, in a similar was as for local achievement obligations.

Again, we first introduce the complement of local maintenance obligation fulfilment, Definition \ref{def:lmof}, in Definition \ref{def:lmof_fail}.

\begin{definition}[Maintenance Failure]\label{def:lmof_fail}
Given a conditional maintenance obligation $\Obl^m \langle r, t, d\rangle$ and a trace $\trace$, $\trace$ is not compliant with $\Obl^m \langle r, t, d\rangle$ if and only if:
$$\exists \sigma_t \forall \sigma_d | \kwd{contain}(t, \sigma_t) \mbox{ and } \sigma_t \preceq \sigma_d \mbox{ and } \exists \sigma_{\neg r} | \sigma_t \preceq \sigma_{\neg r} \preceq \sigma_d$$
\end{definition}

Definition \ref{def:mf_delta} illustrates how the conditions of Definition \ref{def:lmof_fail} can be represented as a set of Failure $\Delta$-constraints.

\begin{definition}[Maintenance Failure $\Delta$-constraints]\label{def:mf_delta}
Given a maintenance obligation\\ $\Obl^m \langle r, t, d\rangle$ and an execution $\exe$, $\exe$ is not compliant with $\Obl^m \langle r, t, d\rangle$ if and only if one of the following conditions is satisfied:

\bigskip
\noindent\begin{tabular}{@{}ll}
$\overline{M}\Delta1$ & $\exists t_t$ such that: \\
& $\exists t_{\neg r}, \neg \exists t_{r} | t_{\neg r} \preceq t_{r} \preceq t_t$ \\
\end{tabular}

\bigskip

\noindent\begin{tabular}{@{}ll}
$\overline{M}\Delta2$ & $\exists t_t$ such that: \\
& $\exists t_{r}, \neg \exists t_{\neg r} | t_{r} \preceq t_{\neg r} \preceq t_t$ and \\
& $\forall t_d (\exists t_{\neg r} | t_t \preceq t_{\neg r} \prec t_d)$ \\
\end{tabular}


\end{definition}

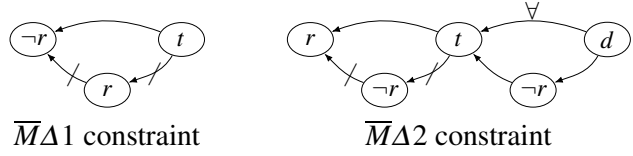
\begin{figure}[h!]
\centering
\scalebox{0.45}{
\begin{tikzpicture}[thick, font=\huge]

\node[ellipse,draw,align=center,minimum size=1cm,text width=0.75cm] (t1) at (0,0) {$t$};
\node[xshift=-1.5cm,yshift=-1.5cm,ellipse,draw,align=center,minimum size=1cm,text width=0.75cm] (c1) at (t1.west) {$r$};
\node[xshift=-1.5cm,ellipse,draw,align=center,minimum size=1cm,text width=0.75cm] (notc1) at (c1.west |- t1) {$\neg r$};

\draw[->,-{Latex[length=2.5mm]}] (t1) to[out=240,in=20] (c1) node[xshift=0.5cm,yshift=0.25cm] {$\not$};
\draw[->,-{Latex[length=2.5mm]}] (t1) to[out=160,in=20] (notc1);
\draw[->,-{Latex[length=2.5mm]}] (c1) to[out=160,in=315] (notc1) node[xshift=0.5cm,yshift=-0.625cm] {$\not$};

\node[xshift=7.5cm,ellipse,draw,align=center,minimum size=1cm,text width=0.75cm] (t2) at (t1.east) {$t$};
\node[xshift=-1.5cm,yshift=-1.5cm,ellipse,draw,align=center,minimum size=1cm,text width=0.75cm] (notc2) at (t2.west) {$\neg r$};
\node[xshift=-1.5cm,ellipse,draw,align=center,minimum size=1cm,text width=0.75cm] (c2) at (notc2.west |- t2) {$r$};
\node[xshift=1.5cm,ellipse,draw,align=center,minimum size=1cm,text width=0.75cm] (notc3) at (t2.east |- notc2) {$\neg r$};
\node[xshift=1.5cm,ellipse,draw,align=center,minimum size=1cm,text width=0.75cm] (d2) at (notc3.east |- t2) {$d$};

\draw[->,-{Latex[length=2.5mm]}] (t2) to[out=240,in=20] (notc2) node[xshift=0.5cm,yshift=0.25cm] {$\not$};
\draw[->,-{Latex[length=2.5mm]}] (t2) to[out=160,in=20] (c2);
\draw[->,-{Latex[length=2.5mm]}] (notc2) to[out=160,in=315] (c2) node[xshift=0.5cm,yshift=-0.625cm] {$\not$};

\draw[->,-{Latex[length=2.5mm]}] (notc3) to[out=160,in=315] (t2); 
\draw[->,-{Latex[length=2.5mm]}] (d2) to[out=240,in=20] (notc3); 
\draw[->,-{Latex[length=2.5mm]}] (d2) to[out=160,in=20] (t2) node[xshift=1.55cm,yshift=0.625cm] {$\forall$};

\node[yshift=-0.75cm] (l1) at (c1.south) {\Huge $\overline{M}\Delta1$ constraint};
\node (l2) at (t2.south |- l1) {\Huge $\overline{M}\Delta2$ constraint};

\end{tikzpicture}
}
\caption{Maintenance constraints}\label{f:mDC}
\end{figure}

\begin{theorem}
Given an execution $\exe$, represented as a sequence of tasks, if $\exe$ satisfies either of the Maintenance Failure $\Delta$-constraints (Definition \ref{def:mf_delta}), then $\exe$ violates the obligation related to the Maintenance Failure $\Delta$-constraints.
\end{theorem}

\subsubsection{Translation Complexity}

Translating a given conditional obligation in the corresponding set of failure $\Delta$-constraints can be done in constant time, since depending on the type of obligations, the associated failure $\Delta$-constraints just need to be instantiated with the parameters of the respective obligation.
\section{Approach: \emph{Stem Evaluation}}\label{sec:stem}

In this section, we introduce the approach verifying whether a business process model is fully compliant with a regulatory framework. The proposed approach evaluates the \emph{tree-like} representation according to the failure $\Delta$-constraints associated to each obligation of the framework.

\subsection{Process Tree Model}

We refer to the tree-like representation of a structured business process model as the \emph{Process Tree Model}, and it is defined as follows.

\begin{definition}[Process Tree Model]\label{def:process_tree}
Let $P$ be a structured process model. A \emph{Process Tree} $PT$ is an abstract hierarchical representation of $P$, where:
\begin{itemize}
\item Each process block $B$ in $P$ corresponds to a node $N$ in $PT$.
\item Given a process block $B(B_1, \dots, B_n)$, where $B_1, \dots, B_n$ are the process blocks directly nested in $B$, the nodes $N_1, \dots, N_n$ in $PT$, corresponding to $B_1, \dots, B_n$ in $P$, are children of $N$, corresponding to $B$ in $P$. Mind that the order between the sub-blocks of a process block is preserved between the children of the same node.
\end{itemize}
\end{definition}

In accordance to Definition \ref{def:process_tree}, the \emph{main block} of a structured process model always represents the root of its corresponding \emph{process tree} model, as it is not contained as a sub-process block in any other process blocks. Similarly, as the tasks in a process model are also process blocks, they also represent nodes in the tree. However, since a block representing a task does not have any sub-process block, such type of process block always corresponds to the leaves of a process tree. Figure \ref{fig:process_tree_model_example} illustrates the process tree model of the structured process model used in Example \ref{ex:processexample}.


\begin{figure}[h]
\centering
\scalebox{0.35}{
\begin{tikzpicture}[thick,font=\LARGE]

\newcommand{\gatesize}{0.9cm}


\node[draw,align=center,circle, minimum size=1cm] (start) at (0,0) {};

\node[xshift=1.5cm,draw,diamond,align=center,minimum height=\gatesize,minimum width=\gatesize] (and1) at (start.east) {};
\draw[line width=4pt] ([yshift=-2mm]and1.north) -- ([yshift=2mm]and1.south);
\draw[line width=4pt] ([xshift=-2mm]and1.east) -- ([xshift=2mm]and1.west);

\node[xshift=1.5cm,yshift=1.75cm,draw,diamond,align=center,minimum height=\gatesize,minimum width=\gatesize] (xor1) at (and1.east) {};
\draw[line width=3pt] ([xshift=0.5mm,yshift=-0.5mm]xor1.north west) -- ([xshift=-0.5mm,yshift=0.5mm]xor1.south east);
\draw[line width=3pt] ([xshift=-0.5mm,yshift=-0.5mm]xor1.north east) -- ([xshift=0.5mm,yshift=0.5mm]xor1.south west);

\node[xshift=1.25cm,yshift=1.15cm,draw,rounded corners=3pt,align=center,minimum height=1cm,minimum width=1.5cm] (t1) at (xor1.east) {$t_1$};
\node[xshift=1.25cm,yshift=-1.15cm,draw,rounded corners=3pt,align=center,minimum height=1cm,minimum width=1.5cm] (t2) at (xor1.east) {$t_2$};
\node[yshift=-1.75cm,draw,rounded corners=3pt,align=center,minimum height=1cm,minimum width=1.5cm] (t3) at (t2 |- and1) {$t_3$};

\node[xshift=1cm,draw,diamond,align=center,minimum height=\gatesize,minimum width=\gatesize] (xor2) at (t1.east |- xor1) {};
\draw[line width=3pt] ([xshift=0.5mm,yshift=-0.5mm]xor2.north west) -- ([xshift=-0.5mm,yshift=0.5mm]xor2.south east);
\draw[line width=3pt] ([xshift=-0.5mm,yshift=-0.5mm]xor2.north east) -- ([xshift=0.5mm,yshift=0.5mm]xor2.south west);

\node[xshift=1.75cm,draw,diamond,align=center,minimum height=\gatesize,minimum width=\gatesize] (and2) at (xor2.east |- and1) {};
\draw[line width=4pt] ([yshift=-2mm]and2.north) -- ([yshift=2mm]and2.south);
\draw[line width=4pt] ([xshift=-2mm]and2.east) -- ([xshift=2mm]and2.west);

\node[xshift=1.5cm,draw,rounded corners=3pt,align=center,minimum height=1cm,minimum width=1.5cm] (t4) at (and2.east) {$t_4$};

\node[xshift=1cm,draw,line width=3pt,align=center,circle, minimum size=1cm] (end) at (t4.east) {};

\node[anchor=south] at (start.north) {\{\}};
\node[anchor=south] at (t1.north) {\{$a$\}};
\node[anchor=south] at (t2.north) {\{$b,c$\}};
\node[anchor=south] at (t3.north) {\{$c,d$\}};
\node[anchor=south] at (t4.north) {\{$\neg a$\}};
\node[anchor=south] at (end.north) {\{\}};

\draw[->,-{Latex[length=3mm]}] (start) -- (and1);
\draw[->,-{Latex[length=3mm]}] (and1) -- (and1 |- xor1) -- (xor1);
\draw[->,-{Latex[length=3mm]}] (and1) -- (and1 |- t3) -- (t3);
\draw[->,-{Latex[length=3mm]}] (xor1) -- (xor1 |- t1) -- (t1);
\draw[->,-{Latex[length=3mm]}] (xor1) -- (xor1 |- t2) -- (t2);
\draw[->,-{Latex[length=3mm]}] (t1) -- (xor2 |- t1) -- (xor2);
\draw[->,-{Latex[length=3mm]}] (t2) -- (xor2 |- t2) -- (xor2);
\draw[->,-{Latex[length=3mm]}] (xor2) -- (and2 |- xor2) -- (and2);
\draw[->,-{Latex[length=3mm]}] (t3) -- (and2 |- t3) -- (and2);
\draw[->,-{Latex[length=3mm]}] (and2) -- (t4);
\draw[->,-{Latex[length=3mm]}] (t4) -- (end);

\node[draw,align=center,minimum height=1cm,minimum width=1.5cm] (troot) at (20,4) {Root};

\node[xshift=-3cm,yshift=-1cm,draw,ellipse,align=center,minimum size=1cm,anchor=north] (tc1) at (troot.south) {};
\node[xshift=-1cm,yshift=-1cm,draw,align=center,minimum height=1cm,minimum width=1.5cm,anchor=north] (tc2) at (troot.south) {$+$};
\node[xshift=1cm,yshift=-1cm,draw,align=center,minimum height=1cm,minimum width=1.5cm,anchor=north] (tc3) at (troot.south) {$t_4$};
\node[xshift=3cm,yshift=-1cm,draw,ellipse,align=center,minimum size=1cm,anchor=north,line width=3pt] (tc4) at (troot.south) {};

\node[xshift=-1cm,yshift=-1cm,draw,align=center,minimum height=1cm,minimum width=1.5cm,anchor=north] (tc5) at (tc2.south) {$\times$};
\node[xshift=1cm,yshift=-1cm,draw,align=center,minimum height=1cm,minimum width=1.5cm,anchor=north] (tc6) at (tc2.south) {$t_3$};

\node[xshift=-1cm,yshift=-1cm,draw,align=center,minimum height=1cm,minimum width=1.5cm,anchor=north] (tc7) at (tc5.south) {$t_1$};
\node[xshift=1cm,yshift=-1cm,draw,align=center,minimum height=1cm,minimum width=1.5cm,anchor=north] (tc8) at (tc5.south) {$t_2$};

\draw[->,-{Latex[length=3mm]}] (troot) -- (tc1);
\draw[->,-{Latex[length=3mm]}] (troot) -- (tc2);
\draw[->,-{Latex[length=3mm]}] (troot) -- (tc3);
\draw[->,-{Latex[length=3mm]}] (troot) -- (tc4);

\draw[->,-{Latex[length=3mm]}] (tc2) -- (tc5);
\draw[->,-{Latex[length=3mm]}] (tc2) -- (tc6);

\draw[->,-{Latex[length=3mm]}] (tc5) -- (tc7);
\draw[->,-{Latex[length=3mm]}] (tc5) -- (tc8);

\draw[->,-{Latex[length=3mm]}] ([xshift=0.5cm,yshift=2cm]end.east |- t1.north) to[in=145,out=25] (troot.north west);

\draw ([xshift=-0.5cm,yshift=2cm]start.west |- t1.north) -- 
	  ([xshift=0.5cm,yshift=2cm]end.east |- t1.north) -- 
	  ([xshift=0.5cm,yshift=-1cm]end.east |- t3.south) -- 
	  ([xshift=-0.5cm,yshift=-1cm]start.west |- t3.south) -- 
	  ([xshift=-0.5cm,yshift=2cm]start.west |- t1.north);

\draw ([xshift=-0.5cm,yshift=1cm]xor1.west |- t1.north) -- 
	  ([xshift=0.5cm,yshift=1cm]xor2.east |- t1.north) -- 
	  ([xshift=0.5cm,yshift=-0.25cm]xor2.east |- t2.south) -- 
	  ([xshift=-0.5cm,yshift=-0.25cm]xor1.west |- t2.south) -- 
	  ([xshift=-0.5cm,yshift=1cm]xor1.west |- t1.north);
\node[anchor=north east] (xor) at ([xshift=0.5cm,yshift=1cm]xor2.east |- t1.north) {$\times$};
\draw (xor.north west) -- (xor.south west) -- (xor.south east);

\draw ([xshift=-0.5cm,yshift=1.5cm]and1.west |- t1.north) -- 
	  ([xshift=0.25cm,yshift=1.5cm]and2.east |- t1.north) -- 
	  ([xshift=0.25cm,yshift=-0.5cm]and2.east |- t3.south) -- 
	  ([xshift=-0.5cm,yshift=-0.5cm]and1.west |- t3.south) -- 
	  ([xshift=-0.5cm,yshift=1.5cm]and1.west |- t1.north);
\node[anchor=north east] (and) at ([xshift=0.25cm,yshift=1.5cm]and2.east |- t1.north) {$+$};
\draw (and.north west) -- (and.south west) -- (and.south east);
	  
\end{tikzpicture}
}
\caption{Process Tree Example}\label{fig:process_tree_model_example}
\end{figure}
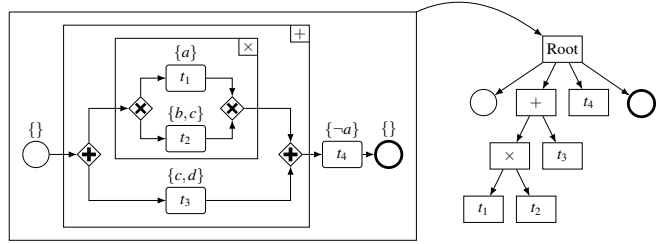

Notice that a process block can only be the sub-block of a single other process-block, hence Definition \ref{def:process_tree} is sufficient, as the process blocks nesting ensures that a process tree is a proper tree. Moreover, notice that, as the only process blocks that do not contain nested blocks are tasks, then each leaf in a process tree corresponds to one of the tasks\footnote{A minor exception to this, is that the \kwd{start} and \kwd{end} elements of the process models are also represented as leaves in the process tree. Despite these two elements not being proper tasks, they can be treated as such for the sake of the evaluation procedure.} of the original business process model.

Intuitively, the tasks having the same parent in a process tree representation are related by an ordering relationship, or by a mutual exclusion relationship, depending on the sub-process block type associated to the parent node. By navigating the structure bottom-up, it is possible to recursively infer the relations between the tasks, and we exploit this in our approach evaluating the compliance state of a process model.

\subsubsection{Process Tree Stem}

The proposed approach uses the sufficient conditions, described in Definition \ref{def:af_delta} and Definition \ref{def:mf_delta}, expressed as failure $\Delta$-constraints, to identify whether a process model contains an execution violating an obligation being evaluated. By checking whether a business process model contains an execution violating one or more obligations, it is possible to verify whether such model is fully compliant with a regulatory framework.

Recalling some details from Definition \ref{def:af_delta} and Definition \ref{def:mf_delta}, every sufficient condition is in relation to the existence of a \emph{trigger task}, a task having annotated the trigger of the conditional obligation being evaluated. With this in mind, given a \emph{process tree model} and a \emph{trigger task}, we refer to the path between the root and the \emph{trigger task} as the \emph{Stem}, as illustrated in Figure \ref{fig:stem_tree}.


\begin{figure}[h]
\centering
\scalebox{0.35}{
\begin{tikzpicture}[thick,font=\LARGE]

\node[draw,align=center,minimum height=1cm,minimum width=1.5cm] (root) at (0,0) {Root};

\node[draw,ellipse,align=center,minimum size=1cm,anchor=north] (trigger) at (0,-5) {Trigger leaf};

\node (stem) at (-1.5,-4) {Stem};

\draw (root.south) -- 
      ([xshift=-0.5cm,yshift=-1cm]root.center) --
      ([xshift=0.5cm,yshift=-2cm]root.center) --
      ([xshift=-0.5cm,yshift=-3cm]root.center) --
      ([xshift=0.5cm,yshift=-4cm]root.center) --
      (trigger.north);
                               
\draw (root.west) -- ([xshift=-3cm]trigger.west);
\draw (root.east) -- ([xshift=3cm]trigger.east);

\draw (stem.north) to[out=75,in=190] ([xshift=-0.7cm,yshift=-3.2cm]root.center);

\end{tikzpicture}
}
\caption{Process Tree and Stem}\label{fig:stem_tree}
\end{figure}
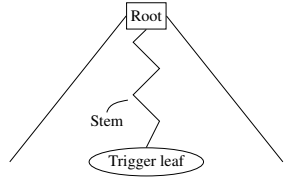

Given a process tree and a conditional obligation, notice that the tree can contain multiple trigger tasks, hence its tree representation can contain multiple stems. The proposed approach investigates each stem in a process tree model independently.

Additionally, the nodes in a process tree can be considered to represent sets of sub-executions of the related business process model\footnote{As each node in a process tree is associated to a process block, even simple tasks, every execution obtainable by applying Definition \ref{def:ser} to the process block is always a sub execution of a possible execution of the associated business process model.}, and that these partial executions involve the tasks belonging to the sub-tree of the node. Moreover, the nodes on the \emph{Stem} represent sub-executions involving the \emph{trigger task}, which we can use to verify the conditions expressed in the failure $\Delta$-constraints. Finally, verifying a failure $\Delta$-constraint condition in a set of partial executions of a model corresponds to verifying it in the set of executions of the model itself, as it is stated by Lemma \ref{lem:sub_to_super}.

\subsection{Stem Evaluation Procedure}

The inputs of the \emph{Stem Evaluation} procedure are a business process model $P$ and a $\Delta$-constraint $\delta$. The procedure is detailed in the pseudo-code shown in Algorithm \ref{a:stem}.

\begin{algorithm}
{\fontsize{8}{7.5}\selectfont
\SetKwData{Left}{left}\SetKwData{This}{this}\SetKwData{Up}{up}
\SetKwFunction{Union}{Union}\SetKwFunction{FindCompress}{FindCompress}
\SetKwInOut{Input}{input}\SetKwInOut{Output}{output}

\Input{A business process model: $P$, and a failure $\Delta$-constraint: $\delta$}
\Output{Whether $P$ satisfies $\delta$}

\BlankLine

\nl Let $\mathcal{T}$ be the set of tasks in $P$ containing $t$ of $\delta$ in their annotation\;
\nl Let $\mathcal{P}$ be the tree representation of $P$\;
\nl\For{ $\tau \in \mathcal{T}$}{
\nl \If{!\kwd{checkLeafFail}($\tau$, $\delta$)}
{
\nl Let the path between the root in $\mathcal{P}$ to $\tau$ be the \kwd{stem}\;
\nl Let $\mathcal{P}'$ be the tree where each \kwd{XOR} node on the \kwd{stem} is replaced with its child on the \kwd{stem}\;
\nl \If{\kwd{Evaluate}($\mathcal{P}'$, $\tau$, $\delta$)}
{\nl return \kwd{true}\;}
}
}
\nl return \kwd{false}\;
}
\caption{Stem Evaluation}\label{a:stem}
\end{algorithm}\DecMargin{1em}

Algorithm \ref{a:stem} identifies the \emph{trigger leaves} ($\mathcal{T}$) of the tree representation of the business process in line 1. 
Notice that in line 4 \kwd{checkLeafFail}($\tau$, $\delta$) verifies whether the trigger leaf contains certain annotations that would immediately falsify the $\Delta$-constraint being evaluated. \kwd{checkLeafFail} is defined as follows depending on the value of $\delta$:

\begin{description}
\item[$\overline{A}\Delta1$] if $c \in \kwd{ann}(\tau)$ then return true, false otherwise
\item[$\overline{A}\Delta2$] if $c \in \kwd{ann}(\tau)$ then return true, false otherwise
\item[$\overline{M}\Delta1$] if $c \in \kwd{ann}(\tau)$ then return true, false otherwise
\item[$\overline{M}\Delta2$] if $d \in \kwd{ann}(\tau)$ then return true, false otherwise
\end{description}

For each of the trigger leaves, the procedure prunes the tree in line 5, removing each \kwd{XOR} branch on the \kwd{stem} that would not lead to the considered trigger leaf, this is graphically illustrated in Figure \ref{f:stem_pruning}. Intuitively, as the tree is a representation of a business process model, the pruning procedure \kwd{XOR} branches on the \kwd{stem}, removes from the pool of the possible executions all the execution not containing the trigger leaf ($\tau$).


\begin{figure}[h]
\centering
\scalebox{0.35}{
\begin{tikzpicture}[thick,font=\LARGE]

\newcommand{\prunesize}{0.3cm}
\newcommand{\prune}[1]{%
\draw[line width=5pt] ([xshift=-\prunesize,yshift=-\prunesize]#1.center) --
					  ([xshift=\prunesize,yshift=\prunesize]#1.center);
\draw[line width=5pt] ([xshift=-\prunesize,yshift=\prunesize]#1.center) --
					  ([xshift=\prunesize,yshift=-\prunesize]#1.center);
}

\node[draw,align=center,minimum height=1cm,minimum width=1.5cm] (root) at (0,0) {Root};

\node[draw,ellipse,align=center,minimum size=1cm,anchor=north] (xor) at (0,-3) {XOR};

\node[draw,ellipse,align=center,minimum size=1cm,anchor=north] (trigger) at (0,-6) {Trigger leaf};

\draw (root.south) -- 
      ([xshift=-0.5cm,yshift=-1cm]root.center) --
      ([xshift=0.5cm,yshift=-2cm]root.center) --
      (xor.north);

\draw (xor.south) -- 
      ([xshift=-0.5cm,yshift=-1cm]xor.center) --
      ([xshift=0.5cm,yshift=-2cm]xor.center) --
      (trigger.north);

\draw (xor) --
	  (-1.5,-4.5);
\draw (-1.5,-4.5) -- (-1,-5.5);
\draw (-1.5,-4.5) -- (-2,-5.5);

\draw (xor) --
	  (-3,-4.5);
\draw (-3,-4.5) -- (-2.5,-5.5);
\draw (-3,-4.5) -- (-3.5,-5.5);

\draw (xor) --
	  (2,-4.5);
\draw (2,-4.5) -- (2,-5.5);
\draw (2,-4.5) -- (1,-5.5);
\draw (2,-4.5) -- (3,-5.5);

\node (p1) at (-1.5,-4.75) {};
\node (p2) at (2,-4.75) {};
\node (p3) at (-3,-4.75) {};

\prune{p1};
\prune{p2};
\prune{p3};
                                                      
\draw (root.west) -- ([xshift=-4cm]trigger.west);
\draw (root.east) -- ([xshift=4cm]trigger.east);

\end{tikzpicture}
}
\caption{Stem Pruning}\label{f:stem_pruning}
\end{figure}

As shown later in this section, the \kwd{Evaluation} procedure verifies whether a tree representation of a model satisfies a failure $\Delta$-constraint given a specific task having the trigger of the constraint annotated, the trigger leaf $\tau$. Thus, Lemma \ref{l:stem_prune} formally states that the pruning procedure removes exactly every execution in the process not containing $\tau$.

\begin{theorem}[Algorithm \ref{a:stem} Correctness]
Given a business process model $P$ and a failure $\Delta$-constrain $\delta$, if there exists an execution of $P$ such that the execution satisfies $\delta$, then Algorithm \ref{a:stem} returns \kwd{true}, otherwise it returns \kwd{false}.
\end{theorem}

\subsection{Function \kwd{Evaluate}}

Before proceeding with describing \kwd{Evaluate}, notice that we use the term \emph{overnodes} to refer to the nodes on the stem, and the term \emph{undernodes} to refer to the others. The \emph{trigger leaf} is always an \emph{overnode}.

As reported in Lemma \ref{lem:sub_to_super}, if the evaluation of an overnode satisfies a failure $\Delta$-constraint, then it is the case that the root overnode satisfies the same failure $\Delta$-constraint, and the same for the corresponding process model. The consequence of this property is that the evaluation of whether a model satisfies a failure $\Delta$-constraint, can be sometimes assessed by evaluating only some of the associated overnodes.

To determine the extent to which a node in a process tree satisfies a failure $\Delta$-constraint being evaluated, we adopt a set of classifications. While we introduce the classes used to evaluate the failure $\Delta$-constraints and their sub-patterns in Section~\ref{sec:aggregations} in detail, for the sake of simplicity we introduce now an abstract function, \kwd{classification} as described in Definition \ref{def:class_fun}, which assigns a set of classifications to a node of a process tree according to the failure $\Delta$-constraint being evaluated.

\begin{definition}[Classification]\label{def:class_fun}
Let $B$ be a process block. The following classification function (\kwd{cl}) returns a set of classifications $\mathcal{C}$, where the requirements of each classification in $\mathcal{C}$ are satisfied by $B$.
$$\kwd{cl}(B) = \mathcal{C}$$
\end{definition}

The result of classifying a node in a process tree is generally a set of classes instead of a single class. This is the result of the properties of a node to match the requirements of multiple classes. Note, however, that even if the cartesian product suggests an explosion of the number of aggregations required to be performed, such number is limited by the number of classifications kept in the resulting set, which is again is going to be limited by preference orders defined for each type of classes, allowing to discard classifications from the result if another strictly better classification exists in the set.

The classification of the leaves in a process tree depends uniquely on which failure $\Delta$-constraint is being evaluated, and on the properties of the leaf. Differently, the classification of a non-leaf node in a process tree can be derived from the classifications of its children as shown in Definition \ref{def:aggregation_fun}. 

\begin{definition}[Aggregation]\label{def:aggregation_fun}

Let $B$ be a process block and let $B_1$, $B_2$, ..., $B_n$ be the sub-blocks composing $B$. Let \emph{aggregate} (\kwd{agg}) be a binary function taking two sets of classifications and returning the set of resulting aggregated classifications.

The classification of $B$ is equivalent to the cumulative aggregations of the classifications of its sub-blocks as follows:

$$\kwd{cl}(B) \equiv \kwd{agg}(\kwd{agg}(\kwd{agg}(\kwd{cl}(B_1), \kwd{cl}(B_2)), ...),\kwd{cl}(B_n))$$

Notice that the result of \kwd{agg} also depends on the type of the process block whose sub-blocks classifications are being aggregated. 
Moreover, the aggregation order follows the left to right order of appearance of the sub-blocks in the process tree, which maintains the order in the process description using the \emph{prefix representation}, shown in Definition~\ref{def:pb}. While the aggregation order is important only when the node being classified is of type \kwd{SEQ}, for the sake of simplicity we force a strict left-to-right aggregation for every type of node.
\end{definition}

\begin{example}[Classifications and Aggregations]

Considering the process tree representation in Figure \ref{fig:ad1s_example_tree_3}, where $\overline{A}\Delta1$ is evaluated through the classification\footnote{Some of the classes to evaluate $\overline{A}\Delta1$ are first introduced in Table \ref{tb:ispclass}.} of the nodes of the tree. 

The nodes in the process tree are classified to evaluate the constraint. For instance, it can be noticed that $\kwd{cl}(t_1) = \mathbf{\underline{\overset{0}{x}\overset{0}{z}}}$, and that $\kwd{cl}(t_2) = \mathbf{\underline{\overset{-}{xz}}}$.

Considering now the process block composed by $t_1$ and $t_2$, labelled $\times$ in Figure \ref{fig:ad1s_example_tree_3}, its classification is the following: $\kwd{cl}(\times) = \mathbf{\underline{\overset{0}{x}\overset{0}{z}}}$.  

The classification of $\times$ can also represented as the aggregations\footnote{Notice that in this particular case, as the node being evaluated is of type \kwd{XOR}, the aggregation consists of the union of the classifications of the blocks composing the \kwd{XOR}, and by removing the classes which are strictly worst than the ones included in the resulting set. This is further explained in a following subsection.} of the classifications of the process blocks composing it as follows: $\kwd{agg}(\mathbf{\underline{\overset{0}{x}\overset{0}{z}}}, \mathbf{\underline{\overset{-}{xz}}}) = \mathbf{\underline{\overset{0}{x}\overset{0}{z}}}$

\end{example}

\subsubsection{Aggregations for \kwd{SEQ} and \kwd{AND} Undernodes}

When evaluating a failure $\Delta$-constraint over a process-tree representation of a model, we are interested in the classifications of the overnodes. Keeping this in mind, and despite using a comparable left-to-right cumulative aggregation for both nodes of type \kwd{SEQ} and \kwd{AND}, the following difference must be adopted when such nodes are overnodes.

\begin{description}
\item[\kwd{SEQ}]
Classifying an overnode of type \kwd{SEQ} requires to evaluate and aggregate its left and right children independently with respect to the stem. This is graphically illustrated in Figure~\ref{fig:seq_ov_ev}. As the properties to verify a $\Delta$-constraints are always relative to a given \emph{trigger task}, which identifies the stem, and because of the execution semantics of a \kwd{SEQ} process block, classifications on the left side of the stem can only be used to verify the part of the $\Delta$-constraint on the left of the \emph{trigger task}. The same holds for classifications on the right side of the stem.

\item[\kwd{AND}]
When classifying an overnode of type \kwd{AND}, the left and right of the stem division is not required, and the classification of the overnode is obtained by aggregating together the evaluations of its direct undernode children and then aggregating the result with its overnode child. This is graphically illustrated in Figure \ref{fig:and_ov_ev}.
\end{description}

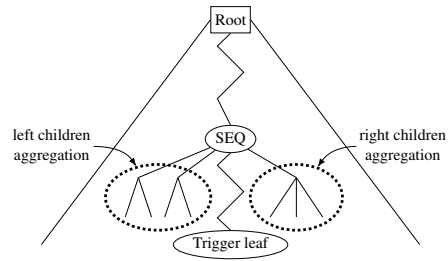
\begin{figure}[h]
\centering
\scalebox{0.35}{
\begin{tikzpicture}[thick,font=\LARGE]

\node[draw,align=center,minimum height=1cm,minimum width=1.5cm] (root) at (0,0) {Root};

\node[draw,ellipse,align=center,minimum size=1cm,anchor=north] (seq) at (0,-4) {SEQ};

\node[draw,ellipse,align=center,minimum size=1cm,anchor=north] (trigger) at (0,-8) {Trigger leaf};

\draw (root.south) -- 
      ([xshift=-0.5cm,yshift=-1cm]root.center) --
      ([xshift=0.5cm,yshift=-2cm]root.center) --
      ([xshift=-0.5cm,yshift=-3cm]root.center) --
      (seq.north);

\draw (seq.south) -- 
      ([xshift=-0.5cm,yshift=-1cm]seq.center) --
      ([xshift=0.5cm,yshift=-2cm]seq.center) --
      ([xshift=-0.5cm,yshift=-3cm]seq.center) --
      (trigger.north);

\draw (seq) --
	  (-2,-6);
\draw (-2,-6) -- (-1.5,-7.5);
\draw (-2,-6) -- (-2.5,-7.5);

\draw (seq) --
	  (-3.5,-6);
\draw (-3.5,-6) -- (-3,-7.5);
\draw (-3.5,-6) -- (-4,-7.5);

\draw (seq) --
	  (2.5,-6);
\draw (2.5,-6) -- (2.5,-7.5);
\draw (2.5,-6) -- (1.5,-7.5);
\draw (2.5,-6) -- (3.5,-7.5);

\draw (root.west) -- ([xshift=-5cm]trigger.west);
\draw (root.east) -- ([xshift=5cm]trigger.east);

\node[draw,ellipse,dashed,line width=3pt,minimum width=4cm,minimum height=2.5cm] (lte) at (-2.75,-6.75) {};
\node[xshift=-1cm,anchor=south east,minimum size=2cm,align=center] (ltelbl) at (lte.north west) {left children\\ aggregation};
\draw[->,-{Latex[length=3mm]}] (ltelbl.east) to[out=0,in=125] (lte);

\node[draw,ellipse,dashed,line width=3pt,minimum width=3.5cm,minimum height=2.5cm] (rte) at (2.5,-6.75) {};
\node[xshift=1cm,anchor=south west,minimum size=2cm,align=center] (rtelbl) at (rte.north east) {right children\\ aggregation};
\draw[->,-{Latex[length=3mm]}] (rtelbl.west) to[out=180,in=55] (rte);

\end{tikzpicture}
}
\caption{\kwd{SEQ} Overnode Evaluation}\label{fig:seq_ov_ev}
\end{figure}


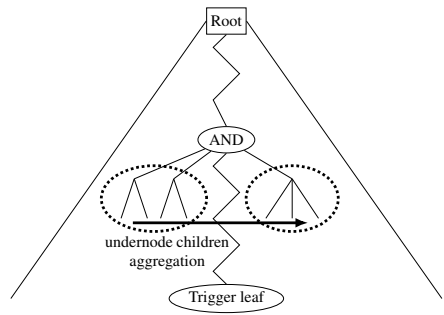
\begin{figure}[h]
\centering
\scalebox{0.35}{
\begin{tikzpicture}[thick,font=\LARGE]

\node[draw,align=center,minimum height=1cm,minimum width=1.5cm] (root) at (0,0) {Root};

\node[draw,ellipse,align=center,minimum size=1cm,anchor=north] (and) at (0,-4) {AND};

\node[draw,ellipse,align=center,minimum size=1cm,anchor=north] (trigger) at (0,-10) {Trigger leaf};

\draw (root.south) -- 
      ([xshift=-0.5cm,yshift=-1cm]root.center) --
      ([xshift=0.5cm,yshift=-2cm]root.center) --
      ([xshift=-0.5cm,yshift=-3cm]root.center) --
      (and.north);

\draw (and.south) -- 
      ([xshift=-0.5cm,yshift=-1cm]and.center) --
      ([xshift=0.5cm,yshift=-2cm]and.center) --
      ([xshift=-0.5cm,yshift=-3cm]and.center) --
      ([xshift=0.5cm,yshift=-4cm]and.center) --
      ([xshift=-0.5cm,yshift=-5cm]and.center) --
      (trigger.north);

\draw (and) --
	  (-2,-6);
\draw (-2,-6) -- (-1.5,-7.5);
\draw (-2,-6) -- (-2.5,-7.5);

\draw (and) --
	  (-3.5,-6);
\draw (-3.5,-6) -- (-3,-7.5);
\draw (-3.5,-6) -- (-4,-7.5);

\draw (and) --
	  (2.5,-6);
\draw (2.5,-6) -- (2.5,-7.5);
\draw (2.5,-6) -- (1.5,-7.5);
\draw (2.5,-6) -- (3.5,-7.5);

\draw (root.west) -- ([xshift=-6cm]trigger.west);
\draw (root.east) -- ([xshift=6cm]trigger.east);

\node[draw,ellipse,dashed,line width=3pt,minimum width=4cm,minimum height=2.5cm] (lte) at (-2.75,-6.75) {};
\node[draw,ellipse,dashed,line width=3pt,minimum width=3.5cm,minimum height=2.5cm] (rte) at (2.5,-6.75) {};

\draw[->,-{Latex[length=5mm]},line width=3pt] ([xshift=0.65cm]lte.south west) -- ([xshift=-0.65cm]rte.south east);

\node[xshift=0.5cm,align=center,anchor=north] at (lte.south) {undernode children \\ aggregation};

\end{tikzpicture}
}
\caption{\kwd{AND} Overnode Evaluation}\label{fig:and_ov_ev}
\end{figure}

\subsubsection{Aggregations for \kwd{XOR} Undernodes}

The aggregation of undernodes of type \kwd{XOR} is performed by considering the union of the sets of classifications of the children of the \kwd{XOR} undernode. The resulting set of classification is then simplified in accordance to the preference orders of the classification being evaluated.

When the preference order of the classifications is linear, after simplifying the resulting set of an \kwd{XOR} aggregation, it always consists of a single classification. Differently, when the preference order of the classifications is not linear (i.e., the preference orders between the possible classifications is a lattice), then it is possible that the simplification of an aggregation contains more than a single classification.

Removing possible classifications from a result set does not interfere with verifying whether the failure $\Delta$-constraint being evaluated can be satisfied, as only the classifications being strictly worse than existing ones in the result set are eliminated. Hence, the remaining classifications allow to verify the fulfilment of the failure $\Delta$-constraint. Intuitively, when a classification is strictly better than another, the former is capable of achieving at least the same classification through aggregations than the latter, and possibly a better one.

\begin{example}[\kwd{XOR} Aggregation]
Let $B_1$ and $B_2$ be process blocks composing an \kwd{XOR} block, and let their classifications be the following:

\begin{itemize}
\item $\kwd{cl}(B_1) = \mathcal{C}_1$
\item $\kwd{cl}(B_2) = \mathcal{C}_2$
\end{itemize}

The result of the aggregation (and, as such, the classification of the \kwd{XOR} block) is the following:
$$\kwd{cl}( \kwd{XOR} (B_1, B_2)) \equiv \mathcal{C}_1 \cup \mathcal{C}_2$$
\end{example}


\subsubsection{Aggregations for the Overnode}

The evaluation of the overnode on the stem is performed in a similar way as for the undernodes of type \kwd{SEQ} and \kwd{AND}. The aggregation is defined in some aggregation tables, depending on the failure $\Delta$-constraint being evaluated, and the type of the overnode. This aggregation is performed over the classification of the overnode's child of type overnode, with the classifications of its children of type undernode. Again, the classification is always performed from left to right in accordance to where the children appear in the process tree.

When an overnode is of type \kwd{XOR}, then its evaluation simply corresponds to the classification of its child of type overnode, as its children of type undernode can be safely discarded by the \emph{Stem Pruning} procedure described in Lemma~\ref{l:stem_prune}.

\subsubsection{\kwd{Evaluate} Termination}

Finally, the evaluation of the stem (i.e. the overnodes of a process-tree) can be terminated either when \kwd{Evaluate} reaches the root of the process-tree and fails to verify the satisfaction of the failure $\Delta$-constraint, or when the failure $\Delta$-constraint is satisfied by any of the overnodes.

While the latter case ensures that there exists an execution in the model violating the obligation associated to the failure $\Delta$-constraint, the former only ensures that there is no execution in the process model violating, given that particular trigger leaf, if and only if the sister failure $\Delta$-constraint\footnote{Remember that each obligation produces a pair of failure $\Delta$-constraints, as shown in Definition \ref{def:af_delta} and Definition \ref{def:mf_delta}.} associated to the same obligation is also not satisfied for the same trigger leaf.

\subsection{Computational Complexity of the Evaluation}

The number of times the \kwd{Evaluate} function needs to be applied is in the worst case scenario equal to the number of overnodes times the number of trigger leaf for both failure $\Delta$-constraints representing a given obligation that is required to be checked. Thus this procedure must be considered for each obligation in the regulatory framework. However, this evaluation procedure is \emph{polynomial} with respect to the size of the process model and the size of the regulatory framework.

Furthermore, we need to look into the \kwd{Evaluate} function to actually show that the whole evaluation procedure is actually in time polynomial with respect to the size of the problem. From the provided description, we know that in order to classify the overnode being evaluated, \kwd{Evaluate} needs to aggregate the classification of each of its children classifications. Resulting in $n-1$ aggregations, where $n$ is the number of children of the overnode.

While the classification of the child of type overnode is provided by the previous application of \kwd{Evaluate}, the evaluations of the other children must be constructed by exploring the sub trees, hence classifying the leaves and aggregating their results up to the desired level where the children of the overnode reside. Notice that the number of aggregations required is polynomial with respect to the size of the process model, as it is at most equal to the number of sub-blocks of the block corresponding to the overnode.

Finally, we have to consider the complexity of each aggregation, which potentially involves aggregating sets of classifications together. As mentioned previously, the combinatorial explosion of aggregation is taken care of by providing a preference partial order between the possible classifications, hence allowing to discards classifications for which a strictly better one exists. As a consequence, for each aggregation between nodes in the process-tree, the amount of required aggregations between the classifications is reduced to a constant number of aggregations in the worst case scenario, depending on the preferences between the classifications.

The whole procedure, from aggregating classifications to the recursive application of \kwd{Evaluate} from the trigger leaf towards the root of the process-tree, involves steps that are polynomial or constant with respect to the size of the problem. Therefore, we can conclude that the computational complexity of the whole procedure is in time polynomial with respect to the size of the problem.
\section{An Overview of the Aggregations}\label{sec:aggregations}

In this section, we introduce the classifications used to classify the tasks while evaluating the stem of a process tree against a failure $\Delta$-constraints. It is not surprising that different classifications are required depending on the failure $\Delta$-constraint being evaluated. However, we also introduce sub-patterns that can be reused through the classification and aggregation procedures, limiting in this way the number of possible combinations required to be analysed.

Note that, for the sake of readability, we do not explicitly show the aggregation tables for the classifications introduced. The aggregation tables, along with the correctness proof for the classifications considered, and the aggregations can be found in Appendix~\ref{app:agg}.

\subsection{$\overline{A}\Delta1$}\label{sec:ad1}

We hereby show the classifications used by \kwd{Evaluate}($\mathcal{P}'$, $\tau$, $\delta$) when $\delta$ is $\overline{A}\Delta1$. Before proceeding with the classifications, we introduce a simplified version of $\overline{A}\Delta1$, which is used during the evaluation procedure, and is equivalent to evaluating $\overline{A}\Delta1$.

The simplified failure $\Delta$-constraint is introduced in Definition \ref{def:ad1s} and Figure \ref{fig:ad1s} illustrates side by side the two versions of the failure $\Delta$-constraint.

\begin{definition}[$\overline{A}\Delta1$ Simplified]\label{def:ad1s}
The Achievement Failure $\Delta$-constraint from Definition \ref{def:af_delta}, recalled below:


\bigskip
\noindent\begin{tabular}{@{}ll}
$\overline{A}\Delta1$ & $\exists t_t$ such that: \\
& $\exists t_{r}, t_d | t_{r} \preceq t_t \preceq t_d$ and \\
& $\neg \exists t_{\neg r} | t_{r} \preceq t_{\neg r} \preceq t_d$ and \\
& $\exists t_{\neg d}, \neg \exists t_d | t_{\neg d} \preceq t_d \preceq t_t$ \\
\end{tabular}

\bigskip

\noindent is simplified as follows:

\bigskip
\noindent\begin{tabular}{@{}ll}
$\overline{A}\Delta1S$ & $\exists t_t$ such that: \\
& $\exists t_{\neg r}, t_d | t_{\neg r} \preceq t_t \preceq t_d$ and \\
& $\neg \exists t_r | t_{\neg r} \preceq t_r \preceq t_d$ \\
\end{tabular}

\bigskip


\end{definition}

\begin{figure}[h!]
\centering
\scalebox{0.4}{
\begin{tikzpicture}[thick, font=\LARGE]

\node[ellipse,draw,align=center,minimum size=1cm,text width=0.75cm] (t1) at (0,0) {$t$};
\node[xshift=-3cm,yshift=1cm,ellipse,draw,align=center,minimum size=1cm,text width=0.75cm] (notd1) at (t1.west) {$\neg d$};
\node[xshift=-3cm,yshift=-1cm,ellipse,draw,align=center,minimum size=1cm,text width=0.75cm] (notc1) at (t1.west) {$\neg r$};
\node[xshift=3cm,ellipse,draw,align=center,minimum size=1cm,text width=0.75cm] (d1a) at (t1.east |- notc1) {$d$};
\node[yshift=-2cm,ellipse,draw,align=center,minimum size=1cm,text width=0.75cm] (c1) at (t1) {$r$};
\node[xshift=-0.75cm,yshift=1.5cm,ellipse,draw,align=center,minimum size=1cm,text width=0.75cm] (d1b) at (t1.west) {$d$};

\draw[->,-{Latex[length=2.5mm]}] (t1) to[out=180,in=330] (notd1);
\draw[->,-{Latex[length=2.5mm]}] (t1) to[out=120,in=330] (d1b) node[xshift=0.125cm,yshift=-0.25cm] {$\not$};
\draw[->,-{Latex[length=2.5mm]}] (d1b) to[out=170,in=30] (notd1) node[xshift=0.5cm,yshift=0.25cm] {$\not$};

\draw[->,-{Latex[length=2.5mm]}] (t1) to[out=190,in=20] (notc1);
\draw[->,-{Latex[length=2.5mm]}] (d1a) to[out=150,in=0] (t1);
\draw[->,-{Latex[length=2.5mm]}] (d1a) to[out=210,in=0] (c1) node[xshift=1cm,yshift=0.125cm] {$\not$};
\draw[->,-{Latex[length=2.5mm]}] (c1) to[out=180,in=330] (notc1) node[xshift=1cm,yshift=-0.5cm] {$\not$};

\node[xshift=10cm,ellipse,draw,align=center,minimum size=1cm,text width=0.75cm] (t2) at (t1.east) {$t$};
\node[xshift=-2cm,yshift=-1cm,ellipse,draw,align=center,minimum size=1cm,text width=0.75cm] (notc2) at (t2.west) {$\neg r$};
\node[xshift=2cm,ellipse,draw,align=center,minimum size=1cm,text width=0.75cm] (d2a) at (t2.east |- notc2) {$d$};
\node[yshift=-2cm,ellipse,draw,align=center,minimum size=1cm,text width=0.75cm] (c2) at (t2) {$r$};

\draw[->,-{Latex[length=2.5mm]}] (t2) to[out=180,in=45] (notc2);
\draw[->,-{Latex[length=2.5mm]}] (d2a) to[out=135,in=0] (t2);
\draw[->,-{Latex[length=2.5mm]}] (d2a) to[out=225,in=0] (c2) node[xshift=0.75cm,yshift=0.125cm] {$\not$};
\draw[->,-{Latex[length=2.5mm]}] (c2) to[out=180,in=315] (notc2) node[xshift=0.75cm,yshift=-0.5cm] {$\not$};

\node[yshift=-0.5cm] at (c1.south) {$\overline{A}\Delta1$};
\node[yshift=-0.5cm] at (c2.south) {$\overline{A}\Delta1S$};

\end{tikzpicture}
}
\caption{$\overline{A}\Delta1S$}\label{fig:ad1s}
\end{figure}

The idea behind using simplified versions of the failure $\Delta$-constraints is to have simpler patterns to evaluate in the nodes of the process tree. While the original formulation of some failure $\Delta$-constraints is easier to understand, it contains some superfluous sub-patterns that can be omitted, hence easing their overall evaluation. 

Moreover, Theorem \ref{the:adeltas} states that the simplification does not hinder the relation between the constraint and the model not being fully compliant.
In detail, the difference in the simplified constraint consists of removing the sub-pattern regarding the presence in a possible execution of a task having $\neg d$ annotated on the left of the trigger leaf, while not having a task with $d$ annotated between. The remainder of the pattern is sufficient to recognise executions violating the associated achievement obligation to $\overline{A}\Delta1$, as the removed sub-pattern purpose is mainly to differentiate it from $\overline{A}\Delta2$. The proof of Theorem \ref{the:adeltas} shows in details how the simplified version of $\overline{A}\Delta1$ is sufficient to recognise the same as the original version.

\begin{theorem}\label{the:adeltas}
If a violation in a business process model is identified by satisfying either $\overline{A}\Delta1$ or $\overline{A}\Delta2$, then the same violation is identified by satisfying either $\overline{A}\Delta1S$ or $\overline{A}\Delta2$.
\end{theorem}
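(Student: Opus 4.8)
The plan is to argue at the level of a single execution \exe and to reduce the whole statement to the relationship between $\overline{A}\Delta1$ and its simplification $\overline{A}\Delta1S$, since the disjunct $\overline{A}\Delta2$ is common to both recognisers. Concretely, I would first record the trivial observations that (i) if \exe satisfies $\overline{A}\Delta2$ the conclusion is immediate, and (ii) the remaining work is entirely about executions satisfying the $\overline{A}\Delta1$/$\overline{A}\Delta1S$ side, with $\overline{A}\Delta2$ kept available as a fallback witness. To justify the phrase ``recognise the same'' I would in fact prove the equivalence of $\overline{A}\Delta1 \lor \overline{A}\Delta2$ and $\overline{A}\Delta1S \lor \overline{A}\Delta2$, treating the two implications separately; the implication actually stated in the theorem is the easy one and is exactly what makes the compliance procedure \emph{sound} in declaring full compliance.

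For the stated direction I would case-split on which original constraint \exe satisfies. The $\overline{A}\Delta2$ case is immediate. For the $\overline{A}\Delta1$ case, the key point is purely syntactic: by Definition~\ref{def:ad1s}, $\overline{A}\Delta1S$ consists of precisely the first two conjuncts of $\overline{A}\Delta1$, with the third conjunct $\exists t_{\neg d},\ \neg\exists t_d \mid t_{\neg d}\preceq t_d\preceq t_t$ dropped. Hence the same trigger witness $t_t$, together with the associated $t_{\neg r}$ and $t_d$ that verify $\overline{A}\Delta1$, verify $\overline{A}\Delta1S$ verbatim. This direction therefore needs no more than checking that the retained conjuncts' existential witnesses transfer unchanged, so \exe satisfies $\overline{A}\Delta1S \lor \overline{A}\Delta2$.

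The substantive content is the converse, $\overline{A}\Delta1S \Rightarrow \overline{A}\Delta1 \lor \overline{A}\Delta2$, which certifies that dropping the third conjunct does not let $\overline{A}\Delta1S$ over-report relative to the original pair. I would assume $\exe \models \overline{A}\Delta1S$ with witnesses $t_t,t_{\neg r},t_d$ but $\exe \not\models \overline{A}\Delta1$; since the first two conjuncts hold, the failure is confined to the third conjunct, i.e.\ there is no $\neg d$-task before $t_t$ that is left un-overridden by a later $d$-task up to $t_t$. I would then split on the deadline configuration to the left of $t_t$. If some $d$-task occurs before $t_t$ with no intervening $\neg d$ before $t_t$, this un-reset prior deadline supplies the deadline conjunct of $\overline{A}\Delta2$, while the $t_{\neg r}$-with-no-intervening-$r$ structure inherited from $\overline{A}\Delta1S$ (re-choosing the last $\neg r$ before $t_t$ if necessary) supplies its requirement conjunct, giving $\exe \models \overline{A}\Delta2$. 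If instead no deadline has fired before $t_t$, then the initial-state reading of the deadline literal together with Assumption~\ref{ass:final_deadline} should furnish a $\neg d$ before $t_t$ with no intervening $d$, contradicting the failure of the third conjunct and showing $\exe \models \overline{A}\Delta1$ after all.

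I expect the main obstacle to lie entirely in this converse, and specifically in the boundary case where no deadline literal appears before the trigger. The crux is to make precise how the start is to be read with respect to the deadline: the argument needs either an explicit $\neg d$ at the initial task, in the spirit of the $\neg r$ guaranteed by Assumption~\ref{ass:starting_complementary}, or the convention that an un-fired deadline behaves as $\neg d$, so that the failure of $\exists t_{\neg d},\ \neg\exists t_d \mid t_{\neg d}\preceq t_d\preceq t_t$ genuinely forces a prior, un-reset $d$ and hence the $\overline{A}\Delta2$ configuration. Once that convention is fixed, the remainder is a routine finite case analysis on the relative positions of the $r/\neg r$ and $d/\neg d$ annotated tasks along \exe, and the equivalence---hence the theorem---follows; cross-checking against the failure condition of Definition~\ref{def:laof_fail} provides a sanity check that neither recogniser loses a genuine violation.
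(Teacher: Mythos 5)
Your proposal is correct and follows essentially the same route as the paper's own proof: both reduce the statement to the relation between $\overline{A}\Delta1$ and $\overline{A}\Delta1S$ (since $\overline{A}\Delta2$ is a common disjunct of the two recognisers), both settle the stated direction by observing that $\overline{A}\Delta1S$ consists verbatim of the first two conjuncts of $\overline{A}\Delta1$ so the witnesses transfer unchanged, and both establish the converse by splitting on the deadline configuration to the left of $t_t$: an un-overridden $\neg d$ restores the dropped third conjunct and yields $\overline{A}\Delta1$, while an un-overridden $d$ supplies the deadline conjunct of $\overline{A}\Delta2$, whose requirement conjunct follows because the no-$r$ interval from $t_{\neg r}$ to $t_d$ already contains the interval up to $t_t$ (so your ``re-choosing the last $\neg r$'' step is not even needed).

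The one place you go beyond the printed proof is the boundary case you flag as the crux: an execution in which no $d$ or $\neg d$ occurs before $t_t$ at all. The paper's completeness argument splits into its ``case 1'' and ``case 2'', asserts they are mutually exclusive, but never argues they are exhaustive --- and they are not in exactly that boundary case, which is the only configuration in which $\overline{A}\Delta1S$ could fire while neither $\overline{A}\Delta1$ nor $\overline{A}\Delta2$ does. Your remedy (either an explicit $\neg d$ at the initial task, mirroring what Assumption~\ref{ass:starting_complementary} does for $r$, or the convention that an un-fired deadline reads as $\neg d$) is precisely the ingredient the paper leaves implicit; with it fixed, your case analysis closes and your argument is, on this direction, tighter than the paper's.
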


\subsubsection{Interval Overnode Pattern}

We now introduce the Interval Overnode Pattern, a generalisation of $\overline{A}\Delta1S$.

\begin{definition}[Interval Overnode Pattern]

Let an Interval Overnode Pattern be the following:

$$\kwd{iop}(x, y, z)$$

Where $x$ represents the desired task on the left side of the aggregation, and $z$ the one desired on the right side. Moreover, $y$ represents the task that is required not to appear between the interval identified by $x$ and $y$, as shown in Figure \ref{fig:generic_iop}. 


\begin{figure}[h]
\centering
\scalebox{0.35}{
\begin{tikzpicture}[thick, font=\LARGE]

\node[xshift=10cm,ellipse,draw,align=center,minimum size=1cm,text width=0.75cm] (t2) at (t1.east) {$t$};
\node[xshift=-2cm,yshift=-1cm,ellipse,draw,align=center,minimum size=1cm,text width=0.75cm] (notc2) at (t2.west) {$x$};
\node[xshift=2cm,ellipse,draw,align=center,minimum size=1cm,text width=0.75cm] (d2a) at (t2.east |- notc2) {$z$};
\node[yshift=-2cm,ellipse,draw,align=center,minimum size=1cm,text width=0.75cm] (c2) at (t2) {$y$};

\draw[->,-{Latex[length=2.5mm]}] (t2) to[out=180,in=45] (notc2);
\draw[->,-{Latex[length=2.5mm]}] (d2a) to[out=135,in=0] (t2);
\draw[->,-{Latex[length=2.5mm]}] (d2a) to[out=225,in=0] (c2) node[xshift=0.75cm,yshift=0.125cm] {$\not$};
\draw[->,-{Latex[length=2.5mm]}] (c2) to[out=180,in=315] (notc2) node[xshift=0.75cm,yshift=-0.5cm] {$\not$};

\end{tikzpicture}
}
\caption{Interval Overnode Pattern}\label{fig:generic_iop}
\end{figure}
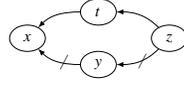

\end{definition}

The $\overline{A}\Delta1S$ pattern can be parametrised as an Interval Overnode Pattern as follows: 

$$\kwd{iop}(t_{\neg r}, t_{r}, t_{d})$$

As such, fulfilling an Interval Overnode Pattern with the above instantiation corresponds to fulfilling the corresponding $\overline{A}\Delta1S$.
An overview of the possible classifications of an overnode is provided in Table~\ref{tb:overnodeclass}.

\begin{table}
{\normalsize
 \begin{tabularx}{\textwidth}{m{0.75cm} X}
\hline
$\mathbf{\overset{+}{x}t\overset{+}{z}}$ & The associated process block contains an execution fulfilling the Interval Overnode Pattern. \\
$\mathbf{\overset{+}{x}t\overset{0}{z}}$ & The associated process block contains an execution fulfilling the left sub-pattern of the Interval Overnode Pattern. \\
$\mathbf{\overset{0}{x}t\overset{+}{z}}$ & The associated process block contains an execution fulfilling the right sub-pattern of the Interval Overnode Pattern. \\
$\mathbf{\overset{+}{x}t\overset{-}{z}}$ & The associated process block contains an execution fulfilling the left sub-pattern of the Interval Overnode Pattern, and fails the right sub-pattern. \\
$\mathbf{\overset{0}{x}t\overset{0}{z}}$ & The associated process block contains an execution not failing or fulfilling either sub-pattern of the Interval Overnode Pattern. \\
$\mathbf{\overset{-}{x}t\overset{+}{z}}$ & The associated process block contains an execution fulfilling the right sub-pattern of the Interval Overnode Pattern, and fails the left sub-pattern. \\
$\mathbf{\overset{0}{x}t\overset{-}{z}}$ & The associated process block contains an execution failing the right sub-pattern of the Interval Overnode Pattern. \\
$\mathbf{\overset{-}{x}t\overset{0}{z}}$ & The associated process block contains an execution failing the left sub-pattern of the Interval Overnode Pattern. \\
$\mathbf{\overset{-}{x}t\overset{-}{z}}$ & The associated process block contains only executions failing the Interval Overnode Pattern. \\
\hline
\end{tabularx}
}
\caption{Overnode classification classes.}\label{tb:overnodeclass}
\end{table}

Even though an overnode can potentially belong to multiple classes, only the classes for which no strictly better class is available, are considered as the classes of a node. The preference relations between the classes is shown in Figure \ref{fig:ad1s_stem_ev}.

\begin{figure}[h]
\centering
\scalebox{0.5}{
\begin{tikzpicture}[thick, font=\LARGE]

\node (a) at (0,0) {$\mathbf{\overset{+}{x}t\overset{+}{z}}$};

\node (b) at (-2,-2) {$\mathbf{\overset{+}{x}t\overset{0}{z}}$};
\node (c) at (2,-2) {$\mathbf{\overset{0}{x}t\overset{+}{z}}$};

\node (d) at (-4,-4) {$\mathbf{\overset{+}{x}t\overset{-}{z}}$};
\node (e) at (0,-4) {$\mathbf{\overset{0}{x}t\overset{0}{z}}$};
\node (f) at (4,-4) {$\mathbf{\overset{-}{x}t\overset{+}{z}}$};

\node (g) at (-2,-6) {$\mathbf{\overset{0}{x}t\overset{-}{z}}$};
\node (h) at (2,-6) {$\mathbf{\overset{-}{x}t\overset{0}{z}}$};

\node (i) at (0,-8) {$\mathbf{\overset{-}{x}t\overset{-}{z}}$};

\draw[->,-{Latex[length=2.5mm]}] (b) -- (a);
\draw[->,-{Latex[length=2.5mm]}] (c) -- (a);

\draw[->,-{Latex[length=2.5mm]}] (d) -- (b);
\draw[->,-{Latex[length=2.5mm]}] (e) -- (b);
\draw[->,-{Latex[length=2.5mm]}] (e) -- (c);
\draw[->,-{Latex[length=2.5mm]}] (f) -- (c);

\draw[->,-{Latex[length=2.5mm]}] (g) -- (d);
\draw[->,-{Latex[length=2.5mm]}] (g) -- (e);
\draw[->,-{Latex[length=2.5mm]}] (h) -- (e);
\draw[->,-{Latex[length=2.5mm]}] (h) -- (f);

\draw[->,-{Latex[length=2.5mm]}] (i) -- (g);
\draw[->,-{Latex[length=2.5mm]}] (i) -- (h);

\end{tikzpicture}
}
\caption{Interval Overnode Pattern Classes' Preference Lattice}\label{fig:ad1s_stem_ev}
\end{figure}
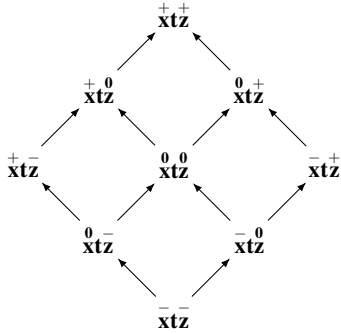

When classifying overnodes along the stem of a process tree, if one of the nodes is classified as $\mathbf{\overset{+}{x}t\overset{+}{z}}$ then the evaluation procedure can be terminated, as this is sufficient proof to conclude that $\overline{A}\Delta1S$ is satisfied. Consequently, the process model associated to the process tree being analysed violates the obligation associated to the failure $\Delta$-constraint.

\subsubsection{Interval Overnode Pattern: Trigger Leaf Classification}

The trigger leaf can be assigned to the following classes: $\mathbf{\overset{+}{x}t\overset{+}{z}}$, $\mathbf{\overset{+}{x}t\overset{0}{z}}$, $\mathbf{\overset{0}{x}t\overset{+}{z}}$, $\mathbf{\overset{0}{x}t\overset{0}{z}}$, depending on the literals annotated on the task associated to the trigger leaf.

Note that if the associated task has $y$ in its annotations, then the evaluation can safely be terminated, as the process-tree given this particular trigger leaf will never satisfy the Interval Overnode Pattern.

\subsubsection{Interval Overnode Pattern: Undernodes Classification for \kwd{SEQ} Overnode}

Considering the Interval Overnode Pattern, it can be divided into its left and right sides as illustrated in Figure~\ref{fig:iop_left_right_subs}. The undernodes can be then classified and aggregated using the simpler \emph{Left Sub-Pattern}, and \emph{Right Sub-Pattern}.

\begin{figure}[h]
\centering
\scalebox{0.35}{
\begin{tikzpicture}[thick, font=\LARGE]

\node[ellipse,draw,align=center,minimum size=1cm,text width=0.75cm] (m1) at (0,0) {$y$};
\node[xshift=-2cm,yshift=1cm,ellipse,draw,align=center,minimum size=1cm,text width=0.75cm] (l1) at (m1.west) {$x$};
\node[xshift=2cm,yshift=1cm,ellipse,draw,align=center,minimum size=1cm,text width=0.75cm] (r1) at (m1.east) {$t$};

\draw[->,-{Latex[length=2.5mm]}] (r1) to[out=150,in=30] (l1);
\draw[->,-{Latex[length=2.5mm]}] (r1) to[out=225,in=0] (m1); \node[xshift=-1cm,yshift=0cm] at (m1.west) {$\not$};
\draw[->,-{Latex[length=2.5mm]}] (m1) to[out=180,in=315] (l1); \node[xshift=0.5cm,yshift=0cm] at (m1.east) {$\not$};

\node[yshift=-0.5cm,anchor=north] at (m1.south) {left};

\node[xshift=8cm,ellipse,draw,align=center,minimum size=1cm,text width=0.75cm] (m2) at (m1.east) {$y$};
\node[xshift=-2cm,yshift=1cm,ellipse,draw,align=center,minimum size=1cm,text width=0.75cm] (l2) at (m2.west) {$t$};
\node[xshift=2cm,yshift=1cm,ellipse,draw,align=center,minimum size=1cm,text width=0.75cm] (r2) at (m2.east) {$z$};

\draw[->,-{Latex[length=2.5mm]}] (l2) to[in=150,out=30] (r2);
\draw[->,-{Latex[length=2.5mm]}] (m2) to[in=225,out=0] (r2); \node[xshift=-1cm,yshift=0cm] at (m2.west) {$\not$};
\draw[->,-{Latex[length=2.5mm]}] (l2) to[in=180,out=315] (m2); \node[xshift=0.5cm,yshift=0cm] at (m2.east) {$\not$};

\node[yshift=-0.5cm,anchor=north] at (m2.south) {right};

\end{tikzpicture}
}
\caption{Interval Overnode Pattern Left and Right Sub-Patterns}\label{fig:iop_left_right_subs}
\end{figure}
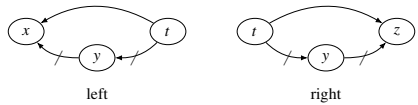

\subsubsection{Left Sub-Pattern}

\begin{definition}[Left Sub-Pattern]
Let a Left Sub-Pattern be the following:

$$\kwd{lsp}(x, y)$$

Where $x$ represents the desired task on the right side of the aggregation, and $y$ the undesired task on the right side of $x$, as shown in Figure \ref{fig:generic_lsp}.


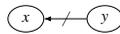
\begin{figure}[h]
\centering
\scalebox{0.35}{
\begin{tikzpicture}[thick, font=\LARGE]

\node[ellipse,draw,align=center,minimum size=1cm,text width=0.75cm] (x) at (0,0) {$x$};
\node[ellipse,draw,align=center,minimum size=1cm,text width=0.75cm] (y) at (3,0) {$y$};

\draw[->,-{Latex[length=2.5mm]}] (y) -- (x);
\node[xshift=-1cm] at (y.west) {$\not$};

\end{tikzpicture}
}
\caption{Left Sub-Pattern}\label{fig:generic_lsp}
\end{figure}
\end{definition}

Notice that for $\overline{A}\Delta1S$, the corresponding parametrisation of the left sub-pattern check is the following:

$$\kwd{lsp}(t_{\neg r}, t_{r})$$

Considering the left sub-pattern, an undernode in the process tree can belong to the following classes. The association to a class depends on the properties of the associated process block.

\begin{table}
{\normalsize
 \begin{tabularx}{\textwidth}{m{0.375cm} X}
$\mathbf{\overset{+}{x}}$ & The associated process block contains an execution capable of fulfilling the left sub-pattern. \\
$\mathbf{\overset{0}{x}}$ & The associated process block contains an execution that does not influence the left sub-pattern. \\
$\mathbf{\overset{-}{x}}$ & The associated process block contains only executions hindering the capability of fulfilling the left sub-pattern. \\
\end{tabularx}
}
\end{table}

Even though a process block can potentially belong to multiple classes, a process block belongs to the classes for which no better class is available, in accordance to the following linear preference order of the classes: $\mathbf{\overset{-}{x}} < \mathbf{\overset{0}{x}} < \mathbf{\overset{+}{x}}$.

\subsubsection{Right Sub-Pattern}

We discuss now the \emph{Right Sub-Pattern}, used to evaluate the right side of the stem parting from an overnode of type \kwd{SEQ}.

\begin{definition}[Right Sub-Pattern]
Let a Right Sub-Pattern be the following:

$$\kwd{rsp}(y, z)$$

Where $z$ represents the desired task on the left side of the aggregation, and $y$ the undesired task on the left side of $z$, as shown in Figure \ref{fig:generic_rsp}.


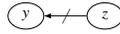
\begin{figure}[h]
\centering
\scalebox{0.35}{
\begin{tikzpicture}[thick, font=\LARGE]

\node[ellipse,draw,align=center,minimum size=1cm,text width=0.75cm] (y) at (0,0) {$y$};
\node[ellipse,draw,align=center,minimum size=1cm,text width=0.75cm] (z) at (3,0) {$z$};

\draw[->,-{Latex[length=2.5mm]}] (z) -- (y);
\node[xshift=-1cm] at (z.west) {$\not$};

\end{tikzpicture}
}
\caption{Right Sub-Pattern}\label{fig:generic_rsp}
\end{figure}
\end{definition}

Notice that for $\overline{A}\Delta1S$, the corresponding parametrisation of the right sub-pattern check is the following:

$$\kwd{rsp}(t_{r}, t_{d})$$

Considering the right sub-pattern, an undernode in the process tree can belong to the following classes. The association to a class depends on the properties of the associated process block.

\begin{table}
{\normalsize
\begin{tabularx}{\textwidth}{m{0.375cm} X}
$\mathbf{\overset{+}{z}}$ & The associated process block contains an execution capable of fulfilling the right sub-pattern of the right sub-pattern. \\
$\mathbf{\overset{0}{z}}$ & The associated process block contains an execution that does not influence the right sub-pattern of the right sub-pattern. \\
$\mathbf{\overset{-}{z}}$ & The associated process block contains only executions hindering the capability of the right sub-pattern. \\
\end{tabularx}
}
\end{table}

Even though a process block can potentially belong to multiple classes, a process block belongs to the classes for which no better class is available, in accordance to the following linear preference order of the classes: $\mathbf{\overset{-}{z}} < \mathbf{\overset{0}{z}} < \mathbf{\overset{+}{z}}$

\subsubsection{Interval Overnode Pattern: Undernodes Classification for \kwd{AND} Overnode}

We introduce a generalised sub-pattern to evaluate the undernodes: \emph{Interval Sub-Pattern}.

\begin{definition}[Interval Sub-Pattern]

Let an Interval Sub-Pattern be the following:

$$\kwd{isp}(x, y, z)$$

Where $x$ represents the desired task on the right side of the interval, and $z$ the left side within an aggregation. Additionally, $y$ represent a task that is undesired when appearing within the interval formed by $x$ and $z$, as shown in Figure \ref{fig:generic_isp}.


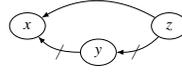
\begin{figure}[h]
\centering
\scalebox{0.35}{
\begin{tikzpicture}[thick, font=\LARGE]

\node[ellipse,draw,align=center,minimum size=1cm,text width=0.75cm] (m1) at (0,0) {$y$};
\node[xshift=-2cm,yshift=1cm,ellipse,draw,align=center,minimum size=1cm,text width=0.75cm] (l1) at (m1.west) {$x$};
\node[xshift=2cm,yshift=1cm,ellipse,draw,align=center,minimum size=1cm,text width=0.75cm] (r1) at (m1.east) {$z$};

\draw[->,-{Latex[length=2.5mm]}] (r1) to[out=150,in=30] (l1);
\draw[->,-{Latex[length=2.5mm]}] (r1) to[out=225,in=0] (m1); \node[xshift=-1cm,yshift=0cm] at (m1.west) {$\not$};
\draw[->,-{Latex[length=2.5mm]}] (m1) to[out=180,in=315] (l1); \node[xshift=0.5cm,yshift=0cm] at (m1.east) {$\not$};

\end{tikzpicture}
}
\caption{Interval Sub-Pattern}\label{fig:generic_isp}
\end{figure}
\end{definition}

For $\overline{A}\Delta1S$, the parametrisation of the interval sub-pattern is as follows:

$$\kwd{isp}(t_{\neg r}, t_{r}, t_{d})$$

\begin{table}
{\normalsize
\begin{tabularx}{\textwidth}{m{0.5cm} X}
\hline
$\mathbf{\underline{\overset{+}{x}\overset{+}{z}}}$ & The associated process block contains an execution containing an interval fulfilling the complete interval sub-pattern. \\

$\mathbf{\underline{\overset{+}{x}\overset{0}{z}}}$ & The associated process block contains an execution fulfilling the left side of the interval sub-pattern, and with no failure conditions for the right side. \\

$\mathbf{\underline{\overset{0}{x}\overset{+}{z}}}$ & The associated process block contains an execution fulfilling the right side of the interval sub-pattern, and with no failure conditions for the left side. \\

$\mathbf{\underline{\overset{+}{x}\overset{-}{z}}}$ & The associated process block contains an execution fulfilling the left side of the interval sub-pattern, while failing its right side. \\

$\mathbf{\underline{\overset{0}{x}\overset{0}{z}}}$ & The associated process block contains an execution not fulfilling the interval sub-pattern, and not containing any undesired element. \\

$\mathbf{\underline{\overset{-}{x}\overset{+}{z}}}$ & The associated process block contains an execution fulfilling the right side of the interval sub-pattern, while failing its left side. \\

$\mathbf{\underline{\overset{-}{xz}}}$ & The associated process block contains only executions failing either side of the interval sub-pattern without partially achieving any of them. \\
\hline
\end{tabularx}
}
\caption{Interval sub-pattern classification classes.}\label{tb:ispclass}
\end{table}

Note that in Table~\ref{tb:ispclass} the classes $\mathbf{\underline{\overset{0}{x}\overset{-}{z}}}$ and $\mathbf{\underline{\overset{-}{x}\overset{0}{z}}}$ are missing. It is not necessary to explicitly include these classes, as the class $\mathbf{\underline{\overset{-}{xz}}}$ covers them. The intuitive reason is the following: as the omitted classes represent scenarios where either the left side or the right side of the pattern contain elements leading to the failure of $\overline{A}\Delta1$, and no additional helpful element towards the satisfaction of the constraint are present. Such classifications can be reduced to the worst classification type, because as these sub-patterns are used to evaluate \kwd{AND} nodes, their semantics allows to disregard such sub-blocks when evaluating the failure $\Delta$-constraint. This means that we are not required to keep track of these cases, as they can all be aggregated to the worst class $\mathbf{\underline{\overset{-}{xz}}}$.

Even though a process block can potentially belong to multiple classes, a process block belongs to the classes for which no better class is available, in accordance to the preference lattice shown in Figure \ref{fig:ad1s_c_latt}.
%
Despite the fact that the evaluation of an overnode can increase due to a single evaluation aggregated with a classification producing multiple ones, the number of evaluations required to be kept is limited to at most 3 in the worst case scenario due to the preference order expressed in Figure \ref{fig:ad1s_c_latt}.


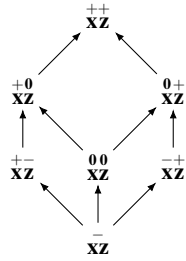
\begin{figure}[h]
\centering
\scalebox{0.5}{
\begin{tikzpicture}[thick, font=\LARGE]

\node (a) at (0,0) {$\mathbf{\overset{+}{x}\overset{+}{z}}$};

\node (b) at (-2,-2) {$\mathbf{\overset{+}{x}\overset{0}{z}}$};
\node (c) at (2,-2) {$\mathbf{\overset{0}{x}\overset{+}{z}}$};

\node (d) at (-2,-4) {$\mathbf{\overset{+}{x}\overset{-}{z}}$};
\node (e) at (0,-4) {$\mathbf{\overset{0}{x}\overset{0}{z}}$};
\node (f) at (2,-4) {$\mathbf{\overset{-}{x}\overset{+}{z}}$};

\node (g) at (0,-6) {$\mathbf{\overset{-}{xz}}$};

\draw[->,-{Latex[length=2.5mm]}] (b) -- (a);
\draw[->,-{Latex[length=2.5mm]}] (c) -- (a);

\draw[->,-{Latex[length=2.5mm]}] (d) -- (b);
\draw[->,-{Latex[length=2.5mm]}] (e) -- (b);
\draw[->,-{Latex[length=2.5mm]}] (e) -- (c);
\draw[->,-{Latex[length=2.5mm]}] (f) -- (c);

\draw[->,-{Latex[length=2.5mm]}] (g) -- (d);
\draw[->,-{Latex[length=2.5mm]}] (g) -- (e);
\draw[->,-{Latex[length=2.5mm]}] (g) -- (f);

\end{tikzpicture}
}
\caption{Interval Sub-Pattern Classes Preference Lattice}\label{fig:ad1s_c_latt}
\end{figure}

\subsubsection{$\overline{A}\Delta1S$: Computational Complexity of the Aggregations}

The evaluation of an overnode of type \kwd{SEQ} is linear with respect to the size of the undernode children sub-trees. When evaluating an overnode of type \kwd{AND}, we must consider the number of possible simultaneous classifications that can appear while evaluating an overnode sub-pattern and an interval sub-pattern. In both cases, the number is limited to 3, as it can be seen in the preference lattices shown in Figure \ref{fig:ad1s_stem_ev} and Figure \ref{fig:ad1s_c_latt}.
Therefore, we can conclude that the overall complexity of verifying whether a process model satisfies $\overline{A}\Delta1S$ is determined by the complexity of aggregating the evaluations of the overnodes belonging to its process tree. As evaluating an overnode of type \kwd{AND} is more complex than evaluating one of type \kwd{SEQ}, we can generalise the complexity of checking the satisfaction of $\overline{A}\Delta1S$ as:  is \textbf{O}($9n$) evaluations/aggregations, where $n$ is the number of nodes in the tree, representing the number of blocks in a business process. Therefore the complexity of evaluating is polynomial with respect to the size of the process model.

\subsubsection{Example: Evaluating $\overline{A}\Delta1S$ along a Process Tree}


In this sub-section, we illustrate an example of how the evaluation is performed on a process tree while checking whether the related process model contains an execution satisfying $\overline{A}\Delta1S$.

In Figure~\ref{fig:ad1s_example_tree}b, the process tree is shown derived from the process introduced in Figure~\ref{ex:processexample}. Notice that the \kwd{start} task contains the negation of the sought annotated requirement of the obligation, in accordance to Assumption \ref{ass:starting_complementary}, and the deadline is contained in the \kwd{end} task, in accordance to Assumption \ref{ass:final_deadline}.


\begin{figure}[h!]
    \centering
    \begin{subfigure}[t!]{.5\textwidth}
        \centering
		\scalebox{0.425}{
		\begin{tikzpicture}[thick,font=\LARGE]

\newcommand{\gatesize}{0.9cm}

\node[draw,align=center,circle, minimum size=1cm] (start) at (0,0) {};

\node[xshift=1cm,draw,diamond,align=center,minimum height=\gatesize,minimum width=\gatesize] (and1) at (start.east) {};
\draw[line width=4pt] ([yshift=-2mm]and1.north) -- ([yshift=2mm]and1.south);
\draw[line width=4pt] ([xshift=-2mm]and1.east) -- ([xshift=2mm]and1.west);

\node[xshift=1cm,yshift=1.75cm,draw,diamond,align=center,minimum height=\gatesize,minimum width=\gatesize] (xor1) at (and1.east) {};
\draw[line width=3pt] ([xshift=0.5mm,yshift=-0.5mm]xor1.north west) -- ([xshift=-0.5mm,yshift=0.5mm]xor1.south east);
\draw[line width=3pt] ([xshift=-0.5mm,yshift=-0.5mm]xor1.north east) -- ([xshift=0.5mm,yshift=0.5mm]xor1.south west);

\node[xshift=1.25cm,yshift=1.15cm,draw,rounded corners=3pt,align=center,minimum height=1cm,minimum width=1.5cm] (t1) at (xor1.east) {$t_1$};
\node[xshift=1.25cm,yshift=-1.15cm,draw,rounded corners=3pt,align=center,minimum height=1cm,minimum width=1.5cm] (t2) at (xor1.east) {$t_2$};
\node[yshift=-1.75cm,draw,rounded corners=3pt,align=center,minimum height=1cm,minimum width=1.5cm] (t3) at (t2 |- and1) {$t_3$};

\node[xshift=1cm,draw,diamond,align=center,minimum height=\gatesize,minimum width=\gatesize] (xor2) at (t1.east |- xor1) {};
\draw[line width=3pt] ([xshift=0.5mm,yshift=-0.5mm]xor2.north west) -- ([xshift=-0.5mm,yshift=0.5mm]xor2.south east);
\draw[line width=3pt] ([xshift=-0.5mm,yshift=-0.5mm]xor2.north east) -- ([xshift=0.5mm,yshift=0.5mm]xor2.south west);

\node[xshift=1cm,draw,diamond,align=center,minimum height=\gatesize,minimum width=\gatesize] (and2) at (xor2.east |- and1) {};
\draw[line width=4pt] ([yshift=-2mm]and2.north) -- ([yshift=2mm]and2.south);
\draw[line width=4pt] ([xshift=-2mm]and2.east) -- ([xshift=2mm]and2.west);

\node[xshift=1.25cm,draw,rounded corners=3pt,align=center,minimum height=1cm,minimum width=1.5cm] (t4) at (and2.east) {$t_4$};

\node[xshift=1cm,draw,line width=3pt,align=center,circle, minimum size=1cm] (end) at (t4.east) {};

\node[anchor=south] at (start.north) {\{\}}; 
\node[anchor=south] at (t1.north) {\{$a$\}};
\node[anchor=south] at (t2.north) {\{$b,c$\}};
\node[anchor=south] at (t3.north) {\{$c,d$\}};
\node[anchor=south] at (t4.north) {\{$\neg a$\}};
\node[anchor=south] at (end.north) {\{\}}; 

\draw[->,-{Latex[length=7.5mm]},line width=5pt] ([xshift=1.5cm]end.east) -- ([xshift=3cm]end.east);

\draw[->,-{Latex[length=3mm]}] (start) -- (and1);
\draw[->,-{Latex[length=3mm]}] (and1) -- (and1 |- xor1) -- (xor1);
\draw[->,-{Latex[length=3mm]}] (and1) -- (and1 |- t3) -- (t3);
\draw[->,-{Latex[length=3mm]}] (xor1) -- (xor1 |- t1) -- (t1);
\draw[->,-{Latex[length=3mm]}] (xor1) -- (xor1 |- t2) -- (t2);
\draw[->,-{Latex[length=3mm]}] (t1) -- (xor2 |- t1) -- (xor2);
\draw[->,-{Latex[length=3mm]}] (t2) -- (xor2 |- t2) -- (xor2);
\draw[->,-{Latex[length=3mm]}] (xor2) -- (and2 |- xor2) -- (and2);
\draw[->,-{Latex[length=3mm]}] (t3) -- (and2 |- t3) -- (and2);
\draw[->,-{Latex[length=3mm]}] (and2) -- (t4);
\draw[->,-{Latex[length=3mm]}] (t4) -- (end);

\end{tikzpicture}
		}\caption{Process from Figure~\ref{ex:processexample} \;\;\;\;\;\;\;\;\;}
    \end{subfigure}%
    ~ 
    \begin{subfigure}[t!]{.5\textwidth}
        \centering
		\scalebox{0.425}{
		\begin{tikzpicture}[thick,font=\LARGE]

\node[draw,align=center,minimum height=1cm,minimum width=1.5cm] (troot) at (0,0) {Root};

\node[xshift=-4cm,yshift=-2cm,draw,ellipse,align=center,minimum size=1cm,anchor=north] (tc1) at (troot.south) {};
\node[xshift=-1.5cm,yshift=-2cm,draw,align=center,minimum height=1cm,minimum width=1.5cm,anchor=north] (tc2) at (troot.south) {$+$};
\node[xshift=1.5cm,yshift=-2cm,draw,align=center,minimum height=1cm,minimum width=1.5cm,anchor=north] (tc3) at (troot.south) {$t_4$};
\node[xshift=4cm,yshift=-2cm,draw,ellipse,align=center,minimum size=1cm,anchor=north,line width=3pt] (tc4) at (troot.south) {};

\node[xshift=-1.5cm,yshift=-2cm,draw,align=center,minimum height=1cm,minimum width=1.5cm,anchor=north] (tc5) at (tc2.south) {$\times$};
\node[xshift=1.5cm,yshift=-2cm,draw,align=center,minimum height=1cm,minimum width=1.5cm,anchor=north] (tc6) at (tc2.south) {$t_3$};

\node[xshift=-1.5cm,yshift=-2cm,draw,align=center,minimum height=1cm,minimum width=1.5cm,anchor=north] (tc7) at (tc5.south) {$t_1$};
\node[xshift=1.5cm,yshift=-2cm,draw,align=center,minimum height=1cm,minimum width=1.5cm,anchor=north] (tc8) at (tc5.south) {$t_2$};

\node[anchor=north] (an1) at (tc1.south) {$\{a, \neg b, \neg c\}$};
\node[anchor=north] (an3) at (tc3.south) {$\{\neg a\}$};
\node[anchor=north] (an4) at (tc4.south) {$\{\neg a\}$};
\node[anchor=north] (an6) at (tc6.south) {$\{c, d\}$};
\node[anchor=north] (an7) at (tc7.south) {$\{a\}$};
\node[anchor=north] (an8) at (tc8.south) {$\{b, c\}$};

\draw[->,-{Latex[length=3mm]}] (troot) -- (tc1);
\draw[->,-{Latex[length=3mm]}] (troot) -- (tc2);
\draw[->,-{Latex[length=3mm]}] (troot) -- (tc3);
\draw[->,-{Latex[length=3mm]}] (troot) -- (tc4);

\draw[->,-{Latex[length=3mm]}] (tc2) -- (tc5);
\draw[->,-{Latex[length=3mm]}] (tc2) -- (tc6);

\draw[->,-{Latex[length=3mm]}] (tc5) -- (tc7);
\draw[->,-{Latex[length=3mm]}] (tc5) -- (tc8);

\end{tikzpicture}
		}\caption{Process tree representation}
    \end{subfigure}
    \caption{Process tree representation.}\label{fig:ad1s_example_tree}
\end{figure}
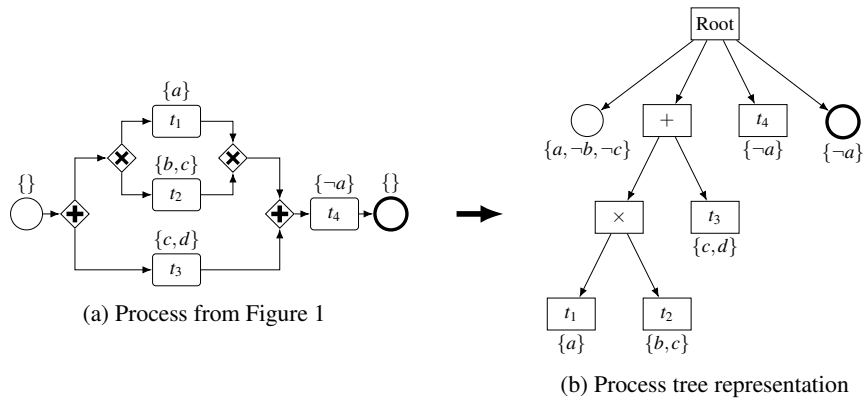

Given the obligation $\Obl^a \langle b, c, \neg a\rangle$, there are two trigger leaves candidates: $t_2$ and $t_3$, as they both contain the annotation $c$.
Let us first consider the process tree using $t_2$ as trigger leaf and proceed with the classification of the leaves in the tree. However, since $t_2$ satisfies the termination condition while evaluating the trigger leaf, we can safely state that for this particular stem the process does not satisfy $\overline{A}\Delta1S$.
Let us now consider $t_3$ as the trigger leaf. We can classify the leaves of the tree as shown in Figure \ref{fig:ad1s_example_tree_1}, where:
\begin{itemize}
\item $t_3$, being the trigger leaf on the stem (and, therefore, an overnode) is classified according to the overnode sub-pattern.
\item $t_1$ and $t_2$ are classified according to the interval sub pattern, as their parent node on the stem (their parent overnode) is an \kwd{AND}.
\item \kwd{start} is classified according to the left sub-pattern as its parent overnode is a \kwd{SEQ} and it sits on the left of the stem.
\item $t_4$ and \kwd{end} are classified according to the righ sub-pattern as their parent overnode is a \kwd{SEQ} and it sits on the right of the stem.
\end{itemize}

\begin{figure}[h!]
\centering
\scalebox{0.5}{
\begin{tikzpicture}[thick,font=\LARGE]

\node[draw,align=center,minimum height=1cm,minimum width=1.5cm] (troot) at (0,0) {Root};

\node[xshift=-4cm,yshift=-2cm,draw,ellipse,align=center,minimum size=1cm,anchor=north] (tc1) at (troot.south) {};
\node[xshift=-1.5cm,yshift=-2cm,draw,align=center,minimum height=1cm,minimum width=1.5cm,anchor=north] (tc2) at (troot.south) {$+$};
\node[xshift=1.5cm,yshift=-2cm,draw,align=center,minimum height=1cm,minimum width=1.5cm,anchor=north] (tc3) at (troot.south) {$t_4$};
\node[xshift=4cm,yshift=-2cm,draw,ellipse,align=center,minimum size=1cm,anchor=north,line width=3pt] (tc4) at (troot.south) {};

\node[xshift=-1.5cm,yshift=-2cm,draw,align=center,minimum height=1cm,minimum width=1.5cm,anchor=north] (tc5) at (tc2.south) {$\times$};
\node[xshift=1.5cm,yshift=-2cm,draw,align=center,minimum height=1cm,minimum width=1.5cm,anchor=north] (tc6) at (tc2.south) {$t_3$};

\node[xshift=-1.5cm,yshift=-2cm,draw,align=center,minimum height=1cm,minimum width=1.5cm,anchor=north] (tc7) at (tc5.south) {$t_1$};
\node[xshift=1.5cm,yshift=-2cm,draw,align=center,minimum height=1cm,minimum width=1.5cm,anchor=north] (tc8) at (tc5.south) {$t_2$};

\node[anchor=north] (an1) at (tc1.south) {$\{a, \neg b, \neg c\}$};
\node[anchor=north] (an3) at (tc3.south) {$\{\neg a\}$};
\node[anchor=north] (an4) at (tc4.south) {$\{\neg a\}$};
\node[anchor=north] (an6) at (tc6.south) {$\{c, d\}$};
\node[anchor=north] (an7) at (tc7.south) {$\{a\}$};
\node[anchor=north] (an8) at (tc8.south) {$\{b, c\}$};

\node[yshift=0.2cm,anchor=north] (lan1) at (an1.south) {$\mathbf{\overset{+}{x}}$}; 
\node[yshift=0.2cm,anchor=north] (lan3) at (an3.south) {$\mathbf{\overset{+}{z}}$}; 
\node[yshift=0.2cm,anchor=north] (lan4) at (an4.south) {$\mathbf{\overset{+}{z}}$}; 
\node[yshift=0.2cm,anchor=north] (lan6) at (an6.south) {$\mathbf{\overset{0}{x}t\overset{0}{z}}$}; 
\node[yshift=0.2cm,anchor=north] (lan7) at (an7.south) {$\mathbf{\underline{\overset{0}{x}\overset{0}{z}}}$}; 
\node[yshift=0.2cm,anchor=north] (lan8) at (an8.south) {$\mathbf{\underline{\overset{-}{xz}}}$};

\draw[->,-{Latex[length=3mm]}] (troot) -- (tc1);
\draw[->,-{Latex[length=3mm]}] (troot) -- (tc2);
\draw[->,-{Latex[length=3mm]}] (troot) -- (tc3);
\draw[->,-{Latex[length=3mm]}] (troot) -- (tc4);

\draw[->,-{Latex[length=3mm]}] (tc2) -- (tc5);
\draw[->,-{Latex[length=3mm]}] (tc2) -- (tc6);

\draw[->,-{Latex[length=3mm]}] (tc5) -- (tc7);
\draw[->,-{Latex[length=3mm]}] (tc5) -- (tc8);

\end{tikzpicture}
}
\caption{Evaluating $\overline{A}\Delta1S$ on a process tree (1).}\label{fig:ad1s_example_tree_1}
\end{figure}

In Figure \ref{fig:ad1s_example_tree_2}, we show the aggregations towards the classifications of the overnode \kwd{AND} on the stem. Note that the \kwd{XOR} undernode classification is the union of the classifications of the children, and then pruned according to the priority lattice.


\begin{figure}[h!]
\centering
\scalebox{0.5}{
\begin{tikzpicture}[thick,font=\LARGE]

\node[draw,align=center,minimum height=1cm,minimum width=1.5cm] (troot) at (0,0) {Root};

\node[xshift=-4cm,yshift=-2cm,draw,ellipse,align=center,minimum size=1cm,anchor=north] (tc1) at (troot.south) {};
\node[xshift=-1.5cm,yshift=-2cm,draw,align=center,minimum height=1cm,minimum width=1.5cm,anchor=north] (tc2) at (troot.south) {$+$};
\node[xshift=1.5cm,yshift=-2cm,draw,align=center,minimum height=1cm,minimum width=1.5cm,anchor=north] (tc3) at (troot.south) {$t_4$};
\node[xshift=4cm,yshift=-2cm,draw,ellipse,align=center,minimum size=1cm,anchor=north,line width=3pt] (tc4) at (troot.south) {};

\node[xshift=-1.5cm,yshift=-2cm,draw,align=center,minimum height=1cm,minimum width=1.5cm,anchor=north] (tc5) at (tc2.south) {$\times$};
\node[xshift=1.5cm,yshift=-2cm,draw,align=center,minimum height=1cm,minimum width=1.5cm,anchor=north] (tc6) at (tc2.south) {$t_3$};

\node[xshift=-1.5cm,yshift=-2cm,draw,align=center,minimum height=1cm,minimum width=1.5cm,anchor=north] (tc7) at (tc5.south) {$t_1$};
\node[xshift=1.5cm,yshift=-2cm,draw,align=center,minimum height=1cm,minimum width=1.5cm,anchor=north] (tc8) at (tc5.south) {$t_2$};

\node[anchor=north] (an1) at (tc1.south) {$\{a, \neg b, \neg c\}$};
\node[anchor=north] (an3) at (tc3.south) {$\{\neg a\}$};
\node[anchor=north] (an4) at (tc4.south) {$\{\neg a\}$};
\node[anchor=north] (an6) at (tc6.south) {$\{c, d\}$};
\node[anchor=north] (an7) at (tc7.south) {$\{a\}$};
\node[anchor=north] (an8) at (tc8.south) {$\{b, c\}$};

\node[yshift=0.2cm,anchor=north] (lan1) at (an1.south) {$\mathbf{\overset{+}{x}}$}; 
\node[yshift=0.2cm,anchor=north] (lan3) at (an3.south) {$\mathbf{\overset{+}{z}}$}; 
\node[yshift=0.2cm,anchor=north] (lan4) at (an4.south) {$\mathbf{\overset{+}{z}}$}; 
\node[yshift=0.2cm,anchor=north] (lan6) at (an6.south) {$\mathbf{\overset{0}{x}t\overset{0}{z}}$}; 
\node[yshift=0.2cm,anchor=north] (lan7) at (an7.south) {$\mathbf{\underline{\overset{0}{x}\overset{0}{z}}}$}; 
\node[yshift=0.2cm,anchor=north] (lan8) at (an8.south) {$\mathbf{\underline{\overset{-}{xz}}}$}; 

\node[anchor=west] (lan2) at (tc2.east) {$\mathbf{\overset{0}{x}t\overset{0}{z}}$}; 
\node[anchor=west] (lan5) at (tc5.east) {$\mathbf{\underline{\overset{0}{x}\overset{0}{z}}}$};

\draw[->,-{Latex[length=3mm]}] (troot) -- (tc1);
\draw[->,-{Latex[length=3mm]}] (troot) -- (tc2);
\draw[->,-{Latex[length=3mm]}] (troot) -- (tc3);
\draw[->,-{Latex[length=3mm]}] (troot) -- (tc4);

\draw[->,-{Latex[length=3mm]}] (tc2) -- (tc5);
\draw[->,-{Latex[length=3mm]}] (tc2) -- (tc6);

\draw[->,-{Latex[length=3mm]}] (tc5) -- (tc7);
\draw[->,-{Latex[length=3mm]}] (tc5) -- (tc8);

\end{tikzpicture}
}
\caption{Evaluating $\overline{A}\Delta1S$ on a process tree (2).}\label{fig:ad1s_example_tree_2}
\end{figure}

Figure \ref{fig:ad1s_example_tree_3} depicts the aggregation towards the classification of the overnode \kwd{SEQ}, the root. Notice that the resulting classification is $\mathbf{\overset{+}{x}t\overset{+}{z}}$\footnote{The aggregation is performed in two steps, the first step involves the overnode class with either one of the left or right classifications. Then the result is aggregated with the remainder, producing the final result.}, hence the process tree, using this stem, does satisfy $\overline{A}\Delta1S$. 


\begin{figure}[h]
\centering
\scalebox{0.5}{
\begin{tikzpicture}[thick,font=\LARGE]

\node[draw,align=center,minimum height=1cm,minimum width=1.5cm] (troot) at (0,0) {Root};

\node[xshift=-4cm,yshift=-2cm,draw,ellipse,align=center,minimum size=1cm,anchor=north] (tc1) at (troot.south) {};
\node[xshift=-1.5cm,yshift=-2cm,draw,align=center,minimum height=1cm,minimum width=1.5cm,anchor=north] (tc2) at (troot.south) {$+$};
\node[xshift=1.5cm,yshift=-2cm,draw,align=center,minimum height=1cm,minimum width=1.5cm,anchor=north] (tc3) at (troot.south) {$t_4$};
\node[xshift=4cm,yshift=-2cm,draw,ellipse,align=center,minimum size=1cm,anchor=north,line width=3pt] (tc4) at (troot.south) {};

\node[xshift=-1.5cm,yshift=-2cm,draw,align=center,minimum height=1cm,minimum width=1.5cm,anchor=north] (tc5) at (tc2.south) {$\times$};
\node[xshift=1.5cm,yshift=-2cm,draw,align=center,minimum height=1cm,minimum width=1.5cm,anchor=north] (tc6) at (tc2.south) {$t_3$};

\node[xshift=-1.5cm,yshift=-2cm,draw,align=center,minimum height=1cm,minimum width=1.5cm,anchor=north] (tc7) at (tc5.south) {$t_1$};
\node[xshift=1.5cm,yshift=-2cm,draw,align=center,minimum height=1cm,minimum width=1.5cm,anchor=north] (tc8) at (tc5.south) {$t_2$};

\node[anchor=north] (an1) at (tc1.south) {$\{a, \neg b, \neg c\}$};
\node[anchor=north] (an3) at (tc3.south) {$\{\neg a\}$};
\node[anchor=north] (an4) at (tc4.south) {$\{\neg a\}$};
\node[anchor=north] (an6) at (tc6.south) {$\{c, d\}$};
\node[anchor=north] (an7) at (tc7.south) {$\{a\}$};
\node[anchor=north] (an8) at (tc8.south) {$\{b, c\}$};

\node[yshift=0.2cm,anchor=north] (lan1) at (an1.south) {$\mathbf{\overset{+}{x}}$}; 
\node[yshift=0.2cm,anchor=north] (lan3) at (an3.south) {$\mathbf{\overset{+}{z}}$}; 
\node[yshift=0.2cm,anchor=north] (lan4) at (an4.south) {$\mathbf{\overset{+}{z}}$}; 
\node[yshift=0.2cm,anchor=north] (lan6) at (an6.south) {$\mathbf{\overset{0}{x}t\overset{0}{z}}$}; 
\node[yshift=0.2cm,anchor=north] (lan7) at (an7.south) {$\mathbf{\underline{\overset{0}{x}\overset{0}{z}}}$}; 
\node[yshift=0.2cm,anchor=north] (lan8) at (an8.south) {$\mathbf{\underline{\overset{-}{xz}}}$}; 

\node[anchor=west] (lanroot) at (troot.east) {$\mathbf{\overset{+}{x}t\overset{+}{z}}$}; 
\node[anchor=west] (lan2) at (tc2.east) {$\mathbf{\overset{0}{x}t\overset{0}{z}}$}; 
\node[anchor=west] (lan5) at (tc5.east) {$\mathbf{\underline{\overset{0}{x}\overset{0}{z}}}$};

\draw[->,-{Latex[length=3mm]}] (troot) -- (tc1);
\draw[->,-{Latex[length=3mm]}] (troot) -- (tc2);
\draw[->,-{Latex[length=3mm]}] (troot) -- (tc3);
\draw[->,-{Latex[length=3mm]}] (troot) -- (tc4);

\draw[->,-{Latex[length=3mm]}] (tc2) -- (tc5);
\draw[->,-{Latex[length=3mm]}] (tc2) -- (tc6);

\draw[->,-{Latex[length=3mm]}] (tc5) -- (tc7);
\draw[->,-{Latex[length=3mm]}] (tc5) -- (tc8);

\end{tikzpicture}
}
\caption{Evaluating $\overline{A}\Delta1S$ on a process tree (3).}\label{fig:ad1s_example_tree_3}
\end{figure}
\FloatBarrier

\subsection{$\overline{A}\Delta2$}\label{sec:ad2}

We hereby show the classifications used by \kwd{Evaluate}($\mathcal{P}'$, $\tau$, $\delta$) when $\delta$ is $\overline{A}\Delta2$. Notice that the procedure is actually evaluated over two simplified constraints of $\overline{A}\Delta2$ obtained from its decomposition, which we first introduce.

\subsubsection{Decomposing $\overline{A}\Delta2$}

We now introduce the decomposed counterparts of $\overline{A}\Delta2$, which we refer to as $\overline{A}\Delta2.1$ and $\overline{A}\Delta2.2$. The two decomposed $\Delta$-constraints are derived from $\overline{A}\Delta2$ as illustrated in Figure \ref{f:ad2_pattern_reductions}.


\begin{figure}[h!]
\centering
\scalebox{0.5}{
\begin{tikzpicture}[thick, font=\LARGE]

\node[ellipse,draw,align=center,minimum size=1cm,text width=0.75cm] (c) at (-4.5,-1) {$\neg r$};
\node[ellipse,draw,align=center,minimum size=1cm,text width=0.75cm] (b) at (-2,-1.5) {$r$};
\node[ellipse,draw,align=center,minimum size=1cm,text width=0.75cm] (a) at (0,0) {$t$};

\node[ellipse,draw,align=center,minimum size=1cm,text width=0.75cm] (e) at (-4.5,1) {$d$};
\node[ellipse,draw,align=center,minimum size=1cm,text width=0.75cm] (d) at (-2,1.5) {$\neg d$};

\draw[->,-{Latex[length=2.5mm]}] (a) to[out=185,in=20] (c);
\draw[->,-{Latex[length=2.5mm]}] (a) to[out=175,in=-20] (e);
\draw[->,-{Latex[length=2.5mm]}] (a) to[out=245,in=15] (b); \node[xshift=-0.75cm] at (b.west) {$\not$};
\draw[->,-{Latex[length=2.5mm]}] (b) to[out=180,in=-20] (c); \node[xshift=0.6cm,yshift=0.2cm,rotate=90] at (b.east) {$\not$};

\draw[->,-{Latex[length=2.5mm]}] (a) to[out=115,in=-15] (d); \node[xshift=-0.75cm,yshift=-0.1cm] at (d.west) {$\not$};
\draw[->,-{Latex[length=2.5mm]}] (d) to[out=180,in=20] (e); \node[xshift=0.35cm,yshift=-0.45cm] at (d.east) {$\not$};

\node[ellipse,draw,align=center,minimum size=1cm,text width=0.75cm] (p1) at (4,3.5) {$\neg r$};
\node[ellipse,draw,align=center,minimum size=1cm,text width=0.75cm] (q1) at (9,1.5) {$r$};
\node[ellipse,draw,align=center,minimum size=1cm,text width=0.75cm] (r1) at (9,3.5) {$d$};
\node[ellipse,draw,align=center,minimum size=1cm,text width=0.75cm] (s1) at (11.5,2.5) {$\neg d$};
\node[ellipse,draw,align=center,minimum size=1cm,text width=0.75cm] (t1) at (14,3.5) {$t$};

\draw[->,-{Latex[length=2.5mm]}] (r1) to[out=150,in=30] (p1);
\draw[->,-{Latex[length=2.5mm]}] (t1) to[out=240,in=0] (q1); \node[xshift=-1cm,yshift=0cm] at (q1.west) {$\not$};
\draw[->,-{Latex[length=2.5mm]}] (q1) to[out=180,in=315] (p1); \node[xshift=0.5cm,yshift=0cm] at (q1.east) {$\not$};

\draw[->,-{Latex[length=2.5mm]}] (t1) to[out=150,in=30] (r1);
\draw[->,-{Latex[length=2.5mm]}] (t1) to[out=225,in=0] (s1); \node[xshift=-1cm,yshift=0.125cm] at (s1.west) {$\not$};
\draw[->,-{Latex[length=2.5mm]}] (s1) to[out=180,in=315] (r1); \node[xshift=0.5cm,yshift=0.125cm] at (s1.east) {$\not$};

\node[ellipse,draw,align=center,minimum size=1cm,text width=0.75cm] (p2) at (4,-1.5) {$d$};
\node[ellipse,draw,align=center,minimum size=1cm,text width=0.75cm] (q2) at (9,-3.5) {$\neg d$};
\node[ellipse,draw,align=center,minimum size=1cm,text width=0.75cm] (r2) at (9,-1.5) {$\neg r$};
\node[ellipse,draw,align=center,minimum size=1cm,text width=0.75cm] (s2) at (11.5,-2.5) {$r$};
\node[ellipse,draw,align=center,minimum size=1cm,text width=0.75cm] (t2) at (14,-1.5) {$t$};

\draw[->,-{Latex[length=2.5mm]}] (r2) to[out=150,in=30] (p2);
\draw[->,-{Latex[length=2.5mm]}] (t2) to[out=240,in=0] (q2); \node[xshift=-1cm,yshift=0cm] at (q2.west) {$\not$};
\draw[->,-{Latex[length=2.5mm]}] (q2) to[out=180,in=315] (p2); \node[xshift=0.5cm,yshift=0cm] at (q2.east) {$\not$};

\draw[->,-{Latex[length=2.5mm]}] (t2) to[out=150,in=30] (r2);
\draw[->,-{Latex[length=2.5mm]}] (t2) to[out=225,in=0] (s2); \node[xshift=-1cm,yshift=0.125cm] at (s2.west) {$\not$};
\draw[->,-{Latex[length=2.5mm]}] (s2) to[out=180,in=315] (r2); \node[xshift=0.5cm,yshift=0.125cm] at (s2.east) {$\not$};

\draw[->,-{Latex[length=7.5mm]},line width=5pt] ([xshift=0.5cm]a.east) -- ([xshift=2cm]a.east);

\node at (r1 |- a) {\bf OR};

\end{tikzpicture}
}
\caption{$\overline{A}\Delta$2 Pattern Reductions}\label{f:ad2_pattern_reductions}
\end{figure}

\begin{definition}[$\overline{A}\Delta2$ Decomposed]\label{def:ad2d}
The Achievement Failure $\Delta$-constraint from Definition \ref{def:af_delta}, recalled below:

\bigskip
\noindent\begin{tabular}{@{}ll}
$\overline{A}\Delta2$ & $\exists t_t$ such that: \\
& $\exists t_{\neg r}, \neg \exists t_{r}  | t_{\neg r} \preceq t_{r} \preceq t_t$ and \\
& $\exists t_d, \neg \exists t_{\neg d} | t_d \preceq t_{\neg d} \preceq t_t$ \\
\end{tabular}

\bigskip


\noindent is simplified as follows:

\bigskip
\noindent\begin{tabular}{@{}ll}
$\overline{A}\Delta2.1$ & $\exists t_t$ such that: \\
& $\exists t_{\neg r}, t_d | t_{\neg r} \preceq t_d \preceq t_t$ and \\
& $\neg \exists t_{\neg d} | t_{d} \preceq t_{\neg d} \preceq t_t$ and \\
& $\neg \exists t_r | t_{\neg r} \preceq t_r \preceq t_d$ \\
$\overline{A}\Delta2.2$ & $\exists t_t$ such that: \\
& $\exists t_{\neg r}, t_d | t_d \preceq t_{\neg r} \preceq t_t$ and \\
& $\neg \exists t_{\neg d} | t_{d} \preceq t_{\neg d} \preceq t_{\neg r}$ and \\
& $\neg \exists t_r | t_{\neg r} \preceq t_r \preceq t_t$ \\
\end{tabular}

\bigskip

\end{definition}

$\overline{A}\Delta2$ contains two sub-patterns, and both of these sub-patterns sit on the left of the trigger leaf. The idea behind the simplification is to serialise the two possible interleavings of the sub-patterns to ease their verification. While the simplification introduces an additional constraint to verify, it allows to get rid of the complexity of verifying interleaved sub-patterns. Finally, Theorem \ref{the:ad2simple} shows how verifying the two decomposed constraints is equivalent to verifying the original.

\begin{theorem}\label{the:ad2simple}
If a violation in business process model is identified by satisfying $\overline{A}\Delta2$, then the same violation is identified by satisfying either $\overline{A}\Delta2.1$ or $\overline{A}\Delta2.2$.
\end{theorem}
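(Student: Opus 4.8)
The plan is to argue entirely at the level of a single execution and to reduce the statement to a case analysis on the relative order of two witness tasks. The structural fact that makes the decomposition work is that an execution $\exe$ is a \emph{sequence}, hence carries a total order on the tasks it contains. I would begin by assuming that some execution $\exe$ satisfies $\overline{A}\Delta2$, and fixing the witnesses guaranteed by $\overline{A}\Delta2$ (Definition \ref{def:af_delta}): a trigger task $t_t$, a task $t_{\neg r} \preceq t_t$ with no $t_r$ in $[t_{\neg r}, t_t]$, and a task $t_d \preceq t_t$ with no $t_{\neg d}$ in $[t_d, t_t]$. The claim then reduces to showing that the \emph{same} execution, with these same witnesses, satisfies $\overline{A}\Delta2.1$ or $\overline{A}\Delta2.2$ from Definition \ref{def:ad2d}; since these are properties of individual executions, exhibiting one execution that satisfies a decomposed constraint transfers the violation from the model level immediately.

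Because $t_{\neg r}$ and $t_d$ both lie at or before $t_t$ in a totally ordered sequence, they are comparable, and I would split into two cases. If $t_{\neg r} \preceq t_d$, I claim $\overline{A}\Delta2.1$ holds with the very same $t_t, t_{\neg r}, t_d$: the ordering $t_{\neg r} \preceq t_d \preceq t_t$ is immediate; the absence of $t_{\neg d}$ in $[t_d, t_t]$ is copied verbatim from $\overline{A}\Delta2$; and the absence of $t_r$ in $[t_{\neg r}, t_d]$ follows because $[t_{\neg r}, t_d]$ is a sub-interval of $[t_{\neg r}, t_t]$ (using $t_d \preceq t_t$), where $\overline{A}\Delta2$ already forbids $t_r$. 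Symmetrically, if $t_d \preceq t_{\neg r}$ I would verify $\overline{A}\Delta2.2$: the ordering $t_d \preceq t_{\neg r} \preceq t_t$ is immediate, the absence of $t_r$ in $[t_{\neg r}, t_t]$ is copied from $\overline{A}\Delta2$, and the absence of $t_{\neg d}$ in $[t_d, t_{\neg r}]$ follows from $[t_d, t_{\neg r}] \subseteq [t_d, t_t]$ together with $t_{\neg r} \preceq t_t$.

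The engine of the argument is thus nothing more than interval containment: every ``no forbidden element'' clause of a decomposed constraint refers to a sub-interval of one of the two intervals already controlled by $\overline{A}\Delta2$, so the absence conditions transfer downward for free. I would note in passing that the degenerate overlap $t_{\neg r} = t_d$ is harmless: it falls into either case, and the clause $\neg \exists t_r \mid t_{\neg r} \preceq t_r \preceq t_d$ then only requires the single task $t_{\neg r}$ to carry no $r$, which holds since its annotation contains $\neg r$ and annotations are consistent literal sets (Definition \ref{def:annpro}).

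I do not expect a genuine obstacle in the formal derivation; the only point demanding care is keeping track of the implicit ordering constraints $t_{\neg r} \preceq t_t$ and $t_d \preceq t_t$ carried by the $\overline{A}\Delta2$ witnesses, since these are exactly what license the sub-interval inclusions. The more conceptual step — the one Figure \ref{f:ad2_pattern_reductions} really encodes — is recognising that $\overline{A}\Delta2.1$ and $\overline{A}\Delta2.2$ are \emph{exhaustive}: they correspond precisely to the two possible orderings of the boundary events $t_{\neg r}$ and $t_d$, so that totality of the execution order leaves no third case to handle.
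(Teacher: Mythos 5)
Your proof is correct and follows essentially the same route as the paper's own argument: a case split on the relative order of the witnesses $t_{\neg r}$ and $t_d$ (with the degenerate coincident case), reducing each branch to $\overline{A}\Delta2.1$ or $\overline{A}\Delta2.2$ via interval containment. In fact you spell out the sub-interval inclusions that the paper's proof only asserts implicitly, so no gap remains.
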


\subsubsection{Generalised Sequence Pattern}

\begin{definition}[Generalised Sequence Pattern]

Let a Generalised Sequence Pattern be the following:

$$\kwd{gsp}(x, y, z, k)$$

Where $z$ represents the desired task on the right side of the aggregation, and $k$ the undesired task on the right side of $z$. Moreover $x$ represents a desired task on the left of $z$, and with no $y$ between $x$ and $z$, as shown in Figure \ref{f:ad2_gsp}. 


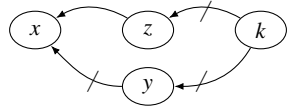
\begin{figure}[h!]
\centering
\scalebox{0.5}{
\begin{tikzpicture}[thick, font=\LARGE]

\node[ellipse,draw,align=center,minimum size=1cm,text width=0.75cm] (a) at (-3,0) {$x$};
\node[ellipse,draw,align=center,minimum size=1cm,text width=0.75cm] (b) at (0,-1.5) {$y$};
\node[ellipse,draw,align=center,minimum size=1cm,text width=0.75cm] (c) at (0,0) {$z$};

\node[ellipse,draw,align=center,minimum size=1cm,text width=0.75cm] (d) at (3,0) {$k$};

\draw[->,-{Latex[length=2.5mm]}] (c) to[out=150,in=30] (a);
\draw[->,-{Latex[length=2.5mm]}] (d) to[out=240,in=0] (b); \node[xshift=-1cm,yshift=0.125cm] at (b.west) {$\not$};
\draw[->,-{Latex[length=2.5mm]}] (b) to[out=180,in=315] (a); \node[xshift=0.5cm,yshift=0.125cm] at (b.east) {$\not$};

\draw[->,-{Latex[length=2.5mm]}] (d) to[out=150,in=30] (c); \node[xshift=-1cm,yshift=0.5cm] at (d.west) {$\not$};

\end{tikzpicture}
}
\caption{Generalised Sequence Constraint}\label{f:ad2_gsp}
\end{figure}

\end{definition}

Despite the \emph{generalised sequence pattern} looking different than $\overline{A}\Delta2.1$ and $\overline{A}\Delta2.2$, it captures the key properties that neither the unwanted elements $k$ and $y$ should appear on the right of the required element $z$, and maintaining the additional condition that there should be another required element $x$ on the left of $z$, and $y$ should also not be between them.

Notice that, given an obligation $\Obl^o$ $\langle r,$ $t,$ $d\rangle$, the parametrisations of its $\Delta$-constrains $\overline{A}\Delta2.1$ and $\overline{A}\Delta2.2$, in generalised sequence patterns, is the following:

\begin{description}
\item[$\overline{A}\Delta2.1$] $\kwd{gsp}(t_{\neg r}, t_{r}, t_{d}, t_{\neg d})$
\item[$\overline{A}\Delta2.2$] $\kwd{gsp}(t_{d}, t_{\neg d}, t_{\neg r}, t_{r})$
\end{description}

\subsubsection{Generalised Sequence Pattern Undernode's Classes}

The following classifications can be used to classify the undernodes while checking for a decomposed $\overline{A}\Delta2$ and evaluating an overnode of type \kwd{SEQ}. Notice also that due to the shape of the pattern, only the undernodes on the left of the stem are considered.
An overview of the classifications of the generalised sequence pattern is provided in Table~\ref{tb:gspclass}.

\begin{table}
{\normalsize
\begin{tabularx}{\textwidth}{m{0.5cm} X}
\hline
$\mathbf{\overset{+}{x}\overset{+}{z}}$ & Exists an execution such that: both $x$ and $z$ sub-patterns are fulfilled and in the correct order.\\
$\mathbf{\overset{+}{x}\overset{0}{z}}$ & Exists an execution such that: $x$ sub-pattern is satisfied, but $z$ sub-pattern is not-satisfied on the right of the former, neither invalidated.\\
$\mathbf{\overset{0}{x}\overset{+}{z}}$ & Exists an execution such that: $z$ sub-pattern is satisfied and $x$ is neither satisfied or invalidated on the left of the former.\\
$\mathbf{\overset{0}{x}\overset{0}{z}}$ & Exists an execution such that: neither sub-pattern invalidated or satisfied.\\
$\mathbf{\overset{-}{x}\overset{+}{z}}$ & Exists an execution such that: $z$ sub-pattern is satisfied and $x$ is invalidated on the left of the former.\\
$\mathbf{\overset{+}{x}\overset{-}{z}}$ & Exists an execution such that: $x$ sub-pattern is satisfied, but $z$ sub-pattern is invalidated on the right of the former.\\
$\mathbf{\overset{0}{x}\overset{-}{z}}$ & Exists an execution: $z$ is falsified and $x$ is neither satisfied or falsified on the left of $z$.\\
$\mathbf{\overset{-}{xz}}$ & For every execution: $z$ is falsified or $x$ is falsified without having a satisfied $z$ on its right. (Represents the remainder possibilities: $\mathbf{\overset{-}{x}\overset{-}{z}}$ and $\mathbf{\overset{-}{x}\overset{0}{z}}$). \\
\hline
\end{tabularx}
}
\caption{Classification classes of the generalised sequence pattern.}\label{tb:gspclass}
\end{table}

Even though a process block can potentially belong to multiple classes, a process block belongs to the classes for which no better class is available, in accordance to the preference lattice shown in Figure \ref{f:ad2_alt_sequndernodeclasses}.

\begin{figure}[h!]
\centering
\scalebox{0.5}{
\begin{tikzpicture}[thick, font=\LARGE]

\node (a) at (0,0) {$\mathbf{\overset{+}{x}\overset{+}{z}}$};

\node (b) at (-2,-2) {$\mathbf{\overset{+}{x}\overset{0}{z}}$};
\node (c) at (2,-2) {$\mathbf{\overset{0}{x}\overset{+}{z}}$};

\node (d) at (-2,-4) {$\mathbf{\overset{+}{x}\overset{-}{z}}$};
\node (e) at (0,-4) {$\mathbf{\overset{0}{x}\overset{0}{z}}$};
\node (f) at (2,-4) {$\mathbf{\overset{-}{x}\overset{+}{z}}$};

\node (g) at (0,-6) {$\mathbf{\overset{0}{x}\overset{-}{z}}$};
\node (h) at (2,-8) {$\mathbf{\overset{-}{xz}}$};

\draw[->,-{Latex[length=2.5mm]}] (b) -- (a);
\draw[->,-{Latex[length=2.5mm]}] (c) -- (a);

\draw[->,-{Latex[length=2.5mm]}] (d) -- (b);
\draw[->,-{Latex[length=2.5mm]}] (e) -- (b);
\draw[->,-{Latex[length=2.5mm]}] (e) -- (c);
\draw[->,-{Latex[length=2.5mm]}] (f) -- (c);

\draw[->,-{Latex[length=2.5mm]}] (g) -- (d);
\draw[->,-{Latex[length=2.5mm]}] (g) -- (e);

\draw[->,-{Latex[length=2.5mm]}] (h) -- (f);
\draw[->,-{Latex[length=2.5mm]}] (h) -- (g);

\end{tikzpicture}
}
\caption{Generalised Sequence Pattern Undernode Classes}\label{f:ad2_alt_sequndernodeclasses}
\end{figure}
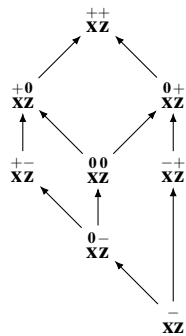

\subsubsection{Generalised Sequence Pattern Overnode's Classes}

The overnodes on the stem being evaluated while checking for $\overline{A}\Delta2.1$ or $\overline{A}\Delta2.2$, use a variant of the \emph{Generalised Sequence Pattern} classification.

In particular, the labels of the classes are the same, but with appended on the right \textbf{t}, signifying that the classification constraints of the definitions for the generalised sequence patterns are supposed to have always on the right the \emph{trigger leaf} task.\footnote{Notice that, since these classifications are always on the stem, and we are classifying the children nodes of the tree on the left of the stem, then this is always the case.} For instance, considering the class $\mathbf{\overset{+}{x}\overset{+}{z}}$ used to classify undernodes, its corresponding class used for overnodes would be $\mathbf{\overset{+}{x}\overset{+}{z}t}$, and it would have the same properties as $\mathbf{\overset{+}{x}\overset{+}{z}}$, with the addition of having a $t_t$ on the right of each considered element.The graphical representation of the pattern used to classify the overnodes is shown in Figure \ref{f:ad2_alt_yxsubpatterns}.


\begin{figure}[h!]
\centering
\scalebox{0.5}{
\begin{tikzpicture}[thick, font=\LARGE]

\node[ellipse,draw,align=center,minimum size=1cm,text width=0.75cm] (a) at (-5,0) {$y$};
\node[ellipse,draw,align=center,minimum size=1cm,text width=0.75cm] (b) at (-2.5,-1) {$\neg y$};
\node[ellipse,draw,align=center,minimum size=1cm,text width=0.75cm] (c) at (0,0) {$x$};

\node[ellipse,draw,align=center,minimum size=1cm,text width=0.75cm] (e) at (5,0) {$t$};
\node[ellipse,draw,align=center,minimum size=1cm,text width=0.75cm] (d) at (2.5,-1) {$\neg x$};

\draw[->,-{Latex[length=2.5mm]}] (c) to[out=150,in=30] (a);
\draw[->,-{Latex[length=2.5mm]}] (c) to[out=225,in=0] (b); \node[xshift=-1cm,yshift=0.125cm] at (b.west) {$\not$};
\draw[->,-{Latex[length=2.5mm]}] (b) to[out=180,in=315] (a); \node[xshift=0.5cm,yshift=0.125cm] at (b.east) {$\not$};

\draw[->,-{Latex[length=2.5mm]}] (e) to[out=150,in=30] (c);
\draw[->,-{Latex[length=2.5mm]}] (e) to[out=225,in=0] (d); \node[xshift=-1cm,yshift=0.125cm] at (d.west) {$\not$};
\draw[->,-{Latex[length=2.5mm]}] (d) to[out=180,in=315] (c); \node[xshift=0.5cm,yshift=0.125cm] at (d.east) {$\not$};

\end{tikzpicture}
}
\caption{$\overline{A}\Delta$2 \kwd{SEQ} Generalised Sub Patterns}\label{f:ad2_alt_yxsubpatterns}
\end{figure}
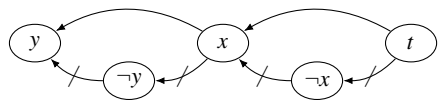

When classifying overnodes along the stem of a process tree, if one of the nodes is classified as $\mathbf{\overset{+}{x}\overset{+}{z}t}$, then the evaluation procedure can be terminated, as this is sufficient proof to conclude that one of the decomposed version of $\overline{A}\Delta2$ is satisfied. That is, the process model associated to the process tree being analysed violates the obligation associated to the $\Delta$-constraint.

Noticing the similarity between the classes used to classify undernodes and overnodes while verifying a generalised sequence pattern, the aggregation tables describing how undernode classes are aggregated to classify their undernode parent can be used to classify an overnode on the stem. More precisely, the column $\kwd{SEQ}(A, B)$ can be used as the aggregation result to classify an overnode of type \kwd{SEQ} on the stem when using the column $A$ as the left-to-right classification aggregation of its undernode children on the left of the stem, and using the column $B$ as the equivalent classification of the overnode's overnode child.

\subsubsection{\kwd{AND} Overnode}

When evaluating an overnode of type \kwd{AND}, its undernodes can be classified using the \emph{Interval Sub-Pattern}. Note that in this case every undernode, either on the left or the right of the stem, must be evaluated.

As illustrated in Figure \ref{f:ad2_ad1s_andaggcomp}, the same pattern used to evaluate $\overline{A}\Delta1S$ can be used to evaluate both $\overline{A}\Delta2.1$ and $\overline{A}\Delta2.2$. In Figure \ref{f:ad2_ad1s_andaggcomp}, the $\overline{A}\Delta2$ decomposed patterns are obtained by adding the trigger task on the right of the interval sub-pattern. 


\begin{figure}[h!]
\centering
\scalebox{0.5}{
\begin{tikzpicture}[thick, font=\LARGE]

\node[ellipse,draw,align=center,minimum size=1cm,text width=0.75cm] (a) at (-2.5,0) {$x$};
\node[ellipse,draw,align=center,minimum size=1cm,text width=0.75cm] (b) at (0,-1) {$y$};
\node[ellipse,draw,align=center,minimum size=1cm,text width=0.75cm] (c) at (2.5,0) {$z$};

\draw[->,-{Latex[length=2.5mm]}] (c) to[out=150,in=30] (a);
\draw[->,-{Latex[length=2.5mm]}] (c) to[out=225,in=0] (b); \node[xshift=-1cm,yshift=0.125cm] at (b.west) {$\not$};
\draw[->,-{Latex[length=2.5mm]}] (b) to[out=180,in=315] (a); \node[xshift=0.5cm,yshift=0.125cm] at (b.east) {$\not$};

\node[yshift=-0.25cm,anchor=north,align=center] at (b.south) {Undernode pattern};

\node[ellipse,draw,align=center,minimum size=1cm,text width=0.75cm] (d) at (-7.5,-6) {$x$};
\node[ellipse,draw,align=center,minimum size=1cm,text width=0.75cm] (e) at (-5,-7) {$y$};
\node[ellipse,draw,align=center,minimum size=1cm,text width=0.75cm] (f) at (-2.5,-6) {$z$};

\node[yshift=2.5cm,ellipse,draw,align=center,minimum size=1cm,text width=0.75cm] (t1) at (e.north) {$t$};

\draw[->,-{Latex[length=2.5mm]}] (f) to[out=150,in=30] (d);
\draw[->,-{Latex[length=2.5mm]}] (f) to[out=225,in=0] (e); \node[xshift=-1cm,yshift=0.125cm] at (e.west) {$\not$};
\draw[->,-{Latex[length=2.5mm]}] (e) to[out=180,in=315] (d); \node[xshift=0.5cm,yshift=0.125cm] at (e.east) {$\not$};

\draw[->,-{Latex[length=5mm]},line width=3pt] (t1) -- ([yshift=1cm]e.north);

\node[yshift=-0.25cm,anchor=north,align=center] at (e.south) {$\overline{A}\Delta$1S};

\node[ellipse,draw,align=center,minimum size=1cm,text width=0.75cm] (g) at (2.5,-6) {$x$};
\node[ellipse,draw,align=center,minimum size=1cm,text width=0.75cm] (h) at (5,-7) {$y$};
\node[ellipse,draw,align=center,minimum size=1cm,text width=0.75cm] (i) at (7.5,-6) {$z$};

\node[xshift=1cm,yshift=1.5cm,ellipse,draw,align=center,minimum size=1cm,text width=0.75cm] (t2) at (i.east) {$t$};

\draw[->,-{Latex[length=2.5mm]}] (i) to[out=150,in=30] (g);
\draw[->,-{Latex[length=2.5mm]}] (i) to[out=225,in=0] (h); \node[xshift=-1cm,yshift=0.125cm] at (h.west) {$\not$};
\draw[->,-{Latex[length=2.5mm]}] (h) to[out=180,in=315] (g); \node[xshift=0.5cm,yshift=0.125cm] at (h.east) {$\not$};

\draw[->,-{Latex[length=5mm]},line width=3pt] (t2) -- ([xshift=1cm]i.east);

\node[yshift=-0.25cm,anchor=north,align=center] at (h.south) {$\overline{A}\Delta$2};

\end{tikzpicture}
}
\caption{$\overline{A}\Delta2$ decomposed Patterns \kwd{AND} Overnode Aggregation Comparison With $\overline{A}\Delta1S$}\label{f:ad2_ad1s_andaggcomp}
\end{figure}
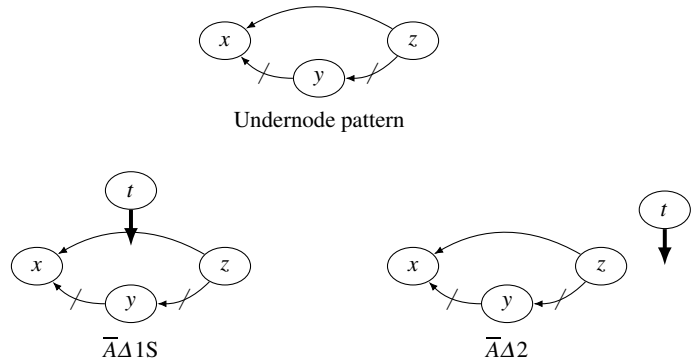

While evaluating a generalised sequence pattern: $\kwd{gsp}(x, y, z, k)$, the interval sub-pattern used to classify its undernodes is the following: $\kwd{isp}(x, y, z)$.

\subsubsection{$\overline{A}\Delta2.1$ and $\overline{A}\Delta2.1$:Computational Complexity of the Aggregations}

In general, for both \kwd{SEQ} and \kwd{AND} overnode evaluations, in the worst case scenario at most 3 classifications are used at the same time to classify a single node of the process-tree. Therefore, the overall complexity of evaluating whether a process tree (and, as such, the associated business process model) satisfies one of the decomposed pattern representations of $\overline{A}\Delta2$ (namely $\overline{A}\Delta2.1$ and $\overline{A}\Delta2.2$) is \textbf{O}($3^2n$), where $n$ is the number of nodes in the process tree. Note that independently from the type of overnode on the stem, the complexity of its evaluation is \textbf{O}($3^2n$).

Finally, considering that to actually evaluate $\overline{A}\Delta2$ on a process tree, both its decomposed patterns need to be evaluated. As such, the actual complexity of evaluating $\overline{A}\Delta2$ is \textbf{O}($2\times3^2n$).

\subsection{$\overline{M}\Delta1$}\label{sec:md1}

We hereby show the classifications used by \kwd{Evaluate}($\mathcal{P}'$, $\tau$, $\delta$) when $\delta$ is $\overline{M}\Delta1$. The illustration of the $\Delta$-constraint is recalled in Figure \ref{fig:md1}.

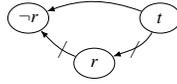
\begin{figure}[h!]
\centering
\scalebox{0.4}{
\begin{tikzpicture}[thick, font=\LARGE]

\node[ellipse,draw,align=center,minimum size=1cm,text width=0.75cm] (t1) at (0,0) {$t$};
\node[xshift=-1.5cm,yshift=-1.5cm,ellipse,draw,align=center,minimum size=1cm,text width=0.75cm] (c1) at (t1.west) {$r$};
\node[xshift=-1.5cm,ellipse,draw,align=center,minimum size=1cm,text width=0.75cm] (notc1) at (c1.west |- t1) {$\neg r$};

\draw[->,-{Latex[length=2.5mm]}] (t1) to[out=240,in=20] (c1) node[xshift=0.5cm,yshift=0.25cm] {$\not$};
\draw[->,-{Latex[length=2.5mm]}] (t1) to[out=160,in=20] (notc1);
\draw[->,-{Latex[length=2.5mm]}] (c1) to[out=160,in=315] (notc1) node[xshift=0.5cm,yshift=-0.625cm] {$\not$};

\end{tikzpicture}
}
\caption{$\overline{M}\Delta1$}\label{fig:md1}
\end{figure}

\subsubsection{Evaluate Using LSP}

To evaluate $\overline{M}\Delta1$ we use the left sub-pattern with the following parametrisation:

$$\kwd{lsp}(\neg r, r)$$

The possible classifications of an overnode, derived from the existing classifications of the left sub-pattern, are listed as follows:

\begin{table}
{\normalsize
\begin{tabularx}{\textwidth}{m{0.5cm} X}
$\mathbf{\overset{+}{x}t}$ & The associated process block contains an execution capable of fulfilling $\overline{M}\Delta1$.\\
$\mathbf{\overset{0}{x}t}$ & The associated process block contains an execution that does not contain elements influencing whether $\overline{M}\Delta1$ is satisfied or violated.\\
$\mathbf{\overset{-}{x}t}$ & The associated process block contains only executions hindering the capability of fulfilling $\overline{M}\Delta1$.\\
\end{tabularx}
}
\end{table}

The classes follow the following linear preference order of the classes: 

$$\mathbf{\overset{-}{x}t} < \mathbf{\overset{0}{x}t} < \mathbf{\overset{+}{x}t}$$

The undernodes of an overnode of type \kwd{SEQ} are classified using left sub-pattern classes, and only the undernodes on the left of the stem are used to evaluate the overnode. The aggregation to evaluate the overnode follows exactly the left sub-pattern aggregations.

\subsubsection{\kwd{AND} Overnode Evaluation: Interleaved Generic Pattern}

Let the Interleaved Generic Pattern be the following:

$$igp(k, y)$$

The undernodes of an overnode of type \kwd{AND} are evaluated using to the following Interleaved Generic Pattern classes:

\bigskip

{\normalsize
\noindent\begin{tabularx}{\textwidth}{@{}m{0.375cm} X}
$\mathbf{\overset{+}{k}}$ & The associated process block contains a task having a $k$ annotated and no $y$.\\
$\mathbf{\overset{0}{k}}$ & Otherwise.\\
\end{tabularx}
}

\bigskip

The classification of the undernodes of an \kwd{AND} overnode while evaluating $\overline{M}\Delta1$ is done using the following parametrisation: $igp(\neg r, r)$.
The evaluations of the undernodes can be aggregated with the Left Sub-Pattern classification of the overnode using the \kwd{AND} column of the aggregations while considering one of the columns \kwd{A} of \kwd{B} as the equivalent Interleaved Generic Pattern classification and the other as the classification of the overnode's overnode child.

\subsubsection{$\overline{M}\Delta1$: Computational Complexity of the Aggregations}

As the preferences between the classes is a total order independent from the type of overnode, the overall computational complexity of $\overline{M}\Delta1$ is linear: \textbf{O}$(n)$, where $n$ is the number of nodes in the process tree.

\subsection{$\overline{M}\Delta2$}\label{sec:md2}

We hereby show the classifications used by \kwd{Evaluate}($\mathcal{P}'$, $\tau$, $\delta$) when $\delta$ is $\overline{M}\Delta2$. The evaluation is done using a simplified version of the failure $\Delta$-constraint, defined as follows.

\begin{definition}[$\overline{M}\Delta2$ Simplified]\label{def:mD2S}
$\overline{M}\Delta2$ from Definition \ref{def:mf_delta} can be simplified as follows:

\bigskip
\noindent\begin{tabular}{@{}ll}
$\overline{M}\Delta2S$ & $\exists t_t$ such that: \\
& $\exists t_{\neg r}, \neg \exists t_{d} | t_t \preceq t_{d} \preceq t_{\neg r}$ \\
\end{tabular}



\end{definition}

\begin{figure}[h!]
\centering
\scalebox{0.4}{
\begin{tikzpicture}[thick, font=\LARGE]

\node[ellipse,draw,align=center,minimum size=1cm,text width=0.75cm] (t1) at (0,0) {$t$};
\node[xshift=-1.5cm,yshift=-1.5cm,ellipse,draw,align=center,minimum size=1cm,text width=0.75cm] (notc1) at (t1.west) {$\neg r$};
\node[xshift=-1.5cm,ellipse,draw,align=center,minimum size=1cm,text width=0.75cm] (c1) at (notc1.west |- t1) {$r$};
\node[xshift=1.5cm,ellipse,draw,align=center,minimum size=1cm,text width=0.75cm] (notc2) at (t1.east |- notc1) {$\neg r$};
\node[xshift=1.5cm,ellipse,draw,align=center,minimum size=1cm,text width=0.75cm] (d1) at (notc2.east |- t1) {$d$};

\draw[->,-{Latex[length=2.5mm]}] (t1) to[out=240,in=20] (notc1) node[xshift=0.5cm,yshift=0.25cm] {$\not$};
\draw[->,-{Latex[length=2.5mm]}] (t1) to[out=160,in=20] (c1);
\draw[->,-{Latex[length=2.5mm]}] (notc1) to[out=160,in=315] (c1) node[xshift=0.5cm,yshift=-0.625cm] {$\not$};

\draw[->,-{Latex[length=2.5mm]}] (notc2) to[out=160,in=315] (t1); 
\draw[->,-{Latex[length=2.5mm]}] (d1) to[out=240,in=20] (notc2); 
\draw[->,-{Latex[length=2.5mm]}] (d1) to[out=160,in=20] (t1) node[xshift=1.55cm,yshift=0.625cm] {$\forall$};

\node[xshift=10cm,ellipse,draw,align=center,minimum size=1cm,text width=0.75cm] (notc3) at (d1.east) {$\neg r$};
\node[xshift=-1.5cm,yshift=-1.5cm,ellipse,draw,align=center,minimum size=1cm,text width=0.75cm] (d2) at (notc3.west) {$d$};
\node[xshift=-1.5cm,ellipse,draw,align=center,minimum size=1cm,text width=0.75cm] (t2) at (d2.west |- notc3) {$t$};

\draw[->,-{Latex[length=2.5mm]}] (d2) to[out=160,in=315] (t2) node[xshift=0.5cm,yshift=-0.625cm] {$\not$};
\draw[->,-{Latex[length=2.5mm]}] (notc3) to[out=240,in=20] (d2) node[xshift=0.5cm,yshift=0.25cm] {$\not$};
\draw[->,-{Latex[length=2.5mm]}] (notc3) to[out=160,in=20] (t2);

\node[yshift=-0.5cm] at (t1 |- notc1.south) {Original pattern};
\node[yshift=-0.5cm] at (d2.south) {Simplified pattern};

\end{tikzpicture}
}
\caption{$\overline{M}\Delta2$ Simplified}\label{f:mD2S}
\end{figure}
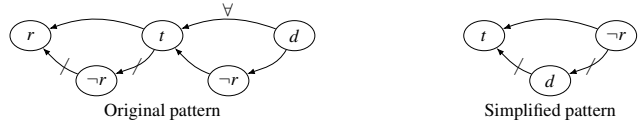

\begin{theorem}
If a violation in business process model is identified by satisfying either $\overline{M}\Delta1$ or $\overline{M}\Delta2$, then the same violation is identified by satisfying either $\overline{M}\Delta1$ or $\overline{M}\Delta2S$.
\end{theorem}

\subsubsection{Evaluate Using RSP}

To evaluate $\overline{M}\Delta2S$, we use the right sub-pattern with the following parametrisation:

$$\kwd{rsp}(r, d)$$

When an overnode is of type \kwd{SEQ}, then the undernodes on the right of the stem can be classified using right sub-pattern classes and used to evaluate the overnode using the same aggregations.

\subsubsection{\kwd{AND} Overnode Evaluation using IGP}

In a similar way for $\overline{M}\Delta1$, the classification of the undernodes of an \kwd{AND} overnode while evaluating $\overline{M}\Delta2S$ is done using the following parametrisation: $igp(\neg r, d)$.

The evaluations of the undernodes can be aggregated with the Right Sub-Pattern classification of the overnode using the \kwd{AND} column of the aggregations while considering one of the columns \kwd{A} of \kwd{B} as the equivalent Interleaved Generic Pattern classification and the other as the classification of the overnode's overnode child.

\subsubsection{$\overline{M}\Delta2S$ Overall Complexity}

As the preferences between the classes is a total order, the overall complexity of verifying whether a process model satisfies $\overline{M}\Delta2S$ is determined by the complexity of aggregating the evaluations of the overnodes belonging to its process tree. As evaluating an overnode of type \kwd{AND} or of type \kwd{SEQ} is linear: \textbf{O}$(n)$, where $n$ is the number of nodes in the process tree.
\section{Conclusion}\label{sec:conclusion}

We have shown that verifying a failure $\Delta$-constraint over a process-tree for one of its trigger leaves is achievable in polynomial time with respect to the size of the tree. Given an obligation, independent of whether such obligation is of type achievement or maintenance, its corresponding failure $\Delta$-constraints can be verified over the process tree for each of its trigger leaves and determine whether the process model corresponding to the tree is \emph{fully compliant} with the obligation as follows:

\begin{description}
\item[Not Fully Compliant] If one of the failure $\Delta$-constraints is positively verified for the process tree and one of its trigger leaves, then it means that there exists an execution of the process violating the obligation associated to the failure $\Delta$-constraint, hence the process model cannot be fully compliant.
\item[Fully Compliant] Otherwise, when no failure $\Delta$-constraint is positively verified for each of the trigger leaves of the process tree, then every execution satisfies the obligation, and the process is fully compliant.
\end{description}

Notice that the number of failure $\Delta$-constraints being verified is 3 when the obligation is of type achievement, and 2 when the obligation is of type maintenance. Moreover, the number of trigger leaves in a process model can in the worst case be the same as the amount of tasks in the process. Although this can be significant, the procedure iterating through each of them for each failure $\Delta$-constraint and aggregating through the nodes of the tree is still polynomial with respect to the size of the process model.

Finally, even considering a set of obligations, the same procedure can be reused by simply reiterating the verification for each obligation in the set composing the regulatory framework, which would increase the number of steps for the procedure to achieve an answer by up to a factor of the cardinality of the set of obligations being checked. However, the verification would be still in polynomial time with respect to the size of the process model and the cardinality of the set of obligations.

Therefore, we have shown that evaluating whether a structured business process is fully compliant with a set of conditional obligations, whose elements are restricted to be propositional literals, belongs to the computational complexity class \textbf{P}.

\section*{Acknowledgments}

\small{This research is supported by the Science and Industry Endowment Fund.}

\bibliographystyle{plain}
\bibliography{bib}

\newpage
\appendix
\section{Failure $\Delta$-constraints}

\subsection{Achievement Obligations to Failure $\Delta$-constraints}

\begin{lemma}\label{l:trigger}
Given a conditional obligation, executing a task having its trigger annotated always results in its following state having the obligation in force, where it can be potentially fulfilled or violated.
\end{lemma}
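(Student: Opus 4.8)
The plan is to unwind the state-update semantics and show that annotating a task with the trigger literal forces that literal into the immediately following state, which by the in-force convention is precisely what it means for the obligation to become active. First I would fix a trace $\trace = \seq{\sigma_1, \ldots, \sigma_n}$ of $(P, \kwd{ann})$ together with a task $t_i$ whose annotation contains the trigger of the obligation $\Ob = \Obl^o\langle r, t, d\rangle$, i.e. $t \in \kwd{ann}(t_i)$. By Definition \ref{def:ser}, the state reached immediately after executing $t_i$ carries the literal set $\ltrls_i = \ltrls_{i-1}\update\kwd{ann}(t_i)$.

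The key step is the containment $t \in \ltrls_i$. Expanding Definition \ref{def:litupd}, $\ltrls_i = \big(\ltrls_{i-1}\setminus\set{\overline{\ltrl}\mid \ltrl\in\kwd{ann}(t_i)}\big)\cup\kwd{ann}(t_i)$. Since $\kwd{ann}(t_i)$ is constrained to be a consistent literal set (Definition \ref{def:annpro}), we cannot have $\overline{t}\in\kwd{ann}(t_i)$, so the set difference never discards $t$; and the final union with $\kwd{ann}(t_i)$ reintroduces $t$ in any case. Hence $t\in\ltrls_i$, that is $\sigma_i\models t$, which in the shorthand of Definition \ref{def:abbr} is exactly $\kwd{contain}(t,\sigma_i)$. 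By the in-force convention stated just after Definition \ref{def:obligation} — a state satisfying the trigger starts an in-force interval — the obligation $\Ob$ is in force from $\sigma_i$ onward, establishing the first half of the claim.

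For the second half, that a triggered obligation can then be either fulfilled or violated, I would invoke Assumption \ref{ass:final_deadline} to guarantee that the in-force interval is always closed: the deadline $d$ is annotated on the final task, so a later state $\sigma_d$ with $\kwd{contain}(d,\sigma_d)$ necessarily exists. With a well-defined in-force interval bracketed by $\sigma_i$ and this deadline state, the requirement $r$ is evaluated according to the obligation type, and Definition \ref{def:laof} (achievement) or Definition \ref{def:lmof} (maintenance) returns a definite verdict of compliance or violation. Thus the obligation is not merely in force but sits in the regime where its fulfilment or violation is decidable.

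I do not expect a deep obstacle here: the one point genuinely requiring care is that the trigger literal must not be silently overwritten by its own complement during the update, which is exactly where the consistency constraint on annotations from Definition \ref{def:annpro} is used. A secondary, purely presentational nuisance is the overloading of the symbol $t$ for both the trigger literal and a generic task; I would disambiguate by reserving $t_i$ for tasks and keeping $t$ for the trigger literal throughout the argument.
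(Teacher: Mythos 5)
Your proof is correct, but it takes a genuinely different route from the paper's. The paper argues by a case analysis on whether the trigger was already holding in the state preceding the task: if it was not, the execution starts a new in-force period; if it was (because the obligation was still pending, or had just been fulfilled), the execution brings the obligation in force \emph{again} and it must be fulfilled anew. The whole point of that case split is the re-triggering semantics: since the compliance definitions (Definitions~\ref{def:laof} and \ref{def:lmof}) quantify over every task-state pair with the trigger annotated, prior fulfilment never discharges a freshly triggered instance, and the paper's proof makes exactly that explicit. You instead unwind the update operator of Definition~\ref{def:litupd}, use the consistency constraint of Definition~\ref{def:annpro} (together with the hypothesis $t \in \kwd{ann}(t_i)$) to show the trigger literal cannot be cancelled and must appear in the post-state, and then invoke the in-force convention plus Assumption~\ref{ass:final_deadline} for decidability of the verdict. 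Your argument is more mechanical and directly yields the fact the lemma is actually used for downstream (that $t$ holds in the state after executing the trigger task), whereas the paper's buys the explicit statement that a new in-force instance arises even after fulfilment. Two small blemishes in yours, neither fatal: first, $\kwd{contain}(t,\sigma_i)$ in Definition~\ref{def:abbr} is a statement about the \emph{task's annotation}, not about the state satisfying $t$, so it holds trivially by hypothesis rather than as a consequence of $t \in \ltrls_i$; the state-level fact you derived corresponds to the $\sigma_l$ shorthand. Second, your argument, as written, does not explicitly note that the conclusion holds even when the obligation triggered by an earlier task was already fulfilled; this is implicit in your appeal to the in-force convention, but it is the one piece of semantic content the paper's case analysis foregrounds and yours leaves tacit.
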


\begin{proof}
Either the execution changes the previous state where $t$ was not holding to one where it is, or $t$ had been holding already.

In the first case, the execution brings a new in force period for the obligation.

In the second case, either the obligation is in force and not fulfilled in the previous state, or it becomes fulfilled in the previous state. In both cases, the execution of the task brings the obligation in force again and requires to be fulfilled.
\end{proof}

\setcounter{theorem}{0}
\begin{theorem}
Given an execution $\exe$, represented as a sequence of tasks, if $\exe$ satisfies either of the Achievement Failure $\Delta$-Constraints (Definition \ref{def:af_delta}), then $\exe$ satisfies the obligation related to the Achievement Failure $\Delta$-Constraints.
\end{theorem}

\begin{proof}
\textbf{Soundness}:

\textbf{First case}: $\overline{A}\Delta1$
\begin{enumerate}
\item From the hypothesis, and Definition \ref{def:af_delta}, it follows that $\exe$ satisfies the following conditions:
\begin{enumerate}
\item$\exists t_t | \exists t_{\neg r}, t_d | t_{\neg r} \preceq t_t \preceq t_d$
\item$\exists t_t | \neg \exists t_r | t_{\neg r} \preceq t_r \preceq t_d$
\item$\exists t_t |\exists t_{\neg d}, \neg \exists t_d | t_{\neg d} \preceq t_d \preceq t_t$
\end{enumerate}
\item From 1.(a), and Lemma \ref{l:trigger}, it follows that: $t$ holds and $r$ does not, in the state holding after the execution of the task $t_t$.
\item From 2., and 1.(b), it follows that: there is no state included between the state after executing the task $t_t$ and one where $d$ starts holding, where $r$ holds.
\item From 3. and Definition \ref{def:laof_fail}, it follows that $\exe$ would violate the obligation related to the Achievement Failure $\Delta$-constraints.
\end{enumerate}

\textbf{Second case}: $\overline{A}\Delta2$
\begin{enumerate}
\item From the hypothesis, and Definition \ref{def:af_delta}, it follows that $\exe$ satisfies the following conditions:
\begin{enumerate}
\item $\exists t_t | \exists t_{\neg r}, \neg \exists t_r  | t_{\neg r} \preceq t_r \preceq t_t$
\item$\exists t_t | \exists t_d, \neg \exists t_{\neg d} | t_d \preceq t_{\neg d} \preceq t_t$
\end{enumerate}
\item From 1.(a), and Lemma \ref{l:trigger}, it follows that $t$ holds and $r$ does not, in the state holding after the execution of the task $t_t$.
\item From 1.(b), it follows that $t$ holds and $d$ holds, in the state holding after the execution of the task $t_t$.
\item From 2. and 3., it follows that after executing the task $t_t$, $t$ and $d$ hold and $r$ does not hold.
\item From 4. and Definition \ref{def:laof_fail}, it follows that $\exe$ would violate the obligation related to the Achievement Failure $\Delta$-constraints.
\end{enumerate}

\textbf{Completeness}:
\begin{enumerate}
\item Given a trace violating a given achievement obligation.
\item From 1. and Definition \ref{def:laof_fail} it follows that $\exists \sigma_t,\sigma_d | \kwd{contain}(t, \sigma_t)$ and $\sigma_t \preceq \sigma_d$ and $\neg \exists \sigma_r | \sigma_t \preceq \sigma_r \preceq \sigma_d$.
\item From 2., it follows that a task with $t$ annotated is executed.
\item From 2., it follows that a task with $d$ annotated is executed.
\item From 2., it follows that $\neg r$ holds and no task with $r$ annotated is executed between the one with $t$ and the one with $d$.
\item Following from 3., 4., and 5. two cases are possible:
\begin{description}
\item[$t_d \preceq t_t$] this case is covered by the $\overline{A}\Delta2$ conditions in Definition \ref{def:af_delta}.
\item[$t_t \preceq t_d$] this case covered by the $\overline{A}\Delta1$ conditions in Definition \ref{def:af_delta}.
\end{description}
\item Thus, all cases are covered and a violating trace is always identified by the Achievement Failure $\Delta$-constraints.
\end{enumerate}
\end{proof}

\subsubsection{$\overline{A}\Delta1$ Simplified}

\setcounter{theorem}{3}
\begin{theorem}
If a violation in business process model is identified by satisfying either $\overline{A}\Delta1$ or $\overline{A}\Delta2$, then the same violation is identified by satisfying either $\overline{A}\Delta1S$ or $\overline{A}\Delta2$.
\end{theorem}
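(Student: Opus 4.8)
The plan is to prove the stated implication directly, by a case split on which of the two original failure $\Delta$-constraints is satisfied, exploiting the fact that $\overline{A}\Delta1S$ is simply $\overline{A}\Delta1$ with one conjunct deleted. Since $\overline{A}\Delta2$ occurs unchanged in both disjunctions, the only real work concerns executions that are identified through $\overline{A}\Delta1$; the $\overline{A}\Delta2$ case is immediate because the target disjunction still contains $\overline{A}\Delta2$ verbatim.

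The key preliminary step is to line up $\overline{A}\Delta1$ (Definition \ref{def:af_delta}) against $\overline{A}\Delta1S$ (Definition \ref{def:ad1s}) clause by clause. The two defining clauses of $\overline{A}\Delta1S$ — the existence of $t_{\neg r}, t_d$ with $t_{\neg r} \preceq t_t \preceq t_d$, and the absence of any $t_r$ with $t_{\neg r} \preceq t_r \preceq t_d$ — are exactly the first two clauses of $\overline{A}\Delta1$, whereas $\overline{A}\Delta1$ carries in addition the clause $\exists t_{\neg d}, \neg\exists t_d \mid t_{\neg d} \preceq t_d \preceq t_t$. Hence $\overline{A}\Delta1$ is the conjunction of $\overline{A}\Delta1S$ with this extra clause, so for every execution $\exe$ we have $\overline{A}\Delta1 \Rightarrow \overline{A}\Delta1S$. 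The theorem then follows by cases: if $\exe$ is identified through $\overline{A}\Delta2$ it is trivially identified by the simplified disjunction, and if $\exe$ satisfies $\overline{A}\Delta1$ then by the entailment it satisfies $\overline{A}\Delta1S$, so again the simplified disjunction holds.

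The point that genuinely deserves care — and the reason the statement is worth recording rather than vacuous — is to confirm that weakening $\overline{A}\Delta1$ to $\overline{A}\Delta1S$ does not cause the simplified constraint to flag non-violations: every execution newly matched by $\overline{A}\Delta1S$ must still be a true violation. I would discharge this by appeal to Definition \ref{def:laof_fail} together with Lemma \ref{l:trigger}. The two retained clauses already force a trigger task $t_t$ that places the obligation in force, a deadline $t_d$ with $t_t \preceq t_d$, and the absence of any requirement-satisfying task anywhere in $[t_{\neg r}, t_d] \supseteq [t_t, t_d]$; this is precisely the achievement-failure condition. The deleted clause served only to partition the $\overline{A}\Delta1$ case cleanly from the $\overline{A}\Delta2$ case in the complementary completeness analysis, and plays no role in soundness, so its removal is harmless.

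I expect the only mild friction to be notational: the statement of $\overline{A}\Delta1$ in Definition \ref{def:af_delta} and its restatement in Definition \ref{def:ad1s} differ by an apparent swap of $r$ and $\neg r$ in the first two clauses. I would flag this discrepancy and carry out the argument using the form in Definition \ref{def:af_delta}, whose clauses align exactly with those of $\overline{A}\Delta1S$, so that the ``drop one conjunct'' reduction is literally valid.
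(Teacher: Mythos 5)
Your proof of the stated implication is correct and, for that direction, coincides with the paper's: $\overline{A}\Delta2$ is carried over verbatim, and $\overline{A}\Delta1 \Rightarrow \overline{A}\Delta1S$ holds because $\overline{A}\Delta1S$ consists of exactly the first two conjuncts of $\overline{A}\Delta1$ as given in Definition \ref{def:af_delta}. You are also right that the recalled version of $\overline{A}\Delta1$ inside Definition \ref{def:ad1s} swaps $r$ and $\neg r$ (a typo), and the paper's own appendix proof works with the Definition \ref{def:af_delta} form, just as you do. Where you genuinely diverge from the paper is in the safety direction, i.e.\ that $\overline{A}\Delta1S$ flags no execution that is not a true violation. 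The paper proves this as a \emph{completeness} half by a purely syntactic case split on what lies to the left of $t_t$: either some $t_{\neg d}$ has no $t_d$ between it and $t_t$, which together with the retained clauses recovers $\overline{A}\Delta1$, or some $t_d$ has no $t_{\neg d}$ between it and $t_t$, which recovers $\overline{A}\Delta2$ --- thereby reducing $\overline{A}\Delta1S$ back to the original pair of constraints. You instead argue semantically: the two retained clauses already entail the achievement-failure condition of Definition \ref{def:laof_fail} via Lemma \ref{l:trigger}, which amounts to re-running the paper's soundness argument for $\overline{A}\Delta1$ (its first theorem) and observing that that argument never uses the dropped conjunct. Both routes are legitimate. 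Your semantic route is shorter and sidesteps a weakness of the paper's case split, whose two cases are not obviously exhaustive (an execution in which no task before $t_t$ carries $d$ or $\neg d$ in its annotation falls under neither case), whereas your argument needs no such exhaustiveness. What the paper's route buys in exchange is that it stays entirely at the level of the failure $\Delta$-constraints, so the two constraint sets are related without re-invoking the trace semantics that the rest of the stem-evaluation machinery abstracts away.
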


\begin{proof}

As both procedures identifying whether a regulation violates a business process model contain the constraint $\overline{A}\Delta2$, we focus on proving soundness and completeness when the violation is identified by fulfilling the other constraint, respectively $\overline{A}\Delta1$ and $\overline{A}\Delta1S$.

\textbf{Soundness}:

\begin{enumerate}
\item Given a violated regulation and satisfying $\overline{A}\Delta1$.
\item From 1. and Definition \ref{def:af_delta} for $\overline{A}\Delta1$, if follows that:
\begin{enumerate}
\item $\exists t_{\neg r}, t_d | t_{\neg r} \preceq t_t \preceq t_d$
\item $\neg \exists t_r | t_{\neg r} \preceq t_r \preceq t_d$
\end{enumerate}
\item From 2.(a), 2.(b), and Definition \ref{def:ad1s} if follows that the violation is identified by $\overline{A}\Delta1S$.
\end{enumerate}

\textbf{Completeness}:

\begin{enumerate}
\item Given a violated regulation and satisfying $\overline{A}\Delta1S$.
\item From 1. and Definition \ref{def:ad1s} for $\overline{A}\Delta1S$, if follows that:
\begin{enumerate}
\item $\exists t_t$ such that: $\exists t_{\neg r}, t_d | t_{\neg r} \preceq t_t \preceq t_d$
\item $\exists t_t$ such that: $\neg \exists t_r | t_{\neg r} \preceq t_r \preceq t_d$
\end{enumerate}
\item We consider the two following mutually exclusive cases:
\begin{description}
\item[case 1]: $\exists t_t$ such that: $\exists t_{\neg d}, \neg \exists t_d | t_{\neg d} \preceq t_d \preceq t_t$
\begin{enumerate}
\item[3.1.1] From 2.(a), 2.(b), \textbf{case 1}, and Definition \ref{def:af_delta} for $\overline{A}\Delta1$ if follows that the violation is identified by $\overline{A}\Delta1$.
\end{enumerate}
\item[case 2]:  $\exists t_t$ such that: $\exists t_d, \neg \exists t_{\neg d} | t_d \preceq t_{\neg d} \preceq t_t$
\begin{enumerate}
\item[3.2.1] From 2.(a), and 2.(b), it follows that $\exists t_t$ such that: $\exists t_{\neg r}, \neg \exists t_r  | t_{\neg r} \preceq t_r \preceq t_t$.
\item[3.2.2] From 3.2.1, \textbf{case 2}, and Definition \ref{def:af_delta} for $\overline{A}\Delta2$ if follows that the violation is identified by $\overline{A}\Delta2$.
\end{enumerate}
\end{description}
\item Therefore in each case, it follows that a violation identified by $\overline{A}\Delta1S$ would be identified by either $\overline{A}\Delta1$ or $\overline{A}\Delta2$.
\end{enumerate}
\end{proof}

\subsubsection{$\overline{A}\Delta2$ Decomposed}

\setcounter{theorem}{4}
\begin{theorem}
If a violation in business process model is identified by satisfying $\overline{A}\Delta2$, then the same violation is identified by satisfying either $\overline{A}\Delta2.1$ or $\overline{A}\Delta2.2$.
\end{theorem}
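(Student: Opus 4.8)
The plan is to prove the stated implication directly, reusing the very witnesses supplied by $\overline{A}\Delta2$ and case-splitting on the relative position of two of them. First I would unpack the hypothesis: if an execution $\exe$ satisfies $\overline{A}\Delta2$ (Definition \ref{def:ad2d}), then there is a trigger task $t_t$ together with a task $t_{\neg r}$ for which no $r$-task lies in the interval $[t_{\neg r}, t_t]$, and a task $t_d$ for which no $\neg d$-task lies in $[t_d, t_t]$. Since an execution is a sequence, $\preceq$ totally orders its tasks, so the witnesses $t_{\neg r}$ and $t_d$ are comparable; this comparability is exactly what lets the single interleaved pattern of $\overline{A}\Delta2$ be serialised into the two decomposed ones.

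The core is then a two-case analysis on this order, in each case checking the three clauses of the appropriate decomposed constraint against the same $t_t$, $t_{\neg r}$, $t_d$. In the case $t_{\neg r} \preceq t_d$ I would target $\overline{A}\Delta2.1$: clause (i), $t_{\neg r} \preceq t_d \preceq t_t$, follows from the case assumption together with $t_d \preceq t_t$; clause (ii), no $\neg d$ between $t_d$ and $t_t$, is inherited verbatim from $\overline{A}\Delta2$; and clause (iii), no $r$ between $t_{\neg r}$ and $t_d$, follows because $[t_{\neg r}, t_d] \subseteq [t_{\neg r}, t_t]$ and the larger interval already excludes $r$-tasks. Symmetrically, in the case $t_d \preceq t_{\neg r}$ I would target $\overline{A}\Delta2.2$: clause (i), $t_d \preceq t_{\neg r} \preceq t_t$, uses $t_{\neg r} \preceq t_t$; clause (ii), no $\neg d$ in $[t_d, t_{\neg r}]$, follows from $[t_d, t_{\neg r}] \subseteq [t_d, t_t]$; and clause (iii), no $r$ in $[t_{\neg r}, t_t]$, is again inherited verbatim. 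The degenerate subcase $t_{\neg r} = t_d$ (a single task carrying both $\neg r$ and $d$) falls under either branch, so it needs no separate treatment. Because the same witnesses, hence the same execution, are used throughout, the violation flagged by $\overline{A}\Delta2$ is literally the one re-flagged by the decomposition, which is precisely "the same violation."

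I expect the substantive point — the one to state with care rather than the one that is computationally hard — to be the direction of the interval containments: every "$\neg\exists$" clause demanded by a decomposed constraint is imposed over a sub-interval of one already controlled by $\overline{A}\Delta2$, so each such clause transfers for free from the larger interval to the smaller one. This is exactly why the implication runs in the stated direction: the decomposed constraints only forbid $r$ up to the deadline $t_d$ (respectively only forbid $\neg d$ up to $t_{\neg r}$), which is weaker than the original's control all the way to $t_t$, so the reverse containment is unavailable and a clean converse is neither obtained nor needed here. I would therefore present the argument as soundness of the decomposition with respect to $\overline{A}\Delta2$, keeping it purely at the level of task annotations and their order and invoking Lemma \ref{l:trigger} only implicitly through the meaning of the trigger task.
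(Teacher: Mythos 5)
Your proof is correct and takes essentially the same route as the paper's soundness argument: the identical case split on the relative order of $t_{\neg r}$ and $t_d$ (including the degenerate case where they coincide as a single task), reusing the same witnesses $t_t$, $t_{\neg r}$, $t_d$, with your interval-containment justifications simply making explicit what the paper compresses into ``from the constraint introduced by the case.'' The only divergence is that the paper's proof additionally argues a converse (``completeness'') direction, which the statement as written does not require and which you deliberately omit; your aside that such a converse does not follow by mere interval containment is accurate, since the decomposed constraints only forbid $r$ (respectively $\neg d$) on a sub-interval of what $\overline{A}\Delta2$ controls.
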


\begin{proof}

\textbf{Soundness}:

\begin{enumerate}
\item Given a violated regulation and satisfying $\overline{A}\Delta2$.
\item From 1. and Definition \ref{def:af_delta} for $\overline{A}\Delta2$, if follows that:
$\exists t_t$ such that:
\begin{description}
\item $\exists t_{\neg r}, \neg \exists t_{r}  | t_{\neg r} \preceq t_{r} \preceq t_t$ and
\item$\exists t_d, \neg \exists t_{\neg d} | t_d \preceq t_{\neg d} \preceq t_t$
\end{description}
\item We distinguish 3 cases:
\begin{description}
\item[$t_{\neg r} \preceq t_d$]
\begin{enumerate}
\item From the constraint introduced by the case, from 2., and from Definition \ref{def:ad2d} it follows that $\overline{A}\Delta2.1$ is satisfied.
\end{enumerate}
\item[$t_{d} \preceq t_{\neg r}$] 
\begin{enumerate}
\item From the constraint introduced by the case, from 2., and from Definition \ref{def:ad2d} it follows that $\overline{A}\Delta2.2$ is satisfied.
\end{enumerate}
\item[$t_{\neg r}$ and $t_d$ are the same task]
\begin{enumerate}
\item From the constraint introduced by the case, from 2., and from Definition \ref{def:ad2d} it follows that both $\overline{A}\Delta2.1$ and $\overline{A}\Delta2.2$ are satisfied.
\end{enumerate}
\end{description}
\item From the cases analysed in 3., it follows that the violation is captured by either $\overline{A}\Delta2.1$ or $\overline{A}\Delta2.2$.
\end{enumerate}

\textbf{Completeness}:
We analyse the two cases separately.

\begin{enumerate}
\item Given a violated regulation and satisfying $\overline{A}\Delta2.1$.
\item From 1. and Definition \ref{def:ad2d} for $\overline{A}\Delta2.1$, if follows that:
$\exists t_t$ such that:
\begin{description}
\item$\exists t_{\neg r}, t_d | t_{\neg r} \preceq t_d \preceq t_t$ and
\item $\neg \exists t_{\neg d} | t_{d} \preceq t_{\neg d} \preceq t_t$ and
\item $\neg \exists t_r | t_{\neg r} \preceq t_r \preceq t_d$
\end{description}
\item From 2., and Definition \ref{def:af_delta} for $\overline{A}\Delta2$, it follows that the violation is captured by $\overline{A}\Delta2$.
\end{enumerate}

\begin{enumerate}
\item Given a violated regulation and satisfying $\overline{A}\Delta2.2$.
\item From 1. and Definition \ref{def:ad2d} for $\overline{A}\Delta2.2$, if follows that:
$\exists t_t$ such that:
\begin{description}
\item$\exists t_{\neg r}, t_d | t_d \preceq t_{\neg r} \preceq t_t$ and
\item $\neg \exists t_{\neg d} | t_{d} \preceq t_{\neg d} \preceq t_{\neg r}$ and
\item $\neg \exists t_r | t_{\neg r} \preceq t_r \preceq t_t$
\end{description}
\item From 2., and Definition \ref{def:af_delta} for $\overline{A}\Delta2$, it follows that the violation is captured by $\overline{A}\Delta2$.
\end{enumerate}

We have shown that in both cases the violations identified by the decomposed by the decomposed $\Delta$-constraints are violations identified by $\overline{A}\Delta2$.
\end{proof}

\subsection{Maintenance Obligations to Failure $\Delta$-constraints}

\setcounter{theorem}{1}
\begin{theorem}
Given an execution $\exe$, represented as a sequence of tasks, if $\exe$ satisfies either of the Maintenance Failure $\Delta$-constraints (Definition \ref{def:mf_delta}), then $\exe$ violates the obligation related to the Maintenance Failure $\Delta$-constraints.
\end{theorem}

\begin{proof}
\textbf{Soundness}:

\textbf{First case}: $\overline{M}\Delta1$
\begin{enumerate}
\item From the hypothesis, and Definition \ref{def:mf_delta}, it follows that $\exe$ satisfies the following condition:
\begin{enumerate}
\item $\exists t_t | \exists t_{\neg r}, \neg \exists t_r | t_{\neg r} \preceq t_r \preceq t_t$
\end{enumerate}
\item From 1.(a), and Lemma \ref{l:trigger}, it follows that: $t$ holds and $r$ does not, after the execution of the task $t_t$ annotated.
\item From 2. and Definition \ref{def:lmof_fail}, it follows that $\exe$ would violate the obligation related to the Maintenance Failure $\Delta$-constraints.
\end{enumerate}

\textbf{Second case}: $\overline{M}\Delta2$
\begin{enumerate}
\item From the hypothesis, and Definition \ref{def:mf_delta}, it follows that $\exe$ satisfies the following condition:
\begin{enumerate}
\item $\exists t_t | \exists t_r, \neg \exists t_{\neg r} | t_r \preceq t_{\neg r} \preceq t_t$ and
\item $\exists t_t | \forall t_d (\exists t_{\neg r} | t_t \preceq t_{\neg r} \prec t_d)$
\end{enumerate}
\item From 1.(a), and Lemma \ref{l:trigger}, it follows that $t$ holds and $r$ holds, in the state holding after executing $t_t$.
\item From 1.(b), and Lemma \ref{l:trigger}, it follows that after executing the $t_t$, $t$ it is always the case that in the following states $r$ stops holding before $d$ starts holding.
\item From 2. it follows that after the execution of the task $t_t$, $r$ holds due to the constraint preventing a task having $\neg r$ annotated to be executed and cancelling $r$ from the process state, which is already holding due to the execution of a task with $r$ annotated.
\item From 3. it follows before the execution of any task having $d$ in its annotation, a task having $\neg r$ annotated is executed, leading to the removal of $r$ from the process state.
\item From 4. and 5. and Definition \ref{def:lmof_fail}, it follows that $\exe$ would violate the obligation related to the Maintenance Failure $\Delta$-constraints.
\end{enumerate}

\textbf{Completeness}:
\begin{enumerate}
\item Given a trace violating a given maintenance obligation.
\item From 1. and Definition \ref{def:lmof_fail} it follows that $\exists \sigma_t$ $\forall \sigma_d \; |$ $\kwd{contain}(t, \sigma)$ and $\sigma_t \preceq \sigma_d$ and $\exists \sigma_{\neg r} | \sigma_t \preceq \sigma_{\neg r} \preceq \sigma_d$.
\item From 2., it follows that a task with $t$ annotated is executed.
\item From 2., it follows that a task with $\neg r$ annotated is executed.
\item Following from 3., and 4. two cases are possible:
\begin{description}
\item[$t_{\neg r} \preceq t_t$] case covered by the $\overline{M}\Delta1$ conditions in Definition \ref{def:mf_delta}.
\item[$t_t \preceq t_{\neg r}$] case covered by the $\overline{M}\Delta2$ conditions in Definition \ref{def:mf_delta}.
\end{description}
\item Thus, all cases are covered and a violating trace is always identified by the Maintenance Failure $\Delta$-constraints.
\end{enumerate}
\end{proof}

\subsubsection{$\overline{M}\Delta2$ Simplified}

\setcounter{theorem}{6}
\begin{theorem}
If a violation in business process model is identified by satisfying either $\overline{M}\Delta1$ or $\overline{M}\Delta2$, then the same violation is identified by satisfying either $\overline{M}\Delta1$ or $\overline{M}\Delta2S$.
\end{theorem}

\begin{proof}
As both procedures identifying whether a regulation violates a business process model contain the constraint $\overline{M}\Delta1$, we focus on proving soundness and completeness when the violation is identified by fulfilling the other constraint, respectively $\overline{M}\Delta2$ and $\overline{M}\Delta2S$.

\textbf{Soundness}:

\begin{enumerate}
\item Given a violated obligation and satisfying $\overline{M}\Delta2$.
\item From 1. and Definition \ref{def:mf_delta} for $\overline{M}\Delta2$, if follows that $\exists t_t$ such that: $\forall t_d (\exists t_{\neg r} | t_t \preceq t_{\neg r} \prec t_d)$
\item Logically, $\exists t_t$ such that: $\forall t_d (\exists t_{\neg r} | t_t \preceq t_{\neg r} \prec t_d)$ $\equiv$ $\exists t_t$ such that: $\exists t_{\neg r}, \neg \exists t_{d} | t_t \preceq t_{d} \preceq t_{\neg r}$
\item From 3., and Definition \ref{def:mD2S}, it follows that the violation is identified by $\overline{M}\Delta2S$.
\end{enumerate}

\textbf{Completeness}:

\begin{enumerate}
\item Given a violated obligation and satisfying $\overline{M}\Delta2S$.
\item From 1., and Definition \ref{def:mD2S}, it follows that $\exists t_t$ such that: $\exists t_{\neg r}, \neg \exists t_{d} | t_t \preceq t_{d} \preceq t_{\neg r}$
\item We consider the two following mutually exclusive cases:
\begin{description}
\item[case 1]: $\exists t_t$ such that: $\exists t_r, \neg \exists t_{\neg r} | t_r \preceq t_{\neg r} \preceq t_t$
\begin{enumerate}
\item[3.1.1] Logically, $\exists t_t$ such that: $\forall t_d (\exists t_{\neg r} | t_t \preceq t_{\neg r} \prec t_d)$ $\equiv$ $\exists t_t$ such that: $\exists t_{\neg r}, \neg \exists t_{d} | t_t \preceq t_{d} \preceq t_{\neg r}$
\item[3.1.2] From \textbf{case 1}, 3.1.1, and Definition \ref{def:mf_delta} for $\overline{M}\Delta2$, it follows that the violation is identified by $\overline{M}\Delta2$.
\end{enumerate}
\item[case 2]: $\exists t_t$ such that: $\exists t_{\neg r}, \neg \exists t_r | t_{\neg r} \preceq t_r \preceq t_t$
\begin{enumerate}
\item[3.2.1] From \textbf{case 2}, and Definition \ref{def:mf_delta} for $\overline{M}\Delta1$, it follows that the violation is identified by $\overline{M}\Delta1$.
\end{enumerate}
\end{description}
\item Therefore in each case, it follows that a violation identified by $\overline{M}\Delta2S$ would be identified by either $\overline{M}\Delta1$ or $\overline{M}\Delta2$.
\end{enumerate}
\end{proof}
\newpage
\section{Stem Evaluation}

\begin{lemma}\label{lem:sub_to_super}
Given a process model $P$ and a process block $B$ contained in $P$. If $B$ satisfies a failure $\Delta$-constraint, then $P$ satisfies the same $\Delta$-constraint.
\end{lemma}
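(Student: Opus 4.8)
The plan is to first unfold what "$B$ satisfies a failure $\Delta$-constraint" means: following the usage in Section~\ref{sec:stem}, it asserts that some execution $\exe_B \in \Exe{B}$ matches the pattern of the constraint with all of its witness tasks drawn from $V(B)$. I would then record the common shape of every failure $\Delta$-constraint that the procedure actually evaluates, namely the simplified forms $\overline{A}\Delta1S$, $\overline{A}\Delta2.1$, $\overline{A}\Delta2.2$, $\overline{M}\Delta1$, and $\overline{M}\Delta2S$: each is a purely existential pattern, asserting the existence of finitely many witness tasks in a prescribed $\preceq$-order together with absence conditions of the form $\neg\exists t_c \mid t_a \preceq t_c \preceq t_b$ whose bounds $t_a,t_b$ are themselves witnesses. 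The structural feature I would exploit is that every such absence interval is delimited by witnesses, hence lies inside the span of $\exe_B$.

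The proof proceeds by induction on the length of the nesting chain from $B$ up to $P$ in the process tree of Definition~\ref{def:process_tree}, so it suffices to treat a single step: assuming $B$ is one of the direct sub-blocks $B_1,\dots,B_n$ of a block $C$ built with \kwd{SEQ}, \kwd{XOR}, or \kwd{AND}, I would show that some execution of $C$ satisfies the constraint. In each of the three cases of Definition~\ref{def:ser}, I would exhibit an execution $\exe_C \in \Exe{C}$ in which $\exe_B$ appears as a \emph{contiguous} sub-sequence with its internal order preserved and every task outside $B$ placed entirely before or entirely after the span of $\exe_B$. For \kwd{XOR} this is immediate since $\Exe{B} \subseteq \Exe{C}$; for \kwd{SEQ} the concatenation $+_\mathbb{P}$ keeps $\exe_B$ contiguous and pushes the sibling executions strictly before or after it; for \kwd{AND} I would pick the linear extension of $\exe_1 \cup_\mathbb{P} \dots \cup_\mathbb{P} \exe_n$ that schedules all sibling tasks first and then all of $\exe_B$, which is admissible precisely because parallel composition imposes no order between distinct sub-blocks.

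Given such an $\exe_C$, the preservation argument is routine: the existence assertions survive because all witnesses are retained; the ordering assertions survive because the relative order of $\exe_B$'s tasks is unchanged; and each absence assertion $\neg\exists t_c \mid t_a \preceq t_c \preceq t_b$ survives because $[t_a,t_b]$ lies inside the span of $\exe_B$, so no externally added task can fall into it and the tasks occupying the interval are exactly those for which absence already held in $\exe_B$. Hence $\exe_C$ satisfies the constraint, and iterating up the chain yields an execution of $P$ that does.

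The hard part will be twofold. First, the \kwd{AND} case is the only one with genuine content: I must justify that serialising $\exe_B$ contiguously is always a valid linear extension, which rests on the absence of cross-block order in $\cup_\mathbb{P}$. Second, the interval-containment argument is tailored to existential patterns, whereas the unsimplified $\overline{M}\Delta2$ carries a universal quantifier over deadline tasks ($\forall t_d$), for which appending a fresh $t_d$ after $\exe_B$ could a priori break the pattern; I would sidestep this by applying the lemma only to the simplified existential constraints actually used, invoking $\overline{M}\Delta2S$ together with Assumption~\ref{ass:final_deadline}, so that the interval-containment argument remains valid in every case.
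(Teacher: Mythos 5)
Your proof follows the same skeleton as the paper's: a case analysis on the type of the immediate parent block (\kwd{XOR} trivially, since $\Exe{B}$ is included in the parent's executions; \kwd{SEQ} by embedding $\exe_B$ as a contiguous segment; \kwd{AND} by serialising the siblings around $\exe_B$, which is exactly what the paper extracts from Lemma~\ref{lem:inclusivity}), followed by recursion up the nesting chain to $P$. Where you differ is in rigour, and the difference matters. The paper's \kwd{SEQ} case simply asserts that ``as $\exe$ is part of $\exe'$, $\exe'$ would also satisfy the same $\Delta$-constraint''; this is precisely the step you isolate as non-trivial, and your observation that it holds only for purely existential patterns whose absence intervals are delimited by witness tasks is the justification the paper omits. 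Moreover, your caveat about $\overline{M}\Delta2$ is a genuine repair rather than a cosmetic one: the unsimplified $\overline{M}\Delta2$ contains $\forall t_d(\exists t_{\neg r} \mid t_t \preceq t_{\neg r} \prec t_d)$, and a block $B$ can satisfy it vacuously (no deadline task in $B$) while $\kwd{SEQ}(B, t_d)$ does not, so the lemma as literally stated --- and as proved in the paper --- fails for that constraint. Restricting the claim to the simplified constraints the procedure actually evaluates ($\overline{A}\Delta1S$, $\overline{A}\Delta2.1$, $\overline{A}\Delta2.2$, $\overline{M}\Delta1$, $\overline{M}\Delta2S$), with Assumption~\ref{ass:final_deadline} licensing the replacement of $\overline{M}\Delta2$ by $\overline{M}\Delta2S$, is the right fix, and it is what the paper's evaluation procedure implicitly relies on. In short: same route, but your version closes a hole the paper's own proof leaves open.
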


\begin{proof}
\begin{enumerate}
\item Assume that an execution $\exe$ in a process block $B$ satisfies the constraints specified by a $\Delta$-constraint. 
\item Considering a process block $B'$ containing $B$, we analyse three different cases depending on the type of $B'$:
\begin{description}
\item[\kwd{XOR}] The super block $B'$ contains the execution $\exe$, which directly follows from Definition \ref{def:ser}. 
\item[\kwd{SEQ}] Following from Definition \ref{def:ser}, the super block $B'$ contains an execution $\exe'$ which contains $\exe$ as part of its execution. Thus, as $\exe$ is part of $\exe'$, $\exe'$ would also satisfy the same $\Delta$-constraint as $\exe$.
\item[\kwd{AND}] The fact that an execution belonging to the super block of this type, satisfies the same property satisfied by an execution of one of its sub blocks, follows from Lemma \ref{lem:inclusivity} and the case \kwd{SEQ}.
\end{description}
\item Independently on the case shown in 2., the $\Delta$-constraint satisfied in an execution of $B$ is also satisfied by an execution of $B'$.
\item Each case shown in 2. allows us to recursively apply them to each super blocks, which would lead to having the process model block, $P$, still satisfying the same $\Delta$-constraint as one of its sub blocks $B$.
\end{enumerate}
\end{proof}

\begin{lemma}[Stem Pruning]\label{l:stem_prune}
Given a tree representation $\mathcal{P}$ of a business process $P$, and a task $\tau$ in $P$. Let $\mathcal{P}'$ be $\mathcal{P}$ where each \kwd{XOR} block on the path from the root to $\tau$ is substituted with its child leading to $\tau$. The business process model $P'$, corresponding to $\mathcal{P}'$, contains $\tau$ in each of its executions, and there are no executions in $P$ that contain $\tau$ that are not appear in $P'$.
\end{lemma}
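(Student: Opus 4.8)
The plan is to prove the two assertions separately by downward induction along the \emph{stem}, i.e.\ the nested chain of blocks $B_0 = P \supseteq B_1 \supseteq \cdots \supseteq B_k = \tau$, where $B_{i+1}$ is the child of $B_i$ whose subtree contains the leaf $\tau$. Uniqueness of this child is exactly the proper-nesting property recorded after Definition~\ref{def:process_tree}: each task is a single leaf, so at every node $\tau$ lies in precisely one subtree. Writing $B_i'$ for the pruned version of $B_i$ (every \kwd{XOR} on the stem replaced by its stem-child, which again yields a well-formed block so that $\Exe{B_i'}$ is defined), I will establish, by induction on $i$ from $k$ down to $0$, that \emph{(1)} every execution of $B_i'$ contains $\tau$, and \emph{(2)} every execution of $B_i$ that contains $\tau$ is an execution of $B_i'$. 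The case $i=0$ then gives the two claims of the lemma for $P$ and $P'$.

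For \emph{(1)}, the base case $B_k=\tau$ is immediate, as its only execution $(\{\tau\},\emptyset)$ contains $\tau$. In the step I split on the type of $B_i$ using Definition~\ref{def:ser}. If $B_i$ is \kwd{SEQ} or \kwd{AND}, every execution of $B_i'$ is a concatenation, respectively an interleaving obtained via $\mathcal{E}$, that incorporates one execution of the pruned stem-child $B_{i+1}'$; since both concatenation and linear extension preserve the underlying task set, and the executions of $B_{i+1}'$ contain $\tau$ by the induction hypothesis, so do those of $B_i'$. If $B_i$ is \kwd{XOR}, pruning has replaced $B_i$ outright by $B_{i+1}'$, so $\Exe{B_i'}=\Exe{B_{i+1}'}$ and the property is inherited directly.

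For \emph{(2)}, the base case is again trivial. In the step, take $\exe \in \Exe{B_i}$ containing $\tau$ and let $\exe_j$ be its projection onto the stem-child $C_j = B_{i+1}$. Because $\tau$ belongs only to $C_j$'s subtree, $\exe_j$ contains $\tau$, so by the induction hypothesis $\exe_j \in \Exe{B_{i+1}'}$. If $B_i$ is \kwd{SEQ} or \kwd{AND}, the non-stem component executions are untouched by the pruning, so re-assembling them with $\exe_j$ via the same concatenation or interleaving that witnesses $\exe$ shows $\exe \in \Exe{B_i'}$. If $B_i$ is \kwd{XOR}, a $\tau$-containing execution must have selected the branch $C_j$ (the only branch whose subtree contains $\tau$), whence $\exe = \exe_j \in \Exe{B_{i+1}'} = \Exe{B_i'}$.

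I expect the \kwd{AND} case to be the main obstacle, since there an execution is not a component execution itself but a linear extension of the union of the components' partial orders. The delicate point is to check, in both directions, that replacing the stem component's execution set by the pruned subset commutes with forming interleavings: for \emph{(1)}, that every interleaving retains $\tau$, which follows because $\mathcal{E}$ leaves the element set $\mathcal{S}$ unchanged; and for \emph{(2)}, that the projection onto $C_j$ of a $\tau$-containing interleaving is a legitimate execution of $C_j$ still carrying $\tau$, so that the very same interleaving is reproducible inside $B_i'$. Once this is handled cleanly—relying on the projection notion $\exe_i$ already fixed in Definition~\ref{def:ser} and on the associativity of the block operations to reduce the $n$-ary blocks to the binary definitions—the induction closes and the root instance delivers the lemma.
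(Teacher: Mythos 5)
Your proof is correct, but it follows a genuinely different route from the paper's. The paper proves both halves of the lemma by contradiction, reasoning globally over the whole process: for the first claim, it assumes an execution of $P'$ missing $\tau$, notes that some execution of $P'$ does contain $\tau$, and concludes that the two can only diverge at an \kwd{XOR} block with one sub-block containing $\tau$ and another not containing it; such a block must be an ancestor of $\tau$, i.e.\ on the stem, contradicting the pruning. For the second claim, it splits on whether any \kwd{XOR} lies on the stem at all, deriving a contradiction either with the construction (a removed \kwd{XOR} child would have to lead to $\tau$, violating uniqueness of $\tau$) or from $\mathcal{P} \equiv \mathcal{P}'$. You instead run a constructive downward induction along the stem $B_0 \supseteq \cdots \supseteq B_k = \tau$, with an explicit case analysis on the block type (\kwd{SEQ}, \kwd{AND}, \kwd{XOR}) at each level, using the projection and re-assembly operations of Definition~\ref{def:ser} to move executions between $B_i$ and $B_i'$. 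Your version buys rigour at the points the paper leaves implicit: it justifies, via the compositional semantics, the paper's key unproved step that a difference in task membership across executions can only originate from an \kwd{XOR} choice, and it spells out the re-assembly of a $\tau$-containing execution inside the pruned \kwd{SEQ}/\kwd{AND} blocks (including the interleaving subtlety for \kwd{AND}, where $\mathcal{E}$ preserves the task set). The paper's argument buys brevity and avoids the inductive scaffolding, at the cost of leaning on that structural claim about \kwd{XOR} blocks without proof. Both establish exactly the two containments stated in the lemma.
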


\begin{proof}
We first prove that each execution in $P'$ contains $\tau$. We prove it by contradiction.
\begin{enumerate}
\item Assume there exists an execution in $P'$ that does not contain $\tau$.
\item From construction, $P'$ contains $\tau$.
\item From 2. and Definition \ref{def:ser}, it follows that exists an execution of $P'$ containing $\tau$.
\item From 1. and 3., it follows that $P'$ contains at least an execution containing $\tau$, and contains at least an execution not containing $\tau$.
\item From 4., and Definition \ref{def:ser}, it follows that $P'$ contains a \kwd{XOR} block, $X$, having a sub-block containing $\tau$, and a different sub-block not containing $\tau$.
\item From 5., and Definition \ref{def:process_tree}, it follows that $X$ is on the path from the root to $\tau$ in $\mathcal{P}'$.
\item However, 6. contradicts the premises, as $X$ would have been removed by construction.
\item Therefore, there cannot be an execution in $P'$ not containing $\tau$.
\end{enumerate}

We prove now that every execution containing $\tau$ in $P$, is also an execution of $P'$. We prove it by contradiction.

\begin{enumerate}
\item Assume there is an execution in $P$ containing $\tau$, and such execution does not belong to the executions of $P'$.
\item We consider two possible cases:
\begin{description}
\item[There is at least an \kwd{XOR} on the path from the root to $\tau$ in $\mathcal{P}$]:
\begin{enumerate}
\item By construction, the difference between $\mathcal{P}$ and $\mathcal{P'}$ is only on the missing children of \kwd{XOR} nodes on the path from the root to $\tau$.
\item From the assumption 1., (a), and Definition \ref{def:ser}, it follows that a children of one of the \kwd{XOR} on the path from the root to $\tau$, has been removed and it leads to $\tau$.
\item In this case, (b) is in contradiction with both the construction, and both the uniqueness of a task $\tau$.
\end{enumerate}
\item[There is not an \kwd{XOR} on the path from the root to $\tau$ in $\mathcal{P}$]:
\begin{enumerate}
\item By construction, then $\mathcal{P} \equiv \mathcal{P}'$.
\item From (a), and Definition \ref{def:process_tree}, it follows that $P \equiv P'$.
\item From (b), and Definition \ref{def:ser}, it follows that $\Exe{P} \equiv \Exe{P'}$.
\item In this case, (c) is in contradiction with the assumption in 1.
\end{enumerate}
\end{description}
\item We have shown in both cases a contradiction, hence every execution containing $\tau$ in $P$, is also an execution of $P'$.
\end{enumerate}
\end{proof}

\setcounter{theorem}{2}
\begin{theorem}[Algorithm \ref{a:stem} Correctness]
Given a business process model $P$ and a failure $\Delta$-constrain $\delta$, if there exists an execution of $P$ such that the execution satisfies $\delta$, then Algorithm \ref{a:stem} returns \kwd{true}, otherwise it returns \kwd{false}.
\end{theorem}

\begin{proof}
\textbf{Soundness}
\begin{enumerate}
\item From line 1 and line 3, each task containing the obligation's trigger is considered.
\item From 1. lines 4 and 5, and Lemma \ref{l:stem_prune}, it follows that the pruning procedure does not remove executions containing the task considered as the \emph{trigger leaf}.
\item Let the \kwd{Evaluate} function be \kwd{true} when the tree representation contains an execution satisfying $\delta$, and \kwd{false} otherwise.\footnote{Let \kwd{Evaluate} be a function calculating the bottom up aggregation of a node in a tree according to the classification of its children as shown by the aggregations shown in the tables in Appendix \ref{app:agg}.}
\item From 2., and line 6, the function \kwd{Evaluate} is given a tree representation of a process containing each execution having the \emph{trigger leaf}.
\item From 1., and 4., it follows that the \kwd{Evaluate} function considers every executions in $P$ having a task with the obligation's trigger annotated.
\item From 3., and 5., it follows that if $P$ contains an execution fulfilling $\delta$, then \kwd{Evaluate} is evaluated \kwd{true}.
\item From 6., and lines 6 and 7, it follows that Algorithm \ref{a:stem} returns \kwd{true} when $P$ contains an execution fulfilling $\delta$.
\item From 6., and line 8, it follows that if $P$ does not contain an execution fulfilling $\delta$, then Algorithm \ref{a:stem} returns \kwd{false}.
\item From 7., and 8., it follows that Algorithm \ref{a:stem} returns the correct value depending whether $P$ contains an execution satisfying $\delta$.
\end{enumerate}

\textbf{Completeness}
\begin{enumerate}
\item Depending on the output of Algorithm \ref{a:stem}, we distinguish two cases:
\begin{description}
\item[true]
\begin{enumerate}
\item Line 7 of Algorithm \ref{a:stem} is executed.
\item From (a), it follows that \kwd{Evaluate} at line 6 is \kwd{true} at least once.
\item From (b), and lines 1 and 2, it follows that $\mathcal{P}$ contains an execution satisfying $\delta$.
\item From (c), and Definition \ref{def:process_tree}, it follows that $P$ contains an execution satisfying $\delta$.
\end{enumerate}
\item[false]
\begin{enumerate}
\item Line 8 of Algorithm \ref{a:stem} is executed.
\item From (a), it follows that \kwd{Evaluate} at line 6 is always \kwd{false}.
\item From line 1 and line 3, each task containing the obligation's trigger is considered.
\item From (c), lines 4 and 5, and Lemma \ref{l:stem_prune}, it follows that the pruning procedure does not remove executions containing the task considered as the \emph{trigger leaf}.
\item From (d), and line 6, the function \kwd{Evaluate} is given a tree representation of a process containing each execution having the \emph{trigger leaf}.
\item From (c), and (e), it follows that the \kwd{Evaluate} function considers every executions in $P$ having a task with the obligation's trigger annotated.
\item From (b), and (f), it follows that non of the executions of $P$ satisfy $\delta$.
\end{enumerate}
\end{description}
\item We have proven that in both cases, Algorithm \ref{a:stem} correctly recognises whether a business process $P$ contains an execution satisfying $\delta$.
\end{enumerate}
\end{proof}
\newpage
\section{Aggregating the Evaluations}\label{app:agg}
\subsection{Left Sub-Pattern}

\begin{definition}[Left Sub-Pattern $\mathbf{\overset{+}{x}}$]\label{def:lsp_x+}
Left side pattern fulfilling state: there exists an execution belonging to the process block containing $x$, and no $y$ on its right.

\noindent\textbf{Formally}: 

\noindent Given a process block $B$, it belongs to this class if and only if:
\begin{itemize}
\item $\exists \exe \in \Exe{B}$ such that:
\begin{itemize}
\item $\exists x \in \exe$ such that:
\begin{itemize}
\item $\not \exists y \in \exe$ such that $x \preceq y$.
\end{itemize}
\end{itemize}
\end{itemize}
\end{definition}

\begin{definition}[Left Sub-Pattern $\mathbf{\overset{0}{x}}$]\label{def:lsp_x0}
neutral state: there exists an execution belonging to the process block, such that it does not contain $x$ or $y$.

\noindent\textbf{Formally}: 

\noindent Given a process block $B$, it belongs to this class if and only if:
\begin{itemize}
\item $\exists \exe \in \Exe{B}$ such that:
\begin{itemize}
\item $\not\exists x, \in \exe$, and
\item $\not\exists y, \in \exe$.
\end{itemize}
\end{itemize}
\end{definition}

\begin{definition}[Left Sub-Pattern $\mathbf{\overset{-}{x}}$]\label{def:lsp_x-}
blocking state: every execution belonging to the process block contains at least a $y$, and no $x$ on its right.

\noindent\textbf{Formally}: 

\noindent Given a process block $B$, it belongs to this class if and only if:
\begin{itemize}
\item $\forall \exe \in \Exe{B}$ such that:
\begin{itemize}
\item $\exists y \in \exe$ such that:
\begin{itemize}
\item $\not \exists x \in \exe$ such that $y \preceq x$.
\end{itemize}
\end{itemize}
\end{itemize}
\end{definition}

\setcounter{theorem}{7}
\begin{theorem}[Classification Completeness for Left Sub-Pattern]
The set of possible evaluations of Left Sub-Pattern is completely covered by the provided set of classifications.
\end{theorem}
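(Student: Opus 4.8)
The plan is to read the statement as an \emph{exhaustiveness} claim: every process block $B$ is assigned at least one of the three classes $\mathbf{\overset{+}{x}}$, $\mathbf{\overset{0}{x}}$, $\mathbf{\overset{-}{x}}$ of Definitions~\ref{def:lsp_x+}--\ref{def:lsp_x-}. (The classes need not be mutually exclusive; the linear preference order stated after them later selects a single representative.) Since each class is given by a condition on the set of executions $\Exe{B}$, and each individual execution $\exe \in \Exe{B}$ is a sequence and hence linearly ordered by $\preceq$, the argument will be a direct case analysis on a single, arbitrary block $B$, showing that failing to be in the two ``positive'' classes forces membership in the ``blocking'' class $\mathbf{\overset{-}{x}}$.

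First I would negate the existential definitions of $\mathbf{\overset{+}{x}}$ and $\mathbf{\overset{0}{x}}$ to obtain universal statements over $\Exe{B}$. Assuming $B \notin \mathbf{\overset{+}{x}}$ gives: for every $\exe$ and every occurrence of $x$ in $\exe$ there is a $y$ with $x \preceq y$, i.e.\ every $x$ is ``covered'' on its right by some $y$. Assuming $B \notin \mathbf{\overset{0}{x}}$ gives: every $\exe$ contains at least one $x$ or at least one $y$. The goal is then to establish the condition of $\mathbf{\overset{-}{x}}$, namely that for every $\exe$ there is a $y$ with no $x$ to its right.

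Next I would fix an arbitrary $\exe \in \Exe{B}$ and first show it contains a $y$: if $\exe$ has an $x$, the negation of $\mathbf{\overset{+}{x}}$ supplies a $y$ after it; otherwise the negation of $\mathbf{\overset{0}{x}}$ supplies a $y$ directly. Because $\exe$ is finite and $\preceq$ is a total order on it, let $y^{*}$ be the last $y$ occurring in $\exe$. The crucial step is to show no $x$ lies to the right of $y^{*}$: if some $x$ satisfied $y^{*} \preceq x$, then the negation of $\mathbf{\overset{+}{x}}$ would force a further $y'$ with $x \preceq y'$, so that $y^{*} \preceq x \preceq y'$ contradicts the maximality of $y^{*}$. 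Hence $y^{*}$ witnesses $\mathbf{\overset{-}{x}}$ for $\exe$, and since $\exe$ was arbitrary, $B \in \mathbf{\overset{-}{x}}$; this closes the case and shows the three classes jointly cover all blocks.

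I expect the main obstacle to be one of care rather than depth: correctly negating the nested quantifiers of the three definitions (in particular reading $\mathbf{\overset{+}{x}}$ as ``some $x$ with no $y$ after it'', so that its negation is ``every $x$ has some $y$ after it''), and justifying the existence and use of the last occurrence $y^{*}$, which relies on each execution being a \emph{finite} sequence totally ordered by $\preceq$. Once this quantifier bookkeeping is pinned down, the ``last $y$'' argument is the only nontrivial move and it settles the theorem.
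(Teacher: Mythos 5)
Your proposal is correct in substance but follows a genuinely different route from the paper. The paper's own proof is a two-step assertion: the single slot $x$ of the pattern admits three possible evaluation states (satisfied, failing, neutral), and these three states are covered by the three classifications. It never manipulates the quantifiers of Definitions~\ref{def:lsp_x+}--\ref{def:lsp_x-} at all. You instead prove coverage properly: negate the two existential classes, conclude that every execution contains a $y$, and use the last $y$-task of the (finite, totally ordered) execution as the witness for the universal class $\mathbf{\overset{-}{x}}$. This buys an actual argument where the paper offers an appeal to intuition, and it makes explicit the one nontrivial fact the paper's version silently assumes, namely that ``not satisfied and not neutral'' really does imply ``blocking''.

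One step in your argument deserves care: the claimed contradiction with the maximality of $y^{*}$ tacitly assumes $y' \neq y^{*}$. If the offending $x$ sits on the very task $y^{*}$ (a task annotated with both $x$ and $y$), then $y^{*} \preceq x \preceq y'$ is satisfied by taking $y' = y^{*}$ and no contradiction arises; indeed, a block whose only execution is a single task carrying both $x$ and $y$ belongs to none of the three classes under the non-strict $\preceq$ of the formal definitions, so for such instantiations the theorem itself fails. The gap is harmless in the paper's setting: the left sub-pattern is only ever instantiated as $\kwd{lsp}(t_{\neg r}, t_{r})$ for $\overline{A}\Delta1S$ and $\kwd{lsp}(\neg r, r)$ for $\overline{M}\Delta1$, where $x$ and $y$ are complementary literals, and Definition~\ref{def:annpro} restricts annotations to consistent sets, so no task can carry both. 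You should state this mutual-exclusion assumption explicitly before invoking maximality; with it, $y^{*} \prec t_{x}$ and $t_{x} \prec y'$ hold strictly, $y'$ is a $y$-task strictly after $y^{*}$, and the contradiction goes through exactly as you describe.
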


\begin{proof}
\begin{enumerate}
\item The pattern ($x$) in Left Sub-Pattern allows 3 possible evaluations: whether the partial requirement is satisfied, failing, or in a neutral state.
\item The 3 possible evaluations are completely covered by the 3 possible classifications.
\end{enumerate}
\end{proof}

\subsubsection{Lemmas}

\begin{lemma}[Aggregation Neutral Class]\label{l:neutral}
Given a process block $A$, assigned to the evaluation class $\mathbf{\overset{0}{x}}$, and another process block $B$, assigned to any of the available evaluation classes. Let $C$ be a process block having $A$ and $B$ as its sub-blocks, then the evaluation class of $C$ is the same class as the process block $B$.
\end{lemma}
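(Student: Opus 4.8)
The plan is to first translate the hypothesis that $A$ is assigned the class $\mathbf{\overset{0}{x}}$ into two usable facts. Since a block carries its \emph{best} label under the linear order $\mathbf{\overset{-}{x}} < \mathbf{\overset{0}{x}} < \mathbf{\overset{+}{x}}$, the assignment means $A \in \mathbf{\overset{0}{x}}$ but $A \notin \mathbf{\overset{+}{x}}$. Unpacking Definitions \ref{def:lsp_x0} and \ref{def:lsp_x+} this yields: (i) $A$ has an execution $\exe_A^0$ containing neither an $x$ nor a $y$, and (ii) in \emph{every} execution of $A$, each occurrence of $x$ is followed by some $y$ (there is no unmatched $x$ on the right). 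Fact (i) lets me reproduce any behaviour $B$ exhibits, while fact (ii) guarantees that $A$ can never manufacture a fulfilling $x$ that $B$ lacks. The goal then reduces to showing that composing $A$ with $B$ neither raises nor lowers $B$'s best achievable class, and I would split on the composition type of $C$ fixed by Definition \ref{def:ser}.

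For $\kwd{SEQ}$, every execution of $C$ is a concatenation $\exe_A' +_\mathbb{P} \exe_B'$, and I treat the three labels of $B$. If $B \in \mathbf{\overset{+}{x}}$, concatenating its witness with the neutral $\exe_A^0$ leaves the unmatched $x$ unmatched (the neutral execution contributes no $y$, in either operand position), so $C \in \mathbf{\overset{+}{x}}$, the top class. If $B \in \mathbf{\overset{0}{x}}$, the concatenation of two neutral executions is neutral, so $C \in \mathbf{\overset{0}{x}}$ and (since that execution has no $y$) $C \notin \mathbf{\overset{-}{x}}$; fact (ii) applied to both sides, with the internal order preserved under concatenation, forbids any unmatched $x$, so $C \notin \mathbf{\overset{+}{x}}$ and the best class is exactly $\mathbf{\overset{0}{x}}$. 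If $B \in \mathbf{\overset{-}{x}}$ (blocking, Definition \ref{def:lsp_x-}), the blocking witness $y^\ast$ of $\exe_B'$ retains no $x$ to its right whenever $\exe_B'$ is the right operand, and the remaining operand arrangement is handled by the last-$x$ argument described below, giving $C \in \mathbf{\overset{-}{x}}$.

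The $\kwd{AND}$ case is where I expect the real work, since executions of $C$ are arbitrary linear extensions (interleavings) of $\exe_A'$ and $\exe_B'$, so an $x$ from one side may now land to the right of a $y$ from the other. The labels $\mathbf{\overset{+}{x}}$ and $\mathbf{\overset{0}{x}}$ mirror $\kwd{SEQ}$ once one notes that each side's internal order survives every interleaving: placing $\exe_A^0$ entirely before $\exe_B$ preserves the $\mathbf{\overset{+}{x}}$ witness, and fact (ii) on each side rules out a fulfilling $x$ in the neutral case. The delicate sub-case is $B \in \mathbf{\overset{-}{x}}$, where I must show that \emph{every} interleaving $\exe$ still displays a $y$ with no $x$ to its right, despite $A$ possibly supplying $x$'s. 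The crux is to examine the last $x$ occurring in $\exe$: if it came from $A$ then fact (ii) supplies a $y$ after it inside $\exe_A'$; if it came from $B$ then the blocking witness of $\exe_B'$ lies after all of $\exe_B'$'s $x$'s and hence after it; either way this $y$ is past the final $x$, so no $x$ follows it and $\exe$ is blocking. The degenerate case of no $x$ at all is closed by the fact that a blocking $\exe_B'$ always contains a $y$, yielding $C \in \mathbf{\overset{-}{x}}$.

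In each sub-case $C$ lands in precisely the class labelling $B$, and that class is extremal enough to be $C$'s \emph{best} class: $\mathbf{\overset{+}{x}}$ is the top, $\mathbf{\overset{-}{x}}$ is mutually exclusive with the others (an all-blocking block has no neutral or fulfilling execution), and $\mathbf{\overset{0}{x}}$ is pinned from both directions above. Hence the evaluation class of $C$ equals that of $B$. I would close by observing that this identity property is specific to the sequential and parallel compositions in which the lemma is applied: for $\kwd{XOR}$ the aggregation is the union of execution sets, whose identity element under the same preference order is instead the bottom class $\mathbf{\overset{-}{x}}$ (a neutral alternative would override a blocking one), so it is the $\kwd{SEQ}$ and $\kwd{AND}$ aggregations for which $\mathbf{\overset{0}{x}}$ acts neutrally.
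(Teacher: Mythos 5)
Your proof is correct, and it takes a genuinely different route from the paper's. The paper proves the lemma by contradiction at a purely abstract level: if $\kwd{cl}(C)\neq\kwd{cl}(B)$, then $A$ must contain a task annotated with $x$ or $y$, and this is taken to contradict $A$ being assigned $\mathbf{\overset{0}{x}}$ under the preference order. That final step is coarser than the definitions warrant, since a block can contain occurrences of $y$ (or even of $x$) in some executions and still be correctly assigned $\mathbf{\overset{0}{x}}$ --- e.g.\ an \kwd{XOR} of an unannotated task and a task annotated with $y$ --- so the paper's contradiction relies on a stronger reading of neutrality than Definition \ref{def:lsp_x0} actually gives. You instead extract the two consequences of the hypothesis that are really needed (a witness execution of $A$ free of $x$ and $y$, and the absence in $A$ of any fulfilling execution), and then argue constructively per composition type and per class of $B$; the payoff is precisely in the case the paper glosses over, namely \kwd{AND} with $B$ in $\mathbf{\overset{-}{x}}$, where your ``last $x$'' argument shows that no interleaving can place an unmatched $x$ after every blocking witness. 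Your closing observation is also a real sharpening rather than a digression: as literally stated the lemma quantifies over any composite $C$, but it fails for \kwd{XOR} (a neutral $A$ \kwd{XOR}-composed with a blocking $B$ gets class $\mathbf{\overset{0}{x}}$, not $\mathbf{\overset{-}{x}}$); the paper never restricts the statement, it merely happens to invoke the lemma only in the \kwd{SEQ} and \kwd{AND} columns of its aggregation tables. The one caveat, which applies to both proofs equally, is the tacit assumption that an $x$-task and a $y$-task are distinct tasks; this is harmless for the paper's instantiations of the left sub-pattern (where $x$ and $y$ are complementary literals and cannot share an annotation), but it is where both arguments would need care for arbitrary parametrisations.
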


\begin{proof}
We prove Lemma \ref{l:neutral} by contradiction.

\begin{enumerate}
\item Let $A$ be a process block classified as $\mathbf{\overset{0}{x}}$.
\item Let $B$ be a process block classified as one of the possible classifications.
\item Let $C$ be a process block having $A$ and $B$ as its sub-blocks.
\item Assume that $C$ is classified as a different class than the one that classifies $B$.
\item From 4., and Definition \ref{def:lsp_x0}, it follows that the executions in $C$ do not have the same properties as the executions in $B$.
\item From 5., it follows that $A$ contains at least a serialisation capable of changing the properties of the serialisations in $B$ when considered in the super-block $C$.
\item From 6., it follows that $A$ must contain some tasks with some relevant literals in their annotation, in other words $x$ or $y$ must appear in at least one annotation of at least one of the tasks.
\item From 7. and the preference order: $\mathbf{\overset{-}{x}} < \mathbf{\overset{0}{x}} < \mathbf{\overset{+}{x}}$, it follows that either exists an execution in $A$ such that $A$ belongs to one of the most preferred classes than $0$, or each execution of $A$ contains a $x$ annotated in one of the appearing tasks, and it was not possible to assign it to any of the more preferred classes, hence it is assigned to $\mathbf{\overset{-}{x}}$.
\item 1. and 8. are in contradiction, as the former assigns $A$ to $\mathbf{\overset{0}{x}}$, while the latter to any other class but not $\mathbf{\overset{0}{x}}$.
\item From 9., it follows that a process classified as $\mathbf{\overset{0}{x}}$ is a neutral element in any aggregation.
\end{enumerate}

\end{proof}

\begin{lemma}[Aggregation Compositionality]\label{l:compositionality}
Given a process block $B$ having a list of sub-blocks: $A_0, \dots, A_n$. The classification of $B$ is the same as ordered pair aggregation of the classifications of its sub-blocks: \kwd{agg}(\dots( \kwd{agg}(\kwd{agg}($A_0, A_1$), $A_2$) \dots), $A_n$).
\end{lemma}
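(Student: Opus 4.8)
The plan is to prove Lemma~\ref{l:compositionality} by induction on the number $n$ of sub-blocks of $B$, reducing the $n$-ary block construction to nested binary compositions so that the binary aggregation function \kwd{agg} can be applied. Before the induction I would isolate two facts. First, that the classification \kwd{cl} of a block depends only on its execution set: this is immediate from the class definitions (e.g.\ Definitions~\ref{def:lsp_x+}--\ref{def:lsp_x-}), which quantify purely over $\exe \in \Exe{B}$ and never over the syntactic shape of $B$. Second, that \kwd{agg} is correct on two sub-blocks, i.e.\ $\kwd{cl}(\kwd{OP}(X,Y)) = \kwd{agg}(\kwd{cl}(X),\kwd{cl}(Y))$ for each $\kwd{OP}\in\{\kwd{SEQ},\kwd{XOR},\kwd{AND}\}$, which the aggregation tables are designed to satisfy (Appendix~\ref{app:agg}).

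The base case $n=1$ (two sub-blocks $A_0,A_1$) is then exactly the binary correctness fact. For the inductive step I would set $A' = \kwd{OP}(A_0,\dots,A_{n-1})$ and establish the execution-set identity $\Exe{\kwd{OP}(A_0,\dots,A_n)} = \Exe{\kwd{OP}(A',A_n)}$. For $\kwd{SEQ}$ and $\kwd{XOR}$ this is immediate from the associativity of Concatenation ($+_\mathbb{P}$) and of Union ($\cup_\mathbb{P}$) asserted in Definition~\ref{def:ser}. Together with the first fact this gives $\kwd{cl}(\kwd{OP}(A_0,\dots,A_n)) = \kwd{cl}(\kwd{OP}(A',A_n))$; the binary correctness fact rewrites the right-hand side as $\kwd{agg}(\kwd{cl}(A'),\kwd{cl}(A_n))$, and the induction hypothesis expands $\kwd{cl}(A')$ into the left-to-right pairwise aggregation of $A_0,\dots,A_{n-1}$, yielding the claimed nested form.

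The main obstacle I anticipate is the $\kwd{AND}$ case of the execution-set identity, since there $\Exe{\cdot}$ is built by taking linear extensions $\mathcal{E}$ of the parallel union $\cup_\mathbb{P}$. Associativity of $\cup_\mathbb{P}$ alone does not settle it, because $\Exe{\kwd{AND}(A',A_n)}$ interleaves an already-linearised $\exe' \in \Exe{A'}$ with $\exe_n$, whereas $\Exe{\kwd{AND}(A_0,\dots,A_n)}$ linearises the full $n$-fold union in one step. I would close this gap with a short auxiliary claim: the set of all interleavings of the linear extensions of $\exe_1\cup_\mathbb{P}\dots\cup_\mathbb{P}\exe_{n-1}$ with $\exe_n$ equals the set of linear extensions of $\exe_1\cup_\mathbb{P}\dots\cup_\mathbb{P}\exe_n$. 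The forward inclusion is clear, as any such interleaving respects every within-block order; the reverse follows by projecting a linear extension of the full union onto the first $n-1$ blocks, checking that this projection is a linear extension of their union, and recovering the original as an interleaving with $\exe_n$.

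Finally I would remark that the argument is unaffected by the pruning of classification sets to non-dominated classes under the relevant preference orders (Figures~\ref{fig:ad1s_stem_ev} and~\ref{fig:ad1s_c_latt}): by the neutrality and dominance reasoning behind Lemma~\ref{l:neutral}, discarding a strictly worse class never changes the outcome of any subsequent aggregation, so the pairwise-aggregated result retains exactly the information of the direct classification. This keeps the induction self-contained and confines the substantive work to the associativity identity, in particular its $\kwd{AND}$ instance.
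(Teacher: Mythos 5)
Your proposal is correct, but it is a genuinely different and substantially more rigorous argument than the one in the paper. The paper proves Lemma~\ref{l:compositionality} by a two-case sketch on the block type: for \kwd{SEQ} it observes that the left-to-right aggregation order coincides with the execution order, and for \kwd{AND} it asserts that order is irrelevant and that the aggregation tables ``keep track of the best possible outcomes''; no induction, no execution-set identity, and no treatment of \kwd{XOR} appears. You instead reduce the $n$-ary case to the binary one by induction, which forces you to isolate and prove the semantic fact the paper leaves implicit: that $\Exe{\kwd{OP}(A_0,\dots,A_n)} = \Exe{\kwd{OP}(A',A_n)}$ for $A' = \kwd{OP}(A_0,\dots,A_{n-1})$. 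For \kwd{SEQ} and \kwd{XOR} this is indeed immediate from the associativity of $+_\mathbb{P}$ and $\cup_\mathbb{P}$ stated in Definition~\ref{def:ser}, and your auxiliary interleaving claim for \kwd{AND} (linearise-then-interleave equals linearise-the-full-union, proved by projection) is exactly the missing step that associativity of $\cup_\mathbb{P}$ alone does not give; both directions of your inclusion argument go through. What your route buys is a self-contained proof whose only external dependency is the binary correctness of \kwd{agg}, which is precisely what the tables in Appendix~\ref{app:agg} establish, together with the observation that \kwd{cl} is a function of $\Exe{\cdot}$ alone; it also makes explicit why pruning dominated classes is harmless, which the paper handles only via Lemma~\ref{l:neutral} and the preference lattices. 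What the paper's route buys is brevity: by trusting the tables to absorb all ordering issues for \kwd{AND}, it avoids the linear-extension bookkeeping entirely, at the cost of leaving the $n$-ary-to-binary reduction unjustified.
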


\begin{proof}
We prove Lemma \ref{l:compositionality} by cases:

For \kwd{SEQ} the order of the aggregation follows the execution order so is natural that compositionally aggregating does not affect the result.

For \kwd{AND} the order is irrelevant. The aggregation tables take care of keeping track of the best possible outcomes of the aggregations, allowing again compositionality.
\end{proof}

\begin{lemma}[Executions Inclusivity]\label{lem:inclusivity}
Given two process blocks $A$ and $B$, and two super-blocks $\kwd{SEQ}(A, B)$ and $\kwd{AND}(A, B)$, then $\Exe{\kwd{SEQ}(A, B)} \subseteq \Exe{\kwd{AND}(A, B)}$.
\end{lemma}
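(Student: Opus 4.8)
The plan is to unfold both sides with the operators of Definition~\ref{def:ser} and show that each \kwd{SEQ}-concatenation is, verbatim, one of the linear extensions collected on the \kwd{AND} side. First I would fix an arbitrary $\exe \in \Exe{\kwd{SEQ}(A,B)}$. By the \kwd{SEQ} clause it has the form $\exe = \exe_A +_\mathbb{P} \exe_B$ for some $\exe_A \in \Exe{A}$ and $\exe_B \in \Exe{B}$. Writing $\exe_A = (\mathcal{S}_A, \prec_A)$ and $\exe_B = (\mathcal{S}_B, \prec_B)$, the concatenation has carrier $\mathcal{S}_A \cup \mathcal{S}_B$ and order relation $\prec_A \cup \prec_B \cup \{a \prec b \mid a \in \mathcal{S}_A,\ b \in \mathcal{S}_B\}$. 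Note that, by the very definition of $\Exe{\cdot}$, both $\exe_A$ and $\exe_B$ are \emph{sequences}, i.e.\ total orders on their carriers; this fact will carry the argument.

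The goal then reduces to showing $\exe \in \mathcal{E}(\exe_A \cup_\mathbb{P} \exe_B)$, because that set is precisely one of the sets unioned to form $\Exe{\kwd{AND}(A,B)}$ (take the index $\exe_A, \exe_B$ in the \kwd{AND} clause), so membership suffices. By the Linear Extensions clause, membership requires three things: that the carrier coincides with $\mathcal{S}_A \cup \mathcal{S}_B$, which is immediate since $\cup_\mathbb{P}$ uses the same disjoint union of carriers; that $\exe$ is a sequence; and that the order of $\exe$ contains the order $\prec_A \cup \prec_B$ of the disjoint union $\exe_A \cup_\mathbb{P} \exe_B$. The containment is immediate from the displayed form, since concatenation only \emph{adds} relations on top of $\prec_A \cup \prec_B$, never deleting any.

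The only obligation that needs genuine checking is that the concatenation is a total order, and I would verify it by cases. Two elements both in $\mathcal{S}_A$ are comparable via $\prec_A$, two elements both in $\mathcal{S}_B$ via $\prec_B$, and a mixed pair via one of the added cross relations $a \prec b$; hence totality holds. Antisymmetry and transitivity survive because all added pairs point from $\mathcal{S}_A$ into $\mathcal{S}_B$ while the carriers are disjoint, so no added pair can close a cycle against $\prec_A$ or $\prec_B$. The main obstacle is therefore merely this bookkeeping: confirming that concatenation strengthens the disjoint-union order in a \emph{consistent} way, so that the stronger relation is a bona fide linear extension rather than an extension that could break totality or introduce a cycle. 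Once comparability, antisymmetry and transitivity are confirmed, we obtain $\exe \in \mathcal{E}(\exe_A \cup_\mathbb{P} \exe_B) \subseteq \Exe{\kwd{AND}(A,B)}$, and since $\exe$ was arbitrary the inclusion $\Exe{\kwd{SEQ}(A,B)} \subseteq \Exe{\kwd{AND}(A,B)}$ follows.
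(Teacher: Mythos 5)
Your proof is correct and follows the same route as the paper, whose entire argument is the one-line observation that the inclusion follows directly from Definition~\ref{def:ser}; you have simply spelled out the details of that observation, correctly identifying that a \kwd{SEQ}-concatenation $\exe_A +_\mathbb{P} \exe_B$ is itself a linear extension of $\exe_A \cup_\mathbb{P} \exe_B$ and hence lies in $\Exe{\kwd{AND}(A,B)}$.
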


\begin{proof}
Lemma \ref{lem:inclusivity} directly follows from Definition \ref{def:ser}.
\end{proof}

\subsubsection{Aggregations}~\\
\noindent
Finally, for the Left Sub-Pattern, the aggregation of the classifications of the children while evaluating an undernode, is described in Table \ref{tab:atsol}. Notice that the aggregation is performed pair-wise left to right, with the result being aggregated with next brother on the right. The result depends on the classifications of the two children, or the result of a partial aggregation and a child, and the type of the parent undernode.

\begin{table}[ht!]
\centering
\begin{tabular}{|c|c|c|c|c|}
\hline
A & B & \kwd{SEQ}(A, B) & \kwd{AND}(A, B) \\ \hline
$\mathbf{\overset{0}{x}}$ & $\mathbf{\overset{0}{x}}$ & $\mathbf{\overset{0}{x}}$ & $\mathbf{\overset{0}{x}}$ \\ \hline
$\mathbf{\overset{0}{x}}$ & $\mathbf{\overset{+}{x}}$ & $\mathbf{\overset{+}{x}}$ & $\mathbf{\overset{+}{x}}$ \\ \hline
$\mathbf{\overset{0}{x}}$ & $\mathbf{\overset{-}{x}}$ & $\mathbf{\overset{-}{x}}$ & $\mathbf{\overset{-}{x}}$ \\ \hline
$\mathbf{\overset{+}{x}}$ & $\mathbf{\overset{0}{x}}$ & $\mathbf{\overset{+}{x}}$ & $\mathbf{\overset{+}{x}}$ \\ \hline
$\mathbf{\overset{+}{x}}$ & $\mathbf{\overset{+}{x}}$ & $\mathbf{\overset{+}{x}}$ & $\mathbf{\overset{+}{x}}$ \\ \hline
$\mathbf{\overset{+}{x}}$ & $\mathbf{\overset{-}{x}}$ & $\mathbf{\overset{-}{x}}$ & $\mathbf{\overset{+}{x}}$ \\ \hline
$\mathbf{\overset{-}{x}}$ & $\mathbf{\overset{0}{x}}$ & $\mathbf{\overset{-}{x}}$ & $\mathbf{\overset{-}{x}}$ \\ \hline
$\mathbf{\overset{-}{x}}$ & $\mathbf{\overset{+}{x}}$ & $\mathbf{\overset{+}{x}}$ & $\mathbf{\overset{+}{x}}$ \\ \hline
$\mathbf{\overset{-}{x}}$ & $\mathbf{\overset{-}{x}}$ & $\mathbf{\overset{-}{x}}$ & $\mathbf{\overset{-}{x}}$ \\ \hline
\end{tabular}
\caption{Aggregation Table Left Sub-Pattern}\label{tab:atsol}
\end{table}

\begin{proof}[Table \ref{tab:atsol}]

Considering the row in the table where $A$ is $\mathbf{\overset{0}{x}}$ and $B$ is $\mathbf{\overset{0}{x}}$
\begin{description}
\item[SEQ]
\begin{enumerate}
\item \kwd{SEQ}(A, B) is correctly classified $\mathbf{\overset{0}{x}}$ according to Lemma \ref{l:neutral}.
\end{enumerate}
\item[AND]
\begin{enumerate}
\item \kwd{AND}(A, B) is correctly classified $\mathbf{\overset{0}{x}}$ according to Lemma \ref{l:neutral}.
\end{enumerate}
\end{description}

Considering the row in the table where $A$ is $\mathbf{\overset{0}{x}}$ and $B$ is $\mathbf{\overset{+}{x}}$
\begin{description}
\item[SEQ]
\begin{enumerate}
\item \kwd{SEQ}(A, B) is correctly classified $\mathbf{\overset{+}{x}}$ according to Lemma \ref{l:neutral}.
\end{enumerate}
\item[AND]
\begin{enumerate}
\item \kwd{AND}(A, B) is correctly classified $\mathbf{\overset{+}{x}}$ according to Lemma \ref{l:neutral}.
\end{enumerate}
\end{description}

Considering the row in the table where $A$ is $\mathbf{\overset{0}{x}}$ and $B$ is $\mathbf{\overset{-}{x}}$
\begin{description}
\item[SEQ]
\begin{enumerate}
\item \kwd{SEQ}(A, B) is correctly classified $\mathbf{\overset{-}{x}}$ according to Lemma \ref{l:neutral}.
\end{enumerate}
\item[AND]
\begin{enumerate}
\item \kwd{AND}(A, B) is correctly classified $\mathbf{\overset{-}{x}}$ according to Lemma \ref{l:neutral}.
\end{enumerate}
\end{description}

Considering the row in the table where $A$ is $\mathbf{\overset{+}{x}}$ and $B$ is $\mathbf{\overset{0}{x}}$
\begin{description}
\item[SEQ]
\begin{enumerate}
\item \kwd{SEQ}(A, B) is correctly classified $\mathbf{\overset{+}{x}}$ according to Lemma \ref{l:neutral}.
\end{enumerate}
\item[AND]
\begin{enumerate}
\item \kwd{AND}(A, B) is correctly classified $\mathbf{\overset{+}{x}}$ according to Lemma \ref{l:neutral}.
\end{enumerate}
\end{description}

Considering the row in the table where $A$ is $\mathbf{\overset{+}{x}}$ and $B$ is $\mathbf{\overset{+}{x}}$
\begin{description}
\item[SEQ]
\begin{enumerate}
\item From Definition \ref{def:ser}, it follows that the possible executions of a process block $\kwd{SEQ}(A, B)$ are the concatenation of an execution of $A$ and an execution of $B$.
\item From the hypothesis and 1., it follows that $\kwd{SEQ}(A, B)$ contains at least an $x$ and no $y$ on its right.
\item From 2. and Definition \ref{def:lsp_x+}, it follows that $\kwd{SEQ}(A, B)$ is correctly classified as $\mathbf{\overset{+}{x}}$.
\end{enumerate}
\item[AND]
\begin{enumerate}
\item From Lemma \ref{lem:inclusivity} and the result for $\kwd{SEQ}(A, B)$, we can conclude that $\kwd{AND}(A, B)$ is correctly classified as $\mathbf{\overset{+}{x}}$
\end{enumerate}
\end{description}

Considering the row in the table where $A$ is $\mathbf{\overset{+}{x}}$ and $B$ is $\mathbf{\overset{-}{x}}$
\begin{description}
\item[SEQ]
\begin{enumerate}
\item From Definition \ref{def:ser}, it follows that the possible executions of a process block $\kwd{SEQ}(A, B)$ are the concatenation of an execution of $A$ and an execution of $B$.
\item From the hypothesis and 1., it follows that $\kwd{SEQ}(A, B)$ contains at least an $x$ and a $y$ on its right.
\item From 2. and Definition \ref{def:lsp_x-}, it follows that $\kwd{SEQ}(A, B)$ is correctly classified as $\mathbf{\overset{-}{x}}$.
\end{enumerate}
\item[AND]
\begin{enumerate}
\item From Definition \ref{def:ser}, it follows that each execution resulting from $\kwd{AND}(A, B)$ consists of an execution of $A$ interleaved with an execution of $B$.
\item From 1. and Definition \ref{def:ser}, it follows that each execution resulting from $\kwd{SEQ}(B, A)$ is a subset of the executions of $\kwd{AND}(A, B)$.
\item From the hypothesis and 2., it follows that $\kwd{SEQ}(B, A)$ contains at least an $x$ and no $y$ on its right.
\item From 1., 3. and Definition \ref{def:lsp_x+}, it follows that $\kwd{AND}(A, B)$ is correctly classified as $\mathbf{\overset{+}{x}}$.
\end{enumerate}
\end{description}

Considering the row in the table where $A$ is $\mathbf{\overset{-}{x}}$ and $B$ is $\mathbf{\overset{0}{x}}$
\begin{description}
\item[SEQ]
\begin{enumerate}
\item \kwd{SEQ}(A, B) is correctly classified $\mathbf{\overset{-}{x}}$ according to Lemma \ref{l:neutral}.
\end{enumerate}
\item[AND]
\begin{enumerate}
\item \kwd{AND}(A, B) is correctly classified $\mathbf{\overset{-}{x}}$ according to Lemma \ref{l:neutral}.
\end{enumerate}
\end{description}

Considering the row in the table where $A$ is $\mathbf{\overset{-}{x}}$ and $B$ is $\mathbf{\overset{+}{x}}$
\begin{description}
\item[SEQ]
\begin{enumerate}
\item From Definition \ref{def:ser}, it follows that the possible executions of a process block $\kwd{SEQ}(A, B)$ are the concatenation of an execution of $A$ and an execution of $B$.
\item From the hypothesis and 1., it follows that $\kwd{SEQ}(A, B)$ contains at least an $x$ and no $y$ on its right.
\item From 2. and Definition \ref{def:lsp_x+}, it follows that $\kwd{SEQ}(A, B)$ is correctly classified as $\mathbf{\overset{+}{x}}$.
\end{enumerate}
\item[AND]
\begin{enumerate}
\item From Lemma \ref{lem:inclusivity} and the result for $\kwd{SEQ}(A, B)$, we can conclude that $\kwd{AND}(A, B)$ is correctly classified as $\mathbf{\overset{+}{x}}$
\end{enumerate}
\end{description}

Considering the row in the table where $A$ is $\mathbf{\overset{-}{x}}$ and $B$ is $\mathbf{\overset{-}{x}}$
\begin{description}
\item[SEQ]
\begin{enumerate}
\item From Definition \ref{def:ser}, it follows that the possible executions of a process block $\kwd{SEQ}(A, B)$ are the concatenation of an execution of $A$ and an execution of $B$.
\item From the hypothesis and 1., it follows that $\kwd{SEQ}(A, B)$ contains at least a $y$ and no $x$ on its right.
\item From 2. and Definition \ref{def:lsp_x-}, it follows that $\kwd{SEQ}(A, B)$ is correctly classified as $\mathbf{\overset{-}{x}}$.
\end{enumerate}
\item[AND]
\begin{enumerate}
\item From Definition \ref{def:ser}, it follows that each execution resulting from $\kwd{AND}(A, B)$ consists of an execution of $A$ interleaved with an execution of $B$.
\item From 1. and the hypothesis, it follows that independently on the execution order, a $x$ on the right of a not $y$ is not possible, while a $y$ with no $x$ on its right is guaranteed.
\item From 2. and Definition \ref{def:lsp_x-}, it follows that $\kwd{SEQ}(A, B)$ is correctly classified as $\mathbf{\overset{-}{x}}$.
\end{enumerate}
\end{description}

\end{proof}
\newpage
\subsection{Right Sub-Pattern}

\begin{definition}[Right Sub-Pattern $\mathbf{\overset{+}{z}}$]\label{def:lsp_z+}
left side pattern fulfilling state: there exists an execution belonging to the process block containing $z$, and no $y$ on its left.

\noindent\textbf{Formally}: 

\noindent Given a process block $B$, it belongs to this class if and only if:
\begin{itemize}
\item $\exists \exe \in \Exe{B}$ such that:
\begin{itemize}
\item $\exists $z$ \in \exe$ such that:
\begin{itemize}
\item $\not \exists $y$ \in \exe$ such that $z \succeq y$.
\end{itemize}
\end{itemize}
\end{itemize}
\end{definition}

\begin{definition}[Right Sub-Pattern $\mathbf{\overset{0}{z}}$]\label{def:lsp_z0}
neutral state: there exists an execution belonging to the process block, such that it does not contain $z$ or $y$.

\noindent\textbf{Formally}: 

\noindent Given a process block $B$, it belongs to this class if and only if:
\begin{itemize}
\item $\exists \exe \in \Exe{B}$ such that:
\begin{itemize}
\item $\not\exists z \in \exe$, and
\item $\not\exists y \in \exe$.
\end{itemize}
\end{itemize}
\end{definition}

\begin{definition}[Right Sub-Pattern $\mathbf{\overset{-}{z}}$]\label{def:lsp_z-}
blocking state: every execution belonging to the process block contains at least a $y$, and no $z$ on its right.

\noindent\textbf{Formally}: 

\noindent Given a process block $B$, it belongs to this class if and only if:
\begin{itemize}
\item $\forall \exe \in \Exe{B}$ such that:
\begin{itemize}
\item $\exists y \in \exe$ such that:
\begin{itemize}
\item $\not \exists z \in \exe$ such that $y \succeq z$.
\end{itemize}
\end{itemize}
\end{itemize}
\end{definition}

\begin{theorem}[Classification Completeness for Right Sub-Pattern]
The set of possible evaluations of Right Sub-Pattern is completely covered by the provided set of classifications.
\end{theorem}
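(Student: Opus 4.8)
The plan is to reproduce the argument already used for the companion theorem on the Left Sub-Pattern, exploiting the fact that the Right Sub-Pattern is its mirror image: Definitions \ref{def:lsp_z+}, \ref{def:lsp_z0} and \ref{def:lsp_z-} are obtained from \ref{def:lsp_x+}, \ref{def:lsp_x0} and \ref{def:lsp_x-} by exchanging the roles of $\preceq$ and $\succeq$ and renaming the tracked element from $x$ to $z$. First I would observe that evaluating the Right Sub-Pattern on a process block $B$ reduces to classifying, across the executions in $\Exe{B}$, whether $B$ can contribute a desired $z$ that is not obstructed by the undesired element $y$.

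Next I would argue that for any $B$ this yields exactly three mutually exclusive and jointly exhaustive outcomes, matching one-to-one the three provided classes. Either (i) some $\exe \in \Exe{B}$ witnesses a $z$ with no obstructing $y$, so $B$ can fulfil the pattern ($\mathbf{\overset{+}{z}}$); or (ii) no such witnessing execution exists but some execution contains neither $z$ nor $y$, so $B$ is neutral ($\mathbf{\overset{0}{z}}$); or (iii) every execution of $B$ contains an obstructing $y$ and offers no neutral alternative, so $B$ can only block the pattern ($\mathbf{\overset{-}{z}}$). The substance of the theorem is precisely that these three cases leave no gap.

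I expect the only real obstacle to be confirming exhaustiveness, which I would settle by a short contrapositive. Suppose $B$ belongs to none of the three classes. Failing the condition of Definition \ref{def:lsp_z+} means no execution carries an unobstructed $z$; failing the condition of Definition \ref{def:lsp_z0} means every execution contains at least one of $z$ or $y$. Combining the two forces every execution to contain a $y$ with no admissible $z$, which is exactly the defining condition of Definition \ref{def:lsp_z-} and so contradicts the assumption. Since the three evaluation states are in bijection with the three classifications, the set of classifications is complete, as required.
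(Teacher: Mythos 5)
Your proposal is correct and takes essentially the same approach as the paper: the paper's own proof simply observes that the $z$ pattern admits three possible evaluation states (satisfied, failing, neutral) and that these are covered one-for-one by the three classes, which is exactly your case split. Your mirror-image reduction to the Left Sub-Pattern and your contrapositive check of exhaustiveness just make explicit what the paper asserts in two lines, and that check is sound (in an execution where every $z$ is obstructed, the $y$ preceding the leftmost $z$ witnesses the $\mathbf{\overset{-}{z}}$ condition).
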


\begin{proof}
\begin{enumerate}
\item The pattern ($z$) in Right Sub-Pattern allows 3 possible evaluations: whether the partial requirement is satisfied, failing, or in a neutral state.
\item The 3 possible evaluations are completely covered by the 3 possible classifications.
\end{enumerate}
\end{proof}

\subsubsection{Lemmas}

\begin{lemma}[Aggregation Neutral Class]\label{l:neutral_rsp}
Given a process block $A$, assigned to the evaluation class $\mathbf{\overset{0}{z}}$, and another process block $B$, assigned to any of the available evaluation classes. Let $C$ be a process block having $A$ and $B$ as its sub-blocks, then the evaluation class of $C$ is the same class as the process block $B$.
\end{lemma}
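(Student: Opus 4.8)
The plan is to mirror the contradiction argument already used for the Left Sub-Pattern in Lemma \ref{l:neutral}, exploiting the fact that the neutral classification $\mathbf{\overset{0}{z}}$ (Definition \ref{def:lsp_z0}) describes executions containing neither of the pattern's relevant literals $z$ nor $y$. Since this property is phrased purely in terms of presence/absence of literals, it is independent of the left/right orientation that distinguishes the Right Sub-Pattern from the Left one, so the whole argument transfers almost verbatim. First I would fix a process block $A$ classified as $\mathbf{\overset{0}{z}}$, a block $B$ assigned to an arbitrary one of the three available classes, and let $C$ be a block having $A$ and $B$ as its sub-blocks. Proceeding by contradiction, I would assume that $C$ is classified differently from $B$.

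The central step is to argue that such a discrepancy forces $A$ to carry a relevant literal. By Definition \ref{def:lsp_z0}, $A$ being $\mathbf{\overset{0}{z}}$ guarantees an execution of $A$ in which neither $z$ nor $y$ appears. If, despite this, the class of $C$ genuinely differed from that of $B$, then $A$ would have to contribute, through some of its executions, an element capable of altering the right-sub-pattern properties of the executions of $B$ once the two are composed in $C$. But the only literals that influence the right sub-pattern are $z$ and $y$; hence at least one task of $A$ would need $z$ or $y$ in its annotation.

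I would then close the contradiction using the linear preference order $\mathbf{\overset{-}{z}} < \mathbf{\overset{0}{z}} < \mathbf{\overset{+}{z}}$. If some execution of $A$ exhibits a $z$ that is unblocked in the relevant direction, then $A$ would already qualify for $\mathbf{\overset{+}{z}}$; otherwise every execution of $A$ contains a blocking $y$ with no admissible $z$, placing $A$ in $\mathbf{\overset{-}{z}}$ by Definition \ref{def:lsp_z-}. In either case $A$ could not have been assigned $\mathbf{\overset{0}{z}}$, contradicting the hypothesis. Therefore $C$ must share the classification of $B$, which is exactly the neutrality claim.

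The step I expect to be the main obstacle is the directional bookkeeping rather than the logical skeleton: because the Right Sub-Pattern fixes $z$ on the \emph{left} and forbids $y$ on the left of $z$, the roles of $\succeq$ and $\preceq$ are reversed relative to Lemma \ref{l:neutral}, and I must make sure that a block carrying relevant literals but failing $\mathbf{\overset{+}{z}}$ genuinely falls into $\mathbf{\overset{-}{z}}$ rather than escaping the three-class partition. This is precisely guaranteed by the Classification Completeness theorem for the Right Sub-Pattern, which I would invoke to rule out any intermediate possibility and thereby justify the two-way case split in the final step.
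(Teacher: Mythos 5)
Your proposal is correct and takes essentially the same route as the paper: the paper's own proof of this lemma is just the one-line remark that it ``follows closely the proof for Lemma~\ref{l:neutral}'', and your argument is precisely that adaptation --- the same contradiction skeleton, the same appeal to the relevant literals $z$ and $y$, and the same use of the linear preference order $\mathbf{\overset{-}{z}} < \mathbf{\overset{0}{z}} < \mathbf{\overset{+}{z}}$ to force $A$ out of the neutral class. Your extra care about the reversed orientation and the appeal to classification completeness for the Right Sub-Pattern is a sound (and slightly more explicit) filling-in of what the paper leaves implicit.
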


\begin{proof}
This proof follows closely the proof for Lemma \ref{l:neutral}
\end{proof}

\subsubsection{Aggregations}~\\
\noindent
Similarly for the left sub-pattern, we illustrate how an undernode is classified on the right side of the stem using the right sub-pattern in a table. The aggregation is shown in Table \ref{tab:atsor}.

\begin{table}[ht!]
\centering
\begin{tabular}{|c|c|c|c|c|}
\hline
A & B & \kwd{SEQ}(A, B) & \kwd{AND}(A, B) \\ \hline
$\mathbf{\overset{0}{z}}$ & $\mathbf{\overset{0}{z}}$ & $\mathbf{\overset{0}{z}}$ & $\mathbf{\overset{0}{z}}$ \\ \hline
$\mathbf{\overset{0}{z}}$ & $\mathbf{\overset{+}{z}}$ & $\mathbf{\overset{+}{z}}$ & $\mathbf{\overset{+}{z}}$ \\ \hline
$\mathbf{\overset{0}{z}}$ & $\mathbf{\overset{-}{z}}$ & $\mathbf{\overset{-}{z}}$ & $\mathbf{\overset{-}{z}}$ \\ \hline
$\mathbf{\overset{+}{z}}$ & $\mathbf{\overset{0}{z}}$ & $\mathbf{\overset{+}{z}}$ & $\mathbf{\overset{+}{z}}$ \\ \hline
$\mathbf{\overset{+}{z}}$ & $\mathbf{\overset{+}{z}}$ & $\mathbf{\overset{+}{z}}$ & $\mathbf{\overset{+}{z}}$ \\ \hline
$\mathbf{\overset{+}{z}}$ & $\mathbf{\overset{-}{z}}$ & $\mathbf{\overset{+}{z}}$ & $\mathbf{\overset{+}{z}}$ \\ \hline
$\mathbf{\overset{-}{z}}$ & $\mathbf{\overset{0}{z}}$ & $\mathbf{\overset{-}{z}}$ & $\mathbf{\overset{-}{z}}$ \\ \hline
$\mathbf{\overset{-}{z}}$ & $\mathbf{\overset{+}{z}}$ & $\mathbf{\overset{-}{z}}$ & $\mathbf{\overset{+}{z}}$ \\ \hline
$\mathbf{\overset{-}{z}}$ & $\mathbf{\overset{-}{z}}$ & $\mathbf{\overset{-}{z}}$ & $\mathbf{\overset{-}{z}}$ \\ \hline
\end{tabular}
\caption{Aggregation Table Right Sub-Pattern}\label{tab:atsor}
\end{table}

\begin{proof}[Table \ref{tab:atsor}]

Considering the row in the table where $A$ is $\mathbf{\overset{0}{z}}$ and $B$ is $\mathbf{\overset{0}{z}}$
\begin{description}
\item[SEQ]
\begin{enumerate}
\item \kwd{SEQ}(A, B) is correctly classified $\mathbf{\overset{0}{z}}$ according to Lemma \ref{l:neutral_rsp}.
\end{enumerate}
\item[AND]
\begin{enumerate}
\item \kwd{AND}(A, B) is correctly classified $\mathbf{\overset{0}{z}}$ according to Lemma \ref{l:neutral_rsp}.
\end{enumerate}
\end{description}

Considering the row in the table where $A$ is $\mathbf{\overset{0}{z}}$ and $B$ is $\mathbf{\overset{+}{z}}$
\begin{description}
\item[SEQ]
\begin{enumerate}
\item \kwd{SEQ}(A, B) is correctly classified $\mathbf{\overset{+}{z}}$ according to Lemma \ref{l:neutral_rsp}.
\end{enumerate}
\item[AND]
\begin{enumerate}
\item \kwd{AND}(A, B) is correctly classified $\mathbf{\overset{+}{z}}$ according to Lemma \ref{l:neutral_rsp}.
\end{enumerate}
\end{description}

Considering the row in the table where $A$ is $\mathbf{\overset{0}{z}}$ and $B$ is $\mathbf{\overset{-}{z}}$
\begin{description}
\item[SEQ]
\begin{enumerate}
\item \kwd{SEQ}(A, B) is correctly classified $\mathbf{\overset{-}{z}}$ according to Lemma \ref{l:neutral_rsp}.
\end{enumerate}
\item[AND]
\begin{enumerate}
\item \kwd{AND}(A, B) is correctly classified $\mathbf{\overset{-}{z}}$ according to Lemma \ref{l:neutral_rsp}.
\end{enumerate}
\end{description}

Considering the row in the table where $A$ is $\mathbf{\overset{+}{z}}$ and $B$ is $\mathbf{\overset{0}{z}}$
\begin{description}
\item[SEQ]
\begin{enumerate}
\item \kwd{SEQ}(A, B) is correctly classified $\mathbf{\overset{+}{z}}$ according to Lemma \ref{l:neutral_rsp}.
\end{enumerate}
\item[AND]
\begin{enumerate}
\item \kwd{AND}(A, B) is correctly classified $\mathbf{\overset{+}{z}}$ according to Lemma \ref{l:neutral_rsp}.
\end{enumerate}
\end{description}

Considering the row in the table where $A$ is $\mathbf{\overset{+}{z}}$ and $B$ is $\mathbf{\overset{+}{z}}$
\begin{description}
\item[SEQ]
\begin{enumerate}
\item From Definition \ref{def:ser}, it follows that the possible executions of a process block $\kwd{SEQ}(A, B)$ are the concatenation of an execution of $A$ and an execution of $B$.
\item From the hypothesis and 1., it follows that $\kwd{SEQ}(A, B)$ contains at least an $z$ and no $y$ on its left.
\item From 2. and Definition \ref{def:lsp_z+}, it follows that $\kwd{SEQ}(A, B)$ is correctly classified as $\mathbf{\overset{+}{z}}$.
\end{enumerate}
\item[AND]
\begin{enumerate}
\item From Lemma \ref{lem:inclusivity} and the result for $\kwd{SEQ}(A, B)$, we can conclude that $\kwd{AND}(A, B)$ is correctly classified as $\mathbf{\overset{+}{z}}$
\end{enumerate}
\end{description}

Considering the row in the table where $A$ is $\mathbf{\overset{+}{z}}$ and $B$ is $\mathbf{\overset{-}{z}}$
\begin{description}
\item[SEQ]
\begin{enumerate}
\item From Definition \ref{def:ser}, it follows that the possible executions of a process block $\kwd{SEQ}(A, B)$ are the concatenation of an execution of $A$ and an execution of $B$.
\item From the hypothesis and 1., it follows that $\kwd{SEQ}(A, B)$ contains at least an $z$ and no $y$ on its left.
\item From 2. and Definition \ref{def:lsp_z+}, it follows that $\kwd{SEQ}(A, B)$ is correctly classified as $\mathbf{\overset{+}{z}}$.
\end{enumerate}
\item[AND]
\begin{enumerate}
\item From Lemma \ref{lem:inclusivity} and the result for $\kwd{SEQ}(A, B)$, we can conclude that $\kwd{AND}(A, B)$ is correctly classified as $\mathbf{\overset{+}{z}}$
\end{enumerate}
\end{description}

Considering the row in the table where $A$ is $\mathbf{\overset{-}{z}}$ and $B$ is $\mathbf{\overset{0}{z}}$
\begin{description}
\item[SEQ]
\begin{enumerate}
\item \kwd{SEQ}(A, B) is correctly classified $\mathbf{\overset{-}{z}}$ according to Lemma \ref{l:neutral_rsp}.
\end{enumerate}
\item[AND]
\begin{enumerate}
\item \kwd{AND}(A, B) is correctly classified $\mathbf{\overset{-}{z}}$ according to Lemma \ref{l:neutral_rsp}.
\end{enumerate}
\end{description}

Considering the row in the table where $A$ is $\mathbf{\overset{-}{z}}$ and $B$ is $\mathbf{\overset{+}{z}}$
\begin{description}
\item[SEQ]
\begin{enumerate}
\item From Definition \ref{def:ser}, it follows that the possible executions of a process block $\kwd{SEQ}(A, B)$ are the concatenation of an execution of $A$ and an execution of $B$.
\item From the hypothesis and 1., it follows that $\kwd{SEQ}(A, B)$ contains at least an $z$ and a $y$ on its left.
\item From 2. and Definition \ref{def:lsp_z-}, it follows that $\kwd{SEQ}(A, B)$ is correctly classified as $\mathbf{\overset{-}{z}}$.
\end{enumerate}
\item[AND]
\begin{enumerate}
\item From Definition \ref{def:ser}, it follows that each execution resulting from $\kwd{AND}(A, B)$ consists of an execution of $A$ interleaved with an execution of $B$.
\item From 1. and Definition \ref{def:ser}, it follows that each execution resulting from $\kwd{SEQ}(B, A)$ is a subset of the executions of $\kwd{AND}(A, B)$.
\item From the hypothesis and 2., it follows that $\kwd{SEQ}(B, A)$ contains at least an $z$ and no $y$ on its left.
\item From 1., 3. and Definition \ref{def:lsp_z+}, it follows that $\kwd{AND}(A, B)$ is correctly classified as $\mathbf{\overset{+}{z}}$.
\end{enumerate}
\end{description}

Considering the row in the table where $A$ is $\mathbf{\overset{-}{z}}$ and $B$ is $\mathbf{\overset{-}{z}}$
\begin{description}
\item[SEQ]
\begin{enumerate}
\item From Definition \ref{def:ser}, it follows that the possible executions of a process block $\kwd{SEQ}(A, B)$ are the concatenation of an execution of $A$ and an execution of $B$.
\item From the hypothesis and 1., it follows that $\kwd{SEQ}(A, B)$ contains at least a $y$ and no $z$ on its left.
\item From 2. and Definition \ref{def:lsp_z-}, it follows that $\kwd{SEQ}(A, B)$ is correctly classified as $\mathbf{\overset{-}{z}}$.
\end{enumerate}
\item[AND]
\begin{enumerate}
\item From Definition \ref{def:ser}, it follows that each execution resulting from $\kwd{AND}(A, B)$ consists of an execution of $A$ interleaved with an execution of $B$.
\item From 1. and the hypothesis, it follows that independently on the execution order, a $z$ on the right of a not $y$ is not possible, while a $y$ with no $z$ on its right is guaranteed.
\item From 2. and Definition \ref{def:lsp_z-}, it follows that $\kwd{SEQ}(A, B)$ is correctly classified as $\mathbf{\overset{-}{z}}$.
\end{enumerate}
\end{description}

\end{proof}
\newpage
\subsection{Interval Sub-Pattern}

\begin{definition}[$\mathbf{\underline{\overset{+}{x}\overset{+}{z}}}$]\label{def:isp_3}
Closed interval class: there exists an execution belonging to the process block containing $x$ and $z$ on its right, while having no $y$ in-between.

\noindent\textbf{Formally}: 

\noindent Given a process block $B$ belongs to this class if and only if:
\begin{itemize}
\item $\exists \exe \in \Exe{B}$ such that:
\begin{itemize}
\item $\exists x, z \in \exe$ such that $x \prec z$, and
\item $\not \exists y \in \exe$ such that $x \preceq y \preceq z$.
\end{itemize}
\end{itemize}
\end{definition}

\begin{definition}[$\mathbf{\underline{\overset{+}{x}\overset{0}{z}}}$]\label{def:isp_1}
Left side pattern fulfilling state: there exists an execution belonging to the process block containing $x$, and no $y$ on its right. 

\noindent\textbf{Formally}: 

\noindent Given a process block $B$, it belongs to this class if and only if:
\begin{itemize}
\item $\exists \exe \in \Exe{B}$ such that:
\begin{itemize}
\item $\exists x \in \exe$ such that:
\begin{itemize}
\item $\not \exists z \in \exe$ such that $x \prec z$, and
\item $\not \exists y \in \exe$ such that $x \preceq y$.
\end{itemize}
\end{itemize}
\end{itemize}
\end{definition}

\begin{definition}[$\mathbf{\underline{\overset{0}{x}\overset{+}{z}}}$]\label{def:isp_1'}
Right side pattern fulfilling state: there exists an execution belonging to the process block containing $z$, and no $y$ on its left. 

\noindent\textbf{Formally}: 

\noindent Given a process block $B$, it belongs to this class if and only if:
\begin{itemize}
\item $\exists \exe \in \Exe{B}$ such that:
\begin{itemize}
\item $\exists z \in \exe$ such that:
\begin{itemize}
\item $\not \exists x \in \exe$ such that $x \prec z$, and
\item $\not \exists y \in \exe$ such that $y \preceq z$.
\end{itemize}
\end{itemize}
\end{itemize}
\end{definition}

\begin{definition}[$\mathbf{\underline{\overset{+}{x}\overset{-}{z}}}$]\label{def:isp_1r}
Left side pattern right blocked state: there exists an execution belonging to the process block containing $x$, and $y$ on its right. 

\noindent\textbf{Formally}: 

\noindent Given a process block $B$, it belongs to this class if and only if:
\begin{itemize}
\item $\exists \exe \in \Exe{B}$ such that:
\begin{itemize}
\item $\exists x \in \exe$ such that:
\begin{itemize}
\item $\not \exists z \in \exe$ such that $x \prec z$, and
\item $\exists y \in \exe$ such that $x \preceq y$.
\end{itemize}
\end{itemize}
\end{itemize}
\end{definition}

\begin{definition}[$\mathbf{\underline{\overset{0}{x}\overset{0}{z}}}$]\label{def:isp_0}
Neutral state: there exists an execution belonging to the process block, such that it does not contain $x$, $y$, or $z$.

\noindent\textbf{Formally}: 

\noindent Given a process block $B$, it belongs to this class if and only if:
\begin{itemize}
\item $\exists \exe \in \Exe{B}$ such that:
\begin{itemize}
\item $\not\exists y, x, z \in \exe$.
\end{itemize}
\end{itemize}
\end{definition}

\begin{definition}[$\mathbf{\underline{\overset{-}{x}\overset{+}{z}}}$]\label{def:isp_1'l}
Right side pattern left blocked state: there exists an execution belonging to the process block containing $z$, and $y$ on its left. 

\noindent\textbf{Formally}: 

\noindent Given a process block $B$, it belongs to this class if and only if:
\begin{itemize}
\item $\exists \exe \in \Exe{B}$ such that:
\begin{itemize}
\item $\exists z \in \exe$ such that:
\begin{itemize}
\item $\not \exists x \in \exe$ such that $x \prec z$, and
\item $\exists y \in \exe$ such that $y \preceq z$.
\end{itemize}
\end{itemize}
\end{itemize}
\end{definition}

\begin{definition}[$\mathbf{\underline{\overset{-}{xz}}}$]\label{def:isp_0-}
Blocking state: every execution belonging to the process block contains $y$, and no $x$ or $z$.

\noindent\textbf{Formally}: 

\noindent Given a process block $B$, it belongs to this class if and only if:
\begin{itemize}
\item $\forall \exe \in \Exe{B}$ such that:
\begin{itemize}
\item $\exists y \in \exe$.
\end{itemize}
\end{itemize}
\end{definition}

\begin{theorem}[Classification Completeness for Interval Sub-Pattern]
The set of possible evaluations of Interval Sub-Pattern is completely covered by the provided set of classifications.
\end{theorem}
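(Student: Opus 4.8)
The plan is to mirror the two-line completeness arguments already given for the Left and Right Sub-Patterns, but now for a pattern with two interacting ``sides''. The key observation is that evaluating a process block against $\kwd{isp}(x,y,z)$ splits into two questions: a \emph{left} question, whether the block can contribute a task annotated $x$ with no $y$ to its right, and a \emph{right} question, whether it can contribute a task annotated $z$ with no $y$ to its left (the two halves of Figure~\ref{fig:generic_isp}). These are exactly the Left Sub-Pattern $\kwd{lsp}(x,y)$ and Right Sub-Pattern $\kwd{rsp}(y,z)$, whose completeness theorems already establish that each admits precisely three mutually exhaustive outcomes --- \emph{fulfilling} ($+$), \emph{neutral} ($0$), and \emph{failing} ($-$). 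Hence a block's behaviour with respect to the interval is captured by a pair in $\{+,0,-\}\times\{+,0,-\}$, which is exactly what the label $\overset{\cdot}{x}\,\overset{\cdot}{z}$ records.

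First I would note that this product has nine elements, and then check them one combination at a time against the provided classes. Six of the nine pairs are supplied directly by Definitions~\ref{def:isp_3}--\ref{def:isp_1'l} and occupy the first six entries of Table~\ref{tb:ispclass}: namely $\mathbf{\underline{\overset{+}{x}\overset{+}{z}}}$, $\mathbf{\underline{\overset{+}{x}\overset{0}{z}}}$, $\mathbf{\underline{\overset{+}{x}\overset{-}{z}}}$, $\mathbf{\underline{\overset{0}{x}\overset{+}{z}}}$, $\mathbf{\underline{\overset{0}{x}\overset{0}{z}}}$, and $\mathbf{\underline{\overset{-}{x}\overset{+}{z}}}$. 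I would also record that $\mathbf{\underline{\overset{+}{x}\overset{+}{z}}}$ (Definition~\ref{def:isp_3}) carries the stronger \emph{closed interval} reading in which a single execution realises both halves with no $y$ between, which is the only pair where the two sides must be witnessed jointly rather than independently.

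The remaining three combinations --- $\overset{0}{x}\overset{-}{z}$, $\overset{-}{x}\overset{0}{z}$, and $\overset{-}{x}\overset{-}{z}$ --- are all absorbed by the single seventh class $\mathbf{\underline{\overset{-}{xz}}}$ of Definition~\ref{def:isp_0-}. Here I would reuse the argument from the discussion following Table~\ref{tb:ispclass}: in each of these cases at least one side is irrecoverably blocked by a $y$ while the other side offers no positive contribution, so no execution of the block can help close an interval. Because the interval sub-pattern is only ever used to classify undernodes of an \kwd{AND} overnode, whose semantics (via Lemma~\ref{lem:inclusivity}) permit a wholly unhelpful sub-block to be disregarded, each such block may be collapsed to the worst class $\mathbf{\underline{\overset{-}{xz}}}$. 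This accounts for all nine pairs, so the listed classification covers every possible evaluation.

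I expect the main obstacle to lie in the third step: making the collapse of $\overset{0}{x}\overset{-}{z}$ and $\overset{-}{x}\overset{0}{z}$ into $\mathbf{\underline{\overset{-}{xz}}}$ fully rigorous. The subtlety is that the literal defining condition of $\mathbf{\underline{\overset{-}{xz}}}$ (every execution contains $y$ and neither $x$ nor $z$) is strictly stronger than the conditions describing the two omitted pairs, so the collapse is an aggregation-level identification rather than a set-theoretic inclusion. I would discharge it by appealing to Lemma~\ref{lem:inclusivity} together with the preference lattice of Figure~\ref{fig:ad1s_c_latt}, which places $\mathbf{\underline{\overset{-}{xz}}}$ at the bottom and so guarantees that demoting these blocks never discards a classification that a later aggregation could have exploited.
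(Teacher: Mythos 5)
Your proof follows essentially the same route as the paper's: both decompose the interval pattern into its left ($x$) and right ($z$) sides, observe three possible evaluations per side, obtain $3^2 = 9$ combinations, and conclude that the listed classes cover all of them, with $\mathbf{\underline{\overset{-}{xz}}}$ absorbing the three wholly negative combinations. Your write-up is in fact more explicit than the paper's proof—which only asserts side-independence and coverage—particularly in flagging that the collapse of $\mathbf{\underline{\overset{0}{x}\overset{-}{z}}}$ and $\mathbf{\underline{\overset{-}{x}\overset{0}{z}}}$ into $\mathbf{\underline{\overset{-}{xz}}}$ is an aggregation-level identification justified by the \kwd{AND} semantics and the preference lattice, a point the paper itself only makes in the prose following Table~\ref{tb:ispclass}.
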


\begin{proof}
\begin{enumerate}
\item The two sides of the pattern, the left ($x$) and on the right ($z$), in Interval Sub-Pattern allow 3 possible evaluations each: whether the partial requirement is satisfied, failing, or in a neutral state.
\item While the sides of the pattern share the same undesired element ($y$), the two sides are virtually independent as their evaluation is not influenced directly by the evaluation of the other side of the pattern, but only by the presence of the undesired element and their own required element.
\item From 1. and the independence from 2., it follows that the amount of possible combinations in Interval Sub-Pattern is $3^2$, hence 9 possible combinations. Moreover the case from 2. is also covered within the 9 combinations.
\item Therefore, every possible evaluation of Interval Sub-Pattern is captured by one of the 9 provided classifications.
\end{enumerate}
\end{proof}

\subsubsection{Lemmas}

\begin{lemma}[Aggregation Neutral Class]\label{l:neutral_isp}
Given a process block $A$, assigned to the evaluation class $\mathbf{\underline{\overset{0}{x}\overset{0}{z}}}$, and another process block $B$, assigned to any of the available evaluation classes. Let $C$ be a process block having $A$ and $B$ as its sub-blocks, then the evaluation class of $C$ is the same class as the process block $B$.
\end{lemma}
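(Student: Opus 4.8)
The plan is to prove Lemma~\ref{l:neutral_isp} by contradiction, mirroring the argument used for the Left Sub-Pattern in Lemma~\ref{l:neutral}, and to split the reasoning into the two directions that together pin down the retained classification of $C$. First I would record the two consequences of $A$ being assigned the neutral class $\mathbf{\underline{\overset{0}{x}\overset{0}{z}}}$: by Definition~\ref{def:isp_0} there exists an execution $\exe_A \in \Exe{A}$ containing none of the relevant tasks $x$, $y$, $z$; and, since neutral is the retained (best) class of $A$, by the preference lattice of Figure~\ref{fig:ad1s_c_latt} the block $A$ belongs to none of the strictly better classes $\mathbf{\underline{\overset{+}{x}\overset{0}{z}}}$, $\mathbf{\underline{\overset{0}{x}\overset{+}{z}}}$, $\mathbf{\underline{\overset{+}{x}\overset{+}{z}}}$.

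For the easy direction (that $C$ belongs to every class that $B$ belongs to), I would take any class $\kappa$ of $B$ witnessed by an execution $\exe_B \in \Exe{B}$ and combine it with the relevant-free execution $\exe_A$. By Definition~\ref{def:ser}, in the case $C = \kwd{SEQ}(A,B)$ the concatenation $\exe_A +_\mathbb{P} \exe_B$ introduces no task $x$, $y$, or $z$ before those of $\exe_B$, so the relative order of the relevant tasks witnessing $\kappa$ is preserved and $C$ still satisfies $\kappa$; the case $C = \kwd{AND}(A,B)$ follows from the same witness together with Lemma~\ref{lem:inclusivity}, since the serial execution is among the interleavings. Hence every class of $B$ is a class of $C$.

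The main obstacle is the converse direction: showing that $C$ acquires no class strictly better, and loses no class, relative to the best class $\kappa^{*}$ of $B$; this is exactly where the contradiction argument of Lemma~\ref{l:neutral} must be adapted. If $C$ gained a class strictly above $\kappa^{*}$, the improvement on the left side (an $x$ with no blocking $y$ to its right, cf.\ Definition~\ref{def:isp_1}) or on the right side (a $z$ with no blocking $y$ to its left, cf.\ Definition~\ref{def:isp_1'}) could not be attributed to $B$ alone, so it would have to be contributed by a relevant task occurring in some execution of $A$ in an unblocked position; but such an execution would already place $A$ itself in one of the strictly-better classes $\mathbf{\underline{\overset{+}{x}\overset{0}{z}}}$, $\mathbf{\underline{\overset{0}{x}\overset{+}{z}}}$ or $\mathbf{\underline{\overset{+}{x}\overset{+}{z}}}$, contradicting that neutral is the retained class of $A$. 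Dually, if $C$ lost $\kappa^{*}$, then every relevant execution of $C$ would have to contain a blocking $y$ not forced by $B$, i.e.\ a $y$ forced by $A$ in each of its executions, which by Definition~\ref{def:isp_0-} would classify $A$ as $\mathbf{\underline{\overset{-}{xz}}}$ and deprive it of the relevant-free execution required for the neutral class --- again a contradiction. The one point requiring care is that $A$ may legitimately possess non-neutral executions (for instance an \kwd{XOR}-style branch carrying a $y$) without being re-classified, because such executions only add strictly-worse classes that the preference simplification of Figure~\ref{fig:ad1s_c_latt} discards; I would note explicitly that these discarded contributions cannot change the best retained class of $C$, so combining the two directions yields $\kwd{cl}(C) = \kwd{cl}(B)$ as required.
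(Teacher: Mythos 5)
Your proof is correct and takes essentially the same route as the paper: the paper's entire recorded proof of this lemma is the one-line remark that it ``follows closely the proof for Lemma~\ref{l:neutral}'', i.e.\ the contradiction argument on $A$'s neutrality, which is exactly what you mirror and expand. Your two-direction elaboration (the witness construction using $A$'s relevant-free execution from Definition~\ref{def:isp_0}, plus the contradiction that any gained or lost class would force $A$ into a class other than $\mathbf{\underline{\overset{0}{x}\overset{0}{z}}}$) is more explicit than what the paper writes down, but it is the same underlying idea.
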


\begin{proof}
This proof follows closely the proof for Lemma \ref{l:neutral}
\end{proof}

\begin{lemma}[\kwd{AND} Commutativity]\label{l:commutativity}
Given two process blocks $A$ and $B$, the executions of $\kwd{AND}(A,$ $B)$ are the same as $\kwd{AND}(B, A)$.
\end{lemma}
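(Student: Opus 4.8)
The plan is to unfold Definition~\ref{def:ser} for the \kwd{AND} constructor and reduce the entire claim to the commutativity of the underlying poset-union operation $\cup_\mathbb{P}$. By Definition~\ref{def:ser} we have
$$\Exe{\kwd{AND}(A,B)} = \bigcup_{\exe_A \in \Exe{A},\, \exe_B \in \Exe{B}} \mathcal{E}(\exe_A \cup_\mathbb{P} \exe_B),$$
and symmetrically $\Exe{\kwd{AND}(B,A)}$ is the same expression with the roles of $A$ and $B$ exchanged. So the goal reduces to showing these two unions denote the same set.

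The key observation I would establish is that $\cup_\mathbb{P}$ is commutative. Given posets $(\mathcal{S}_1, \prec_{s_1})$ and $(\mathcal{S}_2, \prec_{s_2})$, the Union operation returns $(\mathcal{S}_1 \cup \mathcal{S}_2,\, \prec_{s_1} \cup \prec_{s_2})$, and each component is an ordinary set union, which is symmetric in its two arguments; hence $\exe_A \cup_\mathbb{P} \exe_B = \exe_B \cup_\mathbb{P} \exe_A$ for every choice of $\exe_A$ and $\exe_B$. Since the linear-extension operator $\mathcal{E}$ is a function of a single poset, equal posets have identical sets of linear extensions, so $\mathcal{E}(\exe_A \cup_\mathbb{P} \exe_B) = \mathcal{E}(\exe_B \cup_\mathbb{P} \exe_A)$. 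Finally, the outer union in each expression ranges over exactly the same collection of unordered pairs $\{\exe_A, \exe_B\}$ with $\exe_A \in \Exe{A}$ and $\exe_B \in \Exe{B}$; reindexing a union does not change its value, so the two unions coincide, which gives $\Exe{\kwd{AND}(A,B)} = \Exe{\kwd{AND}(B,A)}$.

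The only delicate point, and the part I would be careful to spell out, is the clause in the Union definition stating that ``$\cup$ is the disjoint union''. I would make explicit that for two \emph{distinct} process blocks $A$ and $B$ the carrier task sets $V(A)$ and $V(B)$ are already disjoint (by the inductive construction of process blocks in Definition~\ref{def:pb}, each task belongs to a unique block), so the disjoint union reduces to ordinary union and no relabelling artifact can break symmetry. With that dispatched, the commutativity of both the set union and the relation union is immediate and the remainder of the argument is routine bookkeeping over the index set of the outer union.
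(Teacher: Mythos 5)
Your proposal is correct and takes essentially the same route as the paper, whose entire proof is the single sentence that the claim ``follows directly from Definition~\ref{def:ser}'' --- precisely the definition you unfold. Your version simply supplies the details the paper leaves implicit (commutativity of $\cup_\mathbb{P}$ on both the carrier sets and the order relations, that $\mathcal{E}$ applied to equal posets yields equal sets of linear extensions, and the observation that the disjoint-union clause is harmless because distinct blocks already have disjoint task sets), which is a strictly more careful rendering of the same argument.
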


\begin{proof}
The correctness of Lemma \ref{l:commutativity} follows directly from Definition \ref{def:ser}.
\end{proof}

\begin{lemma}[Class $\mathbf{\underline{\overset{+}{x}\overset{+}{z}}}$ Aggregation]\label{l:c3a}
Given a process block $A$ classified as $\mathbf{\underline{\overset{+}{x}\overset{+}{z}}}$ and another process block $B$ with any classification, aggregating $A$ and $B$, independently than the order and the type of their super-block always results in being classified as $\mathbf{\underline{\overset{+}{x}\overset{+}{z}}}$.
\end{lemma}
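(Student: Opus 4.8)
The plan is to exploit the fact that membership in the class $\mathbf{\underline{\overset{+}{x}\overset{+}{z}}}$ is a purely \emph{existential} property: by Definition \ref{def:isp_3}, a block lies in this class precisely when it admits one execution containing a task $t_x$ and a task $t_z$ with $t_x \prec t_z$ and no $t_y$ in the closed interval between them. Hence, to show that the aggregate of $A$ (which is in this class) with an arbitrary $B$ is again in $\mathbf{\underline{\overset{+}{x}\overset{+}{z}}}$, it suffices to exhibit a single witnessing execution of the super-block. First I would fix the witness $\exe_A \in \Exe{A}$ guaranteed by the hypothesis, containing $t_x \prec t_z$ with no $t_y$ satisfying $t_x \preceq t_y \preceq t_z$, and then track this witness through each of the four combinations of block order and block type.

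For the \kwd{SEQ} cases, I would appeal to Definition \ref{def:ser}: every execution of $\kwd{SEQ}(A,B)$ is a concatenation $\exe_A +_\mathbb{P} \exe_B$ for some $\exe_B \in \Exe{B}$, and by the concatenation semantics every task of $B$ is ordered strictly after every task of $A$. Choosing this particular $\exe_A$ together with any $\exe_B$, the pair $t_x \prec t_z$ persists, and since the interval $[t_x, t_z]$ lies entirely inside the $A$-portion of the trace, no task contributed by $B$ can fall within it; thus no new $t_y$ is introduced between $t_x$ and $t_z$. The symmetric argument for $\kwd{SEQ}(B,A)$ places all of $B$'s tasks strictly before $t_x$, again outside the interval. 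In both orders the concatenation satisfies Definition \ref{def:isp_3}, so the \kwd{SEQ} super-block is classified $\mathbf{\underline{\overset{+}{x}\overset{+}{z}}}$.

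For the \kwd{AND} cases I would reduce to the \kwd{SEQ} result already obtained. By Lemma \ref{lem:inclusivity}, $\Exe{\kwd{SEQ}(A,B)} \subseteq \Exe{\kwd{AND}(A,B)}$, so the witnessing concatenation from the previous paragraph is itself an execution of $\kwd{AND}(A,B)$; by Lemma \ref{l:commutativity} the same reasoning applies to $\kwd{AND}(B,A)$. Hence the \kwd{AND} super-block admits the witness and is likewise classified $\mathbf{\underline{\overset{+}{x}\overset{+}{z}}}$, which closes all four combinations and establishes that this class is absorbing under aggregation.

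The main obstacle is the delicate step inside the \kwd{SEQ} argument: verifying that inserting the tasks of $B$ never violates the \emph{emptiness} of the interval, i.e. that no $t_y$ with $t_x \preceq t_y \preceq t_z$ is created. This is where the proof genuinely uses the structured nature of the process rather than generic interleaving — it rests on the fact that concatenation in Definition \ref{def:ser} totally orders the $B$-tasks on one side of the entire $A$-portion, so they can never land in an $A$-internal interval. Everything else is a routine transfer of an existential witness from a sub-block to its super-block.
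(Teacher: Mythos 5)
Your proposal is correct and takes essentially the same route as the paper's own proof: both fix the witnessing execution of $A$ guaranteed by Definition \ref{def:isp_3}, observe that \kwd{SEQ}-concatenation places all of $B$'s tasks strictly outside the interval between $t_x$ and $t_z$ (so the witness survives in either order), and then transfer to the \kwd{AND} cases via Lemma \ref{lem:inclusivity}. Your write-up is in fact somewhat more explicit than the paper's terse argument (spelling out both orders and invoking Lemma \ref{l:commutativity} for $\kwd{AND}(B,A)$, where the paper simply says the order is irrelevant), but the underlying idea is identical.
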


\begin{proof}
Considering the case where the aggregation is made through a super-block of type \kwd{SEQ}, according to Definition \ref{def:ser}, the aggregation of the executions corresponds to the concatenation of the executions in the sub blocks. From the hypothesis we know that there is an execution satisfying the requirements to be classified as $\mathbf{\underline{\overset{+}{x}\overset{+}{z}}}$.

Independently than the order of aggregation, from Definition \ref{def:isp_3}, we can see that the concatenation does not influence the properties allowing to classify the aggregation as $\mathbf{\underline{\overset{+}{x}\overset{+}{z}}}$, and as shown in Figure \ref{fig:ad1s_c_latt}, given that $\mathbf{\underline{\overset{+}{x}\overset{+}{z}}}$ is the top ranked classification, then the aggregation can be correctly classified as $\mathbf{\underline{\overset{+}{x}\overset{+}{z}}}$.

The correctness of the case concerning a super block of type \kwd{AND} follows directly from the previous case and Lemma \ref{lem:inclusivity}.
\end{proof}

\subsubsection{Aggregations}~\\
\noindent
The aggregation of the classes used to evaluate undernodes while evaluating an overnode of type \kwd{AND} are contained in Tables \ref{tab:0}, \dots, \ref{tab:3}.

\begin{table}[ht!]
\centering
\begin{tabular}{|c|c|c|c|}
\hline
A & B & \kwd{SEQ}(A, B) & \kwd{AND}(A, B) \\ \hline
$\mathbf{\underline{\overset{0}{x}\overset{0}{z}}}$ & $\mathbf{\underline{\overset{0}{x}\overset{0}{z}}}$ & $\mathbf{\underline{\overset{0}{x}\overset{0}{z}}}$ & $\mathbf{\underline{\overset{0}{x}\overset{0}{z}}}$ \\ \hline
$\mathbf{\underline{\overset{0}{x}\overset{0}{z}}}$ & $\mathbf{\underline{\overset{-}{xz}}}$ & $\mathbf{\underline{\overset{-}{xz}}}$ & $\mathbf{\underline{\overset{-}{xz}}}$ \\ \hline
$\mathbf{\underline{\overset{0}{x}\overset{0}{z}}}$ & $\mathbf{\underline{\overset{+}{x}\overset{0}{z}}}$ & $\mathbf{\underline{\overset{+}{x}\overset{0}{z}}}$ & $\mathbf{\underline{\overset{+}{x}\overset{0}{z}}}$ \\ \hline
$\mathbf{\underline{\overset{0}{x}\overset{0}{z}}}$ & $\mathbf{\underline{\overset{+}{x}\overset{-}{z}}}$ & $\mathbf{\underline{\overset{+}{x}\overset{-}{z}}}$ & $\mathbf{\underline{\overset{+}{x}\overset{-}{z}}}$ \\ \hline
$\mathbf{\underline{\overset{0}{x}\overset{0}{z}}}$ & $\mathbf{\underline{\overset{0}{x}\overset{+}{z}}}$ & $\mathbf{\underline{\overset{0}{x}\overset{+}{z}}}$ & $\mathbf{\underline{\overset{0}{x}\overset{+}{z}}}$  \\ \hline
$\mathbf{\underline{\overset{0}{x}\overset{0}{z}}}$ & $\mathbf{\underline{\overset{-}{x}\overset{+}{z}}}$ & $\mathbf{\underline{\overset{-}{x}\overset{+}{z}}}$ & $\mathbf{\underline{\overset{-}{x}\overset{+}{z}}}$ \\ \hline
$\mathbf{\underline{\overset{0}{x}\overset{0}{z}}}$ & $\mathbf{\underline{\overset{+}{x}\overset{+}{z}}}$ & $\mathbf{\underline{\overset{+}{x}\overset{+}{z}}}$ & $\mathbf{\underline{\overset{+}{x}\overset{+}{z}}}$ \\ \hline
\end{tabular}
\caption{Aggregation Table from Interval Sub-Pattern Evaluation Classes}\label{tab:0}
\end{table}

\begin{proof}[Table \ref{tab:0}]
As each of the aggregations shown in Table \ref{tab:0} contains at least an element belonging to the aggregation class $\mathbf{\underline{\overset{0}{x}\overset{0}{z}}}$, the correctness of the columns \kwd{SEQ} and \kwd{AND} follows directly from Lemma \ref{l:neutral_isp}.
\end{proof}

\begin{table}[ht!]
\centering
\begin{tabular}{|c|c|c|c|}
\hline
A & B & \kwd{SEQ}(A, B) & \kwd{AND}(A, B) \\ \hline
$\mathbf{\underline{\overset{-}{xz}}}$ & $\mathbf{\underline{\overset{0}{x}\overset{0}{z}}}$ & $\mathbf{\underline{\overset{-}{xz}}}$ & $\mathbf{\underline{\overset{-}{xz}}}$ \\ \hline
$\mathbf{\underline{\overset{-}{xz}}}$ & $\mathbf{\underline{\overset{-}{xz}}}$ & $\mathbf{\underline{\overset{-}{xz}}}$ & $\mathbf{\underline{\overset{-}{xz}}}$ \\ \hline
$\mathbf{\underline{\overset{-}{xz}}}$ & $\mathbf{\underline{\overset{+}{x}\overset{0}{z}}}$ & $\mathbf{\underline{\overset{+}{x}\overset{0}{z}}}$ & $\mathbf{\underline{\overset{+}{x}\overset{0}{z}}}$ \\ \hline
$\mathbf{\underline{\overset{-}{xz}}}$ & $\mathbf{\underline{\overset{+}{x}\overset{-}{z}}}$ & $\mathbf{\underline{\overset{+}{x}\overset{-}{z}}}$ & $\mathbf{\underline{\overset{+}{x}\overset{-}{z}}}$ \\ \hline
$\mathbf{\underline{\overset{-}{xz}}}$ & $\mathbf{\underline{\overset{0}{x}\overset{+}{z}}}$ & $\mathbf{\underline{\overset{-}{x}\overset{+}{z}}}$ & $\mathbf{\underline{\overset{0}{x}\overset{+}{z}}}$ \\ \hline
$\mathbf{\underline{\overset{-}{xz}}}$ & $\mathbf{\underline{\overset{-}{x}\overset{+}{z}}}$ & $\mathbf{\underline{\overset{-}{x}\overset{+}{z}}}$ & $\mathbf{\underline{\overset{-}{x}\overset{+}{z}}}$ \\ \hline
$\mathbf{\underline{\overset{-}{xz}}}$ & $\mathbf{\underline{\overset{+}{x}\overset{+}{z}}}$ & $\mathbf{\underline{\overset{+}{x}\overset{+}{z}}}$ & $\mathbf{\underline{\overset{+}{x}\overset{+}{z}}}$ \\ \hline
\end{tabular}
\caption{Aggregation Table from Interval Sub-Pattern Evaluation Classes}\label{tab:0-}
\end{table}

\begin{proof}[Table \ref{tab:0-}]
\begin{enumerate}
\item From Definition \ref{def:isp_0-}, and column \textbf{A} of the table, it follows that each execution of the process block $A$ contains a task having $y$ annotated, and no other literals beneficial towards the pattern fulfilment.
\item We prove the correctness by case, depending on the class assigned to the process block in column \textbf{B}, each row of the table and each of the two different types of aggregation.

\begin{description}
\item[$\mathbf{\underline{\overset{0}{x}\overset{0}{z}}}$] Independently on the type of the process block, the correctness of the row follows directly from Lemma \ref{l:neutral_isp}.

\item[$\mathbf{\underline{\overset{-}{xz}}}$] 
\begin{enumerate}
\item From Definition \ref{def:isp_0-}, and the classification of the process block in the column \textbf{B} of the table, it follows that each execution of \textbf{B} contains at least a task having $y$ annotated, and no other literals beneficial towards the pattern fulfilment.
\begin{description}
\item[SEQ]
\begin{enumerate}
\item From Definition \ref{def:ser}, it follows that the possible executions of a process block $\kwd{SEQ}(A, B)$ are the concatenation of an execution of $A$ and an execution of $B$.
\item From i., 1., and (a), it follows that each possible execution of $\kwd{SEQ}(A, B)$ contains at least a task having $y$ annotated, and no other literals beneficial towards the pattern fulfilment.
\item From ii., and Definition \ref{def:isp_0-}, it follows that the process block resulting from $\kwd{SEQ}(A, B)$ is correctly classified as $\mathbf{\underline{\overset{-}{xz}}}$.
\end{enumerate}
\item[AND]
\begin{enumerate}
\item From Definition \ref{def:ser}, it follows that each execution resulting from $\kwd{AND}(A,$ $B)$ consists of an execution of $A$ interleaved with an execution of $B$.
\item From i., 1., and (a), it follows that each possible execution of $\kwd{AND}(A, B)$ contains at least a task having $y$ annotated, and no other literals beneficial towards the pattern fulfilment.
\item From ii., and Definition \ref{def:isp_0-}, it follows that the process block resulting from $\kwd{AND}(A, B)$ is correctly classified as $\mathbf{\underline{\overset{-}{xz}}}$.
\end{enumerate}
\end{description}
\end{enumerate}

\item[$\mathbf{\underline{\overset{+}{x}\overset{0}{z}}}$]
\begin{enumerate}
\item From Definition \ref{def:isp_1}, and the classification of the process block in the column \textbf{B} of the table, it follows that there exists an execution belonging to the process block containing a task having $x$ annotated, and no task on its right having $y$ annotated.
\begin{description}
\item[SEQ]
\begin{enumerate}
\item From Definition \ref{def:ser}, it follows that the possible executions of a process block $\kwd{SEQ}(A, B)$ are the concatenation of an execution of $A$ and an execution of $B$.
\item From i., 1., and (a), it follows that there exists an execution belonging to $\kwd{SEQ}(A, B)$ containing a task having $x$ annotated, and no task on its right having $y$ annotated. This because the $y$ appearing in each execution of $A$ always appears on the left of each whole execution of $B$.
\item From ii., and Definition \ref{def:isp_1}, it follows that the process block resulting from $\kwd{SEQ}(A, B)$ is correctly classified as $\mathbf{\underline{\overset{+}{x}\overset{0}{z}}}$.
\end{enumerate}
\item[AND]
\begin{enumerate}
\item From the \textbf{SEQ} case result, and Lemma \ref{lem:inclusivity}, it follows that $\kwd{AND}(A, B)$ can be classified as $\mathbf{\underline{\overset{+}{x}\overset{0}{z}}}$ or better.
\item From the preference lattice in Figure \ref{fig:ad1s_c_latt} and Definition \ref{def:isp_3}, it follows that for $\kwd{AND}(A, B)$ to achieve a better class than $\mathbf{\underline{\overset{+}{x}\overset{0}{z}}}$, it requires a $z$ annotated in one of its tasks.
\item From (a), and Definition \ref{def:isp_1}, it follows that $B$ does not contain $z$ in its tasks annotations, otherwise it would be assigned to a better class. 
\item From 1., and iii., it follows that $\kwd{AND}(A, B)$ does not contain tasks having $z$ annotated.
\item From i., ii., and iv., it follows that $\kwd{AND}(A, B)$ is correctly classified as $\mathbf{\underline{\overset{+}{x}\overset{0}{z}}}$.
\end{enumerate}
\end{description}
\end{enumerate}

\item[$\mathbf{\underline{\overset{+}{x}\overset{-}{z}}}$]
\begin{enumerate}
\item From Definition \ref{def:isp_1r}, and the classification of the process block in the column \textbf{B} of the table, it follows that there exists an execution belonging to the process block containing a task having $x$ annotated, and a task on its right having $y$ annotated.
\begin{description}
\item[SEQ]
\begin{enumerate}
\item From Definition \ref{def:ser}, it follows that the possible executions of a process block $\kwd{SEQ}(A, B)$ are the concatenation of an execution of $A$ and an execution of $B$.
\item From i., 1., and (a), it follows that there exists an execution belonging to the process block containing a task having $x$ annotated, and a task on its right having $y$ annotated. This because the $y$ appearing in each execution of $A$ always appears on the left of each whole execution of $B$, hence not influencing the execution order of the tasks on the right side of the task having $x$ annotated.
\item From ii., and Definition \ref{def:isp_1r}, it follows that the process block resulting from $\kwd{SEQ}(A, B)$ is correctly classified as $\mathbf{\underline{\overset{+}{x}\overset{-}{z}}}$.
\end{enumerate}
\item[AND]
\begin{enumerate}
\item From the \textbf{SEQ} case result, and Lemma \ref{lem:inclusivity}, it follows that $\kwd{AND}(A, B)$ can be classified as $\mathbf{\underline{\overset{+}{x}\overset{-}{z}}}$ or better.
\item From the preference lattice in Figure \ref{fig:ad1s_c_latt} and Definition \ref{def:isp_3}, it follows that for $\kwd{AND}(A, B)$ to achieve a better class than $\mathbf{\underline{\overset{+}{x}\overset{-}{z}}}$, it requires either a $z$ annotated in one of its tasks, or an execution not having a $y$ annotated on the right of a task having $x$ annotated.
\item From (a), and Definition \ref{def:isp_1r}, it follows that $B$ does not contain a $z$ not preceded by a $y$ after the $x$ in its tasks' annotations, otherwise it would be assigned to a better class. 
\item From 1., and iii., it follows that $\kwd{AND}(A, B)$ does not contain tasks having $z$ annotated.
\item From (a), and Definition \ref{def:isp_1r}, it follows that $B$ does not contain an execution having a task with $x$ annotated and no task on its right with $y$ annotated, otherwise $B$ would be classified as $\mathbf{\underline{\overset{+}{x}\overset{0}{z}}}$.
\item From v., and Definition \ref{def:ser}, it follows that $\kwd{AND}(A, B)$ does not contain an execution not having a $y$ annotated on the right of a task having $x$ annotated.
\item From i., ii., iv., and vi., it follows that $\kwd{AND}(A, B)$ is correctly classified as $\mathbf{\underline{\overset{+}{x}\overset{-}{z}}}$.
\end{enumerate}
\end{description}
\end{enumerate}
\item[$\mathbf{\underline{\overset{0}{x}\overset{+}{z}}}$]
\begin{enumerate}
\item From Definition \ref{def:isp_1'}, and the classification of the process block in the column \textbf{B} of the table, it follows that there exists an execution belonging to the process block containing a task having $z$ annotated, and no tasks on its left having $y$ annotated.
\begin{description}
\item[SEQ]
\begin{enumerate}
\item From Definition \ref{def:ser}, it follows that the possible executions of a process block $\kwd{SEQ}(A, B)$ are the concatenation of an execution of $A$ and an execution of $B$.
\item From i., 1., and (a), it follows that there exists an execution belonging to the process block containing a task having $z$ annotated, and a task on its left having $y$ annotated. This because the $y$ appearing in each execution of $A$ always appears on the left of each whole execution of $B$, hence influencing the execution on the left side of the task having $z$ annotated.
\item From ii., and Definition \ref{def:isp_1'l}, it follows that the process block resulting from $\kwd{SEQ}(A, B)$ is correctly classified as $\mathbf{\underline{\overset{-}{x}\overset{+}{z}}}$.
\end{enumerate}
\item[AND]
\begin{enumerate}
\item From Definition \ref{def:ser}, it follows that each execution resulting from $\kwd{AND}(A,$ $B)$ consists of an execution of $A$ interleaved with an execution of $B$.
\item From i., it follows that an execution of $B$ followed by an execution of $A$ is a valid execution of $\kwd{AND}(A, B)$.
\item From ii., 1., and (a), it follows that there exists an execution belonging to the process block containing a task having $z$ annotated, and no task on its left having $y$ annotated. This because the $y$ appearing in each execution of $A$ can always be put on the right of an execution of $A$, hence not influencing the execution on the left side of the task having $z$ annotated.
\item From iii., and Definition \ref{def:isp_1'}, it follows that the process block resulting from $\kwd{AND}(A, B)$ is correctly classified as $\mathbf{\underline{\overset{0}{x}\overset{+}{z}}}$.
\end{enumerate}
\end{description}
\end{enumerate}
\item[$\mathbf{\underline{\overset{-}{x}\overset{+}{z}}}$]
\begin{enumerate}
\item From Definition \ref{def:isp_1'l}, and the classification of the process block in the column \textbf{B} of the table, it follows that there exists an execution belonging to the process block containing a task having $z$ annotated, and having another task on its left having $y$ annotated.
\begin{description}
\item[SEQ]
\begin{enumerate}
\item From Definition \ref{def:ser}, it follows that the possible executions of a process block $\kwd{SEQ}(A, B)$ are the concatenation of an execution of $A$ and an execution of $B$.
\item From i., 1., and (a), it follows that there exists an execution belonging to the process block containing a task having $z$ annotated, and a task on its left having $y$ annotated. This corresponds to the original class of $B$, as the $y$ from $A$ does not alter the initial properties of $B$.
\item From ii., and Definition \ref{def:isp_1'l}, it follows that the process block resulting from $\kwd{SEQ}(A, B)$ is correctly classified as $\mathbf{\underline{\overset{-}{x}\overset{+}{z}}}$.
\end{enumerate}
\item[AND]
\begin{enumerate}
\item From Definition \ref{def:ser}, it follows that each execution resulting from $\kwd{AND}(A,$ $B)$ consists of an execution of $A$ interleaved with an execution of $B$.
\item From i., 1., and (a), it follows that there exists an execution belonging to the process block containing a task having $z$ annotated, and a task on its left having $y$ annotated. This corresponds to the original class of $B$, as no matter the interleaving chosen, in particular where the $y$ from the execution of $A$ would be positioned within the execution of $B$, it would not change its properties.
\item From ii., and Definition \ref{def:isp_1'l}, it follows that the process block resulting from $\kwd{AND}(A, B)$ is correctly classified as $\mathbf{\underline{\overset{-}{x}\overset{+}{z}}}$.
\end{enumerate}
\end{description}
\end{enumerate}
\item[$\mathbf{\underline{\overset{+}{x}\overset{+}{z}}}$]
\begin{enumerate}
\item Independently on the type of the process block, the correctness of the row follows directly from Lemma \ref{l:c3a}.
\end{enumerate}
\end{description}

\item We have shown that, for each row the \kwd{SEQ} and \kwd{AND} aggregation between the states correctly aggregates in the described result
\item The correctness of column \kwd{XOR} follows directly from the preference order defined in the lattice shown in Figure \ref{fig:ad1s_c_latt}.
\item From 3., and 4., it follows that the Aggregations in Table \ref{tab:0-} are correct.
\end{enumerate}
\end{proof}

\begin{table}[ht!]
\centering
\begin{tabular}{|c|c|c|c|}
\hline
A & B & \kwd{SEQ}(A, B) & \kwd{AND}(A, B) \\ \hline
$\mathbf{\underline{\overset{+}{x}\overset{0}{z}}}$ & $\mathbf{\underline{\overset{0}{x}\overset{0}{z}}}$ & $\mathbf{\underline{\overset{+}{x}\overset{0}{z}}}$ & $\mathbf{\underline{\overset{+}{x}\overset{0}{z}}}$ \\ \hline
$\mathbf{\underline{\overset{+}{x}\overset{0}{z}}}$ & $\mathbf{\underline{\overset{-}{xz}}}$ & $\mathbf{\underline{\overset{+}{x}\overset{-}{z}}}$ & $\mathbf{\underline{\overset{+}{x}\overset{0}{z}}}$ \\ \hline
$\mathbf{\underline{\overset{+}{x}\overset{0}{z}}}$ & $\mathbf{\underline{\overset{+}{x}\overset{0}{z}}}$ & $\mathbf{\underline{\overset{+}{x}\overset{0}{z}}}$ & $\mathbf{\underline{\overset{+}{x}\overset{0}{z}}}$ \\ \hline
$\mathbf{\underline{\overset{+}{x}\overset{0}{z}}}$ & $\mathbf{\underline{\overset{+}{x}\overset{-}{z}}}$ & $\mathbf{\underline{\overset{+}{x}\overset{-}{z}}}$ & $\mathbf{\underline{\overset{+}{x}\overset{0}{z}}}$ \\ \hline
$\mathbf{\underline{\overset{+}{x}\overset{0}{z}}}$ & $\mathbf{\underline{\overset{0}{x}\overset{+}{z}}}$ & $\mathbf{\underline{\overset{+}{x}\overset{+}{z}}}$ & $\mathbf{\underline{\overset{+}{x}\overset{+}{z}}}$ \\ \hline
$\mathbf{\underline{\overset{+}{x}\overset{0}{z}}}$ & $\mathbf{\underline{\overset{-}{x}\overset{+}{z}}}$ & $\mathbf{\underline{\overset{+}{x}\overset{-}{z}}}$ / $\mathbf{\underline{\overset{-}{x}\overset{+}{z}}}$ & $\mathbf{\underline{\overset{+}{x}\overset{+}{z}}}$ \\ \hline
$\mathbf{\underline{\overset{+}{x}\overset{0}{z}}}$ & $\mathbf{\underline{\overset{+}{x}\overset{+}{z}}}$ & $\mathbf{\underline{\overset{+}{x}\overset{+}{z}}}$ & $\mathbf{\underline{\overset{+}{x}\overset{+}{z}}}$ \\ \hline
\end{tabular}
\caption{Aggregation Table from Interval Sub-Pattern Evaluation Classes}\label{tab:1}
\end{table}

\begin{proof}[Table \ref{tab:1}]
\begin{enumerate}
\item From Definition \ref{def:isp_1}, and column \textbf{A} of the table, it follows that there exists an execution belonging to the process block containing a task having $x$ annotated, and no task on its right having $y$ annotated.
\item We prove by case each row of the table and each of the two different types of aggregation.

\begin{description}
\item[$\mathbf{\underline{\overset{0}{x}\overset{0}{z}}}$] Independently on the type of the process block, the correctness of the row follows directly from Lemma \ref{l:neutral_isp}.
\item[$\mathbf{\underline{\overset{-}{xz}}}$]
\begin{enumerate}
\item From Definition \ref{def:isp_0-}, and the classification of the process block in the column \textbf{B} of the table, it follows that each execution of \textbf{B} contains at least a task having $y$ annotated, and no other literals beneficial towards the pattern fulfilment.
\begin{description}
\item[SEQ]
\begin{enumerate}
\item From Definition \ref{def:ser}, it follows that the possible executions of a process block $\kwd{SEQ}(A, B)$ are the concatenation of an execution of $A$ and an execution of $B$.
\item From i., 1., and (a), it follows that there exists an execution belonging to the process block containing a task having $x$ annotated, and a task on its right having $y$ annotated. This because the $y$ appearing in each execution of $B$ always appears on the right of each whole execution of $A$.
\item From ii., and Definition \ref{def:isp_1r}, it follows that the process block resulting from $\kwd{SEQ}(A, B)$ is correctly classified as $\mathbf{\underline{\overset{+}{x}\overset{-}{z}}}$.
\end{enumerate}
\item[AND]
\begin{enumerate}
\item From Lemma \ref{l:commutativity}, and row $\mathbf{\underline{\overset{-}{xz}}}$ $\mathbf{\underline{\overset{+}{x}\overset{0}{z}}}$ of Table \ref{tab:0-}, it follows that the process block resulting from $\kwd{AND}(A, B)$ is correctly classified as $\mathbf{\underline{\overset{+}{x}\overset{0}{z}}}$. 
\end{enumerate}
\end{description}
\end{enumerate}
\item[$\mathbf{\underline{\overset{+}{x}\overset{0}{z}}}$]
\begin{enumerate}
\item From Definition \ref{def:isp_1}, and the classification of the process block in the column \textbf{B} of the table, it follows that there exists an execution belonging to the process block containing a task having $x$ annotated, and no task on its right having $y$ annotated.
\begin{description}
\item[SEQ]
\begin{enumerate}
\item From Definition \ref{def:ser}, it follows that the possible executions of a process block $\kwd{SEQ}(A, B)$ are the concatenation of an execution of $A$ and an execution of $B$.
\item From i., 1., and (a), it follows that there exists an execution belonging to the process block containing a task having $x$ annotated, and no task on its right having $y$ annotated. This because there is an execution of $B$ that without $y$ that can follow each execution of $A$.
\item From ii., and Definition \ref{def:isp_1}, it follows that the process block resulting from $\kwd{SEQ}(A, B)$ is correctly classified as $\mathbf{\underline{\overset{+}{x}\overset{0}{z}}}$.
\end{enumerate}
\item[AND]
\begin{enumerate}
\item From the \textbf{SEQ} case result, and Lemma \ref{lem:inclusivity}, it follows that $\kwd{AND}(A, B)$ can be classified as $\mathbf{\underline{\overset{+}{x}\overset{0}{z}}}$ or better.
\item From the preference lattice in Figure \ref{fig:ad1s_c_latt}, it follows that for $\kwd{AND}(A, B)$ to achieve a better class than $\mathbf{\underline{\overset{+}{x}\overset{0}{z}}}$, it requires a $z$ annotated in one of its tasks.
\item From (a), and Definition \ref{def:isp_1}, it follows that both $A$ and $B$ do not contain $z$ in its tasks annotations, otherwise they would be assigned to a better class. 
\item From 1., and iii., it follows that $\kwd{AND}(A, B)$ does not contain tasks having $z$ annotated.
\item From i., ii., and iv., it follows that $\kwd{AND}(A, B)$ is correctly classified as $\mathbf{\underline{\overset{+}{x}\overset{0}{z}}}$.
\end{enumerate}
\end{description}
\end{enumerate}
\item[$\mathbf{\underline{\overset{+}{x}\overset{-}{z}}}$]
\begin{enumerate}
\item From Definition \ref{def:isp_1r}, and the classification of the process block in the column \textbf{B} of the table, it follows that there exists an execution belonging to the process block containing a task having $x$ annotated, and a task on its right having $y$ annotated.
\begin{description}
\item[SEQ]
\begin{enumerate}
\item From Definition \ref{def:ser}, it follows that the possible executions of a process block $\kwd{SEQ}(A, B)$ are the concatenation of an execution of $A$ and an execution of $B$.
\item From i., 1., and (a), it follows that there exists an execution belonging to the process block containing a task having $x$ annotated, and a task on its right having $y$ annotated. This because the executions of $B$ are not influenced by the executions of $A$ as they appear on their left side.
\item From ii., and Definition \ref{def:isp_1r}, it follows that the process block resulting from $\kwd{SEQ}(A, B)$ is correctly classified as $\mathbf{\underline{\overset{+}{x}\overset{-}{z}}}$.
\end{enumerate}
\item[AND]
\begin{enumerate}
\item From Definition \ref{def:ser}, it follows that each execution resulting from $\kwd{AND}(A,$ $B)$ consists of an execution of $A$ interleaved with an execution of $B$.
\item From i., it follows that an execution of $B$ followed by an execution of $A$ is a valid execution of $\kwd{AND}(A, B)$.
\item From ii., 1., and (a), it follows that there exists an execution belonging to the process block containing a task having $x$ annotated, and no task on its right having $y$ annotated. This because the executions of $A$ are not influenced by the executions of $B$ as they appear on their left side.
\item From iii., and Definition \ref{def:isp_1}, it follows that the process block resulting from $\kwd{AND}(A, B)$ is correctly classified as $\mathbf{\underline{\overset{+}{x}\overset{0}{z}}}$.
\end{enumerate}
\end{description}
\end{enumerate}
\item[$\mathbf{\underline{\overset{0}{x}\overset{+}{z}}}$]
\begin{enumerate}
\item From Definition \ref{def:isp_1'}, and the classification of the process block in the column \textbf{B} of the table, it follows that there exists an execution belonging to the process block containing a task having $z$ annotated, and no tasks on its left having $y$ annotated.
\begin{description}
\item[SEQ]
\begin{enumerate}
\item From Definition \ref{def:ser}, it follows that the possible executions of a process block $\kwd{SEQ}(A, B)$ are the concatenation of an execution of $A$ and an execution of $B$.
\item From i., 1., and (a), it follows that there exists an execution belonging to the process block containing a task having $x$ annotated and another task on its right having $z$ annotated, while having no task in-between having $y$ annotated. As appending the execution of $B$ to an execution of $A$ completes the interval without including any $y$ in-between.
\item From ii., and Definition \ref{def:isp_3}, it follows that the process block resulting from $\kwd{SEQ}(A, B)$ is correctly classified as $\mathbf{\underline{\overset{+}{x}\overset{+}{z}}}$.
\end{enumerate}
\item[AND]
\begin{enumerate}
\item From the \textbf{SEQ} case result, and Lemma \ref{lem:inclusivity}, it follows that $\kwd{AND}(A, B)$ can be correctly classified as $\mathbf{\underline{\overset{+}{x}\overset{+}{z}}}$.
\end{enumerate}
\end{description}
\end{enumerate}
\item[$\mathbf{\underline{\overset{-}{x}\overset{+}{z}}}$]
\begin{enumerate}
\item From Definition \ref{def:isp_1'l}, and the classification of the process block in the column \textbf{B} of the table, it follows that there exists an execution belonging to the process block containing a task having $z$ annotated, and having another task on its left having $y$ annotated.
\begin{description}
\item[SEQ]
\begin{enumerate}
\item From Definition \ref{def:ser}, it follows that the possible executions of a process block $\kwd{SEQ}(A, B)$ are the concatenation of an execution of $A$ and an execution of $B$.
\item From i., 1., and (a), it follows that there exists an execution belonging to the process block containing a task having $x$ annotated and another task on its right having $z$ annotated, while having at least task in-between having $y$ annotated. As appending the execution of $B$ to an execution of $A$ completes the interval and including any $y$ from the execution of $B$ in-between.
\item From ii., Definition \ref{def:isp_1r} and Definition \ref{def:isp_1'l}, it follows that the process block resulting from $\kwd{SEQ}(A, B)$ is correctly classified as both $\mathbf{\underline{\overset{+}{x}\overset{-}{z}}}$ and $\mathbf{\underline{\overset{-}{x}\overset{+}{z}}}$.
\end{enumerate}
\item[AND]
\begin{enumerate}
\item From Definition \ref{def:ser}, it follows that each execution resulting from $\kwd{AND}(A,$ $B)$ consists of an execution of $A$ interleaved with an execution of $B$.
\item From i., 1., and (a), it follows that an execution of $A$ can be preceded by the part of the execution of $B$ containing $y$, and followed by the remainder of that execution containing $z$.
\item From ii., it follows that there exist an execution of $\kwd{AND}(A, B)$ having a task with $x$ annotated and another task on its right having $z$ annotated, without having a task in-between having $y$ annotated.
\item From iii., and Definition \ref{def:isp_3}, it follows that the process block resulting from $\kwd{AND}(A, B)$ is correctly classified as $\mathbf{\underline{\overset{+}{x}\overset{+}{z}}}$.
\end{enumerate}
\end{description}
\end{enumerate}
\item[$\mathbf{\underline{\overset{+}{x}\overset{+}{z}}}$]
\begin{enumerate}
\item Independently on the type of the process block, the correctness of the row follows directly from Lemma \ref{l:c3a}.
\end{enumerate}
\end{description}

\item We have shown that, for each row the \kwd{SEQ} and \kwd{AND} aggregation between the states correctly aggregates in the described result
\item The correctness of column \kwd{XOR} follows directly from the preference order defined in the lattice shown in Figure \ref{fig:ad1s_c_latt}.
\item From 3., and 4., it follows that the Aggregations in Table \ref{tab:1} are correct.
\end{enumerate}

\end{proof}

\begin{table}[ht!]
\centering
\begin{tabular}{|c|c|c|c|}
\hline
A & B & \kwd{SEQ}(A, B) & \kwd{AND}(A, B) \\ \hline
$\mathbf{\underline{\overset{+}{x}\overset{-}{z}}}$ & $\mathbf{\underline{\overset{0}{x}\overset{0}{z}}}$ & $\mathbf{\underline{\overset{+}{x}\overset{-}{z}}}$ & $\mathbf{\underline{\overset{+}{x}\overset{-}{z}}}$ \\ \hline
$\mathbf{\underline{\overset{+}{x}\overset{-}{z}}}$ & $\mathbf{\underline{\overset{-}{xz}}}$ & $\mathbf{\underline{\overset{+}{x}\overset{-}{z}}}$ & $\mathbf{\underline{\overset{+}{x}\overset{-}{z}}}$ \\ \hline
$\mathbf{\underline{\overset{+}{x}\overset{-}{z}}}$ & $\mathbf{\underline{\overset{+}{x}\overset{0}{z}}}$ & $\mathbf{\underline{\overset{+}{x}\overset{0}{z}}}$ & $\mathbf{\underline{\overset{+}{x}\overset{0}{z}}}$ \\ \hline
$\mathbf{\underline{\overset{+}{x}\overset{-}{z}}}$ & $\mathbf{\underline{\overset{+}{x}\overset{-}{z}}}$ & $\mathbf{\underline{\overset{+}{x}\overset{-}{z}}}$ & $\mathbf{\underline{\overset{+}{x}\overset{-}{z}}}$ \\ \hline
$\mathbf{\underline{\overset{+}{x}\overset{-}{z}}}$ & $\mathbf{\underline{\overset{0}{x}\overset{+}{z}}}$ & $\mathbf{\underline{\overset{+}{x}\overset{-}{z}}}$ / $\mathbf{\underline{\overset{-}{x}\overset{+}{z}}}$ & $\mathbf{\underline{\overset{+}{x}\overset{+}{z}}}$ \\ \hline
$\mathbf{\underline{\overset{+}{x}\overset{-}{z}}}$ & $\mathbf{\underline{\overset{-}{x}\overset{+}{z}}}$ & $\mathbf{\underline{\overset{+}{x}\overset{-}{z}}}$ / $\mathbf{\underline{\overset{-}{x}\overset{+}{z}}}$ & $\mathbf{\underline{\overset{+}{x}\overset{+}{z}}}$ \\ \hline
$\mathbf{\underline{\overset{+}{x}\overset{-}{z}}}$ & $\mathbf{\underline{\overset{+}{x}\overset{+}{z}}}$ & $\mathbf{\underline{\overset{+}{x}\overset{+}{z}}}$ & $\mathbf{\underline{\overset{+}{x}\overset{+}{z}}}$ \\ \hline
\end{tabular}
\caption{Aggregation Table from Interval Sub-Pattern Evaluation Classes}\label{tab:1r}
\end{table}

\begin{proof}[Table \ref{tab:1r}]
\begin{enumerate}
\item From Definition \ref{def:isp_1r}, and column \textbf{A} of the table, it follows that there exists an execution belonging to the process block containing a task having $x$ annotated, and a task on its right having $y$ annotated.
\item We prove by case each row of the table and each of the two different types of aggregation.

\begin{description}
\item[$\mathbf{\underline{\overset{0}{x}\overset{0}{z}}}$] Independently on the type of the process block, the correctness of the row follows directly from Lemma \ref{l:neutral_isp}.
\item[$\mathbf{\underline{\overset{-}{xz}}}$]
\begin{enumerate}
\item From Definition \ref{def:isp_0-}, and the classification of the process block in the column \textbf{B} of the table, it follows that each execution of \textbf{B} contains at least a task having $y$ annotated, and no other literals beneficial towards the pattern fulfilment.
\begin{description}
\item[SEQ]
\begin{enumerate}
\item From Definition \ref{def:ser}, it follows that the possible executions of a process block $\kwd{SEQ}(A, B)$ are the concatenation of an execution of $A$ and an execution of $B$.
\item From i., 1., and (a), it follows that there exists an execution in $\kwd{SEQ}(A, B)$ containing $x$ and followed by at least a $y$.
\item From ii. and Definition \ref{def:isp_1r}, it follows that $\kwd{SEQ}(A, B)$ can be correctly classified as $\mathbf{\underline{\overset{+}{x}\overset{-}{z}}}$.
\end{enumerate}
\item[AND]
\begin{enumerate}
\item From Lemma \ref{l:commutativity}, it follows that $\kwd{AND}(A, B)$ is classified as the same class as the row $\mathbf{\underline{\overset{-}{xz}}}$ $\mathbf{\underline{\overset{+}{x}\overset{-}{z}}}$ in Table \ref{tab:0-}, which is correctly $\mathbf{\underline{\overset{+}{x}\overset{-}{z}}}$.
\end{enumerate}
\end{description}
\end{enumerate}
\item[$\mathbf{\underline{\overset{+}{x}\overset{0}{z}}}$]
\begin{enumerate}
\item From Definition \ref{def:isp_1}, and the classification of the process block in the column \textbf{B} of the table, it follows that there exists an execution belonging to the process block containing a task having $x$ annotated, and no task on its right having $y$ annotated.
\begin{description}
\item[SEQ]
\begin{enumerate}
\item From Definition \ref{def:ser}, it follows that the possible executions of a process block $\kwd{SEQ}(A, B)$ are the concatenation of an execution of $A$ and an execution of $B$.
\item From 1., (a) and i., it follows that there is an execution in $\kwd{SEQ}(A, B)$ having the execution of $B$, where there is a $x$ and no $y$ on its right, as its tail.
\item From ii. and Definition \ref{def:isp_1}, it follows that $\kwd{SEQ}(A, B)$ is correctly classified as $\mathbf{\underline{\overset{+}{x}\overset{0}{z}}}$.
\end{enumerate}
\item[AND]
\begin{enumerate}
\item From Lemma \ref{l:commutativity}, it follows that $\kwd{AND}(A, B)$ is classified as the same class as the row $\mathbf{\underline{\overset{+}{x}\overset{0}{z}}}$ $\mathbf{\underline{\overset{+}{x}\overset{-}{z}}}$ in Table \ref{tab:1}, which is correctly $\mathbf{\underline{\overset{+}{x}\overset{0}{z}}}$.
\end{enumerate}
\end{description}
\end{enumerate}
\item[$\mathbf{\underline{\overset{+}{x}\overset{-}{z}}}$]
\begin{enumerate}
\item From Definition \ref{def:isp_1r}, and the classification of the process block in the column \textbf{B} of the table, it follows that there exists an execution belonging to the process block containing a task having $x$ annotated, and a task on its right having $y$ annotated.
\begin{description}
\item[SEQ]
\begin{enumerate}
\item From Definition \ref{def:ser}, it follows that the possible executions of a process block $\kwd{SEQ}(A, B)$ are the concatenation of an execution of $A$ and an execution of $B$.
\item From 1., (a) and i., it follows that there is an execution in $\kwd{SEQ}(A, B)$ having the execution of $B$, where there is a $x$ and a $y$ on its right, as its tail.
\item From ii. and Definition \ref{def:isp_1r}, it follows that $\kwd{SEQ}(A, B)$ is correctly classified as $\mathbf{\underline{\overset{+}{x}\overset{-}{z}}}$.
\end{enumerate}
\item[AND]
\begin{enumerate}
\item From Definition \ref{def:ser}, it follows that each execution resulting from $\kwd{AND}(A,$ $B)$ consists of an execution of $A$ interleaved with an execution of $B$.
\item From Lemma \ref{lem:inclusivity} and the result for the \kwd{SEQ} case, it follows that the classification of $\kwd{AND}(A, B)$ cannot be worse than $\mathbf{\underline{\overset{+}{x}\overset{-}{z}}}$.
\item From 1. and (a), it follows that none of the executions identified by the classifications contain a $z$.
\item From 1. and (a), it follows that none of the executions identified by the classifications contain a $x$ without a $y$ on its right.
\item From i., iii., iv. and the preference order in Figure \ref{fig:ad1s_c_latt}, it follows that there cannot be a better classification than $\mathbf{\underline{\overset{+}{x}\overset{-}{z}}}$.
\item From ii. and v., it follows that $\kwd{AND}(A, B)$ is correctly classified as $\mathbf{\underline{\overset{+}{x}\overset{-}{z}}}$.
\end{enumerate}
\end{description}
\end{enumerate}
\item[$\mathbf{\underline{\overset{0}{x}\overset{+}{z}}}$]
\begin{enumerate}
\item From Definition \ref{def:isp_1'}, and the classification of the process block in the column \textbf{B} of the table, it follows that there exists an execution belonging to the process block containing a task having $z$ annotated, and no tasks on its left having $y$ annotated.
\begin{description}
\item[SEQ]
\begin{enumerate}
\item From Definition \ref{def:ser}, it follows that the possible executions of a process block $\kwd{SEQ}(A, B)$ are the concatenation of an execution of $A$ and an execution of $B$.
\item From 1., (a) and i., it follows that there is an execution composed by a classified execution in $A$ followed by a classification in $B$ where there is a $x$, with a $y$ on its right, and a $z$ on the right of the $y$.
\item From ii. and Definition \ref{def:isp_1r}, it follows that $\kwd{SEQ}(A, B)$ is correctly classified as $\mathbf{\underline{\overset{+}{x}\overset{-}{z}}}$.
\item From ii. and Definition \ref{def:isp_1'l}, it follows that $\kwd{SEQ}(A, B)$ is correctly classified as $\mathbf{\underline{\overset{-}{x}\overset{+}{z}}}$.
\end{enumerate}
\item[AND]
\begin{enumerate}
\item From Definition \ref{def:ser}, it follows that each execution resulting from $\kwd{AND}(A,$ $B)$ consists of an execution of $A$ interleaved with an execution of $B$.
\item From 1., (a) and i., it follows that the execution of $A$, having $x$ and a $y$ on its right, can be interleaved with the execution from $B$ containing a $z$, and the $z$ from $B$ can be placed in the resulting execution between the $x$ and $y$ from $A$.
\item From ii. and Definition \ref{def:isp_3}, it follows that $\kwd{AND}(A, B)$ is correctly classified as $\mathbf{\underline{\overset{+}{x}\overset{+}{z}}}$.
\end{enumerate}
\end{description}
\end{enumerate}
\item[$\mathbf{\underline{\overset{-}{x}\overset{+}{z}}}$]
\begin{enumerate}
\item From Definition \ref{def:isp_1'l}, and the classification of the process block in the column \textbf{B} of the table, it follows that there exists an execution belonging to the process block containing a task having $z$ annotated, and having another task on its left having $y$ annotated.
\begin{description}
\item[SEQ]
\begin{enumerate}
\item From Definition \ref{def:ser}, it follows that the possible executions of a process block $\kwd{SEQ}(A, B)$ are the concatenation of an execution of $A$ and an execution of $B$.
\item From 1., (a) and i., it follows that there is an execution composed by a classified execution in $A$ followed by a classification in $B$ where there is a $x$, with a $y$ on its right, and a $z$ on the right of the $y$.
\item From ii. and Definition \ref{def:isp_1r}, it follows that $\kwd{SEQ}(A, B)$ is correctly classified as $\mathbf{\underline{\overset{+}{x}\overset{-}{z}}}$.
\item From ii. and Definition \ref{def:isp_1'l}, it follows that $\kwd{SEQ}(A, B)$ is correctly classified as $\mathbf{\underline{\overset{-}{x}\overset{+}{z}}}$.
\end{enumerate}
\item[AND]
\begin{enumerate}
\item From Definition \ref{def:ser}, it follows that each execution resulting from $\kwd{AND}(A,$ $B)$ consists of an execution of $A$ interleaved with an execution of $B$.
\item From 1., (a) and i., it follows that the execution of $A$, having $x$ and a $y$ on its right, can be interleaved with the execution from $B$ containing a $z$, and the $z$ from $B$ can be placed in the resulting execution between the $x$ and $y$ from $A$.
\item From ii. and Definition \ref{def:isp_3}, it follows that $\kwd{AND}(A, B)$ is correctly classified as $\mathbf{\underline{\overset{+}{x}\overset{+}{z}}}$.
\end{enumerate}
\end{description}
\end{enumerate}
\item[$\mathbf{\underline{\overset{+}{x}\overset{+}{z}}}$]
\begin{enumerate}
\item Independently on the type of the process block, the correctness of the row follows directly from Lemma \ref{l:c3a}.
\end{enumerate}
\end{description}
\item We have shown that, for each row the \kwd{SEQ} and \kwd{AND} aggregation between the states correctly aggregates in the described result
\item The correctness of column \kwd{XOR} follows directly from the preference order defined in the lattice shown in Figure \ref{fig:ad1s_c_latt}.
\item From 3., and 4., it follows that the Aggregations in Table \ref{tab:1r} are correct.
\end{enumerate}
\end{proof}

\begin{table}[ht!]
\centering
\begin{tabular}{|c|c|c|c|}
\hline
A & B & \kwd{SEQ}(A, B) & \kwd{AND}(A, B) \\ \hline
$\mathbf{\underline{\overset{0}{x}\overset{+}{z}}}$ & $\mathbf{\underline{\overset{0}{x}\overset{0}{z}}}$ & $\mathbf{\underline{\overset{0}{x}\overset{+}{z}}}$ & $\mathbf{\underline{\overset{0}{x}\overset{+}{z}}}$ \\ \hline
$\mathbf{\underline{\overset{0}{x}\overset{+}{z}}}$ & $\mathbf{\underline{\overset{-}{xz}}}$ & $\mathbf{\underline{\overset{0}{x}\overset{+}{z}}}$ & $\mathbf{\underline{\overset{0}{x}\overset{+}{z}}}$ \\ \hline
$\mathbf{\underline{\overset{0}{x}\overset{+}{z}}}$ & $\mathbf{\underline{\overset{+}{x}\overset{0}{z}}}$ & $\mathbf{\underline{\overset{0}{x}\overset{+}{z}}}$ / $\mathbf{\underline{\overset{+}{x}\overset{0}{z}}}$ & $\mathbf{\underline{\overset{+}{x}\overset{+}{z}}}$ \\ \hline
$\mathbf{\underline{\overset{0}{x}\overset{+}{z}}}$ & $\mathbf{\underline{\overset{+}{x}\overset{-}{z}}}$ & $\mathbf{\underline{\overset{0}{x}\overset{+}{z}}}$ / $\mathbf{\underline{\overset{+}{x}\overset{-}{z}}}$ & $\mathbf{\underline{\overset{+}{x}\overset{+}{z}}}$ \\ \hline
$\mathbf{\underline{\overset{0}{x}\overset{+}{z}}}$ & $\mathbf{\underline{\overset{0}{x}\overset{+}{z}}}$ & $\mathbf{\underline{\overset{0}{x}\overset{+}{z}}}$ & $\mathbf{\underline{\overset{0}{x}\overset{+}{z}}}$ \\ \hline
$\mathbf{\underline{\overset{0}{x}\overset{+}{z}}}$ & $\mathbf{\underline{\overset{-}{x}\overset{+}{z}}}$ & $\mathbf{\underline{\overset{0}{x}\overset{+}{z}}}$ & $\mathbf{\underline{\overset{0}{x}\overset{+}{z}}}$ \\ \hline
$\mathbf{\underline{\overset{0}{x}\overset{+}{z}}}$ & $\mathbf{\underline{\overset{+}{x}\overset{+}{z}}}$ & $\mathbf{\underline{\overset{+}{x}\overset{+}{z}}}$ & $\mathbf{\underline{\overset{+}{x}\overset{+}{z}}}$ \\ \hline
\end{tabular}
\caption{Aggregation Table from Interval Sub-Pattern Evaluation State}\label{tab:1'}
\end{table}

\begin{proof}[Table \ref{tab:1'}]
\begin{enumerate}
\item From Definition \ref{def:isp_1'}, and column \textbf{A} of the table, it follows that there exists an execution belonging to the process block containing a task having $z$ annotated, and no task on its left having $y$ annotated.
\item We prove by case each row of the table and each of the two different types of aggregation.

\begin{description}
\item[$\mathbf{\underline{\overset{0}{x}\overset{0}{z}}}$] Independently on the type of the process block, the correctness of the row follows directly from Lemma \ref{l:neutral_isp}.
\item[$\mathbf{\underline{\overset{-}{xz}}}$]
\begin{enumerate}
\item From Definition \ref{def:isp_0-}, and the classification of the process block in the column \textbf{B} of the table, it follows that each execution of \textbf{B} contains at least a task having $y$ annotated, and no other literals beneficial towards the pattern fulfilment.
\begin{description}
\item[SEQ]
\begin{enumerate}
\item From Definition \ref{def:ser}, it follows that the possible executions of a process block $\kwd{SEQ}(A, B)$ are the concatenation of an execution of $A$ and an execution of $B$.
\item From i., 1., and (a), it follows that there exists an execution in $\kwd{SEQ}(A, B)$ containing $z$ and no $y$ on its left, as each execution from $B$ is appended on the right.
\item From ii. and Definition \ref{def:isp_1'}, it follows that $\kwd{SEQ}(A, B)$ can be correctly classified as $\mathbf{\underline{\overset{0}{x}\overset{+}{z}}}$.
\end{enumerate}
\item[AND]
\begin{enumerate}
\item From Lemma \ref{l:commutativity}, it follows that $\kwd{AND}(A, B)$ is classified as the same class as the row $\mathbf{\underline{\overset{-}{xz}}}$ $\mathbf{\underline{\overset{0}{x}\overset{+}{z}}}$ in Table \ref{tab:0-}, which is correctly $\mathbf{\underline{\overset{0}{x}\overset{+}{z}}}$.
\end{enumerate}
\end{description}
\end{enumerate}
\item[$\mathbf{\underline{\overset{+}{x}\overset{0}{z}}}$]
\begin{enumerate}
\item From Definition \ref{def:isp_1}, and the classification of the process block in the column \textbf{B} of the table, it follows that there exists an execution belonging to the process block containing a task having $x$ annotated, and no task on its right having $y$ annotated.
\begin{description}
\item[SEQ]
\begin{enumerate}
\item From Definition \ref{def:ser}, it follows that the possible executions of a process block $\kwd{SEQ}(A, B)$ are the concatenation of an execution of $A$ and an execution of $B$.
\item From i., 1., and (a), it follows that there exists an execution in $\kwd{SEQ}(A, B)$ containing $z$ and a $x$ on its right, as each execution from $B$ is appended on the right. While the $x$ has no $y$ on its right, and $z$ has no $y$ on its left.
\item From ii. and Definition \ref{def:isp_1'}, it follows that $\kwd{SEQ}(A, B)$ can be correctly classified as $\mathbf{\underline{\overset{0}{x}\overset{+}{z}}}$.
\item From ii. and Definition \ref{def:isp_1}, it follows that $\kwd{SEQ}(A, B)$ can be correctly classified as $\mathbf{\underline{\overset{+}{x}\overset{0}{z}}}$.
\item From iii., iv. and the priority ordering in Figure \ref{fig:ad1s_c_latt}, it follows that $\kwd{SEQ}(A, B)$ is correctly classified as both $\mathbf{\underline{\overset{+}{x}\overset{0}{z}}}$ and $\mathbf{\underline{\overset{0}{x}\overset{+}{z}}}$.
\end{enumerate}
\item[AND]
\begin{enumerate}
\item From Lemma \ref{l:commutativity}, it follows that $\kwd{AND}(A, B)$ is classified as the same class as the row $\mathbf{\underline{\overset{+}{x}\overset{0}{z}}}$ $\mathbf{\underline{\overset{0}{x}\overset{+}{z}}}$ in Table \ref{tab:1}, which is correctly $\mathbf{\underline{\overset{+}{x}\overset{+}{z}}}$.
\end{enumerate}
\end{description}
\end{enumerate}
\item[$\mathbf{\underline{\overset{+}{x}\overset{-}{z}}}$]
\begin{enumerate}
\item From Definition \ref{def:isp_1r}, and the classification of the process block in the column \textbf{B} of the table, it follows that there exists an execution belonging to the process block containing a task having $x$ annotated, and a task on its right having $y$ annotated.
\begin{description}
\item[SEQ]
\begin{enumerate}
\item From Definition \ref{def:ser}, it follows that the possible executions of a process block $\kwd{SEQ}(A, B)$ are the concatenation of an execution of $A$ and an execution of $B$.
\item From i., 1., and (a), it follows that there exists an execution in $\kwd{SEQ}(A, B)$ containing $z$ and a $x$ on its right, which in turn has a $y$ on its right, as each execution from $B$ is appended on the right.
\item From ii. and Definition \ref{def:isp_1'}, it follows that $\kwd{SEQ}(A, B)$ can be correctly classified as $\mathbf{\underline{\overset{0}{x}\overset{+}{z}}}$.
\item From ii. and Definition \ref{def:isp_1r}, it follows that $\kwd{SEQ}(A, B)$ can be correctly classified as $\mathbf{\underline{\overset{+}{x}\overset{-}{z}}}$.
\item From iii., iv. and the priority ordering in Figure \ref{fig:ad1s_c_latt}, it follows that $\kwd{SEQ}(A, B)$ is correctly classified as both $\mathbf{\underline{\overset{+}{x}\overset{-}{z}}}$ and $\mathbf{\underline{\overset{0}{x}\overset{+}{z}}}$.
\end{enumerate}
\item[AND]
\begin{enumerate}
\item From Lemma \ref{l:commutativity}, it follows that $\kwd{AND}(A, B)$ is classified as the same class as the row $\mathbf{\underline{\overset{+}{x}\overset{-}{z}}}$ $\mathbf{\underline{\overset{0}{x}\overset{+}{z}}}$ in Table \ref{tab:1r}, which is correctly $\mathbf{\underline{\overset{+}{x}\overset{+}{z}}}$.
\end{enumerate}
\end{description}
\end{enumerate}
\item[$\mathbf{\underline{\overset{0}{x}\overset{+}{z}}}$]
\begin{enumerate}
\item From Definition \ref{def:isp_1'}, and the classification of the process block in the column \textbf{B} of the table, it follows that there exists an execution belonging to the process block containing a task having $z$ annotated, and no tasks on its left having $y$ annotated.
\begin{description}
\item[SEQ]
\begin{enumerate}
\item From Definition \ref{def:ser}, it follows that the possible executions of a process block $\kwd{SEQ}(A, B)$ are the concatenation of an execution of $A$ and an execution of $B$.
\item From i., 1., and (a), it follows that there exists an execution in $\kwd{SEQ}(A, B)$ containing $z$ and no $y$ on its left.
\item From ii. and Definition \ref{def:isp_1'}, it follows that $\kwd{SEQ}(A, B)$ can be correctly classified as $\mathbf{\underline{\overset{0}{x}\overset{+}{z}}}$.
\end{enumerate}
\item[AND]
\begin{enumerate}
\item From Definition \ref{def:ser}, it follows that each execution resulting from $\kwd{AND}(A, B)$ consists of an execution of $A$ interleaved with an execution of $B$.
\item From i., 1., and (a), it follows that the aggregated execution does not contain a $x$.
\item From Lemma \ref{lem:inclusivity} and the result from the \kwd{SEQ} column, it follows that $\kwd{AND}(A,$ $B)$ is at least classified as $\mathbf{\underline{\overset{0}{x}\overset{+}{z}}}$.
\item From ii., Definition \ref{def:isp_1'}, Definition \ref{def:isp_3} and the preference ordering of the classes in Figure \ref{fig:ad1s_c_latt}, it follows that $\kwd{AND}(A, B)$ is classified as $\mathbf{\underline{\overset{0}{x}\overset{+}{z}}}$ at best as no $x$ is contained.
\item From iii. and iv., it follows that $\kwd{SEQ}(A, B)$ can be correctly classified as $\mathbf{\underline{\overset{0}{x}\overset{+}{z}}}$.
\end{enumerate}
\end{description}
\end{enumerate}
\item[$\mathbf{\underline{\overset{-}{x}\overset{+}{z}}}$]
\begin{enumerate}
\item From Definition \ref{def:isp_1'l}, and the classification of the process block in the column \textbf{B} of the table, it follows that there exists an execution belonging to the process block containing a task having $z$ annotated, and having another task on its left having $y$ annotated.
\begin{description}
\item[SEQ]
\begin{enumerate}
\item From Definition \ref{def:ser}, it follows that the possible executions of a process block $\kwd{SEQ}(A, B)$ are the concatenation of an execution of $A$ and an execution of $B$.
\item From i., 1., and (a), it follows that there exists an execution in $\kwd{SEQ}(A, B)$ containing $z$ and no $y$ on its left, since the execution containing $y$ is appended on the right.
\item From ii. and Definition \ref{def:isp_1'}, it follows that $\kwd{SEQ}(A, B)$ can be correctly classified as $\mathbf{\underline{\overset{0}{x}\overset{+}{z}}}$.
\end{enumerate}
\item[AND]
\begin{enumerate}
\item From Definition \ref{def:ser}, it follows that each execution resulting from $\kwd{AND}(A, B)$ consists of an execution of $A$ interleaved with an execution of $B$.
\item From i., 1., and (a), it follows that the aggregated execution does not contain a $x$.
\item From Lemma \ref{lem:inclusivity} and the result from the \kwd{SEQ} column, it follows that $\kwd{AND}(A, B)$ is at least classified as $\mathbf{\underline{\overset{0}{x}\overset{+}{z}}}$.
\item From ii., Definition \ref{def:isp_1'}, Definition \ref{def:isp_3} and the preference ordering of the classes in Figure \ref{fig:ad1s_c_latt}, it follows that $\kwd{AND}(A, B)$ is classified as $\mathbf{\underline{\overset{0}{x}\overset{+}{z}}}$ at best as no $x$ is contained.
\item From iii. and iv., it follows that $\kwd{SEQ}(A, B)$ can be correctly classified as $\mathbf{\underline{\overset{0}{x}\overset{+}{z}}}$.
\end{enumerate}
\end{description}
\end{enumerate}
\item[$\mathbf{\underline{\overset{+}{x}\overset{+}{z}}}$]
\begin{enumerate}
\item Independently on the type of the process block, the correctness of the row follows directly from Lemma \ref{l:c3a}.
\end{enumerate}
\end{description}
\item We have shown that, for each row the \kwd{SEQ} and \kwd{AND} aggregation between the states correctly aggregates in the described result
\item The correctness of column \kwd{XOR} follows directly from the preference order defined in the lattice shown in Figure \ref{fig:ad1s_c_latt}.
\item From 3., and 4., it follows that the Aggregations in Table \ref{tab:1'} are correct.
\end{enumerate}
\end{proof}

\begin{table}[ht!]
\centering
\begin{tabular}{|c|c|c|c|}
\hline
A & B & \kwd{SEQ}(A, B) & \kwd{AND}(A, B) \\ \hline
$\mathbf{\underline{\overset{-}{x}\overset{+}{z}}}$ & $\mathbf{\underline{\overset{0}{x}\overset{0}{z}}}$ & $\mathbf{\underline{\overset{-}{x}\overset{+}{z}}}$ & $\mathbf{\underline{\overset{-}{x}\overset{+}{z}}}$ \\ \hline
$\mathbf{\underline{\overset{-}{x}\overset{+}{z}}}$ & $\mathbf{\underline{\overset{-}{xz}}}$ & $\mathbf{\underline{\overset{-}{x}\overset{+}{z}}}$ & $\mathbf{\underline{\overset{-}{x}\overset{+}{z}}}$ \\ \hline
$\mathbf{\underline{\overset{-}{x}\overset{+}{z}}}$ & $\mathbf{\underline{\overset{+}{x}\overset{0}{z}}}$ & $\mathbf{\underline{\overset{-}{x}\overset{+}{z}}}$ / $\mathbf{\underline{\overset{+}{x}\overset{0}{z}}}$ & $\mathbf{\underline{\overset{+}{x}\overset{+}{z}}}$ \\ \hline
$\mathbf{\underline{\overset{-}{x}\overset{+}{z}}}$ & $\mathbf{\underline{\overset{+}{x}\overset{-}{z}}}$ & $\mathbf{\underline{\overset{-}{x}\overset{+}{z}}}$ / $\mathbf{\underline{\overset{+}{x}\overset{-}{z}}}$ & $\mathbf{\underline{\overset{+}{x}\overset{+}{z}}}$ \\ \hline
$\mathbf{\underline{\overset{-}{x}\overset{+}{z}}}$ & $\mathbf{\underline{\overset{0}{x}\overset{+}{z}}}$ & $\mathbf{\underline{\overset{-}{x}\overset{+}{z}}}$ & $\mathbf{\underline{\overset{0}{x}\overset{+}{z}}}$ \\ \hline
$\mathbf{\underline{\overset{-}{x}\overset{+}{z}}}$ & $\mathbf{\underline{\overset{-}{x}\overset{+}{z}}}$ & $\mathbf{\underline{\overset{-}{x}\overset{+}{z}}}$ & $\mathbf{\underline{\overset{-}{x}\overset{+}{z}}}$ \\ \hline
$\mathbf{\underline{\overset{-}{x}\overset{+}{z}}}$ & $\mathbf{\underline{\overset{+}{x}\overset{+}{z}}}$ & $\mathbf{\underline{\overset{+}{x}\overset{+}{z}}}$ & $\mathbf{\underline{\overset{+}{x}\overset{+}{z}}}$ \\ \hline
\end{tabular}
\caption{Aggregation Table from Interval Sub-Pattern Evaluation State}\label{tab:1'l}
\end{table}

\begin{proof}[Table \ref{tab:1'l}]
\begin{enumerate}
\item From Definition \ref{def:isp_1'l}, and column \textbf{A} of the table, it follows that there exists an execution belonging to the process block containing a task having $z$ annotated, and a task on its left having $y$ annotated.
\item We prove by case each row of the table and each of the two different types of aggregation.

\begin{description}
\item[$\mathbf{\underline{\overset{0}{x}\overset{0}{z}}}$] Independently on the type of the process block, the correctness of the row follows directly from Lemma \ref{l:neutral_isp}.
\item[$\mathbf{\underline{\overset{-}{xz}}}$]
\begin{enumerate}
\item From Definition \ref{def:isp_0-}, and the classification of the process block in the column \textbf{B} of the table, it follows that each execution of \textbf{B} contains at least a task having $y$ annotated, and no other literals beneficial towards the pattern fulfilment.
\begin{description}
\item[SEQ]
\begin{enumerate}
\item From Definition \ref{def:ser}, it follows that the possible executions of a process block $\kwd{SEQ}(A, B)$ are the concatenation of an execution of $A$ and an execution of $B$.
\item From i., 1., and (a), it follows that there exists an execution in $\kwd{SEQ}(A, B)$ containing $z$ and a $y$ on its left.
\item From ii. and Definition \ref{def:isp_1'l}, it follows that $\kwd{SEQ}(A, B)$ can be correctly classified as $\mathbf{\underline{\overset{-}{x}\overset{+}{z}}}$.
\end{enumerate}
\item[AND]
\begin{enumerate}
\item From Lemma \ref{l:commutativity}, it follows that $\kwd{AND}(A, B)$ is classified as the same class as the row $\mathbf{\underline{\overset{-}{xz}}}$ $\mathbf{\underline{\overset{-}{x}\overset{+}{z}}}$ in Table \ref{tab:0-}, which is correctly $\mathbf{\underline{\overset{-}{x}\overset{+}{z}}}$.
\end{enumerate}
\end{description}
\end{enumerate}
\item[$\mathbf{\underline{\overset{+}{x}\overset{0}{z}}}$]
\begin{enumerate}
\item From Definition \ref{def:isp_1}, and the classification of the process block in the column \textbf{B} of the table, it follows that there exists an execution belonging to the process block containing a task having $x$ annotated, and no task on its right having $y$ annotated.
\begin{description}
\item[SEQ]
\begin{enumerate}
\item From Definition \ref{def:ser}, it follows that the possible executions of a process block $\kwd{SEQ}(A, B)$ are the concatenation of an execution of $A$ and an execution of $B$.
\item From i., 1., and (a), it follows that there exists an execution in $\kwd{SEQ}(A, B)$ containing $z$ and a $x$ on its right, as each execution from $B$ is appended on the right. While the $x$ has no $y$ on its right, and $z$ has a $y$ on its left.
\item From ii. and Definition \ref{def:isp_1'l}, it follows that $\kwd{SEQ}(A, B)$ can be correctly classified as $\mathbf{\underline{\overset{-}{x}\overset{+}{z}}}$.
\item From ii. and Definition \ref{def:isp_1}, it follows that $\kwd{SEQ}(A, B)$ can be correctly classified as $\mathbf{\underline{\overset{+}{x}\overset{0}{z}}}$.
\item From iii., iv. and the priority ordering in Figure \ref{fig:ad1s_c_latt}, it follows that $\kwd{SEQ}(A, B)$ is correctly classified as both $\mathbf{\underline{\overset{+}{x}\overset{0}{z}}}$ and $\mathbf{\underline{\overset{-}{x}\overset{+}{z}}}$.
\end{enumerate}
\item[AND]
\begin{enumerate}
\item From Lemma \ref{l:commutativity}, it follows that $\kwd{AND}(A, B)$ is classified as the same class as the row $\mathbf{\underline{\overset{+}{x}\overset{0}{z}}}$ $\mathbf{\underline{\overset{-}{x}\overset{+}{z}}}$ in Table \ref{tab:1}, which is correctly $\mathbf{\underline{\overset{+}{x}\overset{+}{z}}}$.
\end{enumerate}
\end{description}
\end{enumerate}
\item[$\mathbf{\underline{\overset{+}{x}\overset{-}{z}}}$]
\begin{enumerate}
\item From Definition \ref{def:isp_1r}, and the classification of the process block in the column \textbf{B} of the table, it follows that there exists an execution belonging to the process block containing a task having $x$ annotated, and a task on its right having $y$ annotated.
\begin{description}
\item[SEQ]
\begin{enumerate}
\item From Definition \ref{def:ser}, it follows that the possible executions of a process block $\kwd{SEQ}(A, B)$ are the concatenation of an execution of $A$ and an execution of $B$.
\item From i., 1., and (a), it follows that there exists an execution in $\kwd{SEQ}(A, B)$ containing $z$ and a $x$ on its right, as each execution from $B$ is appended on the right. While the $x$ has a $y$ on its right, and $z$ has a $y$ on its left.
\item From ii. and Definition \ref{def:isp_1'l}, it follows that $\kwd{SEQ}(A, B)$ can be correctly classified as $\mathbf{\underline{\overset{-}{x}\overset{+}{z}}}$.
\item From ii. and Definition \ref{def:isp_1r}, it follows that $\kwd{SEQ}(A, B)$ can be correctly classified as $\mathbf{\underline{\overset{+}{x}\overset{-}{z}}}$.
\item From iii., iv. and the priority ordering in Figure \ref{fig:ad1s_c_latt}, it follows that $\kwd{SEQ}(A, B)$ is correctly classified as both $\mathbf{\underline{\overset{+}{x}\overset{-}{z}}}$ and $\mathbf{\underline{\overset{-}{x}\overset{+}{z}}}$.
\end{enumerate}
\item[AND]
\begin{enumerate}
\item From Lemma \ref{l:commutativity}, it follows that $\kwd{AND}(A, B)$ is classified as the same class as the row $\mathbf{\underline{\overset{+}{x}\overset{-}{z}}}$ $\mathbf{\underline{\overset{-}{x}\overset{+}{z}}}$ in Table \ref{tab:1r}, which is correctly $\mathbf{\underline{\overset{+}{x}\overset{+}{z}}}$.
\end{enumerate}
\end{description}
\end{enumerate}
\item[$\mathbf{\underline{\overset{0}{x}\overset{+}{z}}}$]
\begin{enumerate}
\item From Definition \ref{def:isp_1'}, and the classification of the process block in the column \textbf{B} of the table, it follows that there exists an execution belonging to the process block containing a task having $z$ annotated, and no tasks on its left having $y$ annotated.
\begin{description}
\item[SEQ]
\begin{enumerate}
\item From Definition \ref{def:ser}, it follows that the possible executions of a process block $\kwd{SEQ}(A, B)$ are the concatenation of an execution of $A$ and an execution of $B$.
\item From i., 1., and (a), it follows that there exists an execution in $\kwd{SEQ}(A, B)$ containing $z$ and a $y$ on its left.
\item From ii. and Definition \ref{def:isp_1'l}, it follows that $\kwd{SEQ}(A, B)$ can be correctly classified as $\mathbf{\underline{\overset{-}{x}\overset{+}{z}}}$.
\end{enumerate}
\item[AND]
\begin{enumerate}
\item From Lemma \ref{l:commutativity}, it follows that $\kwd{AND}(A, B)$ is classified as the same class as the row $\mathbf{\underline{\overset{0}{x}\overset{+}{z}}}$ $\mathbf{\underline{\overset{-}{x}\overset{+}{z}}}$ in Table \ref{tab:1'}, which is correctly $\mathbf{\underline{\overset{0}{x}\overset{+}{z}}}$.
\end{enumerate}
\end{description}
\end{enumerate}
\item[$\mathbf{\underline{\overset{-}{x}\overset{+}{z}}}$]
\begin{enumerate}
\item From Definition \ref{def:isp_1'l}, and the classification of the process block in the column \textbf{B} of the table, it follows that there exists an execution belonging to the process block containing a task having $z$ annotated, and having another task on its left having $y$ annotated.
\begin{description}
\item[SEQ]
\begin{enumerate}
\item From Definition \ref{def:ser}, it follows that the possible executions of a process block $\kwd{SEQ}(A, B)$ are the concatenation of an execution of $A$ and an execution of $B$.
\item From i., 1., and (a), it follows that there exists an execution in $\kwd{SEQ}(A, B)$ containing $z$ and a $y$ on its left.
\item From ii. and Definition \ref{def:isp_1'l}, it follows that $\kwd{SEQ}(A, B)$ can be correctly classified as $\mathbf{\underline{\overset{-}{x}\overset{+}{z}}}$.
\end{enumerate}
\item[AND]
\begin{enumerate}
\item From Definition \ref{def:ser}, it follows that each execution resulting from $\kwd{AND}(A, B)$ consists of an execution of $A$ interleaved with an execution of $B$.
\item From i., 1., and (a), it follows that the aggregated execution does not contain a $x$.
\item From i., 1., and (a), it follows that the aggregated execution always contains a $z$ with a $y$ on its left.
\item From Lemma \ref{lem:inclusivity} and the result from the \kwd{SEQ} column, it follows that $\kwd{AND}(A, B)$ is at least classified as $\mathbf{\underline{\overset{-}{x}\overset{+}{z}}}$.
\item From ii., iii., Definition \ref{def:isp_1'l}, Definition \ref{def:isp_1'}, Definition \ref{def:isp_3} and the preference ordering of the classes in Figure \ref{fig:ad1s_c_latt}, it follows that $\kwd{AND}(A, B)$ is classified as $\mathbf{\underline{\overset{0}{x}\overset{+}{z}}}$ at best as no $x$ is contained in the aggregated execution and $y$ cannot be removed from the left of $z$.
\item From iv. and v., it follows that $\kwd{SEQ}(A, B)$ can be correctly classified as $\mathbf{\underline{\overset{-}{x}\overset{+}{z}}}$.
\end{enumerate}
\end{description}
\end{enumerate}
\item[$\mathbf{\underline{\overset{+}{x}\overset{+}{z}}}$]
\begin{enumerate}
\item Independently on the type of the process block, the correctness of the row follows directly from Lemma \ref{l:c3a}.
\end{enumerate}
\end{description}
\item We have shown that, for each row the \kwd{SEQ} and \kwd{AND} aggregation between the states correctly aggregates in the described result
\item The correctness of column \kwd{XOR} follows directly from the preference order defined in the lattice shown in Figure \ref{fig:ad1s_c_latt}.
\item From 3., and 4., it follows that the Aggregations in Table \ref{tab:1'l} are correct.
\end{enumerate}
\end{proof}

\begin{table}[ht!]
\centering
\begin{tabular}{|c|c|c|c|}
\hline
A & B & \kwd{SEQ}(A, B) & \kwd{AND}(A, B) \\ \hline
$\mathbf{\underline{\overset{+}{x}\overset{+}{z}}}$ & $\mathbf{\underline{\overset{0}{x}\overset{0}{z}}}$ & $\mathbf{\underline{\overset{+}{x}\overset{+}{z}}}$ & $\mathbf{\underline{\overset{+}{x}\overset{+}{z}}}$ \\ \hline
$\mathbf{\underline{\overset{+}{x}\overset{+}{z}}}$ & $\mathbf{\underline{\overset{-}{xz}}}$ & $\mathbf{\underline{\overset{+}{x}\overset{+}{z}}}$ & $\mathbf{\underline{\overset{+}{x}\overset{+}{z}}}$ \\ \hline
$\mathbf{\underline{\overset{+}{x}\overset{+}{z}}}$ & $\mathbf{\underline{\overset{+}{x}\overset{0}{z}}}$ & $\mathbf{\underline{\overset{+}{x}\overset{+}{z}}}$ & $\mathbf{\underline{\overset{+}{x}\overset{+}{z}}}$ \\ \hline
$\mathbf{\underline{\overset{+}{x}\overset{+}{z}}}$ & $\mathbf{\underline{\overset{+}{x}\overset{-}{z}}}$ & $\mathbf{\underline{\overset{+}{x}\overset{+}{z}}}$ & $\mathbf{\underline{\overset{+}{x}\overset{+}{z}}}$ \\ \hline
$\mathbf{\underline{\overset{+}{x}\overset{+}{z}}}$ & $\mathbf{\underline{\overset{0}{x}\overset{+}{z}}}$ & $\mathbf{\underline{\overset{+}{x}\overset{+}{z}}}$ & $\mathbf{\underline{\overset{+}{x}\overset{+}{z}}}$ \\ \hline
$\mathbf{\underline{\overset{+}{x}\overset{+}{z}}}$ & $\mathbf{\underline{\overset{-}{x}\overset{+}{z}}}$ & $\mathbf{\underline{\overset{+}{x}\overset{+}{z}}}$ & $\mathbf{\underline{\overset{+}{x}\overset{+}{z}}}$ \\ \hline
$\mathbf{\underline{\overset{+}{x}\overset{+}{z}}}$ & $\mathbf{\underline{\overset{+}{x}\overset{+}{z}}}$ & $\mathbf{\underline{\overset{+}{x}\overset{+}{z}}}$ & $\mathbf{\underline{\overset{+}{x}\overset{+}{z}}}$ \\ \hline
\end{tabular}
\caption{Aggregation Table from Interval Sub-Pattern Evaluation State}\label{tab:3}
\end{table}

\begin{proof}[Table \ref{tab:3}]
As each of the aggregations shown in Table \ref{tab:3} contains at least an element belonging to the aggregation class $\mathbf{\underline{\overset{+}{x}\overset{+}{z}}}$, the correctness of the columns \kwd{SEQ} and \kwd{AND} follows directly from Lemma \ref{l:c3a}.
\end{proof}
\newpage
\subsection{Interval Overnode Pattern}

\begin{definition}[Interval Overnode Pattern $\mathbf{\overset{+}{x}t\overset{+}{z}}$]\label{def:iop_+t+}
Fulfilment class: there exists an execution belonging to the process block containing the \emph{trigger leaf}, a $x$ on its left, a $z$ on the \emph{trigger leaf} right, and no $y$ between $x$ and $z$.

\noindent\textbf{Formally}: 

\noindent Given a process block $B$ belongs to this class if and only if:
\begin{itemize}
\item $\exists \exe \in \Exe{B}$ such that:
\begin{itemize}
\item $\exists x, t_{t}, z \in \exe$ such that $x \preceq t_{t} \preceq z$, and
\begin{itemize}
\item $\not \exists y \in \exe$ such that $x \preceq y \preceq z$.
\end{itemize}
\end{itemize}
\end{itemize}
\end{definition}

\begin{definition}[Interval Overnode Pattern $\mathbf{\overset{+}{x}t\overset{0}{z}}$]\label{def:iop_+t0}
Left class: there exists an execution belonging to the process block containing the \emph{trigger leaf}, a $x$ on its left, and no $y$ between $x$ and the \emph{trigger leaf}. Moreover there are no $z$ or $y$ on the right of the \emph{trigger leaf}.

\noindent\textbf{Formally}: 

\noindent Given a process block $B$ belongs to this class if and only if:
\begin{itemize}
\item $\exists \exe \in \Exe{B}$ such that:
\begin{itemize}
\item $\exists x, t_{t}\in \exe$ such that $x \preceq t_{t}$, and
\begin{itemize}
\item $\not \exists y \in \exe$ such that $x \preceq y \preceq t_t$, and
\item $\not \exists z \in \exe$ such that $t_{t} \preceq z$, and
\item $\not \exists y \in \exe$ such that $t_{t} \preceq y$.
\end{itemize}
\end{itemize}
\end{itemize}
\end{definition}

\begin{definition}[Interval Overnode Pattern $\mathbf{\overset{0}{x}t\overset{+}{z}}$]\label{def:iop_0t+}
Right class: there exists an execution belonging to the process block containing the \emph{trigger leaf}, a $z$ on its right, and no $y$ between the \emph{trigger leaf} and $z$. Moreover there are no $x$ or $y$ on the left of the \emph{trigger leaf}.

\noindent\textbf{Formally}: 

\noindent Given a process block $B$ belongs to this class if and only if:
\begin{itemize}
\item $\exists \exe \in \Exe{B}$ such that:
\begin{itemize}
\item $\exists z, t_{t}\in \exe$ such that $t_{t} \preceq z$, and
\begin{itemize}
\item $\not \exists y \in \exe$ such that $t_{t} \preceq y \preceq z$, and
\item $\not \exists y \in \exe$ such that $y \preceq t_{t}$.
\end{itemize}
\end{itemize}
\end{itemize}
\end{definition}

\begin{definition}[Interval Overnode Pattern $\mathbf{\overset{+}{x}t\overset{-}{z}}$]\label{def:iop_+t-}
Left right blocked class: there exists an execution belonging to the process block containing the \emph{trigger leaf}, a $x$ on its left, and no $y$ between $x$ and the \emph{trigger leaf}. Moreover there is $y$ on the right of the \emph{trigger leaf}.

\noindent\textbf{Formally}: 

\noindent Given a process block $B$ belongs to this class if and only if:
\begin{itemize}
\item $\exists \exe \in \Exe{B}$ such that:
\begin{itemize}
\item $\exists x, t_{t}\in \exe$ such that $x \preceq t_{t}$, and
\begin{itemize}
\item $\not \exists y \in \exe$ such that $x \preceq y \preceq t_t$, and
\item $\exists y \in \exe$ such that $t_{t} \preceq y$.
\end{itemize}
\end{itemize}
\end{itemize}
\end{definition}

\begin{definition}[Interval Overnode Pattern $\mathbf{\overset{0}{x}t\overset{0}{z}}$]\label{def:iop_0t0}
Neutral class: there exists an execution belonging to the process block containing the \emph{trigger leaf}, there is no $x$ on the left of the \emph{trigger leaf}, and there is no $z$ on the right of the \emph{trigger leaf}. Moreover there is $y$ in the execution.

\noindent\textbf{Formally}: 

\noindent Given a process block $B$ belongs to this class if and only if:
\begin{itemize}
\item $\exists \exe \in \Exe{B}$ such that:
\begin{itemize}
\item $\exists t_{t} \in \exe$ such that
\begin{itemize}
\item $\not \exists x \in \exe$ such that $x \preceq t_t$,
\item $\not \exists z \in \exe$ such that $t_{t} \preceq z$, and
\item $\not \exists y \in \exe$.
\end{itemize}
\end{itemize}
\end{itemize}
\end{definition}

\begin{definition}[Interval Overnode Pattern $\mathbf{\overset{-}{x}t\overset{+}{z}}$]\label{def:iop_-t+}
Right left blocked class: there exists an execution belonging to the process block containing the \emph{trigger leaf}, a $z$ on its right, and no $y$ between the \emph{trigger leaf} and $z$. Moreover there is a $y$ on the left of the \emph{trigger leaf}.

\noindent\textbf{Formally}: 

\noindent Given a process block $B$ belongs to this class if and only if:
\begin{itemize}
\item $\exists \exe \in \Exe{B}$ such that:
\begin{itemize}
\item $\exists z, t_{t}\in \exe$ such that $t_{t} \preceq z$, and
\begin{itemize}
\item $\not \exists y \in \exe$ such that $t_{t} \preceq y \preceq z$, and
\item $\exists y \in \exe$ such that $y \preceq t_{t}$.
\end{itemize}
\end{itemize}
\end{itemize}
\end{definition}

\begin{definition}[Interval Overnode Pattern $\mathbf{\overset{0}{x}t\overset{-}{z}}$]\label{def:iop_0t-}
Right blocked class: there exists an execution belonging to the process block containing the \emph{trigger leaf}, and there is a $y$ on its right.

\noindent\textbf{Formally}: 

\noindent Given a process block $B$ belongs to this class if and only if:
\begin{itemize}
\item $\exists \exe \in \Exe{B}$ such that:
\begin{itemize}
\item $\exists y \in \exe$ such that $t_{t} \preceq y$.
\end{itemize}
\end{itemize}
\end{definition}

\begin{definition}[Interval Overnode Pattern $\mathbf{\overset{-}{x}t\overset{0}{z}}$]\label{def:iop_-t0}
Left blocked class: there exists an execution belonging to the process block containing the \emph{trigger leaf}, and there is a $y$ on its left.

\noindent\textbf{Formally}: 

\noindent Given a process block $B$ belongs to this class if and only if:
\begin{itemize}
\item $\exists \exe \in \Exe{B}$ such that:
\begin{itemize}
\item $\exists y \in \exe$ such that $y \preceq t_{t}$.
\end{itemize}
\end{itemize}
\end{definition}

\begin{definition}[Interval Overnode Pattern $\mathbf{\overset{-}{x}t\overset{-}{z}}$]\label{def:iop_-t-}
Both blocked class: for each execution belonging to the process block, it contains the \emph{trigger leaf}, and there are $y$s on both of its right and left side.

\noindent\textbf{Formally}: 

\noindent Given a process block $B$ belongs to this class if and only if:
\begin{itemize}
\item $\exists \exe \in \Exe{B}$ such that:
\begin{itemize}
\item $\exists y \in \exe$ such that $y \preceq t_{t}$, and
\item $\exists y \in \exe$ such that $t_{t} \preceq y$.
\end{itemize}
\end{itemize}
\end{definition}

\begin{theorem}[Classification Completeness for  Interval Overnode Pattern]
The set of possible evaluations of  Interval Overnode Pattern is completely covered by the provided set of classifications.
\end{theorem}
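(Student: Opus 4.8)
The plan is to follow the same counting strategy used for the \emph{Interval Sub-Pattern} completeness proof, adapted to the fact that the \emph{Interval Overnode Pattern} is always anchored at a fixed \emph{trigger leaf} $t$. First I would note that, since the pattern is only ever evaluated relative to a trigger leaf that is by construction present in every execution under consideration, the trigger component contributes no free degree of freedom. The only quantities that can vary across executions are therefore the status of the region to the left of $t$ (governed by $x$ and by any $y$ occurring between $x$ and $t$) and the status of the region to the right of $t$ (governed by $z$ and by any $y$ occurring between $t$ and $z$).

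Next I would argue, exactly as for the Left and Right Sub-Pattern classifications, that each of these two regions admits precisely three qualitative evaluations: the side requirement can be \emph{fulfilled} (the marker $+$: the desired element occurs on the relevant side of $t$ with no intervening $y$), \emph{neutral} (the marker $0$: neither the desired element nor a blocking $y$ occurs on that side), or \emph{blocked} (the marker $-$: a $y$ occurs on that side). I would then establish that the two regions are independent: although both forbid the same undesired literal $y$, a $y$ lying strictly left of $t$ bears only on the left marker and a $y$ lying strictly right of $t$ bears only on the right marker, so the two markers are fixed by disjoint portions of the execution and may be chosen independently.

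Combining these observations yields $3 \times 3 = 9$ joint evaluation states, and I would conclude by matching them one-to-one with the nine classes of Definitions \ref{def:iop_+t+}--\ref{def:iop_-t-}, namely $\mathbf{\overset{+}{x}t\overset{+}{z}}$, $\mathbf{\overset{+}{x}t\overset{0}{z}}$, $\mathbf{\overset{0}{x}t\overset{+}{z}}$, $\mathbf{\overset{+}{x}t\overset{-}{z}}$, $\mathbf{\overset{0}{x}t\overset{0}{z}}$, $\mathbf{\overset{-}{x}t\overset{+}{z}}$, $\mathbf{\overset{0}{x}t\overset{-}{z}}$, $\mathbf{\overset{-}{x}t\overset{0}{z}}$ and $\mathbf{\overset{-}{x}t\overset{-}{z}}$. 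Here the correspondence is cleaner than for the Interval Sub-Pattern, where two blocked cases had to be merged into $\mathbf{\underline{\overset{-}{xz}}}$: all nine combinations are listed explicitly, so the map from evaluation states to classes is a genuine bijection rather than a surjection.

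The step I expect to be the main obstacle is making the exhaustiveness rigorous against the deliberately loose definitions of the blocked classes. Definitions \ref{def:iop_0t-}, \ref{def:iop_-t0} and \ref{def:iop_-t-} specify a blocked side only through the existence of a $y$ on that side, without pinning down the status of the opposite side, so I would have to verify that this looseness creates overlaps rather than gaps, i.e. that every execution whose left side is blocked while its right side is neutral or fulfilled is still captured (by $\mathbf{\overset{-}{x}t\overset{0}{z}}$ or $\mathbf{\overset{-}{x}t\overset{+}{z}}$, respectively), and symmetrically on the other side. Confirming that the abstract $\{+,0,-\}^2$ enumeration genuinely exhausts the evaluation space given these overlapping definitions is the only non-mechanical part of the argument.
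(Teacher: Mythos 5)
Your proposal follows essentially the same route as the paper's proof: three qualitative evaluations ($+$, $0$, $-$) for each side of the trigger leaf, independence of the two sides, hence $3^2 = 9$ combinations matched to the nine explicitly listed classes. The one point the paper makes that your strict-left/strict-right independence argument omits is the degenerate case where the undesired element $y$ sits at the same ordinal position as the trigger task itself (so that $y \preceq t_t$ and $t_t \preceq y$ both hold); the paper observes that this falsifies both sides simultaneously and is therefore still captured by $\mathbf{\overset{-}{x}t\overset{-}{z}}$, closing exactly the kind of potential gap your final paragraph worries about.
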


\begin{proof}
\begin{enumerate}
\item The two sides on the left ($x$) and on the right ($z$) of $t$ in  Interval Overnode Pattern allow 3 possible evaluations each: whether the partial requirement is satisfied, failing, or in a neutral state.
\item The two sides are independent. Considering that the only shared element between the two sides is $t$, even if the undesired element $y$ is ordinally at the same position as $t$, meaning that $t \leq y$ and $t \geq y$, then the consequence is that both sides are falsified.
\item From 1. and the independence from 2., it follows that the amount of possible combinations in  Interval Overnode Pattern is $3^2$, hence 9 possible combinations. Moreover the case from 2. is also covered within the 9 combinations.
\item Therefore, every possible evaluation of  Interval Overnode Pattern is captured by one of the 9 provided classifications.
\end{enumerate}
\end{proof}

\subsubsection{Lemmas}

\begin{lemma}[Aggregation Neutral Class]\label{l:neutral_iop}
Given a process block $A$, assigned to the evaluation class $\mathbf{\overset{0}{x}t\overset{0}{z}}$, and another process block $B$, assigned to any of the available evaluation classes. Let $C$ be a process block having $A$ and $B$ as its sub-blocks, then the evaluation class of $C$ is the same class as the process block $B$.
\end{lemma}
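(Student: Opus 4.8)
The plan is to prove Lemma~\ref{l:neutral_iop} by contradiction, following the same template used for the sub-pattern neutral lemmas (Lemma~\ref{l:neutral}, Lemma~\ref{l:neutral_rsp}, Lemma~\ref{l:neutral_isp}), but adapted to the two-sided Interval Overnode Pattern and its preference lattice in Figure~\ref{fig:ad1s_stem_ev}. First I would unpack what the hypothesis $\kwd{cl}(A) = \mathbf{\overset{0}{x}t\overset{0}{z}}$ actually entails: by Definition~\ref{def:iop_0t0}, $A$ owns an execution carrying the \emph{trigger leaf} with no $x$ to its left, no $z$ to its right, and no relevant $y$; and because the neutral class is $A$'s retained (best) class, the preference lattice forces $A$ to satisfy none of the strictly better classes $\mathbf{\overset{+}{x}t\overset{0}{z}}$, $\mathbf{\overset{0}{x}t\overset{+}{z}}$, $\mathbf{\overset{+}{x}t\overset{+}{z}}$ (Definitions~\ref{def:iop_+t0},~\ref{def:iop_0t+},~\ref{def:iop_+t+}). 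Intuitively this means $A$ contributes no beneficial $x$ or $z$ and no blocking $y$ that survives into a qualifying position, i.e.\ it is inert for the pattern.

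Next I would assume, for contradiction, that $C$ with sub-blocks $A$ and $B$ is classified differently from $B$. By Definition~\ref{def:ser}, the executions of $C$ are either the concatenations (if $C$ is of type \kwd{SEQ}) or the interleavings (if $C$ is of type \kwd{AND}) of executions of $A$ and $B$, in either sibling order. A change in classification means some execution of $C$ either attains a class strictly better than $B$'s, or every execution of $C$ falls to a class strictly worse; in both cases the discrepancy must be caused by material $A$ injects, since $B$'s own executions survive as sub-sequences of $C$'s executions. I would then show that the only way $A$ can create a newly qualifying interval (supplying an $x$ before the trigger or a $z$ after it with no intervening $y$) or a new blocking condition (a $y$ landing in a formerly clean interval) is for $A$ itself to contain such an $x$, $z$, or $y$ in a witnessing position, which by Definitions~\ref{def:iop_+t0}--\ref{def:iop_-t-} would place $A$ in a class other than neutral, contradicting the hypothesis. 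Because the left ($x$) and right ($z$) sides of the pattern are independent, I would argue the two contributions separately and then combine them, exactly as the completeness argument for the pattern does.

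The main obstacle I anticipate is the bookkeeping around the \emph{trigger leaf}, which is unique to a single sub-block: I must track which of $A$ and $B$ actually carries $t_t$, and verify both that when the inert block $A$ carries the trigger its neutral execution cannot combine with an $x$ or $z$ supplied by $B$ to manufacture a fulfilment, and that when $B$ carries the trigger, $A$'s inert prefix, suffix, or interleaving cannot shift $B$'s left/right status. The \kwd{AND} case is the delicate part, since interleaving lets $A$'s tasks land anywhere relative to $B$'s interval; here I would lean on the fact that $A$ has no $y$ (so interleaving cannot block a previously clean interval) and no $x$ or $z$ (so interleaving cannot supply a missing endpoint), which pins $C$'s best attainable class to exactly $B$'s. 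The remaining cases reduce to the neutrality argument already established for Lemma~\ref{l:neutral}, so once the trigger-placement and interleaving subtleties are dispatched, the conclusion $\kwd{cl}(C) = \kwd{cl}(B)$ follows.
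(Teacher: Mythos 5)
Your proposal matches the paper's approach: the paper proves this lemma with a single line --- ``This proof follows closely the proof for Lemma~\ref{l:neutral}'' --- which is precisely the contradiction argument you describe, where a change of class in $C$ relative to $B$ would force $A$ to contain a relevant $x$, $y$, or $z$ in a witnessing position, contradicting $A$'s neutral classification under the preference lattice. Your extra bookkeeping about which sub-block carries the trigger leaf and about \kwd{AND} interleavings only makes explicit what the paper leaves implicit in its deferral.
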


\begin{proof}
This proof follows closely the proof for Lemma \ref{l:neutral}
\end{proof}

\begin{lemma}[$\mathbf{\underline{\overset{+}{x}\overset{+}{z}}}$ super class]\label{l:3super}
Independently from the classification of the overnode's overnode child, when aggregated with the overnode's undernode class $\mathbf{\underline{\overset{+}{x}\overset{+}{z}}}$, it always results in $\mathbf{\overset{+}{x}t\overset{+}{z}}$.
\end{lemma}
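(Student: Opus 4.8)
The plan is to exhibit, for the process block $C$ corresponding to the overnode being classified, an explicit execution witnessing the fulfilment class $\mathbf{\overset{+}{x}t\overset{+}{z}}$ of Definition~\ref{def:iop_+t+}, and then to invoke the preference lattice of Figure~\ref{fig:ad1s_stem_ev} to discard every other class. Since $\mathbf{\underline{\overset{+}{x}\overset{+}{z}}}$ is an Interval Sub-Pattern class, the overnode $C$ is necessarily of type \kwd{AND}: it has one child on the stem (its \emph{overnode child}, carrying some Interval Overnode Pattern class) together with the undernode children whose aggregation yields $\mathbf{\underline{\overset{+}{x}\overset{+}{z}}}$. First I would collect the two executions I intend to weave together. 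From the overnode child, every Interval Overnode Pattern class of Definitions~\ref{def:iop_+t+}--\ref{def:iop_-t-} guarantees an execution $\exe_2$ containing the trigger leaf $t_t$, while Definition~\ref{def:isp_3} guarantees an execution $\exe_1$ of the aggregated undernode block containing $x \prec z$ with no $y$ satisfying $x \preceq y \preceq z$.

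Next I would construct the witness using the \kwd{AND} interleaving semantics of Definition~\ref{def:ser}. The construction splits $\exe_2$ at $t_t$: its prefix up to $t_t$ is placed before $x$, the element $t_t$ is placed strictly between $x$ and $z$, and its suffix after $t_t$ is placed after $z$; meanwhile the elements of $\exe_1$ keep their relative order, so that in particular the $y$-free stretch between $x$ and $z$ is preserved. All three placements respect the internal orders of $\exe_1$ and $\exe_2$, so the result is a legal interleaving, hence an execution $\exe$ of $C$ in which $x \preceq t_t \preceq z$.

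The crux is verifying that no $y$ lands in the interval between $x$ and $z$ in $\exe$, which is precisely what Definition~\ref{def:iop_+t+} demands. Inside this interval sit only the elements of $\exe_1$ placed between $x$ and $z$ — free of $y$ by the choice of $\exe_1$ — and the trigger leaf $t_t$; every remaining element of $\exe_2$ has been pushed strictly before $x$ or strictly after $z$. The one point that must be argued carefully, and the main obstacle, is that $t_t$ itself does not carry a $y$ annotation: were it to do so, the trigger-leaf classification step would already have terminated the evaluation for this stem, so within the evaluation the trigger leaf is never a $y$. Granting this, the interval is $y$-free and $\exe$ satisfies Definition~\ref{def:iop_+t+}, so $C$ belongs to $\mathbf{\overset{+}{x}t\overset{+}{z}}$.

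Finally, since $\mathbf{\overset{+}{x}t\overset{+}{z}}$ is the maximal element of the Interval Overnode Pattern preference lattice in Figure~\ref{fig:ad1s_stem_ev}, any further class $C$ might also satisfy is strictly dominated and therefore removed during simplification, leaving $\mathbf{\overset{+}{x}t\overset{+}{z}}$ as the sole classification of $C$ irrespective of the overnode child's class. This mirrors the argument of Lemma~\ref{l:c3a} in the undernode setting, with the \kwd{AND} interleaving step here playing the role that concatenation and Lemma~\ref{lem:inclusivity} play there, and I would reuse that interleaving observation rather than reprove it.
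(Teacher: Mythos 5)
Your proposal is correct and follows essentially the same route as the paper's own proof: interleave (via the \kwd{AND} semantics of Definition~\ref{def:ser}) the trigger leaf into the $y$-free interval between $x$ and $z$ guaranteed by Definition~\ref{def:isp_3}, push the rest of the overnode child's execution outside that interval, note that the trigger leaf itself cannot carry $y$ because of the premature-exit check in the stem evaluation, and conclude via maximality of $\mathbf{\overset{+}{x}t\overset{+}{z}}$ in the lattice of Figure~\ref{fig:ad1s_stem_ev}. Your write-up is in fact more explicit than the paper's (the prefix/suffix split of the overnode child's execution and the verification of the $y$-free interval are only implicit there), but it is the same argument.
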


\begin{proof}
The correctness of Lemma \ref{l:3super} follows from Definition \ref{def:ser}, where the execution of a process block classified as $\mathbf{\underline{\overset{+}{x}\overset{+}{z}}}$, can be interleaved with another execution from whatever classification\footnote{Notice that the trigger task must not contain a $y$, otherwise such interleaving is not possible anymore (this is ensure by one of the premature exit condition in the stem algorithm).} and result in $\mathbf{\overset{+}{x}t\overset{+}{z}}$. According to Definition \ref{def:isp_3}, $\mathbf{\underline{\overset{+}{x}\overset{+}{z}}}$ contains an $x$ with a $z$ on its right and no $y$ inbetween. Thus independently than what appears around a trigger task $t$ and its classification, such task can be plugged within the interval identified in the undernode classified execution, while leaving outside the interval the remainder of the elements in the overnode, making them virtually useless for the classification, as independently from them, the resulting block clearly allows an execution satisfying the criteria described in Definition \ref{def:iop_+t+}, allowing the result to be always classified as $\mathbf{\overset{+}{x}t\overset{+}{z}}$, since is the most preferred class as show in Figure \ref{fig:ad1s_stem_ev}.
\end{proof}

\begin{lemma}[Top Classification]\label{l:topclass}
Independently from the classification of the overnode's undernode children, when aggregated with the overnode's overnode child classification $\mathbf{\overset{+}{x}t\overset{+}{z}}$, it always results in $\mathbf{\overset{+}{x}t\overset{+}{z}}$.
\end{lemma}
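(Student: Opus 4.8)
The plan is to exploit the fact that $\mathbf{\overset{+}{x}t\overset{+}{z}}$ is the maximal element of the overnode preference lattice (Figure~\ref{fig:ad1s_stem_ev}), so that once a contributing block already realises the full interval overnode pattern, no aggregation can demote it. The argument mirrors the ones for Lemma~\ref{l:3super} and Lemma~\ref{l:c3a}, with the roles of the overnode child and the undernode contribution swapped. First I would unfold the hypothesis through Definition~\ref{def:iop_+t+}: the overnode's overnode child $O$ admits an execution $\exe$ containing tasks ordered $x \preceq t_t \preceq z$ with no task carrying $y$ occurring between $x$ and $z$, where $t_t$ is the trigger leaf sitting on the stem inside $O$.

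Next I would show that aggregating $O$ with an arbitrary undernode block $B$ preserves such a witness, treating the two block types separately. For $\kwd{SEQ}$, Definition~\ref{def:ser} makes every execution of the aggregate a concatenation, so in both $\kwd{SEQ}(B, O)$ and $\kwd{SEQ}(O, B)$ the entire execution of $B$ lies strictly to one side of $\exe$; since $x$, $t_t$ and $z$ all sit inside $\exe$, no task of $B$ can fall between $x$ and $z$, and the witness survives, yielding again $\mathbf{\overset{+}{x}t\overset{+}{z}}$ by Definition~\ref{def:iop_+t+}. For $\kwd{AND}$, I would invoke Lemma~\ref{lem:inclusivity}: since $\Exe{\kwd{SEQ}(O, B)} \subseteq \Exe{\kwd{AND}(O, B)}$, the surviving sequential witness is also an execution of the parallel aggregate, so the $\kwd{AND}$ case reduces to the $\kwd{SEQ}$ one (using Lemma~\ref{l:commutativity} to dispose of the ordering).

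Finally, because an overnode typically has several undernode children, I would lift the pairwise statement to the full aggregation using compositionality (Lemma~\ref{l:compositionality}): the overnode classification is the left-to-right pairwise aggregation of the children, and every step that touches a block already classified $\mathbf{\overset{+}{x}t\overset{+}{z}}$ keeps that class, so a short induction closes the argument, with maximality in Figure~\ref{fig:ad1s_stem_ev} ruling out any strictly better alternative. The main obstacle I anticipate is the $\kwd{AND}$ case: one must be sure that some interleaving keeps every $y$-task of $B$ outside the interval delimited by $x$ and $z$, and the clean way to guarantee this is precisely the sequential-witness/inclusivity reduction rather than reasoning directly about interleavings. A secondary point to verify is that the trigger leaf $t_t$ itself carries no $y$ — but this holds already, both because $t_t$ lies between $x$ and $z$ in the witness and because the \kwd{checkLeafFail} exit of Algorithm~\ref{a:stem} discards trigger leaves annotated with $y$.
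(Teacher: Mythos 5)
Your proof is correct and takes essentially the same route as the paper: the paper's own argument likewise observes that, in the \kwd{AND} aggregations where this lemma is applied, the executions of the other sub-blocks can be appended entirely before or after the witness execution of the child classified $\mathbf{\overset{+}{x}t\overset{+}{z}}$, so the pattern of Definition~\ref{def:iop_+t+} is undisturbed, and then concludes by maximality of $\mathbf{\overset{+}{x}t\overset{+}{z}}$ in the preference lattice of Figure~\ref{fig:ad1s_stem_ev}. Your explicit appeal to Lemma~\ref{lem:inclusivity}, Lemma~\ref{l:commutativity} and Lemma~\ref{l:compositionality}, and your remark about the trigger leaf carrying no $y$, only formalise the same ``place everything outside the witness'' idea.
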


\begin{proof}
Considering Definition \ref{def:ser} and that the aggregations considered refer to \kwd{AND} blocks as the way executions from different sub blocks are merged together, considering that one of the sub-blocks is classified as $\mathbf{\overset{+}{x}t\overset{+}{z}}$, the executions relative to the other sub-block(s) can be appended before and / or after the execution in the block classified $\mathbf{\overset{+}{x}t\overset{+}{z}}$, hence without disrupting the properties allowing the aggregation to be classified as $\mathbf{\overset{+}{x}t\overset{+}{z}}$ in accordance to Definition \ref{def:iop_+t+}. Finally, as the preference order shown in Figure \ref{fig:ad1s_stem_ev} illustrates that $\mathbf{\overset{+}{x}t\overset{+}{z}}$ is the best classification achievable, then we can state that Lemma \ref{l:topclass} is correct.
\end{proof}

\begin{lemma}[Hide The Bubu]\label{l:htb}
Given a overnode's overnode child \emph{not} classified as $\mathbf{\overset{0}{x}t\overset{0}{z}}$ and a overnode's undernode children classified as $\mathbf{\underline{\overset{-}{xz}}}$, their aggregation always corresponds to the classification of the overnode's overnode child. In the case Given a overnode's overnode child \emph{is} classified as $\mathbf{\overset{0}{x}t\overset{0}{z}}$, then the aggregation results in both: $\mathbf{\overset{-}{x}t\overset{0}{z}}$ and $\mathbf{\overset{0}{x}t\overset{-}{z}}$.
\end{lemma}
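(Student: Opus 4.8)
The plan is to exploit the defining feature of the undernode class $\mathbf{\underline{\overset{-}{xz}}}$, recorded in Definition~\ref{def:isp_0-}: every execution of the undernode carries at least one blocking element $y$ but neither $x$ nor $z$. Because the aggregation in question is the one for an \kwd{AND} overnode, Definition~\ref{def:ser} tells us the combined executions are interleavings; in particular, invoking Lemma~\ref{lem:inclusivity}, the whole undernode execution---and hence its $y$---can be slid entirely to the left of the trigger $t_t$ or entirely to its right. Throughout I will read each Interval Overnode class of Definitions~\ref{def:iop_+t+}--\ref{def:iop_-t-} as a pair $(\ell,r)$ of a left status and a right status drawn from $\{+,0,-\}$ ordered $+>0>-$, so that the preference lattice of Figure~\ref{fig:ad1s_stem_ev} is exactly the coordinatewise product order. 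The slogan ``hide the bubu'' is that the $y$ contributed by the undernode can always be parked on the side that $O$ does not constrain.

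First I would establish achievability for Case~1, i.e. $O\neq\mathbf{\overset{0}{x}t\overset{0}{z}}$. Taking an execution $\exe_P$ of the overnode child that witnesses $O$ and any execution of the undernode, I interleave them by placing the undernode execution wholly before the leftmost witnessing $x$ when $\ell(O)=+$ (so the $y$ stays outside the interval $[x,t_t]$ and off the right of $t_t$), wholly after the rightmost witnessing $z$ when $r(O)=+$, and on whichever side already carries status $-$ in the remaining cases. In each subcase the resulting execution still satisfies the relevant clause, so the \kwd{AND} block achieves $O$; the top class $\mathbf{\overset{+}{x}t\overset{+}{z}}$ is immediate from Lemma~\ref{l:topclass} and needs no separate interleaving.

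For the matching upper bound I would use a projection/monotonicity argument. Any execution $\exe$ of the \kwd{AND} block projects (Definition~\ref{def:ser}) to an execution $\exe_P$ of the overnode child, and the deleted tasks---those of the undernode---contain no $x$ and no $z$, only $y$'s and irrelevant tasks. Deleting $y$'s can never turn a clean left or right interval unclean nor create a blocking $y$, so $\ell(\exe)\le\ell(\exe_P)$ and $r(\exe)\le r(\exe_P)$ in the product order. Since the overnode child is classified as the single class $O$, every $\exe_P$ satisfies $(\ell(\exe_P),r(\exe_P))\preceq O$, whence the class of $\exe$ is $\preceq O$; combined with the lower bound, the aggregation is exactly $O$.

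Finally, Case~2 ($O=\mathbf{\overset{0}{x}t\overset{0}{z}}$) is where the two-valued answer appears. Its witness has no $x$, no $z$ and, by Definition~\ref{def:iop_0t0}, no $y$ at all, while the undernode forces a $y$ into every combined execution; that single $y$ must fall on one side of $t_t$, so no \kwd{AND} execution can remain $(0,0)$: every execution has $\ell\le0$, $r\le0$, and at least one coordinate equal to $-$. Parking the $y$ to the left realises $\mathbf{\overset{-}{x}t\overset{0}{z}}=(-,0)$ and to the right realises $\mathbf{\overset{0}{x}t\overset{-}{z}}=(0,-)$; both are achievable, mutually incomparable, and dominate $(-,-)$, so they are precisely the two maximal classes. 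I expect the main obstacle to be the monotonicity bookkeeping of the upper bound---verifying rigorously that stripping the undernode's $y$'s only ever improves each coordinate---together with the Case~2 check that the forced $y$ degrades exactly one coordinate while leaving the other neutral, so the answer is genuinely the pair $\{\mathbf{\overset{-}{x}t\overset{0}{z}},\mathbf{\overset{0}{x}t\overset{-}{z}}\}$ and not a single dominated class.
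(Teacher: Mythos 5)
Your proof is correct, and its constructive half---parking the undernode's whole execution (and with it every $y$) on a side of $t_t$ where it is harmless, namely left of the witnessing $x$ when the left coordinate is $+$, right of the witnessing $z$ when the right coordinate is $+$, or on a side that is already blocked otherwise---is exactly the paper's own ``hide the bubu'' argument. Where you genuinely go beyond the paper is the upper bound: the paper merely asserts that ``the $y$ from the other block's executions cannot make any classification better,'' whereas you prove it, by reading the lattice of Figure~\ref{fig:ad1s_stem_ev} as the product order on $\{+,0,-\}^2$ and showing, via projection of any interleaved execution onto the overnode child (Definition~\ref{def:ser}) together with the fact that a block classified $\mathbf{\underline{\overset{-}{xz}}}$ contributes only $y$'s and inert tasks, that each coordinate of the interleaving's class is dominated by the corresponding coordinate of its projection, hence by $O$. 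This monotonicity step buys exactness of the result in Case~1 and, in Case~2, the fact that $\mathbf{\overset{-}{x}t\overset{0}{z}}$ and $\mathbf{\overset{0}{x}t\overset{-}{z}}$ are precisely the two \emph{maximal} achievable classes rather than merely achievable ones (your observation that the forced $y$ must land on one side of $t_t$, so no interleaving can remain neutral, is the key point the paper leaves implicit). As a side benefit, your conclusion matches the lemma statement and exposes a slip in the paper's own proof of Case~2, which writes $\mathbf{\overset{0}{x}t\overset{+}{z}}$ where $\mathbf{\overset{0}{x}t\overset{-}{z}}$ is clearly intended.
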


\begin{proof}
Considering the case where the overnode's overnode child is classified as $\mathbf{\overset{0}{x}t\overset{0}{z}}$, the $y$ from the execution of the block classified as $\mathbf{\underline{\overset{-}{xz}}}$ can be put on the left or the right of the execution allowing to classify overnode's overnode child as $\mathbf{\overset{0}{x}t\overset{0}{z}}$, which would then allow to classify it either as $\mathbf{\overset{-}{x}t\overset{0}{z}}$ or $\mathbf{\overset{0}{x}t\overset{+}{z}}$. Notice that the resulting classification are strictly worse than the starting one.

In the opposite case, where the overnode's overnode child is not classified as $\mathbf{\overset{0}{x}t\overset{0}{z}}$, considering that the $y$ from the other block's executions cannot make any classification better, to avoid making the current classification worse it is sufficient to place the $y$ on the left side of an $x$ on the left of a $t$, on the right of a $z$ on the right of a $t$ or on the left / right of a $y$ left / right of a $t$. In such a way, the classification of the side where the newly set $y$ would be dominated by the pre-existing elements, which would not make that side of the classification to change. Which overall means that the whole classification remains the same as the overnode's overnode child.
\end{proof}

\begin{lemma}[Not Worse]\label{l:nw}
Given a overnode's overnode child \emph{not} classified as $\mathbf{\overset{0}{x}t\overset{0}{z}}$ and a overnode's undernode children classified either as $\mathbf{\underline{\overset{+}{x}\overset{-}{z}}}$ or $\mathbf{\underline{\overset{-}{x}\overset{+}{z}}}$, their aggregation cannot be worse than the classification of the overnode's overnode child.
\end{lemma}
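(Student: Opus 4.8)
The plan is to argue directly at the level of executions, via Definition~\ref{def:ser}, that the aggregated \kwd{AND} block admits an execution witnessing exactly the classification $C$ of the overnode's overnode child; once such an execution is exhibited, the block provably belongs to class $C$ (Definitions~\ref{def:iop_+t+}--\ref{def:iop_-t-}), so after pruning to maximal elements its classification dominates $C$ and the result is \emph{not worse}. The first thing I would record is that the overnode lattice of Figure~\ref{fig:ad1s_stem_ev} is the product order of two independent coordinates, the left side ($x$) and the right side ($z$) of the trigger $t$, each ordered $-<0<+$. Consequently a classification is not worse than $C=(c_L,c_R)$ exactly when the resulting left coordinate is $\geq c_L$ and the resulting right coordinate is $\geq c_R$, and it suffices to produce a single interleaving preserving both coordinates.

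The central device is to fix a witnessing execution $\exe_o$ of the overnode child and a witnessing execution $\exe_u$ of the undernode aggregate (classified $\mathbf{\underline{\overset{+}{x}\overset{-}{z}}}$ or $\mathbf{\underline{\overset{-}{x}\overset{+}{z}}}$), and to insert $\exe_u$ into $\exe_o$ as a single \emph{contiguous} block. By Definition~\ref{def:ser} this is a legal \kwd{AND} interleaving, and by Lemma~\ref{l:compositionality} treating the undernode children as one block is harmless. The only element of $\exe_u$ that can lower a coordinate is its undesired literal $y$ (both admissible classes contain one), so the whole problem reduces to choosing a \emph{dominated slot} at which the inserted block, and in particular its $y$, cannot destroy either witness carried by $\exe_o$. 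This mirrors, and extends, the technique already used for the pure-$y$ class $\mathbf{\underline{\overset{-}{xz}}}$ in Lemma~\ref{l:htb}.

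I would then case-split on $C$, which is non-neutral by hypothesis. If either coordinate of $C$ is $-$, insert $\exe_u$ at the corresponding extreme end of $\exe_o$: when $c_R=-$, appending after the last task leaves everything before $t$ untouched (the left coordinate is unchanged) while keeping a $y$ to the right of $t$ (the right coordinate stays $-$); the symmetric placement at the very front handles $c_L=-$. The case $C=\mathbf{\overset{+}{x}t\overset{+}{z}}$ is dispatched immediately by Lemma~\ref{l:topclass}. The only remaining cases are $\mathbf{\overset{+}{x}t\overset{0}{z}}$ and $\mathbf{\overset{0}{x}t\overset{+}{z}}$: for the former I insert $\exe_u$ immediately before the $x$ that witnesses the clean left side, so that every $y$ of $\exe_u$ lies strictly left of $x$, hence outside the interval between $x$ and $t$, and nothing is added to the right of $t$; the mirror placement, immediately after the witnessing $z$, handles the latter. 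In each case the class definitions certify that the constructed interleaving still witnesses $C$.

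The step I expect to require the most care is this last check, and the reason is the rigid internal ordering of $\exe_u$: in $\mathbf{\underline{\overset{+}{x}\overset{-}{z}}}$ the useful $x_u$ precedes $y_u$, and in $\mathbf{\underline{\overset{-}{x}\overset{+}{z}}}$ the $y_u$ precedes the useful $z_u$, so a placement that tries to exploit the useful element can unavoidably drag the $y$ into a forbidden interval. The crux is therefore to verify that the hypothesis $C\neq\mathbf{\overset{0}{x}t\overset{0}{z}}$ always supplies at least one already-blocked coordinate or one boundary region into which the entire block can be exiled with its $y$ dominated, \emph{simultaneously} leaving the surviving clean witness on the opposite side intact. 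Confirming that the four chosen insertion points above indeed achieve this for all eight non-neutral values of $C$ is the heart of the proof; everything else follows from the product-order reading of the lattice together with Lemmas~\ref{l:topclass} and~\ref{l:compositionality}.
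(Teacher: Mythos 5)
Your proposal is correct and is essentially the paper's own argument: the paper likewise proves Lemma~\ref{l:nw} by choosing an \kwd{AND}-interleaving that exiles the undernode's $y$ into a dominated position (the technique of Lemma~\ref{l:htb}), so that the witness for the overnode child's classification survives and the result cannot drop below it. Your write-up is in fact the more careful of the two: the explicit split over the eight non-neutral classes, the product-order reading of Figure~\ref{fig:ad1s_stem_ev}, and the contiguous-block insertion correctly handle the internal orderings $x_u \preceq y_u$ and $y_u \preceq z_u$ inside the undernode execution, a subtlety the paper's two-sentence proof passes over.
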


\begin{proof}
Considering Definition \ref{def:isp_1r} and Definition \ref{def:isp_1'l}, we know that both $\mathbf{\underline{\overset{+}{x}\overset{-}{z}}}$ are $\mathbf{\underline{\overset{-}{x}\overset{+}{z}}}$ contain an element $y$, and another element that can potentially be useful to bring either the left or right side of an overnode's overnode child classification to a $\mathbf{+}$.

We can apply Lemma \ref{l:htb} the the $y$ element, while the remainder element can be either used to improve the classification, or not if that is not possible. Leading to the conclusion that the classification of the overnode's overnode child cannot get worse.
\end{proof}

\subsubsection{\kwd{SEQ} Aggregations}~\\
\noindent
The aggregations leading to the classification of the overnode of type \kwd{SEQ} are illustrated in Table \ref{t:ad1s_oa_l} and Table \ref{t:ad1s_oa_r}. The two table show the aggregation between the current classification of the overnode's overnode child with the left and right undernodes classifications. As the two sub-patterns are independent, the order in which the tables are applied is irrelevant.

\begin{table}[ht!]
\centering
\begin{tabular}{|c|c|c|}
\hline
Overnode's Overnode Child & Left Sub-Pattern Classification & Result \\ \hline
$\mathbf{\overset{+}{x}t\overset{+}{z}}$ & $\mathbf{\overset{+}{x}}$ & $\mathbf{\overset{+}{x}t\overset{+}{z}}$ \\ \hline
$\mathbf{\overset{+}{x}t\overset{+}{z}}$ & $\mathbf{\overset{0}{x}}$ & $\mathbf{\overset{+}{x}t\overset{+}{z}}$ \\ \hline
$\mathbf{\overset{+}{x}t\overset{+}{z}}$ & $\mathbf{\overset{-}{x}}$ & $\mathbf{\overset{+}{x}t\overset{+}{z}}$ \\ \hline
$\mathbf{\overset{+}{x}t\overset{0}{z}}$ & $\mathbf{\overset{+}{x}}$ & $\mathbf{\overset{+}{x}t\overset{0}{z}}$ \\ \hline
$\mathbf{\overset{+}{x}t\overset{0}{z}}$ & $\mathbf{\overset{0}{x}}$ & $\mathbf{\overset{+}{x}t\overset{0}{z}}$ \\ \hline
$\mathbf{\overset{+}{x}t\overset{0}{z}}$ & $\mathbf{\overset{-}{x}}$ & $\mathbf{\overset{+}{x}t\overset{0}{z}}$ \\ \hline
$\mathbf{\overset{0}{x}t\overset{+}{z}}$ & $\mathbf{\overset{+}{x}}$ & $\mathbf{\overset{+}{x}t\overset{+}{z}}$ \\ \hline
$\mathbf{\overset{0}{x}t\overset{+}{z}}$ & $\mathbf{\overset{0}{x}}$ & $\mathbf{\overset{0}{x}t\overset{+}{z}}$ \\ \hline
$\mathbf{\overset{0}{x}t\overset{+}{z}}$ & $\mathbf{\overset{-}{x}}$ & $\mathbf{\overset{-}{x}t\overset{+}{z}}$ \\ \hline
$\mathbf{\overset{+}{x}t\overset{-}{z}}$ & $\mathbf{\overset{+}{x}}$ & $\mathbf{\overset{+}{x}t\overset{-}{z}}$ \\ \hline
$\mathbf{\overset{+}{x}t\overset{-}{z}}$ & $\mathbf{\overset{0}{x}}$ & $\mathbf{\overset{+}{x}t\overset{-}{z}}$ \\ \hline
$\mathbf{\overset{+}{x}t\overset{-}{z}}$ & $\mathbf{\overset{-}{x}}$ & $\mathbf{\overset{+}{x}t\overset{-}{z}}$ \\ \hline
$\mathbf{\overset{0}{x}t\overset{0}{z}}$ & $\mathbf{\overset{+}{x}}$ & $\mathbf{\overset{+}{x}t\overset{0}{z}}$ \\ \hline
$\mathbf{\overset{0}{x}t\overset{0}{z}}$ & $\mathbf{\overset{0}{x}}$ & $\mathbf{\overset{0}{x}t\overset{0}{z}}$ \\ \hline
$\mathbf{\overset{0}{x}t\overset{0}{z}}$ & $\mathbf{\overset{-}{x}}$ & $\mathbf{\overset{-}{x}t\overset{0}{z}}$ \\ \hline
$\mathbf{\overset{-}{x}t\overset{+}{z}}$ & $\mathbf{\overset{+}{x}}$ & $\mathbf{\overset{-}{x}t\overset{+}{z}}$ \\ \hline
$\mathbf{\overset{-}{x}t\overset{+}{z}}$ & $\mathbf{\overset{0}{x}}$ & $\mathbf{\overset{-}{x}t\overset{+}{z}}$ \\ \hline
$\mathbf{\overset{-}{x}t\overset{+}{z}}$ & $\mathbf{\overset{-}{x}}$ & $\mathbf{\overset{-}{x}t\overset{+}{z}}$ \\ \hline
$\mathbf{\overset{0}{x}t\overset{-}{z}}$ & $\mathbf{\overset{+}{x}}$ & $\mathbf{\overset{+}{x}t\overset{-}{z}}$ \\ \hline
$\mathbf{\overset{0}{x}t\overset{-}{z}}$ & $\mathbf{\overset{0}{x}}$ & $\mathbf{\overset{0}{x}t\overset{-}{z}}$ \\ \hline
$\mathbf{\overset{0}{x}t\overset{-}{z}}$ & $\mathbf{\overset{-}{x}}$ & $\mathbf{\overset{-}{x}t\overset{-}{z}}$ \\ \hline
$\mathbf{\overset{-}{x}t\overset{0}{z}}$ & $\mathbf{\overset{+}{x}}$ & $\mathbf{\overset{-}{x}t\overset{0}{z}}$ \\ \hline
$\mathbf{\overset{-}{x}t\overset{0}{z}}$ & $\mathbf{\overset{0}{x}}$ & $\mathbf{\overset{-}{x}t\overset{0}{z}}$ \\ \hline
$\mathbf{\overset{-}{x}t\overset{0}{z}}$ & $\mathbf{\overset{-}{x}}$ & $\mathbf{\overset{-}{x}t\overset{0}{z}}$ \\ \hline
$\mathbf{\overset{-}{x}t\overset{-}{z}}$ & $\mathbf{\overset{+}{x}}$ & $\mathbf{\overset{-}{x}t\overset{-}{z}}$ \\ \hline
$\mathbf{\overset{-}{x}t\overset{-}{z}}$ & $\mathbf{\overset{0}{x}}$ & $\mathbf{\overset{-}{x}t\overset{-}{z}}$ \\ \hline
$\mathbf{\overset{-}{x}t\overset{-}{z}}$ & $\mathbf{\overset{-}{x}}$ & $\mathbf{\overset{-}{x}t\overset{-}{z}}$ \\ \hline
\end{tabular}
\caption{Aggregation Table \kwd{SEQ} Overnode Left}\label{t:ad1s_oa_l}
\end{table}

\begin{proof}[Table \ref{t:ad1s_oa_l}]
Let the left part of the overnode classification be $B$, while the left sub-pattern be $A$ and, as the aggregation in Table \ref{t:ad1s_oa_l} refers to aggregations of a \kwd{SEQ} block, we can use the same aggregations from Table \ref{tab:atsol} using the result column \kwd{SEQ}(A, B), as it computes exactly the same aggregation required in Table \ref{t:ad1s_oa_l} for the right side of the overnode classification, and the aggregation does not influence the classification on the right side of the current overnode classification, which remains unchanged in the result.
\end{proof}

\begin{table}[ht!]
\centering
\begin{tabular}{|c|c|c|}
\hline
Overnode's Overnode Child & Right Sub-Pattern Classification & Result \\ \hline
$\mathbf{\overset{+}{x}t\overset{+}{z}}$ & $\mathbf{\overset{+}{z}}$ & $\mathbf{\overset{+}{x}t\overset{+}{z}}$ \\ \hline
$\mathbf{\overset{+}{x}t\overset{+}{z}}$ & $\mathbf{\overset{0}{z}}$ & $\mathbf{\overset{+}{x}t\overset{+}{z}}$ \\ \hline
$\mathbf{\overset{+}{x}t\overset{+}{z}}$ & $\mathbf{\overset{-}{z}}$ & $\mathbf{\overset{+}{x}t\overset{+}{z}}$ \\ \hline
$\mathbf{\overset{+}{x}t\overset{0}{z}}$ & $\mathbf{\overset{+}{z}}$ & $\mathbf{\overset{+}{x}t\overset{+}{z}}$ \\ \hline
$\mathbf{\overset{+}{x}t\overset{0}{z}}$ & $\mathbf{\overset{0}{z}}$ & $\mathbf{\overset{+}{x}t\overset{0}{z}}$ \\ \hline
$\mathbf{\overset{+}{x}t\overset{0}{z}}$ & $\mathbf{\overset{-}{z}}$ & $\mathbf{\overset{+}{x}t\overset{-}{z}}$ \\ \hline
$\mathbf{\overset{0}{x}t\overset{+}{z}}$ & $\mathbf{\overset{+}{z}}$ & $\mathbf{\overset{0}{x}t\overset{+}{z}}$ \\ \hline
$\mathbf{\overset{0}{x}t\overset{+}{z}}$ & $\mathbf{\overset{0}{z}}$ & $\mathbf{\overset{0}{x}t\overset{+}{z}}$ \\ \hline
$\mathbf{\overset{0}{x}t\overset{+}{z}}$ & $\mathbf{\overset{-}{z}}$ & $\mathbf{\overset{0}{x}t\overset{+}{z}}$ \\ \hline
$\mathbf{\overset{+}{x}t\overset{-}{z}}$ & $\mathbf{\overset{+}{z}}$ & $\mathbf{\overset{+}{x}t\overset{-}{z}}$ \\ \hline
$\mathbf{\overset{+}{x}t\overset{-}{z}}$ & $\mathbf{\overset{0}{z}}$ & $\mathbf{\overset{+}{x}t\overset{-}{z}}$ \\ \hline
$\mathbf{\overset{+}{x}t\overset{-}{z}}$ & $\mathbf{\overset{-}{z}}$ & $\mathbf{\overset{+}{x}t\overset{-}{z}}$ \\ \hline
$\mathbf{\overset{0}{x}t\overset{0}{z}}$ & $\mathbf{\overset{+}{z}}$ & $\mathbf{\overset{0}{x}t\overset{+}{z}}$ \\ \hline
$\mathbf{\overset{0}{x}t\overset{0}{z}}$ & $\mathbf{\overset{0}{z}}$ & $\mathbf{\overset{0}{x}t\overset{0}{z}}$ \\ \hline
$\mathbf{\overset{0}{x}t\overset{0}{z}}$ & $\mathbf{\overset{-}{z}}$ & $\mathbf{\overset{0}{x}t\overset{-}{z}}$ \\ \hline
$\mathbf{\overset{-}{x}t\overset{+}{z}}$ & $\mathbf{\overset{+}{z}}$ & $\mathbf{\overset{-}{x}t\overset{+}{z}}$ \\ \hline
$\mathbf{\overset{-}{x}t\overset{+}{z}}$ & $\mathbf{\overset{0}{z}}$ & $\mathbf{\overset{-}{x}t\overset{+}{z}}$ \\ \hline
$\mathbf{\overset{-}{x}t\overset{+}{z}}$ & $\mathbf{\overset{-}{z}}$ & $\mathbf{\overset{-}{x}t\overset{+}{z}}$ \\ \hline
$\mathbf{\overset{0}{x}t\overset{-}{z}}$ & $\mathbf{\overset{+}{z}}$ & $\mathbf{\overset{0}{x}t\overset{-}{z}}$ \\ \hline
$\mathbf{\overset{0}{x}t\overset{-}{z}}$ & $\mathbf{\overset{0}{z}}$ & $\mathbf{\overset{0}{x}t\overset{-}{z}}$ \\ \hline
$\mathbf{\overset{0}{x}t\overset{-}{z}}$ & $\mathbf{\overset{-}{z}}$ & $\mathbf{\overset{0}{x}t\overset{-}{z}}$ \\ \hline
$\mathbf{\overset{-}{x}t\overset{0}{z}}$ & $\mathbf{\overset{+}{z}}$ & $\mathbf{\overset{-}{x}t\overset{+}{z}}$ \\ \hline
$\mathbf{\overset{-}{x}t\overset{0}{z}}$ & $\mathbf{\overset{0}{z}}$ & $\mathbf{\overset{-}{x}t\overset{0}{z}}$ \\ \hline
$\mathbf{\overset{-}{x}t\overset{0}{z}}$ & $\mathbf{\overset{-}{z}}$ & $\mathbf{\overset{-}{x}t\overset{-}{z}}$ \\ \hline
$\mathbf{\overset{-}{x}t\overset{-}{z}}$ & $\mathbf{\overset{+}{z}}$ & $\mathbf{\overset{-}{x}t\overset{-}{z}}$ \\ \hline
$\mathbf{\overset{-}{x}t\overset{-}{z}}$ & $\mathbf{\overset{0}{z}}$ & $\mathbf{\overset{-}{x}t\overset{-}{z}}$ \\ \hline
$\mathbf{\overset{-}{x}t\overset{-}{z}}$ & $\mathbf{\overset{-}{z}}$ & $\mathbf{\overset{-}{x}t\overset{-}{z}}$ \\ \hline
\end{tabular}
\caption{Aggregation Table \kwd{SEQ} Overnode Right}\label{t:ad1s_oa_r}
\end{table}

\begin{proof}[Table \ref{t:ad1s_oa_r}]
Let the right part of the overnode classification be $A$, while the right sub-pattern be $B$ and, as the aggregation in Table \ref{t:ad1s_oa_r} refers to aggregations of a \kwd{SEQ} block, we can use the same aggregations from Table \ref{tab:atsor} using the result column \kwd{SEQ}(A, B), as it computes exactly the same aggregation required in Table \ref{t:ad1s_oa_r} for the right side of the overnode classification, and the aggregation does not influence the classification on the left side of the current overnode classification, which remains unchanged in the result.
\end{proof}

\subsubsection{\kwd{AND} Aggregations}~\\
\noindent
The classification of an overnode of type \kwd{AND} is obtained by the aggregating the aggregation result of the classification of its undernode children with the classification of its overnode child. The aggregations are illustrated in Table \ref{tab:ad1s_aoa1}, Table \ref{tab:ad1s_aoa2} and Table \ref{tab:ad1s_aoa3}.

\begin{table}[ht!]
\centering
\begin{tabular}{|c|c|c|}
\hline
Overnode's Overnode Child & Overnode's Undernode Children & Result \\ \hline
$\mathbf{\overset{+}{x}t\overset{+}{z}}$ & $\mathbf{\underline{\overset{+}{x}\overset{+}{z}}}$ & $\mathbf{\overset{+}{x}t\overset{+}{z}}$ \\ \hline
$\mathbf{\overset{+}{x}t\overset{+}{z}}$ & $\mathbf{\underline{\overset{+}{x}\overset{0}{z}}}$ & $\mathbf{\overset{+}{x}t\overset{+}{z}}$ \\ \hline
$\mathbf{\overset{+}{x}t\overset{+}{z}}$ & $\mathbf{\underline{\overset{0}{x}\overset{+}{z}}}$ & $\mathbf{\overset{+}{x}t\overset{+}{z}}$ \\ \hline
$\mathbf{\overset{+}{x}t\overset{+}{z}}$ & $\mathbf{\underline{\overset{+}{x}\overset{-}{z}}}$ & $\mathbf{\overset{+}{x}t\overset{+}{z}}$ \\ \hline
$\mathbf{\overset{+}{x}t\overset{+}{z}}$ & $\mathbf{\underline{\overset{0}{x}\overset{0}{z}}}$ & $\mathbf{\overset{+}{x}t\overset{+}{z}}$ \\ \hline
$\mathbf{\overset{+}{x}t\overset{+}{z}}$ & $\mathbf{\underline{\overset{-}{x}\overset{+}{z}}}$ & $\mathbf{\overset{+}{x}t\overset{+}{z}}$ \\ \hline
$\mathbf{\overset{+}{x}t\overset{+}{z}}$ & $\mathbf{\underline{\overset{-}{xz}}}$ & $\mathbf{\overset{+}{x}t\overset{+}{z}}$ \\ \hline
$\mathbf{\overset{+}{x}t\overset{0}{z}}$ & $\mathbf{\underline{\overset{+}{x}\overset{+}{z}}}$ & $\mathbf{\overset{+}{x}t\overset{+}{z}}$ \\ \hline
$\mathbf{\overset{+}{x}t\overset{0}{z}}$ & $\mathbf{\underline{\overset{+}{x}\overset{0}{z}}}$ & $\mathbf{\overset{+}{x}t\overset{0}{z}}$ \\ \hline
$\mathbf{\overset{+}{x}t\overset{0}{z}}$ & $\mathbf{\underline{\overset{0}{x}\overset{+}{z}}}$ & $\mathbf{\overset{+}{x}t\overset{+}{z}}$ \\ \hline
$\mathbf{\overset{+}{x}t\overset{0}{z}}$ & $\mathbf{\underline{\overset{+}{x}\overset{-}{z}}}$ & $\mathbf{\overset{+}{x}t\overset{0}{z}}$ \\ \hline
$\mathbf{\overset{+}{x}t\overset{0}{z}}$ & $\mathbf{\underline{\overset{0}{x}\overset{0}{z}}}$ & $\mathbf{\overset{+}{x}t\overset{0}{z}}$ \\ \hline
$\mathbf{\overset{+}{x}t\overset{0}{z}}$ & $\mathbf{\underline{\overset{-}{x}\overset{+}{z}}}$ & $\mathbf{\overset{+}{x}t\overset{+}{z}}$ \\ \hline
$\mathbf{\overset{+}{x}t\overset{0}{z}}$ & $\mathbf{\underline{\overset{-}{xz}}}$ & $\mathbf{\overset{+}{x}t\overset{0}{z}}$ \\ \hline
$\mathbf{\overset{0}{x}t\overset{+}{z}}$ & $\mathbf{\underline{\overset{+}{x}\overset{+}{z}}}$ & $\mathbf{\overset{+}{x}t\overset{+}{z}}$ \\ \hline
$\mathbf{\overset{0}{x}t\overset{+}{z}}$ & $\mathbf{\underline{\overset{+}{x}\overset{0}{z}}}$ & $\mathbf{\overset{+}{x}t\overset{+}{z}}$ \\ \hline
$\mathbf{\overset{0}{x}t\overset{+}{z}}$ & $\mathbf{\underline{\overset{0}{x}\overset{+}{z}}}$ & $\mathbf{\overset{0}{x}t\overset{+}{z}}$ \\ \hline
$\mathbf{\overset{0}{x}t\overset{+}{z}}$ & $\mathbf{\underline{\overset{+}{x}\overset{-}{z}}}$ & $\mathbf{\overset{+}{x}t\overset{+}{z}}$ \\ \hline
$\mathbf{\overset{0}{x}t\overset{+}{z}}$ & $\mathbf{\underline{\overset{0}{x}\overset{0}{z}}}$ & $\mathbf{\overset{0}{x}t\overset{+}{z}}$ \\ \hline
$\mathbf{\overset{0}{x}t\overset{+}{z}}$ & $\mathbf{\underline{\overset{-}{x}\overset{+}{z}}}$ & $\mathbf{\overset{0}{x}t\overset{+}{z}}$ \\ \hline
$\mathbf{\overset{0}{x}t\overset{+}{z}}$ & $\mathbf{\underline{\overset{-}{xz}}}$ & $\mathbf{\overset{0}{x}t\overset{+}{z}}$ \\ \hline
\end{tabular}
\caption{Aggregation Table from Interval Sub-Pattern to Interval Overnode Pattern}\label{tab:ad1s_aoa1}
\end{table}

\begin{proof}[Table \ref{tab:ad1s_aoa1}]
\begin{enumerate}
\item As the aggregation involves \kwd{AND} blocks, from Definition \ref{def:ser}, it follows that each execution resulting from the aggregation consists of an execution of the \emph{Overnode's Overnode Child} interleaved with an execution of the \emph{Overnode's Undernode Children}.
\item We prove the correctness of the table by proving the correctness of the aggregation in each of its line. Considering by case the \emph{Overnode's Overnode Child} classifications:
\begin{description}
\item[$\mathbf{\overset{+}{x}t\overset{+}{z}}$] Independently on the \emph{Overnode's Undernode Children} classification, the correctness of the row follows directly from Lemma \ref{l:topclass}.

\item[$\mathbf{\overset{+}{x}t\overset{0}{z}}$] 
\begin{enumerate}
\item From Definition \ref{def:iop_+t0}, it follows that there exists an execution of the \emph{Overnode's Overnode Child} where there is a $x$ on the left of $t$ with no $y$ between them, and no $y$ or $z$ on the right of the $t$.
\item Considering by case the \emph{Overnode's Undernode Children} classification:
\begin{description}
\item[$\mathbf{\underline{\overset{+}{x}\overset{+}{z}}}$] The correctness of the row follows directly from Lemma \ref{l:3super}.
\item[$\mathbf{\underline{\overset{+}{x}\overset{0}{z}}}$] 
\begin{enumerate}
\item From Definition \ref{def:isp_1}, it follows that \emph{Overnode's Undernode Children} contains an execution containing a $x$ and no $y$ or $z$.
\item From i., it follows that it does not contain any element capable of improving the \emph{Overnode's Overnode Child} classification, or making it worse.
\item From 1., (a) and ii., it follows that the aggregation is then correctly classified as $\mathbf{\overset{+}{x}t\overset{0}{z}}$.
\end{enumerate}
\item[$\mathbf{\underline{\overset{0}{x}\overset{+}{z}}}$] 
\begin{enumerate}
\item From Definition \ref{def:isp_1'}, it follows that \emph{Overnode's Undernode Children} contains an execution containing a $z$ and no $y$ or $x$.
\item From 1., (a) and i., it follows that a possible interleaving would allow to place the $z$ on the right of the $t$.
\item From ii and Definition \ref{def:iop_+t+}, it follows that the aggregation is then correctly classified as $\mathbf{\overset{+}{x}t\overset{+}{z}}$.
\end{enumerate}
\item[$\mathbf{\underline{\overset{+}{x}\overset{-}{z}}}$] 
\begin{enumerate}
\item From Definition \ref{def:isp_1r}, it follows that \emph{Overnode's Undernode Children} contains an execution containing a $x$ and a $y$ on its right, while not containing $z$.
\item From i., it follows that it does not contain any element capable of improving the \emph{Overnode's Overnode Child} classification. 
\item From ii., (a) and Lemma \ref{l:nw}, it follows that the aggregation is then correctly classified as $\mathbf{\overset{+}{x}t\overset{0}{z}}$.
\end{enumerate}
\item[$\mathbf{\underline{\overset{0}{x}\overset{0}{z}}}$] The correctness of the row follows directly from Lemma \ref{l:neutral_iop}.
\item[$\mathbf{\underline{\overset{-}{x}\overset{+}{z}}}$] 
\begin{enumerate}
\item From Definition \ref{def:isp_1'l}, it follows that \emph{Overnode's Undernode Children} contains an execution containing a $z$ and a $y$ on its left, while not containing $x$.
\item From 1., (a) and i., it follows that the $z$ from $\mathbf{\underline{\overset{-}{x}\overset{+}{z}}}$ can be set on the right of $t$, while the $y$, thanks to the interleaving, can be set on the left of the $x$ from $\mathbf{\overset{+}{x}t\overset{0}{z}}$.
\item From ii and Definition \ref{def:iop_+t+}, it follows that the aggregation is then correctly classified as $\mathbf{\overset{+}{x}t\overset{+}{z}}$.
\end{enumerate}
\item[$\mathbf{\underline{\overset{-}{xz}}}$] The correctness of the row follows directly from Lemma \ref{l:htb}.
\end{description}
\end{enumerate}
\item[$\mathbf{\overset{0}{x}t\overset{+}{z}}$]
\begin{enumerate}
\item From Definition \ref{def:iop_0t+}, it follows that there exists an execution of the \emph{Overnode's Overnode Child} where there is a $z$ on the right of $t$ with no $y$ between them, and no $y$ or $x$ on the left of the $t$.
\item Considering by case the \emph{Overnode's Undernode Children} classification:
\begin{description}
\item[$\mathbf{\underline{\overset{+}{x}\overset{+}{z}}}$] The correctness of the row follows directly from Lemma \ref{l:3super}.
\item[$\mathbf{\underline{\overset{+}{x}\overset{0}{z}}}$] 
\begin{enumerate}
\item From Definition \ref{def:isp_1}, it follows that \emph{Overnode's Undernode Children} contains an execution containing a $x$ and no $y$ or $z$.
\item From 1., (a) and i., it follows that a possible interleaving would allow to place the $x$ on the left of the $t$.
\item From ii and Definition \ref{def:iop_+t+}, it follows that the aggregation is then correctly classified as $\mathbf{\overset{+}{x}t\overset{+}{z}}$.
\end{enumerate}
\item[$\mathbf{\underline{\overset{0}{x}\overset{+}{z}}}$] 
\begin{enumerate}
\item From Definition \ref{def:isp_1'}, it follows that \emph{Overnode's Undernode Children} contains an execution containing a $z$ and no $y$ or $x$.
\item From i., it follows that it does not contain any element capable of improving the \emph{Overnode's Overnode Child} classification, or making it worse.
\item From 1., (a) and ii., it follows that the aggregation is then correctly classified as $\mathbf{\overset{0}{x}t\overset{+}{z}}$.
\end{enumerate}
\item[$\mathbf{\underline{\overset{+}{x}\overset{-}{z}}}$] 
\begin{enumerate}
\item From Definition \ref{def:isp_1r}, it follows that \emph{Overnode's Undernode Children} contains an execution containing a $x$ and a $y$ on its right, while not containing $z$.
\item From 1., (a) and i., it follows that the $x$ from $\mathbf{1'_{r}}$ can be set on the left of $t$, while the $y$, thanks to the interleaving, can be set on the right of the $z$ from $\mathbf{\overset{0}{x}t\overset{+}{z}}$.
\item From ii and Definition \ref{def:iop_+t+}, it follows that the aggregation is then correctly classified as $\mathbf{\overset{+}{x}t\overset{+}{z}}$.
\end{enumerate}
\item[$\mathbf{\underline{\overset{0}{x}\overset{0}{z}}}$] The correctness of the row follows directly from Lemma \ref{l:neutral_iop}.
\item[$\mathbf{\underline{\overset{-}{x}\overset{+}{z}}}$] 
\begin{enumerate}
\item From Definition \ref{def:isp_1'l}, it follows that \emph{Overnode's Undernode Children} contains an execution containing a $z$ and a $y$ on its left, while not containing $x$.
\item From i., it follows that it does not contain any element capable of improving the \emph{Overnode's Overnode Child} classification. 
\item From ii., (a) and Lemma \ref{l:nw}, it follows that the aggregation is then correctly classified as $\mathbf{\overset{0}{x}t\overset{+}{z}}$.
\end{enumerate}
\item[$\mathbf{\underline{\overset{-}{xz}}}$] The correctness of the row follows directly from Lemma \ref{l:htb}.
\end{description}
\end{enumerate}
\end{description}
\end{enumerate}
\end{proof}

\begin{table}[ht!]
\centering
\begin{tabular}{|c|c|c|}
\hline
Overnode's Overnode Child & Overnode's Undernode Children & Result \\ \hline
$\mathbf{\overset{+}{x}t\overset{-}{z}}$ & $\mathbf{\underline{\overset{+}{x}\overset{+}{z}}}$ & $\mathbf{\overset{+}{x}t\overset{+}{z}}$ \\ \hline
$\mathbf{\overset{+}{x}t\overset{-}{z}}$ & $\mathbf{\underline{\overset{+}{x}\overset{0}{z}}}$ & $\mathbf{\overset{+}{x}t\overset{-}{z}}$ \\ \hline
$\mathbf{\overset{+}{x}t\overset{-}{z}}$ & $\mathbf{\underline{\overset{0}{x}\overset{+}{z}}}$ & $\mathbf{\overset{+}{x}t\overset{+}{z}}$ \\ \hline
$\mathbf{\overset{+}{x}t\overset{-}{z}}$ & $\mathbf{\underline{\overset{+}{x}\overset{-}{z}}}$ & $\mathbf{\overset{+}{x}t\overset{-}{z}}$ \\ \hline
$\mathbf{\overset{+}{x}t\overset{-}{z}}$ & $\mathbf{\underline{\overset{0}{x}\overset{0}{z}}}$ & $\mathbf{\overset{+}{x}t\overset{-}{z}}$ \\ \hline
$\mathbf{\overset{+}{x}t\overset{-}{z}}$ & $\mathbf{\underline{\overset{-}{x}\overset{+}{z}}}$ & $\mathbf{\overset{+}{x}t\overset{+}{z}}$ \\ \hline
$\mathbf{\overset{+}{x}t\overset{-}{z}}$ & $\mathbf{\underline{\overset{-}{xz}}}$ & $\mathbf{\overset{+}{x}t\overset{-}{z}}$ \\ \hline
$\mathbf{\overset{0}{x}t\overset{0}{z}}$ & $\mathbf{\underline{\overset{+}{x}\overset{+}{z}}}$ & $\mathbf{\overset{+}{x}t\overset{+}{z}}$ \\ \hline
$\mathbf{\overset{0}{x}t\overset{0}{z}}$ & $\mathbf{\underline{\overset{+}{x}\overset{0}{z}}}$ & $\mathbf{\overset{+}{x}t\overset{0}{z}}$ \\ \hline
$\mathbf{\overset{0}{x}t\overset{0}{z}}$ & $\mathbf{\underline{\overset{0}{x}\overset{+}{z}}}$ & $\mathbf{\overset{0}{x}t\overset{+}{z}}$ \\ \hline
$\mathbf{\overset{0}{x}t\overset{0}{z}}$ & $\mathbf{\underline{\overset{+}{x}\overset{-}{z}}}$ & $\mathbf{\overset{+}{x}t\overset{-}{z}}$ / $\mathbf{\overset{-}{x}t\overset{0}{z}}$ \\ \hline
$\mathbf{\overset{0}{x}t\overset{0}{z}}$ & $\mathbf{\underline{\overset{0}{x}\overset{0}{z}}}$ & $\mathbf{\overset{0}{x}t\overset{0}{z}}$ \\ \hline
$\mathbf{\overset{0}{x}t\overset{0}{z}}$ & $\mathbf{\underline{\overset{-}{x}\overset{+}{z}}}$ & $\mathbf{\overset{-}{x}t\overset{+}{z}}$ / $\mathbf{\overset{0}{x}t\overset{-}{z}}$ \\ \hline
$\mathbf{\overset{0}{x}t\overset{0}{z}}$ & $\mathbf{\underline{\overset{-}{xz}}}$ & $\mathbf{\overset{0}{x}t\overset{-}{z}}$ / $\mathbf{\overset{-}{x}t\overset{0}{z}}$ \\ \hline
$\mathbf{\overset{-}{x}t\overset{+}{z}}$ & $\mathbf{\underline{\overset{+}{x}\overset{+}{z}}}$ & $\mathbf{\overset{+}{x}t\overset{+}{z}}$ \\ \hline
$\mathbf{\overset{-}{x}t\overset{+}{z}}$ & $\mathbf{\underline{\overset{+}{x}\overset{0}{z}}}$ & $\mathbf{\overset{+}{x}t\overset{+}{z}}$ \\ \hline
$\mathbf{\overset{-}{x}t\overset{+}{z}}$ & $\mathbf{\underline{\overset{0}{x}\overset{+}{z}}}$ & $\mathbf{\overset{-}{x}t\overset{+}{z}}$ \\ \hline
$\mathbf{\overset{-}{x}t\overset{+}{z}}$ & $\mathbf{\underline{\overset{+}{x}\overset{-}{z}}}$ & $\mathbf{\overset{+}{x}t\overset{+}{z}}$ \\ \hline
$\mathbf{\overset{-}{x}t\overset{+}{z}}$ & $\mathbf{\underline{\overset{0}{x}\overset{0}{z}}}$ & $\mathbf{\overset{-}{x}t\overset{+}{z}}$ \\ \hline
$\mathbf{\overset{-}{x}t\overset{+}{z}}$ & $\mathbf{\underline{\overset{-}{x}\overset{+}{z}}}$ & $\mathbf{\overset{-}{x}t\overset{+}{z}}$ \\ \hline
$\mathbf{\overset{-}{x}t\overset{+}{z}}$ & $\mathbf{\underline{\overset{-}{xz}}}$ & $\mathbf{\overset{-}{x}t\overset{+}{z}}$ \\ \hline
\end{tabular}
\caption{Aggregation Table from Interval Sub-Pattern to Interval Overnode Pattern}\label{tab:ad1s_aoa2}
\end{table}

\begin{proof}[Table \ref{tab:ad1s_aoa2}]
\begin{enumerate}
\item As the aggregation involves \kwd{AND} blocks, from Definition \ref{def:ser}, it follows that each execution resulting from the aggregation consists of an execution of the \emph{Overnode's Overnode Child} interleaved with an execution of the \emph{Overnode's Undernode Children}.
\item We prove the correctness of the table by proving the correctness of the aggregation in each of its line. Considering by case the \emph{Overnode's Overnode Child} classifications:
\begin{description}
\item[$\mathbf{\overset{+}{x}t\overset{-}{z}}$]
\begin{enumerate}
\item From Definition \ref{def:iop_+t-}, it follows that there exists an execution of the \emph{Overnode's Overnode Child} where there is a $z$ on the right of $t$ with no $y$ between them, and no $y$ or $x$ on the left of the $t$.
\item Considering by case the \emph{Overnode's Undernode Children} classification:
\begin{description}
\item[$\mathbf{\underline{\overset{+}{x}\overset{+}{z}}}$] The correctness of the row follows directly from Lemma \ref{l:3super}.
\item[$\mathbf{\underline{\overset{+}{x}\overset{0}{z}}}$] 
\begin{enumerate}
\item From Definition \ref{def:isp_1}, it follows that \emph{Overnode's Undernode Children} contains an execution containing a $x$ and no $y$ or $z$.
\item From i., it follows that it does not contain any element capable of improving the \emph{Overnode's Overnode Child} classification, or making it worse.
\item From 1., (a) and ii., it follows that the aggregation is then correctly classified as $\mathbf{\overset{+}{x}t\overset{-}{z}}$.
\end{enumerate}
\item[$\mathbf{\underline{\overset{0}{x}\overset{+}{z}}}$] 
\begin{enumerate}
\item From Definition \ref{def:isp_1'}, it follows that \emph{Overnode's Undernode Children} contains an execution containing a $z$ and no $y$ or $x$.
\item From 1., (a) and i., it follows that a possible interleaving would allow to place the $z$ on the right of the $t$, before the existing $y$.
\item From ii and Definition \ref{def:iop_+t+}, it follows that the aggregation is then correctly classified as $\mathbf{\overset{+}{x}t\overset{+}{z}}$.
\end{enumerate}
\item[$\mathbf{\underline{\overset{+}{x}\overset{-}{z}}}$] 
\begin{enumerate}
\item From Definition \ref{def:isp_1r}, it follows that \emph{Overnode's Undernode Children} contains an execution containing a $x$ and a $y$ on its right, while not containing $z$.
\item From i., it follows that it does not contain any element capable of improving the \emph{Overnode's Overnode Child} classification. 
\item From ii., (a) and Lemma \ref{l:nw}, it follows that the aggregation is then correctly classified as $\mathbf{\overset{+}{x}t\overset{-}{z}}$.
\end{enumerate}
\item[$\mathbf{\underline{\overset{0}{x}\overset{0}{z}}}$] The correctness of the row follows directly from Lemma \ref{l:neutral_iop}.
\item[$\mathbf{\underline{\overset{-}{x}\overset{+}{z}}}$] 
\begin{enumerate}
\item From Definition \ref{def:isp_1'l}, it follows that \emph{Overnode's Undernode Children} contains an execution containing a $z$ and a $y$ on its left, while not containing $x$.
\item From 1., (a) and i., it follows that the $z$ from $\mathbf{\underline{\overset{-}{x}\overset{+}{z}}}$ can be set on the right of $t$, before the existing $y$ in $\mathbf{\overset{+}{x}t\overset{-}{z}}$, while the $y$, thanks to the interleaving, can be set on the left of the $x$ from $\mathbf{\overset{+}{x}t\overset{0}{z}}$.
\item From ii and Definition \ref{def:iop_+t+}, it follows that the aggregation is then correctly classified as $\mathbf{\overset{+}{x}t\overset{+}{z}}$.
\end{enumerate}
\item[$\mathbf{\underline{\overset{-}{xz}}}$] The correctness of the row follows directly from Lemma \ref{l:htb}.
\end{description}
\end{enumerate}
\item[$\mathbf{\overset{0}{x}t\overset{0}{z}}$]
\begin{enumerate}
\item From Definition \ref{def:iop_0t0}, it follows that there exists an execution of the \emph{Overnode's Overnode Child} where there is a $z$ on the right of $t$ with no $y$ between them, and no $y$ or $x$ on the left of the $t$.
\item Considering by case the \emph{Overnode's Undernode Children} classification:
\begin{description}
\item[$\mathbf{\underline{\overset{+}{x}\overset{+}{z}}}$] The correctness of the row follows directly from Lemma \ref{l:3super}.
\item[$\mathbf{\underline{\overset{+}{x}\overset{0}{z}}}$] 
\begin{enumerate}
\item From Definition \ref{def:isp_1}, it follows that \emph{Overnode's Undernode Children} contains an execution containing a $x$ and no $y$ or $z$.
\item From 1., (a) and i., it follows that the $x$ from $\mathbf{\underline{\overset{+}{x}\overset{0}{z}}}$ can be placed on the left of the $t$ from $\mathbf{\overset{0}{x}t\overset{0}{z}}$.
\item From ii. and Definition \ref{def:iop_+t0}, it follows that the aggregation is then correctly classified as $\mathbf{\overset{+}{x}t\overset{0}{z}}$.
\end{enumerate}
\item[$\mathbf{\underline{\overset{0}{x}\overset{+}{z}}}$] 
\begin{enumerate}
\item From Definition \ref{def:isp_1'}, it follows that \emph{Overnode's Undernode Children} contains an execution containing a $z$ and no $y$ or $x$.
\item From 1., (a) and i., it follows that the $y$ from $\mathbf{\underline{\overset{0}{x}\overset{+}{z}}}$ can be placed on the right of the $t$ from $\mathbf{\overset{0}{x}t\overset{0}{z}}$.
\item From ii. and Definition \ref{def:iop_+t0}, it follows that the aggregation is then correctly classified as $\mathbf{\overset{0}{x}t\overset{+}{z}}$.
\end{enumerate}
\item[$\mathbf{\underline{\overset{+}{x}\overset{-}{z}}}$] 
\begin{enumerate}
\item From Definition \ref{def:isp_1r}, it follows that \emph{Overnode's Undernode Children} contains an execution containing a $x$ and a $y$ on its right, while not containing $z$.
\item From 1., (a) and i., it follows that the $x$ from $\mathbf{\underline{\overset{+}{x}\overset{-}{z}}}$ can be placed on the left of the $t$ from $\mathbf{\overset{0}{x}t\overset{0}{z}}$, while the $y$ can be placed on the right.
\item From 1., (a) and i., it follows that the $x$ from $\mathbf{\underline{\overset{+}{x}\overset{-}{z}}}$ can be placed on the left of the $t$ from $\mathbf{\overset{0}{x}t\overset{0}{z}}$, while the $y$ can be placed between the $x$ and the $t$.
\item From ii. and Definition \ref{def:iop_+t-}, it follows that the aggregation is then correctly classified as $\mathbf{\overset{+}{x}t\overset{-}{z}}$.
\item From iii. and Definition \ref{def:iop_-t0}, it follows that the aggregation is then correctly classified as $\mathbf{\overset{-}{x}t\overset{0}{z}}$.
\item From iv. and v., it follows that the classification of the row is correct.
\end{enumerate}
\item[$\mathbf{\underline{\overset{0}{x}\overset{0}{z}}}$] The correctness of the row follows directly from Lemma \ref{l:neutral_iop}.
\item[$\mathbf{\underline{\overset{-}{x}\overset{+}{z}}}$] 
\begin{enumerate}
\item From Definition \ref{def:isp_1'l}, it follows that \emph{Overnode's Undernode Children} contains an execution containing a $z$ and a $y$ on its left, while not containing $x$.
\item From 1., (a) and i., it follows that the $z$ from $\mathbf{\underline{\overset{-}{x}\overset{+}{z}}}$ can be placed on the right of the $t$ from $\mathbf{\overset{0}{x}t\overset{0}{z}}$, while the $y$ can be placed on the left.
\item From 1., (a) and i., it follows that the $z$ from $\mathbf{\underline{\overset{-}{x}\overset{+}{z}}}$ can be placed on the right of the $t$ from $\mathbf{\overset{0}{x}t\overset{0}{z}}$, while the $y$ can be placed between the $t$ and the $z$.
\item From ii. and Definition \ref{def:iop_-t+}, it follows that the aggregation is then correctly classified as $\mathbf{\overset{-}{x}t\overset{+}{z}}$.
\item From iii. and Definition \ref{def:iop_0t-}, it follows that the aggregation is then correctly classified as $\mathbf{\overset{0}{x}t\overset{-}{z}}$.
\item From iv. and v., it follows that the classification of the row is correct.
\end{enumerate}
\item[$\mathbf{\underline{\overset{-}{xz}}}$] The correctness of the row follows directly from Lemma \ref{l:htb}.
\end{description}
\end{enumerate}
\item[$\mathbf{\overset{-}{x}t\overset{+}{z}}$]
\begin{enumerate}
\item From Definition \ref{def:iop_-t+}, it follows that there exists an execution of the \emph{Overnode's Overnode Child} where there is a $z$ on the right of $t$ with no $y$ between them, and no $y$ or $x$ on the left of the $t$.
\item Considering by case the \emph{Overnode's Undernode Children} classification:
\begin{description}
\item[$\mathbf{\underline{\overset{+}{x}\overset{+}{z}}}$] The correctness of the row follows directly from Lemma \ref{l:3super}.
\item[$\mathbf{\underline{\overset{+}{x}\overset{0}{z}}}$] 
\begin{enumerate}
\item From Definition \ref{def:isp_1}, it follows that \emph{Overnode's Undernode Children} contains an execution containing a $x$ and no $y$ or $z$.
\item From 1., (a) and i., it follows that a possible interleaving would allow to place the $x$ on the left of the $t$.
\item From ii and Definition \ref{def:iop_+t+}, it follows that the aggregation is then correctly classified as $\mathbf{\overset{+}{x}t\overset{+}{z}}$.
\end{enumerate}
\item[$\mathbf{\underline{\overset{0}{x}\overset{+}{z}}}$] 
\begin{enumerate}
\item From Definition \ref{def:isp_1'}, it follows that \emph{Overnode's Undernode Children} contains an execution containing a $z$ and no $y$ or $x$.
\item From i., it follows that it does not contain any element capable of improving the \emph{Overnode's Overnode Child} classification, or making it worse.
\item From 1., (a) and ii., it follows that the aggregation is then correctly classified as $\mathbf{\overset{-}{x}t\overset{+}{z}}$.
\end{enumerate}
\item[$\mathbf{\underline{\overset{+}{x}\overset{-}{z}}}$] 
\begin{enumerate}
\item From Definition \ref{def:isp_1r}, it follows that \emph{Overnode's Undernode Children} contains an execution containing a $x$ and a $y$ on its right, while not containing $z$.
\item From 1., (a) and i., it follows that the $x$ from $\mathbf{1'_{r}}$ can be set on the left of $t$, while the $y$, thanks to the interleaving, can be set on the right of the $z$ from $\mathbf{\overset{0}{x}t\overset{+}{z}}$.
\item From ii and Definition \ref{def:iop_+t+}, it follows that the aggregation is then correctly classified as $\mathbf{\overset{+}{x}t\overset{+}{z}}$.
\end{enumerate}
\item[$\mathbf{\underline{\overset{0}{x}\overset{0}{z}}}$] The correctness of the row follows directly from Lemma \ref{l:neutral_iop}.
\item[$\mathbf{\underline{\overset{-}{x}\overset{+}{z}}}$] 
\begin{enumerate}
\item From Definition \ref{def:isp_1'l}, it follows that \emph{Overnode's Undernode Children} contains an execution containing a $z$ and a $y$ on its left, while not containing $x$.
\item From i., it follows that it does not contain any element capable of improving the \emph{Overnode's Overnode Child} classification. 
\item From ii., (a) and Lemma \ref{l:nw}, it follows that the aggregation is then correctly classified as $\mathbf{\overset{-}{x}t\overset{+}{z}}$.
\end{enumerate}
\item[$\mathbf{\underline{\overset{-}{xz}}}$] The correctness of the row follows directly from Lemma \ref{l:htb}.
\end{description}
\end{enumerate}
\end{description}
\end{enumerate}
\end{proof}

\begin{table}[ht!]
\centering
\begin{tabular}{|c|c|c|}
\hline
Overnode's Overnode Child & Overnode's Undernode Children & Result \\ \hline
$\mathbf{\overset{0}{x}t\overset{-}{z}}$ & $\mathbf{\underline{\overset{+}{x}\overset{+}{z}}}$ & $\mathbf{\overset{+}{x}t\overset{+}{z}}$ \\ \hline
$\mathbf{\overset{0}{x}t\overset{-}{z}}$ & $\mathbf{\underline{\overset{+}{x}\overset{0}{z}}}$ & $\mathbf{\overset{+}{x}t\overset{-}{z}}$ \\ \hline
$\mathbf{\overset{0}{x}t\overset{-}{z}}$ & $\mathbf{\underline{\overset{0}{x}\overset{+}{z}}}$ & $\mathbf{\overset{0}{x}t\overset{+}{z}}$ \\ \hline
$\mathbf{\overset{0}{x}t\overset{-}{z}}$ & $\mathbf{\underline{\overset{+}{x}\overset{-}{z}}}$ & $\mathbf{\overset{+}{x}t\overset{-}{z}}$ \\ \hline
$\mathbf{\overset{0}{x}t\overset{-}{z}}$ & $\mathbf{\underline{\overset{0}{x}\overset{0}{z}}}$ & $\mathbf{\overset{0}{x}t\overset{-}{z}}$ \\ \hline
$\mathbf{\overset{0}{x}t\overset{-}{z}}$ & $\mathbf{\underline{\overset{-}{x}\overset{+}{z}}}$ & $\mathbf{\overset{-}{x}t\overset{+}{z}}$ / $\mathbf{\overset{0}{x}t\overset{-}{z}}$ \\ \hline
$\mathbf{\overset{0}{x}t\overset{-}{z}}$ & $\mathbf{\underline{\overset{-}{xz}}}$ & $\mathbf{\overset{0}{x}t\overset{-}{z}}$ \\ \hline
$\mathbf{\overset{-}{x}t\overset{0}{z}}$ & $\mathbf{\underline{\overset{+}{x}\overset{+}{z}}}$ & $\mathbf{\overset{+}{x}t\overset{+}{z}}$ \\ \hline
$\mathbf{\overset{-}{x}t\overset{0}{z}}$ & $\mathbf{\underline{\overset{+}{x}\overset{0}{z}}}$ & $\mathbf{\overset{+}{x}t\overset{0}{z}}$ \\ \hline
$\mathbf{\overset{-}{x}t\overset{0}{z}}$ & $\mathbf{\underline{\overset{0}{x}\overset{+}{z}}}$ & $\mathbf{\overset{-}{x}t\overset{+}{z}}$ \\ \hline
$\mathbf{\overset{-}{x}t\overset{0}{z}}$ & $\mathbf{\underline{\overset{+}{x}\overset{-}{z}}}$ & $\mathbf{\overset{+}{x}t\overset{-}{z}}$ / $\mathbf{\overset{-}{x}t\overset{0}{z}}$ \\ \hline
$\mathbf{\overset{-}{x}t\overset{0}{z}}$ & $\mathbf{\underline{\overset{0}{x}\overset{0}{z}}}$ & $\mathbf{\overset{-}{x}t\overset{0}{z}}$ \\ \hline
$\mathbf{\overset{-}{x}t\overset{0}{z}}$ & $\mathbf{\underline{\overset{-}{x}\overset{+}{z}}}$ & $\mathbf{\overset{-}{x}t\overset{+}{z}}$ \\ \hline
$\mathbf{\overset{-}{x}t\overset{0}{z}}$ & $\mathbf{\underline{\overset{-}{xz}}}$ & $\mathbf{\overset{-}{x}t\overset{0}{z}}$ \\ \hline
$\mathbf{\overset{-}{x}t\overset{-}{z}}$ & $\mathbf{\underline{\overset{+}{x}\overset{+}{z}}}$ & $\mathbf{\overset{+}{x}t\overset{+}{z}}$ \\ \hline
$\mathbf{\overset{-}{x}t\overset{-}{z}}$ & $\mathbf{\underline{\overset{+}{x}\overset{0}{z}}}$ & $\mathbf{\overset{+}{x}t\overset{-}{z}}$ \\ \hline
$\mathbf{\overset{-}{x}t\overset{-}{z}}$ & $\mathbf{\underline{\overset{0}{x}\overset{+}{z}}}$ & $\mathbf{\overset{-}{x}t\overset{+}{z}}$ \\ \hline
$\mathbf{\overset{-}{x}t\overset{-}{z}}$ & $\mathbf{\underline{\overset{+}{x}\overset{-}{z}}}$ & $\mathbf{\overset{+}{x}t\overset{-}{z}}$ \\ \hline
$\mathbf{\overset{-}{x}t\overset{-}{z}}$ & $\mathbf{\underline{\overset{0}{x}\overset{0}{z}}}$ & $\mathbf{\overset{-}{x}t\overset{-}{z}}$ \\ \hline
$\mathbf{\overset{-}{x}t\overset{-}{z}}$ & $\mathbf{\underline{\overset{-}{x}\overset{+}{z}}}$ & $\mathbf{\overset{-}{x}t\overset{+}{z}}$ \\ \hline
$\mathbf{\overset{-}{x}t\overset{-}{z}}$ & $\mathbf{\underline{\overset{-}{xz}}}$ & $\mathbf{\overset{-}{x}t\overset{-}{z}}$ \\ \hline
\end{tabular}
\caption{Aggregation Table from Interval Sub-Pattern to Interval Overnode Pattern}\label{tab:ad1s_aoa3}
\end{table}

\begin{proof}[Table \ref{tab:ad1s_aoa3}]
\begin{enumerate}
\item As the aggregation involves \kwd{AND} blocks, from Definition \ref{def:ser}, it follows that each execution resulting from the aggregation consists of an execution of the \emph{Overnode's Overnode Child} interleaved with an execution of the \emph{Overnode's Undernode Children}.
\item We prove the correctness of the table by proving the correctness of the aggregation in each of its line. Considering by case the \emph{Overnode's Overnode Child} classifications:
\begin{description}
\item[$\mathbf{\overset{0}{x}t\overset{-}{z}}$]
\begin{enumerate}
\item From Definition \ref{def:iop_0t-}, it follows that there exists an execution of the \emph{Overnode's Overnode Child} where there is a $z$ on the right of $t$ with no $y$ between them, and no $y$ or $x$ on the left of the $t$.
\item Considering by case the \emph{Overnode's Undernode Children} classification:
\begin{description}
\item[$\mathbf{\underline{\overset{+}{x}\overset{+}{z}}}$] The correctness of the row follows directly from Lemma \ref{l:3super}.
\item[$\mathbf{\underline{\overset{+}{x}\overset{0}{z}}}$] 
\begin{enumerate}
\item From Definition \ref{def:isp_1}, it follows that \emph{Overnode's Undernode Children} contains an execution containing a $x$ and no $y$ or $z$.
\item From 1., (a) and i., it follows that the $x$ from $\mathbf{\underline{\overset{+}{x}\overset{0}{z}}}$ can be placed on the left of the $t$ from $\mathbf{\overset{0}{x}t\overset{-}{z}}$, while the $y$ remains on the right of the $t$.
\item From ii. and Definition \ref{def:iop_+t-}, it follows that the aggregation is then correctly classified as $\mathbf{\overset{+}{x}t\overset{-}{z}}$.
\end{enumerate}
\item[$\mathbf{\underline{\overset{0}{x}\overset{+}{z}}}$] 
\begin{enumerate}
\item From Definition \ref{def:isp_1'}, it follows that \emph{Overnode's Undernode Children} contains an execution containing a $z$ and no $y$ or $x$.
\item From 1., (a) and i., it follows that the $z$ from $\mathbf{\underline{\overset{0}{x}\overset{+}{z}}}$ can be placed between the $t$ and the $y$ from $\mathbf{\overset{0}{x}t\overset{-}{z}}$.
\item From ii. and Definition \ref{def:iop_0t+}, it follows that the aggregation is then correctly classified as $\mathbf{\overset{0}{x}t\overset{+}{z}}$.
\end{enumerate}
\item[$\mathbf{\underline{\overset{+}{x}\overset{-}{z}}}$] 
\begin{enumerate}
\item From Definition \ref{def:isp_1r}, it follows that \emph{Overnode's Undernode Children} contains an execution containing a $x$ and a $y$ on its right, while not containing $z$.
\item From 1., (a) and i., it follows that the $x$ from $\mathbf{\underline{\overset{+}{x}\overset{-}{z}}}$ can be placed on the left of the $t$ from $\mathbf{\overset{0}{x}t\overset{-}{z}}$, while the $y$ from $\mathbf{\underline{\overset{+}{x}\overset{-}{z}}}$ can be placed on the right of $t$.
\item From ii. and Definition \ref{def:iop_+t-}, it follows that the aggregation is then correctly classified as $\mathbf{\overset{+}{x}t\overset{-}{z}}$.
\end{enumerate}
\item[$\mathbf{\underline{\overset{0}{x}\overset{0}{z}}}$] The correctness of the row follows directly from Lemma \ref{l:neutral_iop}.
\item[$\mathbf{\underline{\overset{-}{x}\overset{+}{z}}}$] 
\begin{enumerate}
\item From Definition \ref{def:isp_1'l}, it follows that \emph{Overnode's Undernode Children} contains an execution containing a $z$ and a $y$ on its left, while not containing $x$.
\item From 1., (a) and i., it follows that the $z$ from $\mathbf{\underline{\overset{-}{x}\overset{+}{z}}}$ can be placed between the $t$ and the $y$ from $\mathbf{\overset{0}{x}t\overset{-}{z}}$, while the $y$ from $\mathbf{\underline{\overset{-}{x}\overset{+}{z}}}$ remains on the right of $z$.
\item From 1., (a) and i., it follows that the $z$ and the $y$ from $\mathbf{\underline{\overset{-}{x}\overset{+}{z}}}$ can be placed on the right the $t$ and the $y$ from $\mathbf{\overset{0}{x}t\overset{-}{z}}$.
\item From ii. and Definition \ref{def:iop_-t+}, it follows that the aggregation is then correctly classified as $\mathbf{\overset{-}{x}t\overset{+}{z}}$.
\item From iii. and Definition \ref{def:iop_0t-}, it follows that the aggregation is then correctly classified as $\mathbf{\overset{0}{x}t\overset{-}{z}}$.
\item From iv. and v., it follows that the classification of the row is correct.
\end{enumerate}
\item[$\mathbf{\underline{\overset{-}{xz}}}$] The correctness of the row follows directly from Lemma \ref{l:htb}.
\end{description}
\end{enumerate}
\item[$\mathbf{\overset{-}{x}t\overset{0}{z}}$]
\begin{enumerate}
\item From Definition \ref{def:iop_-t0}, it follows that there exists an execution of the \emph{Overnode's Overnode Child} where there is a $z$ on the right of $t$ with no $y$ between them, and no $y$ or $x$ on the left of the $t$.
\item Considering by case the \emph{Overnode's Undernode Children} classification:
\begin{description}
\item[$\mathbf{\underline{\overset{+}{x}\overset{+}{z}}}$] The correctness of the row follows directly from Lemma \ref{l:3super}.
\item[$\mathbf{\underline{\overset{+}{x}\overset{0}{z}}}$] 
\begin{enumerate}
\item From Definition \ref{def:isp_1}, it follows that \emph{Overnode's Undernode Children} contains an execution containing a $x$ and no $y$ or $z$.
\item From 1., (a) and i., it follows that the $x$ from $\mathbf{\underline{\overset{+}{x}\overset{0}{z}}}$ can be placed between the $y$ and $t$ from $\mathbf{\overset{-}{x}t\overset{0}{z}}$.
\item From ii. and Definition \ref{def:iop_+t0}, it follows that the aggregation is then correctly classified as $\mathbf{\overset{+}{x}t\overset{0}{z}}$.
\end{enumerate}
\item[$\mathbf{\underline{\overset{0}{x}\overset{+}{z}}}$] 
\begin{enumerate}
\item From Definition \ref{def:isp_1'}, it follows that \emph{Overnode's Undernode Children} contains an execution containing a $z$ and no $y$ or $x$.
\item From 1., (a) and i., it follows the $z$ from $\mathbf{\underline{\overset{0}{x}\overset{+}{z}}}$ can be placed on the right of the $y$ and the $t$ from $\mathbf{\overset{-}{x}t\overset{0}{z}}$.
\item From ii. and Definition \ref{def:iop_-t+}, it follows that the aggregation is then correctly classified as $\mathbf{\overset{-}{x}t\overset{+}{z}}$.
\end{enumerate}
\item[$\mathbf{\underline{\overset{+}{x}\overset{-}{z}}}$] 
\begin{enumerate}
\item From Definition \ref{def:isp_1r}, it follows that \emph{Overnode's Undernode Children} contains an execution containing a $x$ and a $y$ on its right, while not containing $z$.
\item From 1., (a) and i., it follows that the $x$ from $\mathbf{\underline{\overset{+}{x}\overset{-}{z}}}$ can be placed between the $y$ and $t$ from $\mathbf{\overset{-}{x}t\overset{0}{z}}$, while the $y$ from $\mathbf{\underline{\overset{+}{x}\overset{-}{z}}}$ can be placed on the left of $t$.
\item From 1., (a) and i., it follows that the $x$ and the $y$ from $\mathbf{\underline{\overset{+}{x}\overset{-}{z}}}$ can be placed on the left of the $y$ and $t$ from $\mathbf{\overset{-}{x}t\overset{0}{z}}$.
\item From ii. and Definition \ref{def:iop_+t-}, it follows that the aggregation is then correctly classified as $\mathbf{\overset{+}{x}t\overset{-}{z}}$.
\item From iii. and Definition \ref{def:iop_-t0}, it follows that the aggregation is then correctly classified as $\mathbf{\overset{-}{x}t\overset{0}{z}}$.
\item From iv. and v., it follows that the classification of the row is correct.
\end{enumerate}
\item[$\mathbf{\underline{\overset{0}{x}\overset{0}{z}}}$] The correctness of the row follows directly from Lemma \ref{l:neutral_iop}.
\item[$\mathbf{\underline{\overset{-}{x}\overset{+}{z}}}$] 
\begin{enumerate}
\item From Definition \ref{def:isp_1'l}, it follows that \emph{Overnode's Undernode Children} contains an execution containing a $z$ and a $y$ on its left, while not containing $x$.
\item From 1., (a) and i., it follows that the $z$ from $\mathbf{\underline{\overset{-}{x}\overset{+}{z}}}$ can be placed on the right of the $t$ from $\mathbf{\overset{-}{x}t\overset{0}{z}}$, and the $y$ from $\mathbf{\underline{\overset{-}{x}\overset{+}{z}}}$ can be placed on the left of $t$.
\item From ii. and Definition \ref{def:iop_-t+}, it follows that the aggregation is then correctly classified as $\mathbf{\overset{-}{x}t\overset{+}{z}}$.
\end{enumerate}
\item[$\mathbf{\underline{\overset{-}{xz}}}$] The correctness of the row follows directly from Lemma \ref{l:htb}.
\end{description}
\end{enumerate}
\item[$\mathbf{\overset{-}{x}t\overset{-}{z}}$]
\begin{enumerate}
\item From Definition \ref{def:iop_-t-}, it follows that there exists an execution of the \emph{Overnode's Overnode Child} where there is a $z$ on the right of $t$ with no $y$ between them, and no $y$ or $x$ on the left of the $t$.
\item Considering by case the \emph{Overnode's Undernode Children} classification:
\begin{description}
\item[$\mathbf{\underline{\overset{+}{x}\overset{+}{z}}}$] The correctness of the row follows directly from Lemma \ref{l:3super}.
\item[$\mathbf{\underline{\overset{+}{x}\overset{0}{z}}}$] 
\begin{enumerate}
\item From Definition \ref{def:isp_1}, it follows that \emph{Overnode's Undernode Children} contains an execution containing a $x$ and no $y$ or $z$.
\item From 1., (a) and i., it follows that the $x$ from $\mathbf{\underline{\overset{+}{x}\overset{0}{z}}}$ can be placed between the $t$ and its left $y$ from $\mathbf{\overset{-}{x}t\overset{-}{z}}$.
\item From ii. and Definition \ref{def:iop_+t-}, it follows that the aggregation is then correctly classified as $\mathbf{\overset{+}{x}t\overset{-}{z}}$.
\end{enumerate}
\item[$\mathbf{\underline{\overset{0}{x}\overset{+}{z}}}$] 
\begin{enumerate}
\item From Definition \ref{def:isp_1'}, it follows that \emph{Overnode's Undernode Children} contains an execution containing a $z$ and no $y$ or $x$.
\item From 1., (a) and i., it follows that the $x$ from $\mathbf{\underline{\overset{0}{x}\overset{+}{z}}}$ can be placed between the $t$ and its right $y$ from $\mathbf{\overset{-}{x}t\overset{-}{z}}$.
\item From ii. and Definition \ref{def:iop_-t+}, it follows that the aggregation is then correctly classified as $\mathbf{\overset{-}{x}t\overset{+}{z}}$.
\end{enumerate}
\item[$\mathbf{\underline{\overset{+}{x}\overset{-}{z}}}$] 
\begin{enumerate}
\item From Definition \ref{def:isp_1r}, it follows that \emph{Overnode's Undernode Children} contains an execution containing a $x$ and a $y$ on its right, while not containing $z$.
\item From 1., (a) and i., it follows that the $x$ from $\mathbf{\underline{\overset{+}{x}\overset{-}{z}}}$ can be placed between the $t$ and its left $y$ from $\mathbf{\overset{-}{x}t\overset{-}{z}}$, while the $y$ from $\mathbf{\underline{\overset{+}{x}\overset{-}{z}}}$ can be placed on the right of the $t$.
\item From ii. and Definition \ref{def:iop_+t-}, it follows that the aggregation is then correctly classified as $\mathbf{\overset{+}{x}t\overset{-}{z}}$.
\end{enumerate}
\item[$\mathbf{\underline{\overset{0}{x}\overset{0}{z}}}$] The correctness of the row follows directly from Lemma \ref{l:neutral_iop}.
\item[$\mathbf{\underline{\overset{-}{x}\overset{+}{z}}}$] 
\begin{enumerate}
\item From Definition \ref{def:isp_1'l}, it follows that \emph{Overnode's Undernode Children} contains an execution containing a $z$ and a $y$ on its left, while not containing $x$.
\item From 1., (a) and i., it follows that the $x$ from $\mathbf{\underline{\overset{-}{x}\overset{+}{z}}}$ can be placed between the $t$ and its right $y$ from $\mathbf{\overset{-}{x}t\overset{-}{z}}$, while the $y$ from $\mathbf{\underline{\overset{-}{x}\overset{+}{z}}}$ can be placed on the left of the $t$.
\item From ii. and Definition \ref{def:iop_-t+}, it follows that the aggregation is then correctly classified as $\mathbf{\overset{-}{x}t\overset{+}{z}}$.
\end{enumerate}
\item[$\mathbf{\underline{\overset{-}{xz}}}$] The correctness of the row follows directly from Lemma \ref{l:htb}.
\end{description}
\end{enumerate}
\end{description}
\end{enumerate}
\end{proof}
\newpage
\subsection{Generalised Sequence Pattern}

\begin{definition}[Generalised Sequence Pattern $\mathbf{\overset{+}{x}\overset{+}{z}}$]\label{def:gsc_x+z+}
Exists an execution such that: both $x$ and $z$ sub-patterns are fulfilled and in the correct order.

\noindent\textbf{Formally}: 

\noindent Given a process block $B$, it belongs to this class if and only if:
\begin{itemize}
\item $\exists \exe \in \Exe{B}$ such that:
\begin{itemize}
\item $\exists $x$ \in \exe$ such that:
\begin{itemize}
\item $\not \exists $y$ \in \exe$ such that $x \preceq y$, and
\item $\exists$ z $\in \exe$ such that:
\begin{itemize}
\item $x \preceq z$, and
\item $\not \exists $k$ \in \exe$ such that $z \preceq k$
\end{itemize}
\end{itemize}
\end{itemize}
\end{itemize}
\end{definition}

\begin{definition}[Generalised Sequence Pattern $\mathbf{\overset{+}{x}\overset{0}{z}}$]\label{def:gsc_x+z0}
Exists an execution such that: $x$ sub-pattern satisfied, but $z$ sub-pattern not-satisfied on the right of the former, neither invalidated.

\noindent\textbf{Formally}: 

\noindent Given a process block $B$, it belongs to this class if and only if:
\begin{itemize}
\item $\exists \exe \in \Exe{B}$ such that:
\begin{itemize}
\item $\exists $x$ \in \exe$ such that:
\begin{itemize}
\item $\not \exists $y$ \in \exe$ such that $x \preceq y$, and
\item $\not \exists$ z $\in \exe$ such that $x \preceq z$, and
\item $\not \exists $k$ \in \exe$ such that $x \preceq k$
\end{itemize}
\end{itemize}
\end{itemize}
\end{definition}

\begin{definition}[Generalised Sequence Pattern $\mathbf{\overset{0}{x}\overset{+}{z}}$]\label{def:gsc_x0z+}
Exists an execution such that: $z$ sub-pattern satisfied and $x$ is neither satisfied or invalidated on the left of the former.

\noindent\textbf{Formally}: 

\noindent Given a process block $B$, it belongs to this class if and only if:
\begin{itemize}
\item $\exists \exe \in \Exe{B}$ such that:
\begin{itemize}
\item $\exists $z$ \in \exe$ such that:
\begin{itemize}
\item $\not \exists $y$ \in \exe$ , and
\item $\not \exists$ x $\in \exe$ such that $x \preceq z$, and
\item $\not \exists $k$ \in \exe$ such that $z \preceq k$
\end{itemize}
\end{itemize}
\end{itemize}
\end{definition}

\begin{definition}[Generalised Sequence Pattern $\mathbf{\overset{0}{x}\overset{0}{z}}$]\label{def:gsc_x0z0}
Exists an execution such that: neither sub-pattern invalidated or satisfied.

\noindent\textbf{Formally}: 

\noindent Given a process block $B$, it belongs to this class if and only if:
\begin{itemize}
\item $\exists \exe \in \Exe{B}$ such that:
\begin{itemize}
\item $\not \exists $x$ \in \exe$,
\item $\not \exists $y$ \in \exe$,
\item $\not \exists$ z $\in \exe$ , and
\item $\not \exists $k$ \in \exe$
\end{itemize}
\end{itemize}
\end{definition}

\begin{definition}[Generalised Sequence Pattern $\mathbf{\overset{-}{x}\overset{+}{z}}$]\label{def:gsc_x-z+}
Exists an execution such that: $z$ sub-pattern satisfied and $x$ is invalidated on the left of the former.

\noindent\textbf{Formally}: 

\noindent Given a process block $B$, it belongs to this class if and only if:
\begin{itemize}
\item $\exists \exe \in \Exe{B}$ such that:
\begin{itemize}
\item $\exists $z$ \in \exe$ such that:
\begin{itemize}
\item $\exists $y$ \in \exe$ such that $y \preceq z$, 
\item $\not \exists$ x $\in \exe$ such that $x \preceq z$, and
\item $\not \exists $k$ \in \exe$ such that $z \preceq k$
\end{itemize}
\end{itemize}
\end{itemize}
\end{definition}

\begin{definition}[$\mathbf{\overset{+}{x}\overset{-}{z}}$]\label{def:gsc_x+z-}
Exists an execution such that: $x$ sub-pattern satisfied and $z$ is invalidated on the right of the former.

\noindent\textbf{Formally}: 

\noindent Given a process block $B$, it belongs to this class if and only if:
\begin{itemize}
\item $\exists \exe \in \Exe{B}$ such that:
\begin{itemize}
\item $\exists $x$ \in \exe$ such that:
\begin{itemize}
\item $\exists $k$ \in \exe$ such that $x \preceq k$, 
\item $\not \exists$ y $\in \exe$ such that $k \preceq y$, and
\item $\not \exists $z$ \in \exe$ such that $k \preceq z$
\end{itemize}
\end{itemize}
\end{itemize}
\end{definition}

\begin{definition}[Generalised Sequence Pattern $\mathbf{\overset{0}{x}\overset{-}{z}}$]\label{def:gsc_x0z-}
Exists an execution: $z$ is falsified and $x$ is neither satisfied or falsified on the left of $z$.

\noindent\textbf{Formally}: 

\noindent Given a process block $B$, it belongs to this class if and only if:
\begin{itemize}
\item $\exists \exe \in \Exe{B}$ such that:
\begin{itemize}
\item $\not \exists $y$ \in \exe$, and
\item $\exists $k$ \in \exe$ such that:
\begin{itemize}
\item $\not \exists$ x $\in \exe$ such that $x \preceq z$, and
\item $\not \exists $z$ \in \exe$ such that $k \preceq z$
\end{itemize}
\end{itemize}
\end{itemize}
\end{definition}

\begin{definition}[Generalised Sequence Pattern $\mathbf{\overset{-}{xz}}$]\label{def:gsc_xz-}
Exists an execution: $z$ is falsified and $x$ is neither satisfied or falsified on the left of $z$.

\noindent\textbf{Formally}: 

\noindent Given a process block $B$, it belongs to this class if and only if:
\begin{itemize}
\item $\forall \exe \in \Exe{B}$ such that:
\begin{itemize}
\item $\exists $y$ \in \exe$, such that:
\begin{itemize}
\item $\not \exists$ x $\in \exe$, such that $y \preceq x$, and
\item $\forall$ z $\in \exe$, such that $y \preceq z$:
\begin{itemize}
\item $\exists$ k $\in \exe$, such that $z \preceq k$
\end{itemize}
\end{itemize}
\item OR $\exists $y$ \in \exe$, such that:
\begin{itemize}
\item $\not \exists$ x $\in \exe$, such that $y \preceq x$, and
\item $\exists$ z $\in \exe$, such that $y \preceq z$:
\begin{itemize}
\item $\not\exists$ k $\in \exe$, such that $z \preceq k$
\end{itemize}
\end{itemize}
\end{itemize}
\end{itemize}
\end{definition}

\begin{theorem}[Classification Completeness for Generalised Sequence Pattern]
The set of possible evaluations of Generalised Sequence Pattern is completely covered by the provided set of classifications. 
\end{theorem}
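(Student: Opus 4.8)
The plan is to mirror the completeness arguments already established for the Left Sub-Pattern, the Interval Sub-Pattern, and the Interval Overnode Pattern, since the Generalised Sequence Pattern is once again assembled from two essentially one-sided sub-requirements. First I would observe that $\kwd{gsp}(x,y,z,k)$ decomposes into a \emph{left part} --- the existence of an $x$ with no $y$ on its right --- and a \emph{right part} --- the existence of a $z$ with no $k$ on its right --- linked by the ordering constraint $x \preceq z$. Each of these two parts admits exactly three mutually exclusive evaluation outcomes for a given process block: the part is fulfilled ($+$), it is irrecoverably blocked ($-$), or it is in a neutral state ($0$) in which neither its required element nor its forbidding element occurs. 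This is the same trichotomy already argued for the single-sided patterns.

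Next I would argue that the two parts are evaluated independently. The left part depends only on the relative position of $x$ and its forbidding element $y$, whereas the right part depends only on the relative position of $z$ and its forbidding element $k$. Because the forbidding elements differ ($y$ governs the left side, $k$ governs the right side), the evaluation of one side never directly constrains the evaluation of the other, exactly as in the Interval Sub-Pattern case, where even the \emph{shared} undesired element $y$ was shown not to couple the two sides. From three outcomes per side together with this independence it follows that the number of combined evaluations is $3^2 = 9$.

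Finally I would match these nine combinations against the classes of Table~\ref{tb:gspclass}: seven combinations are captured by dedicated classes, while the remaining two, $\mathbf{\overset{-}{x}\overset{-}{z}}$ and $\mathbf{\overset{-}{x}\overset{0}{z}}$, are both subsumed by the single class $\mathbf{\overset{-}{xz}}$, as anticipated in the table. I expect the main obstacle to lie precisely in justifying this last collapse rather than in the combinatorial counting that settles the other eight cases. One must show that, from the standpoint of evaluating a decomposed $\overline{A}\Delta2$ constraint (Definition~\ref{def:ad2d}) over a \kwd{SEQ} overnode, the two absorbed combinations are operationally indistinguishable: in both the left requirement is definitively lost, and the right side carries no usable $z$ to the right of the blocking element, so that merging them discards no information that a later \kwd{SEQ} aggregation could exploit. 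Establishing this faithfully requires appealing to the execution semantics of Definition~\ref{def:ser} and to the preference ordering of Figure~\ref{f:ad2_alt_sequndernodeclasses}, rather than to the purely enumerative argument used for the remaining classes.
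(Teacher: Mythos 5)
Your proposal is correct and takes essentially the same route as the paper's own proof: a three-valued evaluation ($+$, $0$, $-$) for each of the two sides, independence of the sides, hence $3^2 = 9$ combinations, of which seven have dedicated classes and the remaining two are absorbed by $\mathbf{\overset{-}{xz}}$. The only divergence is your final paragraph: since the theorem claims only \emph{coverage} (every possible evaluation falls into some class), the paper settles the collapse of $\mathbf{\overset{-}{x}\overset{-}{z}}$ and $\mathbf{\overset{-}{x}\overset{0}{z}}$ purely by noting that Definition~\ref{def:gsc_xz-} encompasses both; the ``operational indistinguishability'' under later \kwd{SEQ} aggregations that you flag as the main obstacle is a soundness question, which the paper defers to the aggregation-table proofs rather than folding into this completeness argument.
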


\begin{proof}

\begin{enumerate}
\item The Generalised Sequence Pattern is composed of two sub patterns on the left of $t$, the far ($x$) and the near ($z$) one, and each allows 3 possible evaluations: whether the partial requirement is satisfied, failing, or in a neutral state.
\item The near and far sub patterns are independent, and the only shared element between the two sub patterns is $t$.
\item From 1. and 2., it follows that the amount of possible combinations in Generalised Sequence Pattern is $3^2$, hence 9 possible combinations.
\item 7 of the 9 possible combinations are covered by the classifications described from Definition \ref{def:gsc_x+z+} to Definition \ref{def:gsc_x0z-}.
\item The remainder 2 combinations are covered by the classification in Definition \ref{def:gsc_xz-}.
\item From 3., 4., and 5., it follows that every possible combination is covered by the classifications allowed while evaluating the Generalised Sequence Pattern.
\end{enumerate}
\end{proof}

\subsubsection{Lemmas}

\begin{lemma}[Aggregation Neutral Class]\label{l:gsc_neutral}
Given a process block $A$, assigned to the evaluation class $\mathbf{\overset{0}{x}\overset{0}{z}}$, and another process block $B$, assigned to any of the available evaluation classes. Let $C$ be a process block having $A$ and $B$ as its sub-blocks, then the evaluation class of $C$ is the same class as the process block $B$.
\end{lemma}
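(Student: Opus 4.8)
The plan is to prove Lemma \ref{l:gsc_neutral} by contradiction, reusing the template of Lemma \ref{l:neutral} but adapting it to the richer four-symbol vocabulary $x, y, z, k$ of the Generalised Sequence Pattern and to the non-linear preference structure of Figure \ref{f:ad2_alt_sequndernodeclasses}. First I would unpack the hypothesis: by Definition \ref{def:gsc_x0z0}, a block $A$ in class $\mathbf{\overset{0}{x}\overset{0}{z}}$ admits an execution $\exe_A \in \Exe{A}$ in which none of $x, y, z, k$ occurs. This ``clean'' execution is the engine of the whole argument, since concatenating or interleaving it with an execution of $B$ introduces no symbol that could create, destroy, or reorder any sub-pattern occurrence.

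Then I would establish the two inclusions that together force $C$ and $B$ into the same class. For the first direction, I fix a class attained by $B$ together with a witnessing execution $\exe_B \in \Exe{B}$. By Definition \ref{def:ser}, when $C = \kwd{SEQ}(A, B)$ the execution $\exe_A +_{\mathbb{P}} \exe_B$ belongs to $\Exe{C}$, and because $\exe_A$ contributes no relevant symbol, the ordering relationships among $x, y, z, k$ inside $\exe_B$ are preserved verbatim; the symmetric placement of $\exe_A$ after $\exe_B$ handles witnesses that require material on the right. For $C = \kwd{AND}(A, B)$, Lemma \ref{lem:inclusivity} gives $\Exe{\kwd{SEQ}(A,B)} \subseteq \Exe{C}$, so $C$ attains at least whatever class the \kwd{SEQ} case produced. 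Hence $C$ attains every class $B$ attains. For the converse direction I would argue that $A$ cannot push $C$ to a strictly better class: by the class definitions (Definitions \ref{def:gsc_x+z+}--\ref{def:gsc_xz-}) any such improvement would require a relevant symbol supplied by $A$, contradicting $A \in \mathbf{\overset{0}{x}\overset{0}{z}}$; symmetrically it cannot worsen $B$, since the clean execution can always be positioned so as not to interpose a $y$ or $k$ inside an otherwise satisfied interval.

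The main obstacle will be the non-linearity of the preference order. Unlike the left and right sub-patterns of Lemma \ref{l:neutral} and Lemma \ref{l:neutral_rsp}, whose classes form a total order, the Generalised Sequence Pattern classes form a lattice with incomparable elements (e.g. $\mathbf{\overset{+}{x}\overset{-}{z}}$ and $\mathbf{\overset{-}{x}\overset{+}{z}}$ sit beside $\mathbf{\overset{0}{x}\overset{0}{z}}$). Consequently the ``cannot improve, cannot worsen'' step must be checked branch by branch rather than by a single scalar comparison, and I must confirm that the clean execution of $A$ leaves the two independent sides ($x$ and $z$) of $B$'s classification untouched \emph{simultaneously}. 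I expect the argument to collapse once the independence of the two sides is invoked, exactly as recorded for the analogous Interval Sub-Pattern statement (Lemma \ref{l:neutral_isp}), so that the bulk of the remaining work is bookkeeping over the lattice rather than genuine conceptual difficulty.
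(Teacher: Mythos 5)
Your proposal is correct and takes essentially the same route as the paper: the paper's own proof of Lemma \ref{l:gsc_neutral} is a one-line deferral to the neutrality-by-contradiction argument of Lemma \ref{l:neutral}, which is exactly the template you adapt. The additional detail you supply (the clean execution from Definition \ref{def:gsc_x0z0}, Lemma \ref{lem:inclusivity} for the \kwd{AND} case, and the branch-by-branch treatment of the non-linear preference lattice) only makes explicit what the paper leaves implicit.
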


\begin{proof}
This proof follows closely the proof for Lemma \ref{l:neutral}
\end{proof}

\begin{lemma}[\kwd{AND} Super Element]\label{l:gsc_super}
Given a process block $A$, assigned to the evaluation class $\mathbf{\overset{+}{x}\overset{+}{z}}$, and another process block $B$, assigned to any of the available evaluation classes. Let $B$ be a process block of type \kwd{AND} having $A$ as one of its sub-blocks, then the evaluation class of $B$ is always $\mathbf{\overset{+}{x}\overset{+}{z}}$.
\end{lemma}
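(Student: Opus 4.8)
The plan is to exhibit a single execution of the super-block that witnesses membership in the top class $\mathbf{\overset{+}{x}\overset{+}{z}}$, following the same strategy used in the proof of Lemma~\ref{l:c3a}. To sidestep the overloaded naming in the statement, I would write the \kwd{AND} super-block as $\kwd{AND}(A, C)$, where $A$ is the sub-block classified as $\mathbf{\overset{+}{x}\overset{+}{z}}$ and $C$ collects all the remaining sub-blocks into a single process block; this reduction to the binary case is justified by Lemma~\ref{l:commutativity} together with the associativity of the operations in Definition~\ref{def:ser} (equivalently, by the cumulative binary aggregation of Definition~\ref{def:aggregation_fun}). By Definition~\ref{def:ser}, $C$ has at least one execution, so I would fix any $\exe_C \in \Exe{C}$.

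Next I would unfold the hypothesis. Because $A$ is classified as $\mathbf{\overset{+}{x}\overset{+}{z}}$, Definition~\ref{def:gsc_x+z+} gives an execution $\exe_A \in \Exe{A}$ containing a task $x$ with no $y$ to its right, together with a task $z$ satisfying $x \preceq z$ and having no $k$ to its right. The central construction is the sequential execution $\exe_C +_\mathbb{P} \exe_A$, which places every task of $\exe_C$ strictly before every task of $\exe_A$. This is, by definition, an execution of $\kwd{SEQ}(C, A)$, and by Lemma~\ref{lem:inclusivity} it therefore also belongs to $\Exe{\kwd{AND}(C, A)}$, which by Lemma~\ref{l:commutativity} equals $\Exe{\kwd{AND}(A, C)}$.

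It then remains to verify that $\exe_C +_\mathbb{P} \exe_A$ meets the defining conditions of $\mathbf{\overset{+}{x}\overset{+}{z}}$. The witnessing $x$ and $z$ are taken from $\exe_A$, so $x \preceq z$ is immediate. The two absence conditions hold because every task of $C$ precedes the whole of $\exe_A$: any $y$ or $k$ contributed by $C$ lies to the left of $x$, and hence of $z$, while $\exe_A$ itself already contains no $y$ after $x$ and no $k$ after $z$ by hypothesis. Thus no $y$ occurs after $x$ and no $k$ after $z$ in the combined execution, so Definition~\ref{def:gsc_x+z+} is satisfied. Since $\mathbf{\overset{+}{x}\overset{+}{z}}$ is the maximal element of the preference lattice in Figure~\ref{f:ad2_alt_sequndernodeclasses}, no strictly better class is available, and therefore the super-block is classified exactly as $\mathbf{\overset{+}{x}\overset{+}{z}}$.

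The only point requiring care — the analogue of the main obstacle in Lemma~\ref{l:c3a} — is ensuring that the interleaving freedom of \kwd{AND} is exploited in the correct direction: pushing the entire execution of $C$ to the left of $A$ is precisely what guarantees that none of $C$'s potentially harmful $y$ or $k$ tasks land after the witnessing $x$ or $z$. Because the $\mathbf{\overset{+}{x}\overset{+}{z}}$ conditions forbid undesired tasks only on the right of the witnesses, this one-sided placement is enough, and no case analysis on the classification of $C$ is needed.
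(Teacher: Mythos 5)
Your proof is correct and follows essentially the same route as the paper's: exploit the \kwd{AND} interleaving freedom to place all tasks of the remaining sub-blocks where they cannot interfere with the witnessing execution of $A$, then appeal to the maximality of $\mathbf{\overset{+}{x}\overset{+}{z}}$ in the preference lattice of Figure~\ref{f:ad2_alt_sequndernodeclasses}. You are in fact slightly more careful than the paper, whose own proof loosely claims the other executions can be appended \emph{before or after} $\exe_A$; as you correctly observe, only the left placement is safe in general, since any $y$ or $k$ contributed by the other sub-blocks must not land to the right of the witnesses.
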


\begin{proof}
\begin{enumerate}
\item From Definition \ref{def:gsc_x+z+}, it follows the classification $\mathbf{\overset{+}{x}\overset{+}{z}}$ contains the required properties to satisfy the Generalised Sequence Pattern.
\item From the premise, we have that a process block $A$ is classified as $\mathbf{\overset{+}{x}\overset{+}{z}}$ and is one of the sub-blocks of a process block $C$ of type \kwd{AND}.
\item From 1., 2., and Definition \ref{def:ser}, it follows that as the requirements for the Generalised Sequence Pattern are satisfied already by $A$, then one of the possible executions of $B$ is the result of append the partial executions of from the other sub-blocks of $B$ either before of after the partial execution of $A$ that allows it to be classified as $\mathbf{\overset{+}{x}\overset{+}{z}}$.
\item From 3., it follows that there exists an execution in $B$ that maintains the same properties as the one allowing to classify $A$ as  $\mathbf{\overset{+}{x}\overset{+}{z}}$.
\item From 4., and Definition \ref{def:gsc_x+z+}, it follows that $B$ can be indeed classified as  $\mathbf{\overset{+}{x}\overset{+}{z}}$.
\end{enumerate}
\end{proof}

\begin{lemma}[\kwd{SEQ} Right Dominant]\label{l:srd}
Given an aggregation $\kwd{SEQ}(A,B)$, where $B$ is classified either as $\mathbf{\overset{+}{x}\overset{+}{z}}$ , $\mathbf{\overset{-}{x}\overset{+}{z}}$ , $\mathbf{\overset{0}{x}\overset{-}{z}}$, $\mathbf{\overset{+}{x}\overset{-}{z}}$ or $\mathbf{\overset{-}{xz}}$, then the result of the aggregation is the same class as $B$ independently from the classification of $A$.
\end{lemma}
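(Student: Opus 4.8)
The plan is to reduce everything to the concatenation semantics of Definition~\ref{def:ser}: every execution of $\kwd{SEQ}(A,B)$ is an execution of $A$ immediately followed by an execution of $B$, so every task contributed by $A$ precedes every task contributed by $B$. Since in a generalised sequence pattern the distinguished element $z$ and its right-forbidden companion $k$ both sit on the right (towards the trigger), while $A$'s tasks are pushed entirely to the far left, the intuition is that for the five listed classes the \emph{right-side verdict} is already settled by $B$ and cannot be disturbed by prefixing $A$. I would therefore argue by cases on the class of $B$, and for each establish a lower bound (a witness for $B$'s class, prefixed by an arbitrary execution of $A$, still witnesses the same class) together with an upper bound (no execution of $\kwd{SEQ}(A,B)$ witnesses a strictly higher class in the lattice of Figure~\ref{f:ad2_alt_sequndernodeclasses}). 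Any $A$ classified neutrally is discharged immediately by Lemma~\ref{l:gsc_neutral}, so throughout I may assume $A$ actually carries pattern-relevant tasks.

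Two cases are essentially free. For $\mathbf{\overset{+}{x}\overset{+}{z}}$, the defining conditions of Definition~\ref{def:gsc_x+z+} constrain only the tasks lying at or to the right of a witnessing $x$, none of which are affected by prefixing $A$; hence the witness persists, and since $\mathbf{\overset{+}{x}\overset{+}{z}}$ is the top of the lattice no promotion is possible (this is exactly the $\kwd{SEQ}$ half of the reasoning behind Lemma~\ref{l:gsc_super}). For $\mathbf{\overset{-}{xz}}$, Definition~\ref{def:gsc_xz-} is a \emph{universal} ($\forall\,\exe$) failure condition; prefixing $A$ only adds tasks to the left and cannot manufacture a cured $z$ in the forbidden region of any execution, so the universal failure is inherited by $\kwd{SEQ}(A,B)$.

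The real work, and the main obstacle, lies in the three classes whose $z$-side is either blocked or failed, namely $\mathbf{\overset{-}{x}\overset{+}{z}}$, $\mathbf{\overset{+}{x}\overset{-}{z}}$ and $\mathbf{\overset{0}{x}\overset{-}{z}}$, because here $A$ feeds precisely the left ($x$) side of the pattern and could a priori lift an $\overset{0}{x}$ to $\overset{+}{x}$ or complete a clean $x$--$z$ pair. For $\mathbf{\overset{-}{x}\overset{+}{z}}$ the key is to exploit the maximality of $B$'s classification: since $B$ is not classified as the strictly better $\mathbf{\overset{0}{x}\overset{+}{z}}$, every good $z$ of $B$ (one with no $k$ to its right) must be preceded \emph{within} $B$ by a blocking $y$ (Definition~\ref{def:gsc_x-z+}); as that $y$ lies to the right of all of $A$, any $x$ donated by $A$ is separated from $B$'s $z$ by this $y$, so no clean $x$--$z$ pair can form and promotion to $\mathbf{\overset{+}{x}\overset{+}{z}}$ or $\mathbf{\overset{0}{x}\overset{+}{z}}$ is impossible. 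The subtlest point is the upper bound for the $z$-failed classes: measured against Definition~\ref{def:gsc_x+z-} and Definition~\ref{def:gsc_x0z-}, a left-donated $x$ sitting before $B$'s dangling $k$ appears able to upgrade $\mathbf{\overset{0}{x}\overset{-}{z}}$ towards $\mathbf{\overset{+}{x}\overset{-}{z}}$, and the crux of the proof is to show---again through the placement of the forbidden $k$ and the maximality of $B$---that such an upgrade is in fact blocked (or otherwise to reconcile the literal definitions with the table entries). I expect this interaction on the $x$-side, and not the bookkeeping for the $z$-side, to be where the argument must be carried out with care.
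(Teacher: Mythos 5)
Your overall skeleton (per-case witness persistence plus a no-promotion argument, neutral $A$ discharged by Lemma~\ref{l:gsc_neutral}) matches what a real proof would need, and your handling of $\mathbf{\overset{+}{x}\overset{+}{z}}$, $\mathbf{\overset{-}{xz}}$, and $\mathbf{\overset{-}{x}\overset{+}{z}}$ (the last via maximality of $B$'s classification) is sound. The gap is the case you explicitly defer --- whether a live $x$ donated by $A$ can promote $B \in \{\mathbf{\overset{0}{x}\overset{-}{z}}, \mathbf{\overset{+}{x}\overset{-}{z}}\}$ --- and it is a genuine gap because that step cannot be carried out: the promotion you worry about actually occurs. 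Take $A$ a single task annotated $\{x\}$, so $\kwd{cl}(A)=\mathbf{\overset{+}{x}\overset{0}{z}}$ by Definition~\ref{def:gsc_x+z0}, and $B$ a single task annotated $\{k\}$, so $\kwd{cl}(B)=\mathbf{\overset{0}{x}\overset{-}{z}}$ by Definition~\ref{def:gsc_x0z-}. The unique execution of $\kwd{SEQ}(A,B)$ is the $x$-task followed by the $k$-task: it contains an $x$, a $k$ to its right, and no $y$ and no $z$ to the right of that $k$, which is literally Definition~\ref{def:gsc_x+z-}. Since $\mathbf{\overset{+}{x}\overset{-}{z}}$ strictly dominates $\mathbf{\overset{0}{x}\overset{-}{z}}$ in the lattice of Figure~\ref{f:ad2_alt_sequndernodeclasses}, the paper's own maximality rule classifies $\kwd{SEQ}(A,B)$ as $\mathbf{\overset{+}{x}\overset{-}{z}}$, contradicting the lemma. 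The promotion is also semantically forced, not a definitional artifact: the generalised sequence pattern forbids $k$ only \emph{between $z$ and the trigger}, so a prefix $k$ never blocks a later $z$. Appending a third task annotated $\{z\}$ gives $x \prec k \prec z \prec t$, which fulfils the pattern, yet the table-driven aggregation (first $\mathbf{\overset{0}{x}\overset{-}{z}}$ by this lemma, then $\mathbf{\overset{0}{x}\overset{+}{z}}$ by Table~\ref{t:so_ad2_alt_agg_0-}) forgets the live $x$ and never reaches $\mathbf{\overset{+}{x}\overset{+}{z}}$, i.e.\ it misses a genuine fulfilment.

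So your instinct about where the difficulty sits is exactly right, but no amount of care with ``the placement of the forbidden $k$ and the maximality of $B$'' will close it: for the $\mathbf{\overset{0}{x}\overset{-}{z}}$ row the statement is false under the paper's class definitions, and could only be repaired by changing those definitions (e.g.\ decreeing that a dangling $k$ also kills any live $x$ to its left, which would then break soundness in the other direction, as the three-task example shows). For what it is worth, the paper's own proof of this lemma is a single informal sentence --- the left part ``becomes irrelevant \dots because $B$ contains a failure invalidating whatever $A$ can bring'' --- which silently assumes precisely this unjustified point, namely that $B$'s dangling $k$ invalidates $A$'s $x$. Your proposal is therefore incomplete, but it is incomplete exactly where the lemma is unsalvageable as stated; flagging that case rather than hand-waving past it is the correct mathematical outcome.
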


\begin{proof}
If one of such classes is on the right, it does not matter the left part, because it becomes irrelevant for various reasons, for instance because $B$ is a super, or because $B$ be contains a failure invalidating whatever $A$ can bring.
\end{proof}

\begin{lemma}[Both by Unblocking]\label{l:bbu}
Given an aggregation $\kwd{SEQ}(A,B)$, where $B$ is classified as $\mathbf{\overset{+}{x}\overset{0}{z}}$, the result is the same as $\kwd{XOR}(A,B)$.
\end{lemma}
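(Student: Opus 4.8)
The plan is to prove the identity by reducing both sides to a common description and then checking their agreement class-by-class on the classification of $A$. For the right-hand side, recall that the $\kwd{XOR}$ aggregation is the union of the operand classifications followed by the removal of dominated classes along the preference lattice of Figure~\ref{f:ad2_alt_sequndernodeclasses}; hence $\kwd{XOR}(A,B)$ is simply the simplification of $\kwd{cl}(A)\cup\{\mathbf{\overset{+}{x}\overset{0}{z}}\}$. For the left-hand side, recall that by Definition~\ref{def:ser} an execution of $\kwd{SEQ}(A,B)$ is a concatenation $\exe_A \cdot \exe_B$ in which every task of $B$ lies to the right of every task of $A$. The whole argument therefore turns on how a block classified $\mathbf{\overset{+}{x}\overset{0}{z}}$ behaves when placed, as $B$, on the right of such a concatenation.

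The first (and routine) half is to show that $\kwd{SEQ}(A,B)$ attains at least the $\kwd{XOR}$ result, i.e.\ it belongs to every class surviving the simplification of $\kwd{cl}(A)\cup\{\mathbf{\overset{+}{x}\overset{0}{z}}\}$. When $\kwd{cl}(A)=\mathbf{\overset{0}{x}\overset{0}{z}}$ this is immediate from Lemma~\ref{l:gsc_neutral}; when $A$ already dominates $\mathbf{\overset{+}{x}\overset{0}{z}}$ its own witnessing execution, extended by any $\exe_B$, supplies the required class (the extremal top case being covered by the reasoning of Lemma~\ref{l:gsc_super}); and in the remaining cases one exhibits the concatenation directly, using the witness of Definition~\ref{def:gsc_x+z0} so that $B$ contributes its clean $x$. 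The content here is only bookkeeping against the class definitions (Definitions~\ref{def:gsc_x+z+}--\ref{def:gsc_xz-}).

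The second half --- that $\kwd{SEQ}(A,B)$ attains no \emph{more} than the $\kwd{XOR}$ result --- is the crux, and the main obstacle. Since $B$ sits entirely to the right of $A$, there are only two mechanisms by which $B$ could change $A$'s class: it could supply a $z$ that completes an $x$ already present in $A$ into a fulfilled right sub-pattern (a climb to $\mathbf{\overset{+}{x}\overset{+}{z}}$), or it could introduce a blocker ($k$ or $y$) to the right of $A$'s achievement and demote it. The heart of \emph{Both by Unblocking} is precisely that a $\mathbf{\overset{+}{x}\overset{0}{z}}$ block does neither: its characterisation in Definition~\ref{def:gsc_x+z0} is to carry a fulfilled, unblocked $x$ while contributing no usable $z$ and no right-hand blocker, so the concatenation is non-destructive and non-augmenting. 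Establishing this non-interference rigorously is the delicate step: one must argue from the class assignment alone that a clean witness of $B$ can always be selected, and rule out every lattice-upward possibility by appeal to Definitions~\ref{def:gsc_x+z+}--\ref{def:gsc_xz-}. Note the contrast with Lemma~\ref{l:htb}, whose freedom to \emph{reposition} a blocker is available only under the interleaving semantics of $\kwd{AND}$ and not under the rigid concatenation of $\kwd{SEQ}$; here non-interference must instead come from the \emph{absence} of a harmful blocker in the chosen witness. Combining the two halves yields, in every case, $\kwd{SEQ}(A,B)=\kwd{XOR}(A,B)$.
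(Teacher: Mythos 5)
Your two-direction skeleton (showing $\kwd{SEQ}(A,B)$ attains at least, and at most, the pruned union $\kwd{cl}(A)\cup\{\mathbf{\overset{+}{x}\overset{0}{z}}\}$) is a sensible expansion of the paper's one-paragraph argument, but both halves rest on a property you never establish and cannot establish from the definitions: that $B$ has a witnessing execution containing an $x$ and containing \emph{no} $y$ or $k$ anywhere. Definition~\ref{def:gsc_x+z0} only forbids $y$, $z$ and $k$ \emph{to the right of} the witnessing $x$; it permits arbitrary material before it. This breaks your first half at the step ``when $A$ already dominates $\mathbf{\overset{+}{x}\overset{0}{z}}$, its own witnessing execution, extended by any $\exe_B$, supplies the required class'': appending an execution of $B$ that carries a $y$ before its clean $x$ destroys a $\mathbf{\overset{+}{x}\overset{+}{z}}$ witness of $A$, since that $y$ lies to the right of $A$'s $x$. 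The appeal to the reasoning of Lemma~\ref{l:gsc_super} is likewise invalid here: that argument places the foreign execution \emph{before} the witness, a freedom available under \kwd{AND} interleaving but not under the rigid concatenation of $\kwd{SEQ}(A,B)$ with $B$ on the right --- precisely the contrast you yourself draw in the second half. And your second half names the crux (``one must argue from the class assignment alone that a clean witness of $B$ can always be selected'') but defers it; it is exactly this deferred step that fails.

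Concretely, the clean witness need not exist. Take $B=\kwd{SEQ}(t_y,t_x)$ with $\kwd{ann}(t_y)=\{y\}$ and $\kwd{ann}(t_x)=\{x\}$: its unique execution $(t_y,t_x)$ witnesses $\mathbf{\overset{+}{x}\overset{0}{z}}$ (the $x$ has nothing to its right) and no undominated class other than it, so $\kwd{cl}(B)=\mathbf{\overset{+}{x}\overset{0}{z}}$, yet every execution of $B$ contains a $y$. Now take $A=\kwd{SEQ}(t_{x'},t_z)$ with annotations $\{x\}$ and $\{z\}$, so $\kwd{cl}(A)=\mathbf{\overset{+}{x}\overset{+}{z}}$. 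Then $\kwd{XOR}(A,B)$ prunes to $\mathbf{\overset{+}{x}\overset{+}{z}}$, but the unique execution of $\kwd{SEQ}(A,B)$ is $(x,z,y,x)$, which witnesses only $\mathbf{\overset{+}{x}\overset{0}{z}}$: every $x$ preceding the $z$ has a $y$ to its right, and the final $x$ has no $z$ to its right. So, under the definitions as written, the two sides of the lemma differ, and no argument along your lines can close the gap. You are in the same position as the paper, whose proof makes the identical silent over-reading when it asserts that a $\mathbf{\overset{+}{x}\overset{0}{z}}$ block ``does not contain a $y$ or a $k$''. Repairing the statement requires strengthening something --- either the class definition (the witness must contain no $y$, $z$, $k$ at all) or the lemma's hypothesis; your instinct that this is the delicate point was right, but the step you postponed is not a missing argument, it is false as the notions stand.
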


\begin{proof}
$\mathbf{\overset{+}{x}\overset{0}{z}}$ is non blocking as it does not contain a $y$ or a $k$. In accordance, whatever is on the left can be potentially be used to improve the class through a an AND aggregation for both classifications. As well, this type of aggregation cannot improve $\mathbf{\overset{+}{x}\overset{0}{z}}$ as it is already on the right, unless the left element classification is already strictly better, but this is captured by the XOR way of aggregating plus the priority given by the lattice for pruning strictly worse classifications.
\end{proof}

\begin{lemma}[Mirror]\label{l:mirror}
Given an aggregation $\kwd{AND}(A,B)$, where $A$ and $B$ have the same classification, then $\kwd{AND}(A,B)$ has the same classification.
\end{lemma}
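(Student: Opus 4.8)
The plan is to prove Lemma~\ref{l:mirror} by casing on the common classification $c$ of $A$ and $B$ and establishing two facts about $\kwd{AND}(A,B)$: that it still \emph{satisfies} the defining property of $c$, and that it satisfies no class that is strictly better than, or incomparable to, $c$. Together with the preference order of Figure~\ref{f:ad2_alt_sequndernodeclasses} these pin the resulting classification down to exactly $c$. Throughout I would invoke \kwd{AND} commutativity (Lemma~\ref{l:commutativity}) so that the roles of $A$ and $B$ are interchangeable, and Executions Inclusivity (Lemma~\ref{lem:inclusivity}) so that both $\kwd{SEQ}(A,B)$ and $\kwd{SEQ}(B,A)$ executions are available inside $\kwd{AND}(A,B)$.

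The two extreme classes are immediate. If $c=\mathbf{\overset{0}{x}\overset{0}{z}}$, then Lemma~\ref{l:gsc_neutral} says that aggregating a neutral block with $B$ returns the class of $B$, namely $c$. If $c=\mathbf{\overset{+}{x}\overset{+}{z}}$, then Lemma~\ref{l:gsc_super} says that aggregating a top-class block through an \kwd{AND} always returns $\mathbf{\overset{+}{x}\overset{+}{z}}$. For every remaining class the lower bound (``$\kwd{AND}(A,B)$ still satisfies $c$'') follows by taking a witnessing execution of $A$ for $c$ and placing it \emph{outermost} on the side that the defining property protects: since $c$ only forbids undesired elements to one side of already-placed desired elements, sequencing the other block on the opposite side introduces no forbidden $y$ or $k$ in the relevant interval. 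As $\Exe{\kwd{SEQ}(B,A)}\subseteq\Exe{\kwd{AND}(A,B)}$ (and symmetrically), this benign witness lives in $\kwd{AND}(A,B)$, so the result is at least $c$.

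The substance of the argument is the upper bound, and this is where I expect the real work. Suppose, for contradiction, that some execution $\exe$ of $\kwd{AND}(A,B)$ witnesses a class $c'$ with $c'\not\le c$. By Definition~\ref{def:ser}, $\exe$ is an interleaving of an execution $\exe_A$ of $A$ with an execution $\exe_B$ of $B$, and each occurrence of a desired element ($x$ or $z$) demanded by $c'$ is contributed by one of the two blocks. The key observation is that the ``absence of $y$'' (and of $k$) conditions are preserved under projection: if no undesired element lies in the relevant interval of $\exe$, then none lies in the projection $\exe_A$ or $\exe_B$ either, since projecting only removes occurrences. Hence the block supplying the witnessing $x$ already realises, on its own, the left sub-pattern of $c'$, and the block supplying the witnessing $z$ already realises the right sub-pattern. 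Because both $A$ and $B$ carry the classification $c$, the single class $c$ must dominate each of these two partial classifications; inspecting the preference order of Figure~\ref{f:ad2_alt_sequndernodeclasses} then forces $c\ge c'$, contradicting $c'\not\le c$.

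The delicate point, which I would isolate as a reusable sub-claim, is the case where the two improvements are \emph{split} between the blocks — an $x$ from $A$ and a $z$ from $B$ — since then neither block alone witnesses all of $c'$. The resolution is exactly the domination step above: $A$ being class $c$ bounds the left contribution and $B$ being class $c$ bounds the right contribution, so $c$ simultaneously dominates a left-$+$ partial ($\mathbf{\overset{+}{x}\overset{0}{z}}$) and a right-$+$ partial ($\mathbf{\overset{0}{x}\overset{+}{z}}$), which in Figure~\ref{f:ad2_alt_sequndernodeclasses} is only possible if $c$ is already the top class $\mathbf{\overset{+}{x}\overset{+}{z}}$ — a case already dispatched. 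Stating this join argument once and reusing it reduces the remaining obligations to routine checks that projection preserves both the order and the no-$y$/no-$k$ constraints. As an independent sanity check, the conclusion can also be read directly off the diagonal entries of the \kwd{AND} columns of the already-proven aggregation tables, which furnishes a less illuminating but fully mechanical verification.
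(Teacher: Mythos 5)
Your overall architecture (a lower bound plus an upper bound pinned against the preference lattice, with the neutral and top classes dispatched by Lemmas~\ref{l:gsc_neutral} and~\ref{l:gsc_super}) is sensible, and your upper bound is, like the paper's own two-sentence sketch, a proof by contradiction. But the step you yourself isolate as the ``delicate point'' is exactly where the argument fails, and it fails irreparably. Your claim is: if an interleaving of $\kwd{AND}(A,B)$ witnesses $c'$, then the block supplying the witnessing $x$ (resp.\ $z$) realises the left (resp.\ right) sub-pattern of $c'$ on its own, and since that block is classified $c$, the class $c$ must dominate the corresponding partial classification. The second half of this inference is invalid: realising a sub-pattern inside one execution does \emph{not} imply the block witnesses any class with a $+$ in that coordinate, because the Generalised Sequence Pattern classes are not independent products of left and right sub-patterns --- they carry cross conditions (Definition~\ref{def:gsc_x0z+} forbids any $y$ anywhere in the execution, Definition~\ref{def:gsc_x-z+} forbids any $x$ to the left of $z$, etc.). An execution can therefore contain a perfectly usable $z$ with no $k$ after it while contributing nothing better than the bottom class $\mathbf{\overset{-}{xz}}$ to its block's classification, and that contribution is then erased by the preference-order pruning. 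Your ``sanity check'' against the diagonal \kwd{AND} entries of the aggregation tables is also circular, since those rows cite this very lemma.

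The failure is concrete. Let $A$ be a sequence of two tasks annotated $x$ then $k$, and let $B = \kwd{XOR}(B_1, B_2)$ where $B_1$ is a sequence annotated $x$, $y$, $z$ and $B_2$ a sequence annotated $x$, $k$. By the definitions, $B_1$ witnesses only $\mathbf{\overset{-}{xz}}$ (second disjunct of Definition~\ref{def:gsc_xz-}), which is pruned below $\mathbf{\overset{+}{x}\overset{-}{z}}$, so $A$ and $B$ are both classified exactly $\mathbf{\overset{+}{x}\overset{-}{z}}$. Yet $\kwd{AND}(A,B)$ admits the interleaving $(x_B,\, y_B,\, x_A,\, k_A,\, z_B)$, which witnesses $\mathbf{\overset{+}{x}\overset{+}{z}}$: the element $x_A$ has no $y$ after it, $z_B$ follows $x_A$, and the only $k$ precedes $z_B$. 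So in the split case your domination step would ``prove'' this interleaving impossible, while it plainly exists; the conclusion of the lemma (and of the $(\mathbf{\overset{+}{x}\overset{-}{z}},\mathbf{\overset{+}{x}\overset{-}{z}})$ row of Table~\ref{t:so_ad2_alt_agg_+-}) is false on this instance. Note that the paper's own proof rests on the same unsupported assertion --- that two blocks with equal classifications cannot contain elements improving one another --- so there is no repairable idea there for you to borrow: the information discarded when dominated classes are pruned is precisely what an \kwd{AND} interleaving can exploit. A separate, smaller gap: your lower-bound argument (place the other block outermost) only applies to the existential classes; for the universal class $\mathbf{\overset{-}{xz}}$, exhibiting one good interleaving establishes nothing, so that case would need a genuinely different argument quantifying over all interleavings.
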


\begin{proof}
Provable by contradiction. Assuming that $\kwd{AND}(A,B)$ can achieve a better classification than $A$ or $B$, which entails that either $A$ or $B$ contain some elements capable of improving the classification of the other. This is verifiably false as $A$ and $B$ are constrained to have the same classification, and by going through the possible classifications and their properties.
\end{proof}

\subsubsection{Aggregations}~\\
\noindent
The classification of the undernode children according to the Generalised Sequence Pattern is described in the following tables. Mind that the aggregation is performed pair-wise left to right, with the result being aggregated with next brother on the right. While an ordered aggregation is not necessarily required when dealing with parent nodes of type \kwd{AND}, however this allows to deal with both types of parent nodes in the same way.

\begin{table}[ht!]
\centering
\begin{tabular}{|c|c|c|c|}
\hline
A & B & \kwd{SEQ}(A, B) & \kwd{AND}(A, B) \\ \hline
$\mathbf{\overset{+}{x}\overset{+}{z}}$ & $\mathbf{\overset{+}{x}\overset{+}{z}}$ & $\mathbf{\overset{+}{x}\overset{+}{z}}$ & $\mathbf{\overset{+}{x}\overset{+}{z}}$ \\ \hline
$\mathbf{\overset{+}{x}\overset{+}{z}}$ & $\mathbf{\overset{+}{x}\overset{0}{z}}$ & $\mathbf{\overset{+}{x}\overset{+}{z}}$ & $\mathbf{\overset{+}{x}\overset{+}{z}}$ \\ \hline
$\mathbf{\overset{+}{x}\overset{+}{z}}$ & $\mathbf{\overset{0}{x}\overset{+}{z}}$ & $\mathbf{\overset{+}{x}\overset{+}{z}}$ & $\mathbf{\overset{+}{x}\overset{+}{z}}$ \\ \hline
$\mathbf{\overset{+}{x}\overset{+}{z}}$ & $\mathbf{\overset{0}{x}\overset{0}{z}}$& $\mathbf{\overset{+}{x}\overset{+}{z}}$ & $\mathbf{\overset{+}{x}\overset{+}{z}}$ \\ \hline
$\mathbf{\overset{+}{x}\overset{+}{z}}$ & $\mathbf{\overset{-}{x}\overset{+}{z}}$& $\mathbf{\overset{-}{x}\overset{+}{z}}$ & $\mathbf{\overset{+}{x}\overset{+}{z}}$ \\ \hline
$\mathbf{\overset{+}{x}\overset{+}{z}}$ & $\mathbf{\overset{+}{x}\overset{-}{z}}$& $\mathbf{\overset{+}{x}\overset{-}{z}}$ & $\mathbf{\overset{+}{x}\overset{+}{z}}$ \\ \hline
$\mathbf{\overset{+}{x}\overset{+}{z}}$ & $\mathbf{\overset{0}{x}\overset{-}{z}}$& $\mathbf{\overset{0}{x}\overset{-}{z}}$ & $\mathbf{\overset{+}{x}\overset{+}{z}}$ \\ \hline
$\mathbf{\overset{+}{x}\overset{+}{z}}$ & $\mathbf{\overset{-}{xz}}$& $\mathbf{\overset{-}{xz}}$ & $\mathbf{\overset{+}{x}\overset{+}{z}}$ \\ \hline
\end{tabular}
\caption{Generalised Sequence Aggregations}\label{t:so_ad2_alt_agg_++}
\end{table}

\begin{proof}[Table \ref{t:so_ad2_alt_agg_++}]
\begin{enumerate}
\item From Definition \ref{def:gsc_x+z+}, it follows that $A$ contains an execution where a $x$ is followed by or on the same task as a $z$, and there are no $y$ or $k$ on their respective right.
\item For $B$:
\begin{description}
\item[$\mathbf{\overset{+}{x}\overset{+}{z}}$]
\begin{enumerate}
\item From Definition \ref{def:gsc_x+z+}, it follows that $B$ contains an execution where a $x$ is followed by or on the same task as a $z$, and there are no $y$ or $k$ on their respective right.
\item For the aggregation types:
\begin{description}
\item[$\kwd{SEQ}(A,B)$]
\begin{enumerate}
\item The correctness of this row follows directly from Lemma \ref{l:srd}.
\end{enumerate}
\item[$\kwd{AND}(A,B)$]
\begin{enumerate}
\item The correctness of this row follows directly from Lemma \ref{l:gsc_super}.
\end{enumerate}
\end{description}
\end{enumerate}

\item[$\mathbf{\overset{+}{x}\overset{0}{z}}$]
\begin{enumerate}
\item From Definition \ref{def:gsc_x+z0}, it follows that $B$ contains an execution where a $x$, and there are no $y$ or $k$ on its right.
\item For the aggregation types:
\begin{description}
\item[$\kwd{SEQ}(A,B)$]
\begin{enumerate}
\item The correctness of this row follows directly from Lemma \ref{l:bbu} and the preference lattice in Figure \ref{f:ad2_alt_sequndernodeclasses}.
\end{enumerate}
\item[$\kwd{AND}(A,B)$]
\begin{enumerate}
\item The correctness of this row follows directly from Lemma \ref{l:gsc_super}.
\end{enumerate}
\end{description}
\end{enumerate}

\item[$\mathbf{\overset{0}{x}\overset{+}{z}}$]
\begin{enumerate}
\item From Definition \ref{def:gsc_x0z+}, it follows that $B$ contains an execution where a there are no $x$ and $y$, contains a $z$, and there is no $k$ on its right.
\item For the aggregation types:
\begin{description}
\item[$\kwd{SEQ}(A,B)$]
\begin{enumerate}
\item From Definition \ref{def:ser}, it follows that the possible executions of a process block $\kwd{SEQ}(A, B)$ are the concatenation of an execution of $A$ and an execution of $B$.
\item From the hypothesis and 1., it follows that $\kwd{SEQ}(A, B)$ contains an execution where a $x$ is followed by or on the same task as a $z$, and there are no $y$ or $k$ on their respective right.
\item From ii. and Definition \ref{def:gsc_x+z+}, it follows that the aggregation is correctly classified as $\mathbf{\overset{+}{x}\overset{+}{z}}$.
\end{enumerate}
\item[$\kwd{AND}(A,B)$]
\begin{enumerate}
\item The correctness of this row follows directly from Lemma \ref{l:gsc_super}.
\end{enumerate}
\end{description}
\end{enumerate}

\item[$\mathbf{\overset{0}{x}\overset{0}{z}}$]
\begin{enumerate}
\item From Definition \ref{def:gsc_x0z0}, it follows that $B$ contains an execution where a there are no $x$, $y$, $z$ and $k$. 
\item For the aggregation types:
\begin{description}
\item[$\kwd{SEQ}(A,B)$]
\begin{enumerate}
\item The correctness of this row follows directly from Lemma \ref{l:gsc_neutral}.
\end{enumerate}
\item[$\kwd{AND}(A,B)$]
\begin{enumerate}
\item The correctness of this row follows directly from Lemma \ref{l:gsc_super}.
\end{enumerate}
\end{description}
\end{enumerate}

\item[$\mathbf{\overset{-}{x}\overset{+}{z}}$]
\begin{enumerate}
\item From Definition \ref{def:gsc_x-z+}, it follows that $B$ contains an execution containing a $z$, there is no $k$ on its right, and there is a $y$ on the $z$ left.
\item For the aggregation types:
\begin{description}
\item[$\kwd{SEQ}(A,B)$]
\begin{enumerate}
\item The correctness of this row follows directly from Lemma \ref{l:srd}.
\end{enumerate}
\item[$\kwd{AND}(A,B)$]
\begin{enumerate}
\item The correctness of this row follows directly from Lemma \ref{l:gsc_super}.
\end{enumerate}
\end{description}
\end{enumerate}

\item[$\mathbf{\overset{+}{x}\overset{-}{z}}$]
\begin{enumerate}
\item From Definition \ref{def:gsc_x+z-}, it follows that $B$ contains an execution containing a $x$, there is no $y$ on its right, and there is a $k$ on the right, with no $z$ on the right of the $k$.
\item For the aggregation types:
\begin{description}
\item[$\kwd{SEQ}(A,B)$]
\begin{enumerate}
\item The correctness of this row follows directly from Lemma \ref{l:srd}.
\end{enumerate}
\item[$\kwd{AND}(A,B)$]
\begin{enumerate}
\item The correctness of this row follows directly from Lemma \ref{l:gsc_super}.
\end{enumerate}
\end{description}
\end{enumerate}

\item[$\mathbf{\overset{0}{x}\overset{-}{z}}$]
\begin{enumerate}
\item From Definition \ref{def:gsc_x0z-}, it follows that $B$ contains an execution containing a $k$ with no $z$ on its right, and not containing any $x$ or $y$.
\item For the aggregation types:
\begin{description}
\item[$\kwd{SEQ}(A,B)$]
\begin{enumerate}
\item The correctness of this row follows directly from Lemma \ref{l:srd}.
\end{enumerate}
\item[$\kwd{AND}(A,B)$]
\begin{enumerate}
\item The correctness of this row follows directly from Lemma \ref{l:gsc_super}.
\end{enumerate}
\end{description}
\end{enumerate}

\item[$\mathbf{\overset{-}{xz}}$]
\begin{enumerate}
\item From Definition \ref{def:gsc_xz-}, it follows that every execution in $B$ contains a $y$ and a $k$ with no $x$ and $z$ on their respective rights, OR contains a $y$ with no $x$ on its right, and no $z$ or $k$.
\item For the aggregation types:
\begin{description}
\item[$\kwd{SEQ}(A,B)$]
\begin{enumerate}
\item The correctness of this row follows directly from Lemma \ref{l:srd}.
\end{enumerate}
\item[$\kwd{AND}(A,B)$]
\begin{enumerate}
\item The correctness of this row follows directly from Lemma \ref{l:gsc_super}.
\end{enumerate}
\end{description}
\end{enumerate}
\end{description}
\end{enumerate}
\end{proof}

\begin{table}[ht!]
\centering
\begin{tabular}{|c|c|c|c|}
\hline
A & B & \kwd{SEQ}(A, B) & \kwd{AND}(A, B) \\ \hline
$\mathbf{\overset{+}{x}\overset{0}{z}}$ & $\mathbf{\overset{+}{x}\overset{+}{z}}$ & $\mathbf{\overset{+}{x}\overset{+}{z}}$ & $\mathbf{\overset{+}{x}\overset{+}{z}}$ \\ \hline
$\mathbf{\overset{+}{x}\overset{0}{z}}$ & $\mathbf{\overset{+}{x}\overset{0}{z}}$ & $\mathbf{\overset{+}{x}\overset{0}{z}}$ & $\mathbf{\overset{+}{x}\overset{0}{z}}$ \\ \hline
$\mathbf{\overset{+}{x}\overset{0}{z}}$ & $\mathbf{\overset{0}{x}\overset{+}{z}}$ & $\mathbf{\overset{+}{x}\overset{+}{z}}$ & $\mathbf{\overset{+}{x}\overset{+}{z}}$ \\ \hline
$\mathbf{\overset{+}{x}\overset{0}{z}}$ & $\mathbf{\overset{0}{x}\overset{0}{z}}$& $\mathbf{\overset{+}{x}\overset{0}{z}}$ & $\mathbf{\overset{+}{x}\overset{0}{z}}$ \\ \hline
$\mathbf{\overset{+}{x}\overset{0}{z}}$ & $\mathbf{\overset{-}{x}\overset{+}{z}}$& $\mathbf{\overset{-}{x}\overset{+}{z}}$ & $\mathbf{\overset{+}{x}\overset{+}{z}}$ \\ \hline
$\mathbf{\overset{+}{x}\overset{0}{z}}$ & $\mathbf{\overset{+}{x}\overset{-}{z}}$& $\mathbf{\overset{+}{x}\overset{-}{z}}$ & $\mathbf{\overset{+}{x}\overset{0}{z}}$ \\ \hline
$\mathbf{\overset{+}{x}\overset{0}{z}}$ & $\mathbf{\overset{0}{x}\overset{-}{z}}$& $\mathbf{\overset{0}{x}\overset{-}{z}}$ & $\mathbf{\overset{+}{x}\overset{0}{z}}$ \\ \hline
$\mathbf{\overset{+}{x}\overset{0}{z}}$ & $\mathbf{\overset{-}{xz}}$& $\mathbf{\overset{-}{xz}}$ & $\mathbf{\overset{+}{x}\overset{0}{z}}$ \\ \hline
\end{tabular}
\caption{Generalised Sequence Aggregations}\label{t:so_ad2_alt_agg_+0}
\end{table}

\begin{proof}[Table \ref{t:so_ad2_alt_agg_+0}]
\begin{enumerate}
\item From Definition \ref{def:gsc_x+z0}, it follows that $A$ contains an execution where a $x$, and there are no $y$ or $k$ on its right.
\item For $B$:
\begin{description}
\item[$\mathbf{\overset{+}{x}\overset{+}{z}}$]
\begin{enumerate}
\item From Definition \ref{def:gsc_x+z+}, it follows that $B$ contains an execution where a $x$ is followed by or on the same task as a $z$, and there are no $y$ or $k$ on their respective right.
\item For the aggregation types:
\begin{description}
\item[$\kwd{SEQ}(A,B)$]
\begin{enumerate}
\item The correctness of this row follows directly from Lemma \ref{l:srd}.
\end{enumerate}
\item[$\kwd{AND}(A,B)$]
\begin{enumerate}
\item The correctness of this row follows directly from Lemma \ref{l:gsc_super}.
\end{enumerate}
\end{description}
\end{enumerate}

\item[$\mathbf{\overset{+}{x}\overset{0}{z}}$]
\begin{enumerate}
\item From Definition \ref{def:gsc_x+z0}, it follows that $B$ contains an execution where a $x$, and there are no $y$ or $k$ on its right.
\item For the aggregation types:
\begin{description}
\item[$\kwd{SEQ}(A,B)$]
\begin{enumerate}
\item The correctness of this row follows directly from Lemma \ref{l:bbu} and the preference lattice in Figure \ref{f:ad2_alt_sequndernodeclasses}.
\end{enumerate}
\item[$\kwd{AND}(A,B)$]
\begin{enumerate}
\item From 1. and (a), it follows that neither $A$ or $B$ contain elements capable of improving the classification over $\mathbf{\overset{+}{x}\overset{0}{z}}$, following the preference order in Figure \ref{f:ad2_alt_sequndernodeclasses}.
\item From i., the result of column \kwd{SEQ} and Lemma \ref{lem:inclusivity}, it follows that the aggregation is correctly classified.
\end{enumerate}
\end{description}
\end{enumerate}

\item[$\mathbf{\overset{0}{x}\overset{+}{z}}$]
\begin{enumerate}
\item From Definition \ref{def:gsc_x0z+}, it follows that $B$ contains an execution where a there are no $x$ and $y$, contains a $z$, and there is no $k$ on its right.
\item For the aggregation types:
\begin{description}
\item[$\kwd{SEQ}(A,B)$]
\begin{enumerate}
\item From Definition \ref{def:ser}, it follows that the possible executions of a process block $\kwd{SEQ}(A, B)$ are the concatenation of an execution of $A$ and an execution of $B$.
\item From 1., (a), and i., it follows that $\kwd{SEQ}(A, B)$ contains a $x$ and a $z$ on its right, with no $y$ on the right of the $x$, and no $k$ on the right of the $z$.
\item From ii. and Definition \ref{def:gsc_x+z+}, it follows that the aggregation is correctly classified as $\mathbf{\overset{+}{x}\overset{+}{z}}$.
\end{enumerate}
\item[$\kwd{AND}(A,B)$]
\begin{enumerate}
\item From the result of column \kwd{SEQ} and Lemma \ref{lem:inclusivity}, it follows that the aggregation is correctly classified.
\end{enumerate}
\end{description}
\end{enumerate}

\item[$\mathbf{\overset{0}{x}\overset{0}{z}}$]
\begin{enumerate}
\item From Definition \ref{def:gsc_x0z0}, it follows that $B$ contains an execution where a there are no $x$, $y$, $z$ and $k$. 
\item For the aggregation types:
\begin{description}
\item[$\kwd{SEQ}(A,B)$]
\begin{enumerate}
\item The correctness of this row follows directly from Lemma \ref{l:gsc_neutral}.
\end{enumerate}
\item[$\kwd{AND}(A,B)$]
\begin{enumerate}
\item The correctness of this row follows directly from Lemma \ref{l:gsc_neutral}.
\end{enumerate}
\end{description}
\end{enumerate}

\item[$\mathbf{\overset{-}{x}\overset{+}{z}}$]
\begin{enumerate}
\item From Definition \ref{def:gsc_x-z+}, it follows that $B$ contains an execution containing a $z$, there is no $k$ on its right, and there is a $y$ on the $z$ left.
\item For the aggregation types:
\begin{description}
\item[$\kwd{SEQ}(A,B)$]
\begin{enumerate}
\item The correctness of this row follows directly from Lemma \ref{l:srd}.
\end{enumerate}
\item[$\kwd{AND}(A,B)$]
\begin{enumerate}
\item From Definition \ref{def:ser}, it follows that each execution resulting from $\kwd{AND}(A, B)$ consists of an execution of $A$ interleaved with an execution of $B$.
\item From 1., (a) and i., it follows that a valid execution of $\kwd{AND}(A, B)$ contains the $z$ from $B$ and has the $x$ from $A$ on the left of $z$ and on the right of $y$ of $B$.
\item From ii. and Definition \ref{def:gsc_x+z+}, it follows that the aggregation is correctly classified as $\mathbf{\overset{+}{x}\overset{+}{z}}$. 
\end{enumerate}
\end{description}
\end{enumerate}

\item[$\mathbf{\overset{+}{x}\overset{-}{z}}$]
\begin{enumerate}
\item From Definition \ref{def:gsc_x+z-}, it follows that $B$ contains an execution containing a $x$, there is no $y$ on its right, and there is a $k$ on the right, with no $z$ on the right of the $k$.
\item For the aggregation types:
\begin{description}
\item[$\kwd{SEQ}(A,B)$]
\begin{enumerate}
\item The correctness of this row follows directly from Lemma \ref{l:srd}.
\end{enumerate}
\item[$\kwd{AND}(A,B)$]
\begin{enumerate}
\item From Definition \ref{def:ser}, it follows that each execution resulting from $\kwd{AND}(A, B)$ consists of an execution of $A$ interleaved with an execution of $B$.
\item From i. and Lemma \ref{lem:inclusivity}, it follows that the executions of $\kwd{SEQ}(B,A)$ are included in $\kwd{AND}(A,B)$.
\item From ii., Lemma \ref{l:bbu} and the preference lattice in Figure \ref{f:ad2_alt_sequndernodeclasses}, it follows that a classification of $\kwd{SEQ}(A,B)$ is $\mathbf{\overset{+}{x}\overset{0}{z}}$.
\item From 1. and (a), it follows that $B$ does not contain elements capable of improving the classification over $\mathbf{\overset{+}{x}\overset{0}{z}}$.
\item From iii., iv. and Lemma \ref{lem:inclusivity} it follows that the aggregation is correctly classified as $\mathbf{\overset{+}{x}\overset{0}{z}}$.
\end{enumerate}
\end{description}
\end{enumerate}

\item[$\mathbf{\overset{0}{x}\overset{-}{z}}$]
\begin{enumerate}
\item From Definition \ref{def:gsc_x0z-}, it follows that $B$ contains an execution containing a $k$ with no $z$ on its right, and not containing any $x$ or $y$.
\item For the aggregation types:
\begin{description}
\item[$\kwd{SEQ}(A,B)$]
\begin{enumerate}
\item The correctness of this row follows directly from Lemma \ref{l:srd}.
\end{enumerate}
\item[$\kwd{AND}(A,B)$]
\begin{enumerate}
\item From Definition \ref{def:ser}, it follows that each execution resulting from $\kwd{AND}(A, B)$ consists of an execution of $A$ interleaved with an execution of $B$.
\item From i. and Lemma \ref{lem:inclusivity}, it follows that the executions of $\kwd{SEQ}(B,A)$ are included in $\kwd{AND}(A,B)$.
\item From ii., Lemma \ref{l:bbu} and the preference lattice in Figure \ref{f:ad2_alt_sequndernodeclasses}, it follows that a classification of $\kwd{SEQ}(A,B)$ is $\mathbf{\overset{+}{x}\overset{0}{z}}$.
\item From 1. and (a), it follows that $B$ does not contain elements capable of improving the classification over $\mathbf{\overset{+}{x}\overset{0}{z}}$.
\item From iii., iv. and Lemma \ref{lem:inclusivity} it follows that the aggregation is correctly classified as $\mathbf{\overset{+}{x}\overset{0}{z}}$.
\end{enumerate}
\end{description}
\end{enumerate}

\item[$\mathbf{\overset{-}{xz}}$]
\begin{enumerate}
\item From Definition \ref{def:gsc_xz-}, it follows that every execution in $B$ contains a $y$ and a $k$ with no $x$ and $z$ on their respective rights, OR contains a $y$ with no $x$ on its right, and no $z$ or $k$.
\item For the aggregation types:
\begin{description}
\item[$\kwd{SEQ}(A,B)$]
\begin{enumerate}
\item The correctness of this row follows directly from Lemma \ref{l:srd}.
\end{enumerate}
\item[$\kwd{AND}(A,B)$]
\begin{enumerate}
\item From Definition \ref{def:ser}, it follows that each execution resulting from $\kwd{AND}(A, B)$ consists of an execution of $A$ interleaved with an execution of $B$.
\item From i. and Lemma \ref{lem:inclusivity}, it follows that the executions of $\kwd{SEQ}(B,A)$ are included in $\kwd{AND}(A,B)$.
\item From ii., Lemma \ref{l:bbu} and the preference lattice in Figure \ref{f:ad2_alt_sequndernodeclasses}, it follows that a classification of $\kwd{SEQ}(A,B)$ is $\mathbf{\overset{+}{x}\overset{0}{z}}$.
\item From 1. and (a), it follows that $B$ does not contain elements capable of improving the classification over $\mathbf{\overset{+}{x}\overset{0}{z}}$.
\item From iii., iv. and Lemma \ref{lem:inclusivity} it follows that the aggregation is correctly classified as $\mathbf{\overset{+}{x}\overset{0}{z}}$.
\end{enumerate}
\end{description}
\end{enumerate}
\end{description}
\end{enumerate}
\end{proof}

\begin{table}[ht!]
\centering
\begin{tabular}{|c|c|c|c|}
\hline
A & B & \kwd{SEQ}(A, B) & \kwd{AND}(A, B) \\ \hline
$\mathbf{\overset{0}{x}\overset{+}{z}}$ & $\mathbf{\overset{+}{x}\overset{+}{z}}$ & $\mathbf{\overset{+}{x}\overset{+}{z}}$ & $\mathbf{\overset{+}{x}\overset{+}{z}}$ \\ \hline
$\mathbf{\overset{0}{x}\overset{+}{z}}$ & $\mathbf{\overset{+}{x}\overset{0}{z}}$ & $\mathbf{\overset{0}{x}\overset{+}{z}}$ / $\mathbf{\overset{+}{x}\overset{0}{z}}$ & $\mathbf{\overset{+}{x}\overset{+}{z}}$ \\ \hline
$\mathbf{\overset{0}{x}\overset{+}{z}}$ & $\mathbf{\overset{0}{x}\overset{+}{z}}$ & $\mathbf{\overset{0}{x}\overset{+}{z}}$ & $\mathbf{\overset{0}{x}\overset{+}{z}}$ \\ \hline
$\mathbf{\overset{0}{x}\overset{+}{z}}$ & $\mathbf{\overset{0}{x}\overset{0}{z}}$& $\mathbf{\overset{0}{x}\overset{+}{z}}$ & $\mathbf{\overset{0}{x}\overset{+}{z}}$ \\ \hline
$\mathbf{\overset{0}{x}\overset{+}{z}}$ & $\mathbf{\overset{-}{x}\overset{+}{z}}$& $\mathbf{\overset{-}{x}\overset{+}{z}}$ & $\mathbf{\overset{-}{x}\overset{+}{z}}$ \\ \hline
$\mathbf{\overset{0}{x}\overset{+}{z}}$ & $\mathbf{\overset{+}{x}\overset{-}{z}}$& $\mathbf{\overset{+}{x}\overset{-}{z}}$ & $\mathbf{\overset{+}{x}\overset{+}{z}}$ \\ \hline
$\mathbf{\overset{0}{x}\overset{+}{z}}$ & $\mathbf{\overset{0}{x}\overset{-}{z}}$& $\mathbf{\overset{0}{x}\overset{-}{z}}$ & $\mathbf{\overset{0}{x}\overset{+}{z}}$\\ \hline
$\mathbf{\overset{0}{x}\overset{+}{z}}$ & $\mathbf{\overset{-}{xz}}$& $\mathbf{\overset{-}{xz}}$ & $\mathbf{\overset{-}{x}\overset{+}{z}}$ \\ \hline
\end{tabular}
\caption{Generalised Sequence Aggregations}\label{t:so_ad2_alt_agg_0+}
\end{table}

\begin{proof}[Table \ref{t:so_ad2_alt_agg_0+}]
\begin{enumerate}
\item From Definition \ref{def:gsc_x0z+}, it follows that $A$ contains an execution where a there are no $x$ and $y$, contains a $z$, and there is no $k$ on its right.
\item For $B$:
\begin{description}
\item[$\mathbf{\overset{+}{x}\overset{+}{z}}$]
\begin{enumerate}
\item From Definition \ref{def:gsc_x+z+}, it follows that $B$ contains an execution where a $x$ is followed by or on the same task as a $z$, and there are no $y$ or $k$ on their respective right.
\item For the aggregation types:
\begin{description}
\item[$\kwd{SEQ}(A,B)$]
\begin{enumerate}
\item The correctness of this row follows directly from Lemma \ref{l:srd}.
\end{enumerate}
\item[$\kwd{AND}(A,B)$]
\begin{enumerate}
\item The correctness of this row follows directly from Lemma \ref{l:gsc_super}.
\end{enumerate}
\end{description}
\end{enumerate}

\item[$\mathbf{\overset{+}{x}\overset{0}{z}}$]
\begin{enumerate}
\item From Definition \ref{def:gsc_x+z0}, it follows that $B$ contains an execution where a $x$, and there are no $y$ or $k$ on its right.
\item For the aggregation types:
\begin{description}
\item[$\kwd{SEQ}(A,B)$]
\begin{enumerate}
\item The correctness of this row follows directly from Lemma \ref{l:bbu} and the preference lattice in Figure \ref{f:ad2_alt_sequndernodeclasses}.
\end{enumerate}
\item[$\kwd{AND}(A,B)$]
\begin{enumerate}
\item The correctness of this row follows directly from the result of row $\mathbf{\overset{+}{x}\overset{0}{z}}$ and $\mathbf{\overset{0}{x}\overset{+}{z}}$ in Table \ref{t:so_ad2_alt_agg_+0}, and Lemma \ref{l:commutativity}
\end{enumerate}
\end{description}
\end{enumerate}

\item[$\mathbf{\overset{0}{x}\overset{+}{z}}$]
\begin{enumerate}
\item From Definition \ref{def:gsc_x0z+}, it follows that $B$ contains an execution where a there are no $x$ and $y$, contains a $z$, and there is no $k$ on its right.
\item For the aggregation types:
\begin{description}
\item[$\kwd{SEQ}(A,B)$]
\begin{enumerate}
\item From Definition \ref{def:ser}, it follows that the possible executions of a process block $\kwd{SEQ}(A, B)$ are the concatenation of an execution of $A$ and an execution of $B$.
\item From 1., (a), and i., it follows that the execution of $A$ appended on the left of the execution of $B$ does not contain any element capable of manipulating the value of $\mathbf{x}$ is the classification.
\item From ii. it follows that the classification it correct.
\end{enumerate}
\item[$\kwd{AND}(A,B)$]
\begin{enumerate}
\item From 1. and (a), it follows that neither $A$ or $B$ contain elements capable of improving the classification over $\mathbf{\overset{+}{x}\overset{0}{z}}$, following the preference order in Figure \ref{f:ad2_alt_sequndernodeclasses}.
\item From i., the result of column \kwd{SEQ} and Lemma \ref{lem:inclusivity}, it follows that the aggregation is correctly classified.
\end{enumerate}
\end{description}
\end{enumerate}

\item[$\mathbf{\overset{0}{x}\overset{0}{z}}$]
\begin{enumerate}
\item From Definition \ref{def:gsc_x0z0}, it follows that $B$ contains an execution where a there are no $x$, $y$, $z$ and $k$. 
\item For the aggregation types:
\begin{description}
\item[$\kwd{SEQ}(A,B)$]
\begin{enumerate}
\item The correctness of this row follows directly from Lemma \ref{l:gsc_neutral}.
\end{enumerate}
\item[$\kwd{AND}(A,B)$]
\begin{enumerate}
\item The correctness of this row follows directly from Lemma \ref{l:gsc_neutral}.
\end{enumerate}
\end{description}
\end{enumerate}

\item[$\mathbf{\overset{-}{x}\overset{+}{z}}$]
\begin{enumerate}
\item From Definition \ref{def:gsc_x-z+}, it follows that $B$ contains an execution containing a $z$, there is no $k$ on its right, and there is a $y$ on the $z$ left.
\item For the aggregation types:
\begin{description}
\item[$\kwd{SEQ}(A,B)$]
\begin{enumerate}
\item The correctness of this row follows directly from Lemma \ref{l:srd}.
\end{enumerate}
\item[$\kwd{AND}(A,B)$]
\begin{enumerate}
\item From Definition \ref{def:ser}, it follows that each execution resulting from $\kwd{AND}(A, B)$ consists of an execution of $A$ interleaved with an execution of $B$.
\item From 1., (a), and i., it follows that $A$ does not contain elements capable of improving the classification of $\mathbf{x}$.
\item From i. Lemma \ref{lem:inclusivity} and the $\kwd{SEQ}(A, B)$ result, it follows that the aggregation is correctly classified as $\mathbf{\overset{-}{x}\overset{+}{z}}$.
\end{enumerate}
\end{description}
\end{enumerate}

\item[$\mathbf{\overset{+}{x}\overset{-}{z}}$]
\begin{enumerate}
\item From Definition \ref{def:gsc_x+z-}, it follows that $B$ contains an execution containing a $x$, there is no $y$ on its right, and there is a $k$ on the right, with no $z$ on the right of the $k$.
\item For the aggregation types:
\begin{description}
\item[$\kwd{SEQ}(A,B)$]
\begin{enumerate}
\item The correctness of this row follows directly from Lemma \ref{l:srd}.
\end{enumerate}
\item[$\kwd{AND}(A,B)$]
\begin{enumerate}
\item From Definition \ref{def:ser}, it follows that each execution resulting from $\kwd{AND}(A, B)$ consists of an execution of $A$ interleaved with an execution of $B$.
\item From 1., (a) and i., it follows that a valid execution of $\kwd{AND}(A, B)$ contains the $z$ from $A$ and has the $x$ from $B$ on the left of $z$ and on the right of $y$ of $B$.
\item From ii. and Definition \ref{def:gsc_x+z+}, it follows that the aggregation is correctly classified as $\mathbf{\overset{+}{x}\overset{+}{z}}$. 
\end{enumerate}
\end{description}
\end{enumerate}

\item[$\mathbf{\overset{0}{x}\overset{-}{z}}$]
\begin{enumerate}
\item From Definition \ref{def:gsc_x0z-}, it follows that $B$ contains an execution containing a $k$ with no $z$ on its right, and not containing any $x$ or $y$.
\item For the aggregation types:
\begin{description}
\item[$\kwd{SEQ}(A,B)$]
\begin{enumerate}
\item The correctness of this row follows directly from Lemma \ref{l:srd}.
\end{enumerate}
\item[$\kwd{AND}(A,B)$]
\begin{enumerate}
\item From Definition \ref{def:ser}, it follows that each execution resulting from $\kwd{AND}(A, B)$ consists of an execution of $A$ interleaved with an execution of $B$.
\item From 1., (a) and i., it follows that a valid execution of $\kwd{AND}(A, B)$ contains the $z$ from $A$ while the $k$ from $B$ is attached on the left, making it irrelevant.
\item From ii. and Definition  \ref{def:gsc_x0z+} it follows that the aggregation is correctly classified as $\mathbf{\overset{0}{x}\overset{+}{z}}$.
\end{enumerate}
\end{description}
\end{enumerate}

\item[$\mathbf{\overset{-}{xz}}$]
\begin{enumerate}
\item From Definition \ref{def:gsc_xz-}, it follows that every execution in $B$ contains a $y$ and a $k$ with no $x$ and $z$ on their respective rights, OR contains a $y$ with no $x$ on its right, and no $z$ or $k$.
\item For the aggregation types:
\begin{description}
\item[$\kwd{SEQ}(A,B)$]
\begin{enumerate}
\item The correctness of this row follows directly from Lemma \ref{l:srd}.
\end{enumerate}
\item[$\kwd{AND}(A,B)$]
\begin{enumerate}
\item From Definition \ref{def:ser}, it follows that each execution resulting from $\kwd{AND}(A, B)$ consists of an execution of $A$ interleaved with an execution of $B$.
\item From 1., (a) and i., it follows that a valid execution of $\kwd{AND}(A, B)$ contains $z$ from $A$ while every other element from $B$ is on the left, where the only relevant one is $y$.
\item From ii. and Definition  \ref{def:gsc_x-z+} it follows that the aggregation is correctly classified as $\mathbf{\overset{-}{x}\overset{+}{z}}$.
\end{enumerate}
\end{description}
\end{enumerate}
\end{description}
\end{enumerate}
\end{proof}

\begin{table}[ht!]
\centering
\begin{tabular}{|c|c|c|c|}
\hline
A & B & \kwd{SEQ}(A, B) & \kwd{AND}(A, B) \\ \hline
$\mathbf{\overset{0}{x}\overset{0}{z}}$ & $\mathbf{\overset{+}{x}\overset{+}{z}}$ & $\mathbf{\overset{+}{x}\overset{+}{z}}$ & $\mathbf{\overset{+}{x}\overset{+}{z}}$ \\ \hline
$\mathbf{\overset{0}{x}\overset{0}{z}}$ & $\mathbf{\overset{+}{x}\overset{0}{z}}$ & $\mathbf{\overset{+}{x}\overset{0}{z}}$ & $\mathbf{\overset{+}{x}\overset{0}{z}}$ \\ \hline
$\mathbf{\overset{0}{x}\overset{0}{z}}$ & $\mathbf{\overset{0}{x}\overset{+}{z}}$ & $\mathbf{\overset{0}{x}\overset{+}{z}}$ & $\mathbf{\overset{0}{x}\overset{+}{z}}$ \\ \hline
$\mathbf{\overset{0}{x}\overset{0}{z}}$ & $\mathbf{\overset{0}{x}\overset{0}{z}}$& $\mathbf{\overset{0}{x}\overset{0}{z}}$ & $\mathbf{\overset{0}{x}\overset{0}{z}}$ \\ \hline
$\mathbf{\overset{0}{x}\overset{0}{z}}$ & $\mathbf{\overset{-}{x}\overset{+}{z}}$& $\mathbf{\overset{-}{x}\overset{+}{z}}$ & $\mathbf{\overset{-}{x}\overset{+}{z}}$ \\ \hline
$\mathbf{\overset{0}{x}\overset{0}{z}}$ & $\mathbf{\overset{+}{x}\overset{-}{z}}$& $\mathbf{\overset{+}{x}\overset{-}{z}}$ & $\mathbf{\overset{+}{x}\overset{-}{z}}$ \\ \hline
$\mathbf{\overset{0}{x}\overset{0}{z}}$ & $\mathbf{\overset{0}{x}\overset{-}{z}}$& $\mathbf{\overset{0}{x}\overset{-}{z}}$ & $\mathbf{\overset{0}{x}\overset{-}{z}}$ \\ \hline
$\mathbf{\overset{0}{x}\overset{0}{z}}$ & $\mathbf{\overset{-}{xz}}$& $\mathbf{\overset{-}{xz}}$ & $\mathbf{\overset{-}{xz}}$ \\ \hline
\end{tabular}
\caption{Generalised Sequence Aggregations}\label{t:so_ad2_alt_agg_0}
\end{table}

\begin{proof}[Table \ref{t:so_ad2_alt_agg_0}]
\begin{enumerate}
\item From Definition \ref{def:gsc_x0z0}, it follows that $A$ contains an execution where a there are no $x$, $y$, $z$ and $k$. 
\item For $B$:
\begin{description}
\item[$\mathbf{\overset{+}{x}\overset{+}{z}}$]
\begin{enumerate}
\item From Definition \ref{def:gsc_x+z+}, it follows that $B$ contains an execution where a $x$ is followed by or on the same task as a $z$, and there are no $y$ or $k$ on their respective right.
\item For the aggregation types:
\begin{description}
\item[$\kwd{SEQ}(A,B)$]
\begin{enumerate}
\item The correctness of this row follows directly from Lemma \ref{l:srd}.
\end{enumerate}
\item[$\kwd{AND}(A,B)$]
\begin{enumerate}
\item The correctness of this row follows directly from Lemma \ref{l:gsc_super}.
\end{enumerate}
\end{description}
\end{enumerate}

\item[$\mathbf{\overset{+}{x}\overset{0}{z}}$]
\begin{enumerate}
\item From Definition \ref{def:gsc_x+z0}, it follows that $B$ contains an execution where a $x$, and there are no $y$ or $k$ on its right.
\item For the aggregation types:
\begin{description}
\item[$\kwd{SEQ}(A,B)$]
\begin{enumerate}
\item The correctness of this row follows directly from Lemma \ref{l:bbu} and the preference lattice in Figure \ref{f:ad2_alt_sequndernodeclasses}.
\end{enumerate}
\item[$\kwd{AND}(A,B)$]
\begin{enumerate}
\item The correctness of this row follows directly from Lemma \ref{l:gsc_neutral}.
\end{enumerate}
\end{description}
\end{enumerate}

\item[$\mathbf{\overset{0}{x}\overset{+}{z}}$]
\begin{enumerate}
\item From Definition \ref{def:gsc_x0z+}, it follows that $B$ contains an execution where a there are no $x$ and $y$, contains a $z$, and there is no $k$ on its right.
\item For the aggregation types:
\begin{description}
\item[$\kwd{SEQ}(A,B)$]
\begin{enumerate}
\item The correctness of this row follows directly from Lemma \ref{l:gsc_neutral}.
\end{enumerate}
\item[$\kwd{AND}(A,B)$]
\begin{enumerate}
\item The correctness of this row follows directly from Lemma \ref{l:gsc_neutral}.
\end{enumerate}
\end{description}
\end{enumerate}

\item[$\mathbf{\overset{0}{x}\overset{0}{z}}$]
\begin{enumerate}
\item From Definition \ref{def:gsc_x0z0}, it follows that $B$ contains an execution where a there are no $x$, $y$, $z$ and $k$. 
\item For the aggregation types:
\begin{description}
\item[$\kwd{SEQ}(A,B)$]
\begin{enumerate}
\item The correctness of this row follows directly from Lemma \ref{l:gsc_neutral}.
\end{enumerate}
\item[$\kwd{AND}(A,B)$]
\begin{enumerate}
\item The correctness of this row follows directly from Lemma \ref{l:gsc_neutral}.
\end{enumerate}
\end{description}
\end{enumerate}

\item[$\mathbf{\overset{-}{x}\overset{+}{z}}$]
\begin{enumerate}
\item From Definition \ref{def:gsc_x-z+}, it follows that $B$ contains an execution containing a $z$, there is no $k$ on its right, and there is a $y$ on the $z$ left.
\item For the aggregation types:
\begin{description}
\item[$\kwd{SEQ}(A,B)$]
\begin{enumerate}
\item The correctness of this row follows directly from Lemma \ref{l:srd}.
\end{enumerate}
\item[$\kwd{AND}(A,B)$]
\begin{enumerate}
\item The correctness of this row follows directly from Lemma \ref{l:gsc_neutral}.
\end{enumerate}
\end{description}
\end{enumerate}

\item[$\mathbf{\overset{+}{x}\overset{-}{z}}$]
\begin{enumerate}
\item From Definition \ref{def:gsc_x+z-}, it follows that $B$ contains an execution containing a $x$, there is no $y$ on its right, and there is a $k$ on the right, with no $z$ on the right of the $k$.
\item For the aggregation types:
\begin{description}
\item[$\kwd{SEQ}(A,B)$]
\begin{enumerate}
\item The correctness of this row follows directly from Lemma \ref{l:srd}.
\end{enumerate}
\item[$\kwd{AND}(A,B)$]
\begin{enumerate}
\item The correctness of this row follows directly from Lemma \ref{l:gsc_neutral}.
\end{enumerate}
\end{description}
\end{enumerate}

\item[$\mathbf{\overset{0}{x}\overset{-}{z}}$]
\begin{enumerate}
\item From Definition \ref{def:gsc_x0z-}, it follows that $B$ contains an execution containing a $k$ with no $z$ on its right, and not containing any $x$ or $y$.
\item For the aggregation types:
\begin{description}
\item[$\kwd{SEQ}(A,B)$]
\begin{enumerate}
\item The correctness of this row follows directly from Lemma \ref{l:srd}.
\end{enumerate}
\item[$\kwd{AND}(A,B)$]
\begin{enumerate}
\item The correctness of this row follows directly from Lemma \ref{l:gsc_neutral}.
\end{enumerate}
\end{description}
\end{enumerate}

\item[$\mathbf{\overset{-}{xz}}$]
\begin{enumerate}
\item From Definition \ref{def:gsc_xz-}, it follows that every execution in $B$ contains a $y$ and a $k$ with no $x$ and $z$ on their respective rights, OR contains a $y$ with no $x$ on its right, and no $z$ or $k$.
\item For the aggregation types:
\begin{description}
\item[$\kwd{SEQ}(A,B)$]
\begin{enumerate}
\item The correctness of this row follows directly from Lemma \ref{l:srd}.
\end{enumerate}
\item[$\kwd{AND}(A,B)$]
\begin{enumerate}
\item The correctness of this row follows directly from Lemma \ref{l:gsc_neutral}.
\end{enumerate}
\end{description}
\end{enumerate}
\end{description}
\end{enumerate}
\end{proof}

\begin{table}[ht!]
\centering
\begin{tabular}{|c|c|c|c|}
\hline
A & B & \kwd{SEQ}(A, B) & \kwd{AND}(A, B) \\ \hline
$\mathbf{\overset{-}{x}\overset{+}{z}}$ & $\mathbf{\overset{+}{x}\overset{+}{z}}$ & $\mathbf{\overset{+}{x}\overset{+}{z}}$ & $\mathbf{\overset{+}{x}\overset{+}{z}}$ \\ \hline
$\mathbf{\overset{-}{x}\overset{+}{z}}$ & $\mathbf{\overset{+}{x}\overset{0}{z}}$ & $\mathbf{\overset{-}{x}\overset{+}{z}}$ / $\mathbf{\overset{+}{x}\overset{0}{z}}$ & $\mathbf{\overset{+}{x}\overset{+}{z}}$ \\ \hline
$\mathbf{\overset{-}{x}\overset{+}{z}}$ & $\mathbf{\overset{0}{x}\overset{+}{z}}$ & $\mathbf{\overset{-}{x}\overset{+}{z}}$ & $\mathbf{\overset{-}{x}\overset{+}{z}}$ \\ \hline
$\mathbf{\overset{-}{x}\overset{+}{z}}$ & $\mathbf{\overset{0}{x}\overset{0}{z}}$& $\mathbf{\overset{-}{x}\overset{+}{z}}$ & $\mathbf{\overset{-}{x}\overset{+}{z}}$ \\ \hline
$\mathbf{\overset{-}{x}\overset{+}{z}}$ & $\mathbf{\overset{-}{x}\overset{+}{z}}$& $\mathbf{\overset{-}{x}\overset{+}{z}}$ & $\mathbf{\overset{-}{x}\overset{+}{z}}$ \\ \hline
$\mathbf{\overset{-}{x}\overset{+}{z}}$ & $\mathbf{\overset{+}{x}\overset{-}{z}}$& $\mathbf{\overset{+}{x}\overset{-}{z}}$ & $\mathbf{\overset{+}{x}\overset{+}{z}}$ \\ \hline
$\mathbf{\overset{-}{x}\overset{+}{z}}$ & $\mathbf{\overset{0}{x}\overset{-}{z}}$& $\mathbf{\overset{0}{x}\overset{-}{z}}$ & $\mathbf{\overset{-}{x}\overset{+}{z}}$ \\ \hline
$\mathbf{\overset{-}{x}\overset{+}{z}}$ & $\mathbf{\overset{-}{xz}}$& $\mathbf{\overset{-}{xz}}$ & $\mathbf{\overset{-}{x}\overset{+}{z}}$ \\ \hline
\end{tabular}
\caption{Generalised Sequence Aggregations}\label{t:so_ad2_alt_agg_-+}
\end{table}

\begin{proof}[Table \ref{t:so_ad2_alt_agg_-+}]
\begin{enumerate}
\item From Definition \ref{def:gsc_x-z+}, it follows that $A$ contains an execution containing a $z$, there is no $k$ on its right, and there is a $y$ on the $z$ left.
\item For $B$:
\begin{description}
\item[$\mathbf{\overset{+}{x}\overset{+}{z}}$]
\begin{enumerate}
\item From Definition \ref{def:gsc_x+z+}, it follows that $B$ contains an execution where a $x$ is followed by or on the same task as a $z$, and there are no $y$ or $k$ on their respective right.
\item For the aggregation types:
\begin{description}
\item[$\kwd{SEQ}(A,B)$]
\begin{enumerate}
\item The correctness of this row follows directly from Lemma \ref{l:srd}.
\end{enumerate}
\item[$\kwd{AND}(A,B)$]
\begin{enumerate}
\item The correctness of this row follows directly from Lemma \ref{l:gsc_super}.
\end{enumerate}
\end{description}
\end{enumerate}

\item[$\mathbf{\overset{+}{x}\overset{0}{z}}$]
\begin{enumerate}
\item From Definition \ref{def:gsc_x+z0}, it follows that $B$ contains an execution where a $x$, and there are no $y$ or $k$ on its right.
\item For the aggregation types:
\begin{description}
\item[$\kwd{SEQ}(A,B)$]
\begin{enumerate}
\item The correctness of this row follows directly from Lemma \ref{l:bbu} and the preference lattice in Figure \ref{f:ad2_alt_sequndernodeclasses}.
\end{enumerate}
\item[$\kwd{AND}(A,B)$]
\begin{enumerate}
\item The correctness of the aggregation follows directly from Lemma \ref{l:commutativity} and row $\mathbf{\overset{+}{x}\overset{0}{z}}$ and $\mathbf{\overset{-}{x}\overset{+}{z}}$ of Table \ref{t:so_ad2_alt_agg_+0}.
\end{enumerate}
\end{description}
\end{enumerate}

\item[$\mathbf{\overset{0}{x}\overset{+}{z}}$]
\begin{enumerate}
\item From Definition \ref{def:gsc_x0z+}, it follows that $B$ contains an execution where a there are no $x$ and $y$, contains a $z$, and there is no $k$ on its right.
\item For the aggregation types:
\begin{description}
\item[$\kwd{SEQ}(A,B)$]
\begin{enumerate}
\item From Definition \ref{def:ser}, it follows that the possible executions of a process block $\kwd{SEQ}(A, B)$ are the concatenation of an execution of $A$ and an execution of $B$.
\item From 1., (a), and i., it follows that the evaluation of $\mathbf{x}$ from $A$ cannot be influenced by $B$ appended on the left, while the evaluation of $\mathbf{z}$ is not influenced by the classification of $B$.
\item From ii. it follows that the classification of the aggregation is correct.
\end{enumerate}
\item[$\kwd{AND}(A,B)$]
\begin{enumerate}
\item The correctness of the aggregation follows directly from Lemma \ref{l:commutativity} and row $\mathbf{\overset{0}{x}\overset{+}{z}}$ and $\mathbf{\overset{-}{x}\overset{+}{z}}$ of Table \ref{t:so_ad2_alt_agg_0+}.
\end{enumerate}
\end{description}
\end{enumerate}

\item[$\mathbf{\overset{0}{x}\overset{0}{z}}$]
\begin{enumerate}
\item From Definition \ref{def:gsc_x0z0}, it follows that $B$ contains an execution where a there are no $x$, $y$, $z$ and $k$. 
\item For the aggregation types:
\begin{description}
\item[$\kwd{SEQ}(A,B)$]
\begin{enumerate}
\item The correctness of this row follows directly from Lemma \ref{l:gsc_neutral}.
\end{enumerate}
\item[$\kwd{AND}(A,B)$]
\begin{enumerate}
\item The correctness of this row follows directly from Lemma \ref{l:gsc_neutral}.
\end{enumerate}
\end{description}
\end{enumerate}

\item[$\mathbf{\overset{-}{x}\overset{+}{z}}$]
\begin{enumerate}
\item From Definition \ref{def:gsc_x-z+}, it follows that $B$ contains an execution containing a $z$, there is no $k$ on its right, and there is a $y$ on the $z$ left.
\item For the aggregation types:
\begin{description}
\item[$\kwd{SEQ}(A,B)$]
\begin{enumerate}
\item The correctness of this row follows directly from Lemma \ref{l:srd}.
\end{enumerate}
\item[$\kwd{AND}(A,B)$]
\begin{enumerate}
\item From 1. and (a), it follows that neither $A$ or $B$ contain elements capable of improving the classification over $\mathbf{\overset{-}{x}\overset{+}{z}}$, following the preference order in Figure \ref{f:ad2_alt_sequndernodeclasses}.
\item From i., the result of column \kwd{SEQ} and Lemma \ref{lem:inclusivity}, it follows that the aggregation is correctly classified.
\end{enumerate}
\end{description}
\end{enumerate}

\item[$\mathbf{\overset{+}{x}\overset{-}{z}}$]
\begin{enumerate}
\item From Definition \ref{def:gsc_x+z-}, it follows that $B$ contains an execution containing a $x$, there is no $y$ on its right, and there is a $k$ on the right, with no $z$ on the right of the $k$.
\item For the aggregation types:
\begin{description}
\item[$\kwd{SEQ}(A,B)$]
\begin{enumerate}
\item The correctness of this row follows directly from Lemma \ref{l:srd}.
\end{enumerate}
\item[$\kwd{AND}(A,B)$]
\begin{enumerate}
\item From Definition \ref{def:ser}, it follows that each execution resulting from $\kwd{AND}(A, B)$ consists of an execution of $A$ interleaved with an execution of $B$.
\item From 1., (a) and i., it follows that a valid execution of $\kwd{AND}(A, B)$ contains $z$ from $A$, and on its left appears the $x$ from $B$, while the $y$ from $A$ is kept on the left of $x$ and the $k$ from $B$ on the left of the $z$ from $A$.
\item From ii. and Definition \ref{def:gsc_x+z+}, it follows that the aggregation is correct.
\end{enumerate}
\end{description}
\end{enumerate}

\item[$\mathbf{\overset{0}{x}\overset{-}{z}}$]
\begin{enumerate}
\item From Definition \ref{def:gsc_x0z-}, it follows that $B$ contains an execution containing a $k$ with no $z$ on its right, and not containing any $x$ or $y$.
\item For the aggregation types:
\begin{description}
\item[$\kwd{SEQ}(A,B)$]
\begin{enumerate}
\item The correctness of this row follows directly from Lemma \ref{l:srd}.
\end{enumerate}
\item[$\kwd{AND}(A,B)$]
\begin{enumerate}
\item From Definition \ref{def:ser}, it follows that each execution resulting from $\kwd{AND}(A, B)$ consists of an execution of $A$ interleaved with an execution of $B$.
\item From 1., (a) and i., it follows that a valid execution of $\kwd{AND}(A, B)$ contains the $z$ from $B$ and keeps the $k$ from $A$ on the left, making it irrelevant. The execution still contains a $y$ on the left of $z$ and no $x$ on the right of $y$.
\item From ii. and Definition \ref{def:gsc_x-z+}, it follows that the aggregation is correct.
\end{enumerate}
\end{description}
\end{enumerate}

\item[$\mathbf{\overset{-}{xz}}$]
\begin{enumerate}
\item From Definition \ref{def:gsc_xz-}, it follows that every execution in $B$ contains a $y$ and a $k$ with no $x$ and $z$ on their respective rights, OR contains a $y$ with no $x$ on its right, and no $z$ or $k$.
\item For the aggregation types:
\begin{description}
\item[$\kwd{SEQ}(A,B)$]
\begin{enumerate}
\item The correctness of this row follows directly from Lemma \ref{l:srd}.
\end{enumerate}
\item[$\kwd{AND}(A,B)$]
\begin{enumerate}
\item From Definition \ref{def:ser}, it follows that each execution resulting from $\kwd{AND}(A, B)$ consists of an execution of $A$ interleaved with an execution of $B$.
\item From 1., (a) and i., it follows that a valid execution of $\kwd{AND}(A, B)$ contains the $z$ from $B$ and keeps the $k$, and eventual $y$, from $A$ on the left, making it irrelevant. The execution still contains a $y$ on the left of $z$ and no $x$ on the right of $y$.
\item From ii. and Definition \ref{def:gsc_x-z+}, it follows that the aggregation is correct.
\end{enumerate}
\end{description}
\end{enumerate}
\end{description}
\end{enumerate}
\end{proof}

\begin{table}[ht!]
\centering
\begin{tabular}{|c|c|c|c|}
\hline
A & B & \kwd{SEQ}(A, B) & \kwd{AND}(A, B) \\ \hline
$\mathbf{\overset{+}{x}\overset{-}{z}}$ & $\mathbf{\overset{+}{x}\overset{+}{z}}$ & $\mathbf{\overset{+}{x}\overset{+}{z}}$ & $\mathbf{\overset{+}{x}\overset{+}{z}}$ \\ \hline
$\mathbf{\overset{+}{x}\overset{-}{z}}$ & $\mathbf{\overset{+}{x}\overset{0}{z}}$ & $\mathbf{\overset{+}{x}\overset{0}{z}}$ & $\mathbf{\overset{+}{x}\overset{0}{z}}$ \\ \hline
$\mathbf{\overset{+}{x}\overset{-}{z}}$ & $\mathbf{\overset{0}{x}\overset{+}{z}}$ & $\mathbf{\overset{+}{x}\overset{+}{z}}$ & $\mathbf{\overset{+}{x}\overset{+}{z}}$ \\ \hline
$\mathbf{\overset{+}{x}\overset{-}{z}}$ & $\mathbf{\overset{0}{x}\overset{0}{z}}$& $\mathbf{\overset{+}{x}\overset{-}{z}}$ & $\mathbf{\overset{+}{x}\overset{-}{z}}$ \\ \hline
$\mathbf{\overset{+}{x}\overset{-}{z}}$ & $\mathbf{\overset{-}{x}\overset{+}{z}}$& $\mathbf{\overset{-}{x}\overset{+}{z}}$ & $\mathbf{\overset{+}{x}\overset{+}{z}}$ \\ \hline
$\mathbf{\overset{+}{x}\overset{-}{z}}$ & $\mathbf{\overset{+}{x}\overset{-}{z}}$& $\mathbf{\overset{+}{x}\overset{-}{z}}$ & $\mathbf{\overset{+}{x}\overset{-}{z}}$ \\ \hline
$\mathbf{\overset{+}{x}\overset{-}{z}}$ & $\mathbf{\overset{0}{x}\overset{-}{z}}$& $\mathbf{\overset{0}{x}\overset{-}{z}}$ & $\mathbf{\overset{+}{x}\overset{-}{z}}$ \\ \hline
$\mathbf{\overset{+}{x}\overset{-}{z}}$ & $\mathbf{\overset{-}{xz}}$& $\mathbf{\overset{-}{xz}}$ & $\mathbf{\overset{+}{x}\overset{-}{z}}$ \\ \hline
\end{tabular}
\caption{Generalised Sequence Aggregations}\label{t:so_ad2_alt_agg_+-}
\end{table}

\begin{proof}[Table \ref{t:so_ad2_alt_agg_+-}]
\begin{enumerate}
\item From Definition \ref{def:gsc_x+z-}, it follows that $A$ contains an execution containing a $x$, there is no $y$ on its right, and there is a $k$ on the right, with no $z$ on the right of the $k$.
\item For $B$:
\begin{description}
\item[$\mathbf{\overset{+}{x}\overset{+}{z}}$]
\begin{enumerate}
\item From Definition \ref{def:gsc_x+z+}, it follows that $B$ contains an execution where a $x$ is followed by or on the same task as a $z$, and there are no $y$ or $k$ on their respective right.
\item For the aggregation types:
\begin{description}
\item[$\kwd{SEQ}(A,B)$]
\begin{enumerate}
\item The correctness of this row follows directly from Lemma \ref{l:srd}.
\end{enumerate}
\item[$\kwd{AND}(A,B)$]
\begin{enumerate}
\item The correctness of this row follows directly from Lemma \ref{l:gsc_super}.
\end{enumerate}
\end{description}
\end{enumerate}

\item[$\mathbf{\overset{+}{x}\overset{0}{z}}$]
\begin{enumerate}
\item From Definition \ref{def:gsc_x+z0}, it follows that $B$ contains an execution where a $x$, and there are no $y$ or $k$ on its right.
\item For the aggregation types:
\begin{description}
\item[$\kwd{SEQ}(A,B)$]
\begin{enumerate}
\item The correctness of this row follows directly from Lemma \ref{l:bbu} and the preference lattice in Figure \ref{f:ad2_alt_sequndernodeclasses}.
\end{enumerate}
\item[$\kwd{AND}(A,B)$]
\begin{enumerate}
\item The correctness of the aggregation follows directly from Lemma \ref{l:commutativity} and row $\mathbf{\overset{+}{x}\overset{0}{z}}$ and $\mathbf{\overset{+}{x}\overset{-}{z}}$ of Table \ref{t:so_ad2_alt_agg_+0}.
\end{enumerate}
\end{description}
\end{enumerate}

\item[$\mathbf{\overset{0}{x}\overset{+}{z}}$]
\begin{enumerate}
\item From Definition \ref{def:gsc_x0z+}, it follows that $B$ contains an execution where a there are no $x$ and $y$, contains a $z$, and there is no $k$ on its right.
\item For the aggregation types:
\begin{description}
\item[$\kwd{SEQ}(A,B)$]
\begin{enumerate}
\item From Definition \ref{def:ser}, it follows that the possible executions of a process block $\kwd{SEQ}(A, B)$ are the concatenation of an execution of $A$ and an execution of $B$.
\item From 1., (a), and i., it follows that the $z$ in $B$ is on the right of the $k$ from $A$, while the $x$ from $A$ is on the left of $x$, and no $y$ is contained.
\item From ii. and Definition \ref{def:gsc_x+z+}, it follows that the classification of the aggregation is correct.
\end{enumerate}
\item[$\kwd{AND}(A,B)$]
\begin{enumerate}
\item The correctness of this row follows directly from Lemma \ref{lem:inclusivity}, the result of the column $\kwd{SEQ}(A,B)$ and the preference lattice in Figure \ref{f:ad2_alt_sequndernodeclasses}.
\end{enumerate}
\end{description}
\end{enumerate}

\item[$\mathbf{\overset{0}{x}\overset{0}{z}}$]
\begin{enumerate}
\item From Definition \ref{def:gsc_x0z0}, it follows that $B$ contains an execution where a there are no $x$, $y$, $z$ and $k$. 
\item For the aggregation types:
\begin{description}
\item[$\kwd{SEQ}(A,B)$]
\begin{enumerate}
\item The correctness of this row follows directly from Lemma \ref{l:gsc_neutral}.
\end{enumerate}
\item[$\kwd{AND}(A,B)$]
\begin{enumerate}
\item The correctness of this row follows directly from Lemma \ref{l:gsc_neutral}.
\end{enumerate}
\end{description}
\end{enumerate}

\item[$\mathbf{\overset{-}{x}\overset{+}{z}}$]
\begin{enumerate}
\item From Definition \ref{def:gsc_x-z+}, it follows that $B$ contains an execution containing a $z$, there is no $k$ on its right, and there is a $y$ on the $z$ left.
\item For the aggregation types:
\begin{description}
\item[$\kwd{SEQ}(A,B)$]
\begin{enumerate}
\item The correctness of this row follows directly from Lemma \ref{l:srd}.
\end{enumerate}
\item[$\kwd{AND}(A,B)$]
\begin{enumerate}
\item The correctness of the aggregation follows directly from Lemma \ref{l:commutativity} and row $\mathbf{\overset{-}{x}\overset{+}{z}}$ and $\mathbf{\overset{+}{x}\overset{-}{z}}$ of Table \ref{t:so_ad2_alt_agg_+0}.
\end{enumerate}
\end{description}
\end{enumerate}

\item[$\mathbf{\overset{+}{x}\overset{-}{z}}$]
\begin{enumerate}
\item From Definition \ref{def:gsc_x+z-}, it follows that $B$ contains an execution containing a $x$, there is no $y$ on its right, and there is a $k$ on the right, with no $z$ on the right of the $k$.
\item For the aggregation types:
\begin{description}
\item[$\kwd{SEQ}(A,B)$]
\begin{enumerate}
\item The correctness of this row follows directly from Lemma \ref{l:srd}.
\end{enumerate}
\item[$\kwd{AND}(A,B)$]
\begin{enumerate}
\item The correctness of this row follows directly from Lemma \ref{l:mirror}.
\end{enumerate}
\end{description}
\end{enumerate}

\item[$\mathbf{\overset{0}{x}\overset{-}{z}}$]
\begin{enumerate}
\item From Definition \ref{def:gsc_x0z-}, it follows that $B$ contains an execution containing a $k$ with no $z$ on its right, and not containing any $x$ or $y$.
\item For the aggregation types:
\begin{description}
\item[$\kwd{SEQ}(A,B)$]
\begin{enumerate}
\item The correctness of this row follows directly from Lemma \ref{l:srd}.
\end{enumerate}
\item[$\kwd{AND}(A,B)$]
\begin{enumerate}
\item From Definition \ref{def:ser}, it follows that each execution resulting from $\kwd{AND}(A, B)$ consists of an execution of $A$ interleaved with an execution of $B$.
\item From 1., (a) and i., it follows that a valid execution of $\kwd{AND}(A, B)$ contains the $x$ and $k$ from $A$ in the same order, while the $k$ from $B$ is attached on the left, making it irrelevant.
\item From ii. and Definition  \ref{def:gsc_x+z-} it follows that the aggregation is correctly classified as $\mathbf{\overset{+}{x}\overset{-}{z}}$.
\end{enumerate}
\end{description}
\end{enumerate}

\item[$\mathbf{\overset{-}{xz}}$]
\begin{enumerate}
\item From Definition \ref{def:gsc_xz-}, it follows that every execution in $B$ contains a $y$ and a $k$ with no $x$ and $z$ on their respective rights, OR contains a $y$ with no $x$ on its right, and no $z$ or $k$.
\item For the aggregation types:
\begin{description}
\item[$\kwd{SEQ}(A,B)$]
\begin{enumerate}
\item The correctness of this row follows directly from Lemma \ref{l:srd}.
\end{enumerate}
\item[$\kwd{AND}(A,B)$]
\begin{enumerate}
\item From Definition \ref{def:ser}, it follows that each execution resulting from $\kwd{AND}(A, B)$ consists of an execution of $A$ interleaved with an execution of $B$.
\item From 1., (a) and i., it follows that a valid execution of $\kwd{AND}(A, B)$ contains the $x$ and $k$ from $A$ in the same order, while the $k$, and eventual $y$, from $B$ is attached on the left, making it irrelevant.
\item From ii. and Definition  \ref{def:gsc_x+z-} it follows that the aggregation is correctly classified as $\mathbf{\overset{+}{x}\overset{-}{z}}$.
\end{enumerate}
\end{description}
\end{enumerate}
\end{description}
\end{enumerate}
\end{proof}

\begin{table}[ht!]
\centering
\begin{tabular}{|c|c|c|c|}
\hline
A & B & \kwd{SEQ}(A, B) & \kwd{AND}(A, B) \\ \hline
$\mathbf{\overset{0}{x}\overset{-}{z}}$ & $\mathbf{\overset{+}{x}\overset{+}{z}}$ & $\mathbf{\overset{+}{x}\overset{+}{z}}$ & $\mathbf{\overset{+}{x}\overset{+}{z}}$ \\ \hline
$\mathbf{\overset{0}{x}\overset{-}{z}}$ & $\mathbf{\overset{+}{x}\overset{0}{z}}$ & $\mathbf{\overset{+}{x}\overset{0}{z}}$ & $\mathbf{\overset{+}{x}\overset{0}{z}}$ \\ \hline
$\mathbf{\overset{0}{x}\overset{-}{z}}$ & $\mathbf{\overset{0}{x}\overset{+}{z}}$ & $\mathbf{\overset{0}{x}\overset{+}{z}}$ & $\mathbf{\overset{0}{x}\overset{+}{z}}$ \\ \hline
$\mathbf{\overset{0}{x}\overset{-}{z}}$ & $\mathbf{\overset{0}{x}\overset{0}{z}}$& $\mathbf{\overset{0}{x}\overset{-}{z}}$ & $\mathbf{\overset{0}{x}\overset{-}{z}}$ \\ \hline
$\mathbf{\overset{0}{x}\overset{-}{z}}$ & $\mathbf{\overset{-}{x}\overset{+}{z}}$& $\mathbf{\overset{-}{x}\overset{+}{z}}$ & $\mathbf{\overset{-}{x}\overset{+}{z}}$ \\ \hline
$\mathbf{\overset{0}{x}\overset{-}{z}}$ & $\mathbf{\overset{+}{x}\overset{-}{z}}$& $\mathbf{\overset{+}{x}\overset{-}{z}}$ & $\mathbf{\overset{+}{x}\overset{-}{z}}$ \\ \hline
$\mathbf{\overset{0}{x}\overset{-}{z}}$ & $\mathbf{\overset{0}{x}\overset{-}{z}}$& $\mathbf{\overset{0}{x}\overset{-}{z}}$ & $\mathbf{\overset{0}{x}\overset{-}{z}}$ \\ \hline
$\mathbf{\overset{0}{x}\overset{-}{z}}$ & $\mathbf{\overset{-}{xz}}$& $\mathbf{\overset{-}{xz}}$ & $\mathbf{\overset{-}{xz}}$ \\ \hline
\end{tabular}
\caption{Generalised Sequence Aggregations}\label{t:so_ad2_alt_agg_0-}
\end{table}

\begin{proof}[Table \ref{t:so_ad2_alt_agg_0-}]
\begin{enumerate}
\item From Definition \ref{def:gsc_x0z-}, it follows that $A$ contains an execution containing a $k$ with no $z$ on its right, and not containing any $x$ or $y$.
\item For $B$:
\begin{description}
\item[$\mathbf{\overset{+}{x}\overset{+}{z}}$]
\begin{enumerate}
\item From Definition \ref{def:gsc_x+z+}, it follows that $B$ contains an execution where a $x$ is followed by or on the same task as a $z$, and there are no $y$ or $k$ on their respective right.
\item For the aggregation types:
\begin{description}
\item[$\kwd{SEQ}(A,B)$]
\begin{enumerate}
\item The correctness of this row follows directly from Lemma \ref{l:srd}.
\end{enumerate}
\item[$\kwd{AND}(A,B)$]
\begin{enumerate}
\item The correctness of this row follows directly from Lemma \ref{l:gsc_super}.
\end{enumerate}
\end{description}
\end{enumerate}

\item[$\mathbf{\overset{+}{x}\overset{0}{z}}$]
\begin{enumerate}
\item From Definition \ref{def:gsc_x+z0}, it follows that $B$ contains an execution where a $x$, and there are no $y$ or $k$ on its right.
\item For the aggregation types:
\begin{description}
\item[$\kwd{SEQ}(A,B)$]
\begin{enumerate}
\item The correctness of this row follows directly from Lemma \ref{l:bbu} and the preference lattice in Figure \ref{f:ad2_alt_sequndernodeclasses}.
\end{enumerate}
\item[$\kwd{AND}(A,B)$]
\begin{enumerate}
\item The correctness of the aggregation follows directly from Lemma \ref{l:commutativity} and row $\mathbf{\overset{+}{x}\overset{0}{z}}$ and $\mathbf{\overset{0}{x}\overset{-}{z}}$ of Table \ref{t:so_ad2_alt_agg_+0}.
\end{enumerate}
\end{description}
\end{enumerate}

\item[$\mathbf{\overset{0}{x}\overset{+}{z}}$]
\begin{enumerate}
\item From Definition \ref{def:gsc_x0z+}, it follows that $B$ contains an execution where a there are no $x$ and $y$, contains a $z$, and there is no $k$ on its right.
\item For the aggregation types:
\begin{description}
\item[$\kwd{SEQ}(A,B)$]
\begin{enumerate}
\item From Definition \ref{def:ser}, it follows that the possible executions of a process block $\kwd{SEQ}(A, B)$ are the concatenation of an execution of $A$ and an execution of $B$.
\item From 1., (a), and i., it follows that the $z$ in $B$ is on the right of the $k$ from $A$, while no $y$ is contained.
\item From ii. and Definition \ref{def:gsc_x0z+}, it follows that the classification of the aggregation is correct.
\end{enumerate}
\item[$\kwd{AND}(A,B)$]
\begin{enumerate}
\item The correctness of the aggregation follows directly from Lemma \ref{l:commutativity} and row $\mathbf{\overset{0}{x}\overset{+}{z}}$ and $\mathbf{\overset{0}{x}\overset{-}{z}}$ of Table \ref{t:so_ad2_alt_agg_0+}.
\end{enumerate}
\end{description}
\end{enumerate}

\item[$\mathbf{\overset{0}{x}\overset{0}{z}}$]
\begin{enumerate}
\item From Definition \ref{def:gsc_x0z0}, it follows that $B$ contains an execution where a there are no $x$, $y$, $z$ and $k$. 
\item For the aggregation types:
\begin{description}
\item[$\kwd{SEQ}(A,B)$]
\begin{enumerate}
\item The correctness of this row follows directly from Lemma \ref{l:gsc_neutral}.
\end{enumerate}
\item[$\kwd{AND}(A,B)$]
\begin{enumerate}
\item The correctness of this row follows directly from Lemma \ref{l:gsc_neutral}.
\end{enumerate}
\end{description}
\end{enumerate}

\item[$\mathbf{\overset{-}{x}\overset{+}{z}}$]
\begin{enumerate}
\item From Definition \ref{def:gsc_x-z+}, it follows that $B$ contains an execution containing a $z$, there is no $k$ on its right, and there is a $y$ on the $z$ left.
\item For the aggregation types:
\begin{description}
\item[$\kwd{SEQ}(A,B)$]
\begin{enumerate}
\item The correctness of this row follows directly from Lemma \ref{l:srd}.
\end{enumerate}
\item[$\kwd{AND}(A,B)$]
\begin{enumerate}
\item The correctness of the aggregation follows directly from Lemma \ref{l:commutativity} and row $\mathbf{\overset{-}{x}\overset{+}{z}}$ and $\mathbf{\overset{0}{x}\overset{-}{z}}$ of Table \ref{t:so_ad2_alt_agg_-+}.
\end{enumerate}
\end{description}
\end{enumerate}

\item[$\mathbf{\overset{+}{x}\overset{-}{z}}$]
\begin{enumerate}
\item From Definition \ref{def:gsc_x+z-}, it follows that $B$ contains an execution containing a $x$, there is no $y$ on its right, and there is a $k$ on the right, with no $z$ on the right of the $k$.
\item For the aggregation types:
\begin{description}
\item[$\kwd{SEQ}(A,B)$]
\begin{enumerate}
\item The correctness of this row follows directly from Lemma \ref{l:srd}.
\end{enumerate}
\item[$\kwd{AND}(A,B)$]
\begin{enumerate}
\item The correctness of the aggregation follows directly from Lemma \ref{l:commutativity} and row $\mathbf{\overset{+}{x}\overset{-}{z}}$ and $\mathbf{\overset{0}{x}\overset{-}{z}}$ of Table \ref{t:so_ad2_alt_agg_+-}.
\end{enumerate}
\end{description}
\end{enumerate}

\item[$\mathbf{\overset{0}{x}\overset{-}{z}}$]
\begin{enumerate}
\item From Definition \ref{def:gsc_x0z-}, it follows that $B$ contains an execution containing a $k$ with no $z$ on its right, and not containing any $x$ or $y$.
\item For the aggregation types:
\begin{description}
\item[$\kwd{SEQ}(A,B)$]
\begin{enumerate}
\item The correctness of this row follows directly from Lemma \ref{l:srd}.
\end{enumerate}
\item[$\kwd{AND}(A,B)$]
\begin{enumerate}
\item The correctness of this row follows directly from Lemma \ref{l:mirror}.
\end{enumerate}
\end{description}
\end{enumerate}

\item[$\mathbf{\overset{-}{xz}}$]
\begin{enumerate}
\item From Definition \ref{def:gsc_xz-}, it follows that every execution in $B$ contains a $y$ and a $k$ with no $x$ and $z$ on their respective rights, OR contains a $y$ with no $x$ on its right, and no $z$ or $k$.
\item For the aggregation types:
\begin{description}
\item[$\kwd{SEQ}(A,B)$]
\begin{enumerate}
\item The correctness of this row follows directly from Lemma \ref{l:srd}.
\end{enumerate}
\item[$\kwd{AND}(A,B)$]
\begin{enumerate}
\item From Definition \ref{def:ser}, it follows that each execution resulting from $\kwd{AND}(A, B)$ consists of an execution of $A$ interleaved with an execution of $B$.
\item From 1., (a) and i., it follows that a valid execution of $\kwd{AND}(A, B)$ contains the $k$ from $A$ and $y$ from $B$. The order is not relevant.
\item From ii. and Definition  \ref{def:gsc_xz-} it follows that the aggregation is correctly classified as $\mathbf{\overset{-}{xz}}$.
\end{enumerate}
\end{description}
\end{enumerate}
\end{description}
\end{enumerate}
\end{proof}

\begin{table}[ht!]
\centering
\begin{tabular}{|c|c|c|c|}
\hline
A & B & \kwd{SEQ}(A, B) & \kwd{AND}(A, B) \\ \hline
$\mathbf{\overset{-}{xz}}$ & $\mathbf{\overset{+}{x}\overset{+}{z}}$ & $\mathbf{\overset{+}{x}\overset{+}{z}}$ & $\mathbf{\overset{+}{x}\overset{+}{z}}$ \\ \hline
$\mathbf{\overset{-}{xz}}$ & $\mathbf{\overset{+}{x}\overset{0}{z}}$ & $\mathbf{\overset{+}{x}\overset{0}{z}}$ & $\mathbf{\overset{+}{x}\overset{0}{z}}$ \\ \hline
$\mathbf{\overset{-}{xz}}$ & $\mathbf{\overset{0}{x}\overset{+}{z}}$ & $\mathbf{\overset{-}{x}\overset{+}{z}}$ & $\mathbf{\overset{-}{x}\overset{+}{z}}$ \\ \hline
$\mathbf{\overset{-}{xz}}$ & $\mathbf{\overset{0}{x}\overset{0}{z}}$& $\mathbf{\overset{-}{xz}}$ & $\mathbf{\overset{-}{xz}}$ \\ \hline
$\mathbf{\overset{-}{xz}}$ & $\mathbf{\overset{-}{x}\overset{+}{z}}$& $\mathbf{\overset{-}{x}\overset{+}{z}}$ & $\mathbf{\overset{-}{x}\overset{+}{z}}$ \\ \hline
$\mathbf{\overset{-}{xz}}$ & $\mathbf{\overset{+}{x}\overset{-}{z}}$& $\mathbf{\overset{+}{x}\overset{-}{z}}$ & $\mathbf{\overset{+}{x}\overset{-}{z}}$ \\ \hline
$\mathbf{\overset{-}{xz}}$ & $\mathbf{\overset{0}{x}\overset{-}{z}}$& $\mathbf{\overset{-}{xz}}$ & $\mathbf{\overset{-}{xz}}$ \\ \hline
$\mathbf{\overset{-}{xz}}$ & $\mathbf{\overset{-}{xz}}$& $\mathbf{\overset{-}{xz}}$ & $\mathbf{\overset{-}{xz}}$ \\ \hline
\end{tabular}
\caption{Generalised Sequence Aggregations}\label{t:so_ad2_alt_agg_-}
\end{table}

\begin{proof}[Table \ref{t:so_ad2_alt_agg_-}]
\begin{enumerate}
\item From Definition \ref{def:gsc_xz-}, it follows that every execution in $A$ contains a $y$ and a $k$ with no $x$ and $z$ on their respective rights, OR contains a $y$ with no $x$ on its right, and no $z$ or $k$.
\item For $B$:
\begin{description}
\item[$\mathbf{\overset{+}{x}\overset{+}{z}}$]
\begin{enumerate}
\item From Definition \ref{def:gsc_x+z+}, it follows that $B$ contains an execution where a $x$ is followed by or on the same task as a $z$, and there are no $y$ or $k$ on their respective right.
\item For the aggregation types:
\begin{description}
\item[$\kwd{SEQ}(A,B)$]
\begin{enumerate}
\item The correctness of this row follows directly from Lemma \ref{l:srd}.
\end{enumerate}
\item[$\kwd{AND}(A,B)$]
\begin{enumerate}
\item The correctness of this row follows directly from Lemma \ref{l:gsc_super}.
\end{enumerate}
\end{description}
\end{enumerate}

\item[$\mathbf{\overset{+}{x}\overset{0}{z}}$]
\begin{enumerate}
\item From Definition \ref{def:gsc_x+z0}, it follows that $B$ contains an execution where a $x$, and there are no $y$ or $k$ on its right.
\item For the aggregation types:
\begin{description}
\item[$\kwd{SEQ}(A,B)$]
\begin{enumerate}
\item The correctness of this row follows directly from Lemma \ref{l:bbu} and the preference lattice in Figure \ref{f:ad2_alt_sequndernodeclasses}.
\end{enumerate}
\item[$\kwd{AND}(A,B)$]
\begin{enumerate}
\item asdf
\end{enumerate}
\end{description}
\end{enumerate}

\item[$\mathbf{\overset{0}{x}\overset{+}{z}}$]
\begin{enumerate}
\item From Definition \ref{def:gsc_x0z+}, it follows that $B$ contains an execution where a there are no $x$ and $y$, contains a $z$, and there is no $k$ on its right.
\item For the aggregation types:
\begin{description}
\item[$\kwd{SEQ}(A,B)$]
\begin{enumerate}
\item From Definition \ref{def:ser}, it follows that the possible executions of a process block $\kwd{SEQ}(A, B)$ are the concatenation of an execution of $A$ and an execution of $B$.
\item From 1., (a), and i., it follows that the $z$ in $B$ is on the right of the $y$, and eventual $k$ from $A$.
\item From ii. and Definition \ref{def:gsc_x-z+}, it follows that the classification of the aggregation is correct.
\end{enumerate}
\item[$\kwd{AND}(A,B)$]
\begin{enumerate}
\item The correctness of the aggregation follows directly from Lemma \ref{l:commutativity} and row $\mathbf{\overset{0}{x}\overset{+}{z}}$ and $\mathbf{\overset{-}{xz}}$ of Table \ref{t:so_ad2_alt_agg_0+}.
\end{enumerate}
\end{description}
\end{enumerate}

\item[$\mathbf{\overset{0}{x}\overset{0}{z}}$]
\begin{enumerate}
\item From Definition \ref{def:gsc_x0z0}, it follows that $B$ contains an execution where a there are no $x$, $y$, $z$ and $k$. 
\item For the aggregation types:
\begin{description}
\item[$\kwd{SEQ}(A,B)$]
\begin{enumerate}
\item The correctness of this row follows directly from Lemma \ref{l:gsc_neutral}.
\end{enumerate}
\item[$\kwd{AND}(A,B)$]
\begin{enumerate}
\item The correctness of this row follows directly from Lemma \ref{l:gsc_neutral}.
\end{enumerate}
\end{description}
\end{enumerate}

\item[$\mathbf{\overset{-}{x}\overset{+}{z}}$]
\begin{enumerate}
\item From Definition \ref{def:gsc_x-z+}, it follows that $B$ contains an execution containing a $z$, there is no $k$ on its right, and there is a $y$ on the $z$ left.
\item For the aggregation types:
\begin{description}
\item[$\kwd{SEQ}(A,B)$]
\begin{enumerate}
\item The correctness of this row follows directly from Lemma \ref{l:srd}.
\end{enumerate}
\item[$\kwd{AND}(A,B)$]
\begin{enumerate}
\item The correctness of the aggregation follows directly from Lemma \ref{l:commutativity} and row $\mathbf{\overset{-}{x}\overset{+}{z}}$ and $\mathbf{\overset{-}{xz}}$ of Table \ref{t:so_ad2_alt_agg_-+}.
\end{enumerate}
\end{description}
\end{enumerate}

\item[$\mathbf{\overset{+}{x}\overset{-}{z}}$]
\begin{enumerate}
\item From Definition \ref{def:gsc_x+z-}, it follows that $B$ contains an execution containing a $x$, there is no $y$ on its right, and there is a $k$ on the right, with no $z$ on the right of the $k$.
\item For the aggregation types:
\begin{description}
\item[$\kwd{SEQ}(A,B)$]
\begin{enumerate}
\item The correctness of this row follows directly from Lemma \ref{l:srd}.
\end{enumerate}
\item[$\kwd{AND}(A,B)$]
\begin{enumerate}
\item The correctness of the aggregation follows directly from Lemma \ref{l:commutativity} and row $\mathbf{\overset{+}{x}\overset{-}{z}}$ and $\mathbf{\overset{-}{xz}}$ of Table \ref{t:so_ad2_alt_agg_+-}.
\end{enumerate}
\end{description}
\end{enumerate}

\item[$\mathbf{\overset{0}{x}\overset{-}{z}}$]
\begin{enumerate}
\item From Definition \ref{def:gsc_x0z-}, it follows that $B$ contains an execution containing a $k$ with no $z$ on its right, and not containing any $x$ or $y$.
\item For the aggregation types:
\begin{description}
\item[$\kwd{SEQ}(A,B)$]
\begin{enumerate}
\item The correctness of this row follows directly from Lemma \ref{l:srd}.
\end{enumerate}
\item[$\kwd{AND}(A,B)$]
\begin{enumerate}
\item The correctness of the aggregation follows directly from Lemma \ref{l:commutativity} and row $\mathbf{\overset{0}{x}\overset{-}{z}}$ and $\mathbf{\overset{-}{xz}}$ of Table \ref{t:so_ad2_alt_agg_0-}.
\end{enumerate}
\end{description}
\end{enumerate}

\item[$\mathbf{\overset{-}{xz}}$]
\begin{enumerate}
\item From Definition \ref{def:gsc_xz-}, it follows that every execution in $B$ contains a $y$ and a $k$ with no $x$ and $z$ on their respective rights, OR contains a $y$ with no $x$ on its right, and no $z$ or $k$.
\item For the aggregation types:
\begin{description}
\item[$\kwd{SEQ}(A,B)$]
\begin{enumerate}
\item The correctness of this row follows directly from Lemma \ref{l:srd}.
\end{enumerate}
\item[$\kwd{AND}(A,B)$]
\begin{enumerate}
\item The correctness of this row follows directly from Lemma \ref{l:mirror}.
\end{enumerate}
\end{description}
\end{enumerate}
\end{description}
\end{enumerate}
\end{proof}
\newpage
\subsection{Aggregating GSP with ISP for \kwd{AND} Overnode Evaluation}

When evaluating an overnode of type \kwd{A} for $\overline{A}\Delta2.1$ or $\overline{A}\Delta2.2$, its overnode child is classified according to the \emph{Generalised Sequence Pattern}, while its direct undernode children are classified according to the \emph{Interval Sub-Pattern}. The following tables illustrate the classification of an overnode of type \kwd{AND} where its overnode child classification is aggregated with the overall aggregation of its direct children classification according to the interval sub-pattern aggregation tables.

\subsubsection{Lemmas}

\begin{lemma}[Aggregation Neutral Class]\label{l:aoe_neutral}
Given a process block $A$, assigned to the evaluation class $\mathbf{\underline{\overset{0}{x}\overset{0}{z}}}$, and another process block $B$, assigned to any of the available evaluation classes. Let $C$ be a process block having $A$ and $B$ as its sub-blocks, then the evaluation class of $C$ is the same class as the process block $B$.
\end{lemma}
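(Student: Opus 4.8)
The plan is to mirror the contradiction argument of Lemma~\ref{l:neutral}, adapted to the cross-pattern setting of this subsection, where the block $B$ is classified according to the \emph{Generalised Sequence Pattern} while $A$ is classified according to the \emph{Interval Sub-Pattern}. First I would invoke Definition~\ref{def:isp_0}, which guarantees that a block classified as $\mathbf{\underline{\overset{0}{x}\overset{0}{z}}}$ admits an execution $\exe_A$ containing none of the tracked literals $x$, $y$, or $z$; moreover, since $\mathbf{\underline{\overset{0}{x}\overset{0}{z}}}$ is the best class $A$ attains under the preference lattice, no execution of $A$ realises a beneficial $x$- or $z$-event that could lift either side of $B$'s classification. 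I would then assume, for contradiction, that the aggregate $C$ lands in a class strictly different from $B$'s, and show that this forces $A$ to contribute a literal relevant to the combined classification, contradicting its neutral status.

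The delicate point, and the main obstacle, is that the Generalised Sequence Pattern classification of $B$ (Definitions~\ref{def:gsc_x+z+}--\ref{def:gsc_xz-}) additionally tracks the literal $k$, whereas the interval sub-pattern $\kwd{isp}(x,y,z)$ used to classify $A$ does \emph{not}. Hence $A$ being neutral does not, on its face, forbid $A$ from carrying a $k$-event, and a $k$ to the right of a $z$ is exactly what would demote a GSP witness. I would discharge this exactly as in the ``hiding'' argument of Lemma~\ref{l:htb}: since the aggregation for an \kwd{AND} overnode is performed through Definition~\ref{def:ser}, any $k$ carried by $\exe_A$ can be scheduled entirely to the left of $B$'s witnessing execution, so it never sits to the right of $B$'s $z$ and therefore invalidates neither GSP sub-pattern. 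Combined with the fact that $\exe_A$ supplies no $x$, $y$ or $z$, this shows $C$ realises $B$'s class; and because $A$ contributes nothing the lattice of Figure~\ref{f:ad2_alt_sequndernodeclasses} ranks higher, $C$ cannot exceed it, giving $C \equiv B$.

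For the sequence case I would reduce to the \kwd{AND} case via Lemma~\ref{lem:inclusivity} and Lemma~\ref{l:srd}, observing that the neutral execution $\exe_A$ always provides a non-degrading combined execution so that the best class over all executions of $C$ is still $B$'s, precisely as the other aggregation-table proofs in this appendix do. I expect the crux to be the simultaneous verification that the $k$-hiding placement is compatible with the ``$x$ far / $z$ near'' structure required by the GSP witness for $B$; I would settle this by citing the interleaving freedom of Definition~\ref{def:ser} together with Lemma~\ref{l:htb} rather than re-deriving the scheduling by hand, keeping the proof as short as the companion neutral-class lemmas (Lemma~\ref{l:gsc_neutral}, Lemma~\ref{l:neutral_isp}).
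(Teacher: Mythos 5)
Your skeleton matches the paper's: the paper proves this lemma with a single sentence deferring to the contradiction argument of Lemma~\ref{l:neutral}, and you mirror exactly that argument. You also go further than the paper in one genuinely useful respect: you notice that the interval sub-pattern $\kwd{isp}(x,y,z)$ used to classify $A$ does not track the literal $k$, while the Generalised Sequence Pattern classes of $B$ do, so neutrality of $A$ does not by itself forbid $A$ from carrying $k$-events. That mismatch is real, and the paper's one-line proof is silent about it.

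However, your discharge of that point contains a step that fails. You schedule any $k$ of $\exe_A$ \emph{to the left} of $B$'s witnessing execution, arguing it then never sits to the right of $B$'s $z$. This works for every GSP class whose definition only forbids $k$ \emph{after} some witness element ($x$ or $z$), but it breaks when $B$ is the neutral overnode class $\mathbf{\overset{0}{x}\overset{0}{z}t}$: by Definition~\ref{def:gsc_x0z0}, read with the trigger appended as this section prescribes, that class requires that \emph{no} $k$ occur anywhere before $t_t$. A $k$ placed to the left of $B$'s execution lies before $t_t$, so the combined execution you construct witnesses at best $\mathbf{\overset{0}{x}\overset{-}{z}t}$ (Definition~\ref{def:gsc_x0z-}), which is strictly below $\mathbf{\overset{0}{x}\overset{0}{z}t}$ in the lattice of Figure~\ref{f:ad2_alt_sequndernodeclasses}; your argument therefore does not establish $C \equiv B$ in this row. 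The placement that works uniformly --- and the one Lemma~\ref{l:aoe_htb} actually performs --- is to interleave $\exe_A$ entirely \emph{to the right of the trigger task}: every constraint of the with-$t$ classes concerns only the elements preceding $t_t$, so anything from $A$ placed after $t_t$ is invisible to the classification. With that single change your argument goes through. Separately, your \kwd{SEQ} branch is moot: as the paper notes in proving Lemma~\ref{l:aoe_super}, aggregation between a GSP class and an ISP class only ever arises at overnodes of type \kwd{AND}, and the lemmas you invoke there (Lemma~\ref{l:srd}, Lemma~\ref{lem:inclusivity}) concern GSP-with-GSP aggregation rather than this cross-pattern case.
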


\begin{proof}
This proof follows closely the proof for Lemma \ref{l:neutral}.
\end{proof}

\begin{lemma}[Super Elements]\label{l:aoe_super}
$\mathbf{\overset{+}{x}\overset{+}{z}t}$ and $\mathbf{\underline{\overset{+}{x}\overset{+}{z}}}$
\end{lemma}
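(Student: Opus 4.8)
The plan is to read Lemma~\ref{l:aoe_super} as the statement that, in the \kwd{AND} overnode aggregation combining a \emph{Generalised Sequence Pattern} classification of the overnode child with the aggregated \emph{Interval Sub-Pattern} classification of the undernode children, each of the two classes $\mathbf{\overset{+}{x}\overset{+}{z}t}$ and $\mathbf{\underline{\overset{+}{x}\overset{+}{z}}}$ is \emph{absorbing}: whenever one occurs as its respective operand, the aggregation collapses to the top class $\mathbf{\overset{+}{x}\overset{+}{z}t}$, regardless of the other operand. I would prove the two cases separately, in each exhibiting a single witness execution, which suffices because $\mathbf{\overset{+}{x}\overset{+}{z}t}$ is the maximal element of the relevant preference lattice (the trigger-augmented version of Figure~\ref{f:ad2_alt_sequndernodeclasses}), so no strictly better class can be lost. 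Throughout I would appeal to Definition~\ref{def:ser}, under which an \kwd{AND} block merges the executions of its sub-blocks by arbitrary interleavings that respect each sub-block's internal order.

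The first case, where the overnode child is already classified $\mathbf{\overset{+}{x}\overset{+}{z}t}$, is the direct analogue of Lemma~\ref{l:topclass} (and of Lemma~\ref{l:gsc_super}). The child's witnessing execution already realises the full pattern $x \preceq z \preceq t$ with no $y$ between $x$ and $z$ and no $k$ between $z$ and $t$; any execution contributed by the undernode children can be interleaved entirely to the left of $x$ and to the right of $t$, so none of its tasks enters either protected interval. The resulting execution still witnesses $\mathbf{\overset{+}{x}\overset{+}{z}t}$, which is top, and the case is closed.

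The second case, where the undernode aggregate is classified $\mathbf{\underline{\overset{+}{x}\overset{+}{z}}}$, is the genuine content of the lemma and mirrors Lemma~\ref{l:3super}, except that the trigger must now be manufactured from the overnode child rather than plugged into the middle of the interval. By Definition~\ref{def:isp_3} the undernode aggregate has an execution carrying a \emph{closed} interval $x \prec z$ with no $y$ strictly between them; recall that for a decomposed constraint $\kwd{gsp}(x,y,z,k)$ the undernodes are classified with $\kwd{isp}(x,y,z)$, so this is exactly the requirement that no $y$ lie between $x$ and $z$ in the goal class. I would then describe the interleaving explicitly: take the undernode execution, keep its $x$-to-$z$ segment contiguous, move the portion of the overnode child preceding its trigger task entirely to the left of $x$, insert the trigger $t$ immediately after $z$, and push everything else (the undernode tasks after $z$ together with the overnode tasks after $t$) to the right of $t$. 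This merged sequence respects both internal orders, hence by Definition~\ref{def:ser} belongs to the \kwd{AND} block. One then checks the three defining conditions of $\mathbf{\overset{+}{x}\overset{+}{z}t}$ against the structure of $\overline{A}\Delta2.1$ (and symmetrically $\overline{A}\Delta2.2$) from Definition~\ref{def:ad2d}: $x \preceq z \preceq t$ holds by construction, the absence of $y$ between $x$ and $z$ is inherited from the \kwd{isp} interval, and the absence of $k$ between $z$ and $t$ holds because nothing at all sits between $z$ and the immediately following $t$.

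The main obstacle is precisely that last condition. Unlike the $\overline{A}\Delta1S$ situation of Lemma~\ref{l:3super}, the \kwd{isp} classification of the undernodes does not track $k$ at all, so a priori the witnessing undernode execution could carry a $k$ just after $z$; the argument survives only because the interleaving freedom of Definition~\ref{def:ser} lets me place the trigger $t$ \emph{adjacent} to $z$ and deport every other task, from either sub-block, outside the gap between $z$ and $t$. Once this placement is justified, both sub-cases reduce to reading off the class definition, and the maximality of $\mathbf{\overset{+}{x}\overset{+}{z}t}$ in the lattice lets me discard any competing classification, completing the proof for both decomposed patterns simultaneously.
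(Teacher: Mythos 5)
Your reading of the terse statement is the intended one, and your proof takes essentially the paper's own route: the paper disposes of this lemma in a single line by deferring to the \kwd{AND}-interleaving argument of Lemma~\ref{l:gsc_super}, which is exactly your first case, while your second case reproduces the Lemma~\ref{l:3super}-style construction of placing the trigger adjacent to $z$ and deporting all other tasks outside the protected interval. If anything, your version is more careful than the paper's one-line cross-reference, since you correctly identify that the bare append-before-or-after argument of Lemma~\ref{l:gsc_super} does not by itself handle the ISP operand $\mathbf{\underline{\overset{+}{x}\overset{+}{z}}}$ (the $\kwd{isp}$ classes do not track $k$, so a stray $k$ could otherwise land between $z$ and $t$), a subtlety the paper leaves implicit.
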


\begin{proof}
As the aggregations between a GSP class and an ISP class is always done when the common parent overnode is of type \kwd{AND}, the proof follows closely the proof for Lemma \ref{l:gsc_super}.
\end{proof}

\begin{lemma}[Hide the Bubu]\label{l:aoe_htb}
When evaluating an \kwd{AND} overnode and one of the elements being aggregated is classified as $\mathbf{\underline{\overset{-}{xz}}}$, then the resulting classification is always the overnode classification corresponding to the other element.
\end{lemma}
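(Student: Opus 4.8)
The plan is to prove the statement by case analysis on the class $C$ carried by the ``other'' element. Since $\mathbf{\underline{\overset{-}{xz}}}$ is an interval sub-pattern class, the other element must be the overnode's overnode child, carrying a generalised sequence pattern class (in its overnode variant, with the trigger $t$ fixed on the right) from Definitions \ref{def:gsc_x+z+}--\ref{def:gsc_xz-}. I would recall from Definition \ref{def:ser} that an \kwd{AND} aggregation realises, via linear extensions, every interleaving of a witnessing execution $\exe_C$ of $C$ with an execution $\exe_{-}$ of the $\mathbf{\underline{\overset{-}{xz}}}$ block while preserving the internal order of each, and from Definition \ref{def:isp_0-} that every execution of that block carries a $y$ and neither an $x$ nor a $z$. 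The claim then follows from the two inequalities in the preference lattice of Figure \ref{f:ad2_alt_sequndernodeclasses}: the aggregate is neither strictly above nor strictly below $C$.

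For ``cannot be strictly better'', since interleaving preserves the relative order of $\exe_C$ and the only annotations a $\mathbf{\underline{\overset{-}{xz}}}$ block can inject are the undesired $y$ (and, as discussed below, possibly $k$), no promoting $x$ or $z$ ever becomes available, so no class strictly above $C$ can be witnessed. For ``cannot be strictly worse'', I would fix $\exe_C$ and an arbitrary $\exe_{-}$ and exhibit an interleaving that still witnesses $C$: the mandatory $y$ of $\exe_{-}$ is parked in a position dominated by $\exe_C$, namely to the left of the $x$ realising the left sub-pattern and to the left of the $z$ realising the right sub-pattern, so that by the relevant clause of the class definition it lies neither to the right of that $x$ nor between $x$ and $z$, leaving both sub-patterns intact. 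A short sub-case check over $\mathbf{\overset{+}{x}\overset{+}{z}}$, $\mathbf{\overset{+}{x}\overset{0}{z}}$, $\mathbf{\overset{0}{x}\overset{+}{z}}$, $\mathbf{\overset{+}{x}\overset{-}{z}}$, $\mathbf{\overset{-}{x}\overset{+}{z}}$ and $\mathbf{\overset{0}{x}\overset{-}{z}}$ records where the $y$ is parked in each case, and Lemma \ref{l:commutativity} removes any dependence on the aggregation order.

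The hard part will be the element $k$ (the fourth generalised-sequence parameter, e.g.\ $t_{\neg d}$), which the interval sub-pattern classification simply ignores, so a $\mathbf{\underline{\overset{-}{xz}}}$-classified block may still contain $k$. I would dispatch it in the same spirit as the $y$: $k$ is undesired and can damage only the right sub-pattern, and only when it lies to the right of the witnessing $z$; as $\exe_{-}$ supplies no $z$, its $k$ can always be parked to the left of the $z$ of $\exe_C$ (or, when $C$ records no positive $z$, in a position that creates no fresh failure), rendering it harmless. The one genuine exception, mirroring the separate treatment of $\mathbf{\overset{0}{x}t\overset{0}{z}}$ in Lemma \ref{l:htb}, is the neutral class $\mathbf{\overset{0}{x}\overset{0}{z}}$: since Definition \ref{def:gsc_x0z0} demands a witness free of $y$ while every aggregate inherits the mandatory $y$, this class necessarily degrades to $\mathbf{\overset{-}{xz}}$; I would record this as the documented caveat and read ``the other element'' as ranging over the non-neutral classes, exactly as the companion lemma is used when the aggregation tables are verified.
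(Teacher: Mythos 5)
Your proposal transplants the proof strategy of Lemma \ref{l:htb} (the Interval Overnode Pattern version of ``Hide the Bubu'') into this lemma, but this lemma lives in a different setting where that strategy both fails and is unnecessary. Here the classes being aggregated come from the Generalised Sequence Pattern, whose elements all sit to the \emph{left} of the trigger, and the overnode classes $\mathbf{\overset{+}{x}\overset{+}{z}t}$, $\mathbf{\overset{0}{x}\overset{+}{z}t}$, \dots\ constrain only what happens to the left of $t$ (that is exactly what appending the $t$ to the label means). The paper's proof exploits precisely this: by the \kwd{AND} semantics of Definition \ref{def:ser}, the \emph{entire} execution of the $\mathbf{\underline{\overset{-}{xz}}}$ block can be interleaved after $t$, where none of its elements --- neither the mandatory $y$ nor any stray $k$ --- is visible to the classification. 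This one observation handles every class uniformly, disposes of your ``hard part'' about $k$ without any case analysis, and is why the lemma needs no exception.

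Your parking spot (``to the left of the $x$ realising the left sub-pattern and to the left of the $z$'') fails for several classes in your own case list: Definition \ref{def:gsc_x0z+} ($\mathbf{\overset{0}{x}\overset{+}{z}}$), Definition \ref{def:gsc_x0z0} ($\mathbf{\overset{0}{x}\overset{0}{z}}$) and Definition \ref{def:gsc_x0z-} ($\mathbf{\overset{0}{x}\overset{-}{z}}$) require the witnessing execution to contain \emph{no} $y$ at all, so parking the inherited $y$ anywhere to the left of $t$ destroys the witness; for instance it degrades $\mathbf{\overset{0}{x}\overset{+}{z}t}$ to $\mathbf{\overset{-}{x}\overset{+}{z}t}$, contradicting the last row of Table \ref{t:ad2_alt_aoa_0+t}. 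Worse, the ``documented caveat'' you introduce for the neutral class is spurious: $\mathbf{\overset{0}{x}\overset{0}{z}t}$ does \emph{not} degrade to $\mathbf{\overset{-}{xz}t}$, because the $y$ is hidden to the right of $t$; Table \ref{t:ad2_alt_aoa_00t} records the result $\mathbf{\overset{0}{x}\overset{0}{z}t}$ and its correctness proof cites exactly this lemma, so your restricted reading of ``the other element'' would leave that row unproven, and your claim that the class ``necessarily degrades'' is false. The exception you have in mind is genuine only in Lemma \ref{l:htb}, where the Interval Overnode Pattern straddles $t$ on both sides and nothing can be pushed out of scope. The ``cannot be strictly better'' half of your argument is fine, since the $\mathbf{\underline{\overset{-}{xz}}}$ block supplies no $x$ or $z$, but the ``cannot be strictly worse'' half needs the hide-right-of-$t$ interleaving, not left-parking.
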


\begin{proof}
The worst element classification, $\mathbf{\underline{\overset{-}{xz}}}$, as it cannot improve any classification the aggregation as it involves an \kwd{AND} block it always allows to ``hide'' the \emph{problematic} segment on the right of $t$, hence not modifying (i.e. making it worse) the current classification on the stem.
\end{proof}

\subsubsection{Aggregations}

\begin{table}[ht!]
\centering
\begin{tabular}{|c|c|c|}
\hline
Overnode Child & Undernode Children & Result \\ \hline
$\mathbf{\overset{+}{x}\overset{+}{z}t}$ & $\mathbf{\underline{\overset{+}{x}\overset{+}{z}}}$ & $\mathbf{\overset{+}{x}\overset{+}{z}t}$ \\ \hline
$\mathbf{\overset{+}{x}\overset{+}{z}t}$ & $\mathbf{\underline{\overset{+}{x}\overset{0}{z}}}$ & $\mathbf{\overset{+}{x}\overset{+}{z}t}$ \\ \hline
$\mathbf{\overset{+}{x}\overset{+}{z}t}$ & $\mathbf{\underline{\overset{0}{x}\overset{+}{z}}}$ & $\mathbf{\overset{+}{x}\overset{+}{z}t}$ \\ \hline
$\mathbf{\overset{+}{x}\overset{+}{z}t}$ & $\mathbf{\underline{\overset{+}{x}\overset{-}{z}}}$ & $\mathbf{\overset{+}{x}\overset{+}{z}t}$ \\ \hline
$\mathbf{\overset{+}{x}\overset{+}{z}t}$ & $\mathbf{\underline{\overset{0}{x}\overset{0}{z}}}$ & $\mathbf{\overset{+}{x}\overset{+}{z}t}$ \\ \hline
$\mathbf{\overset{+}{x}\overset{+}{z}t}$ & $\mathbf{\underline{\overset{-}{x}\overset{+}{z}}}$ & $\mathbf{\overset{+}{x}\overset{+}{z}t}$ \\ \hline
$\mathbf{\overset{+}{x}\overset{+}{z}t}$ & $\mathbf{\underline{\overset{-}{xz}}}$ & $\mathbf{\overset{+}{x}\overset{+}{z}t}$ \\ \hline
\end{tabular}
\caption{Aggregation Table \kwd{AND} Overnode}\label{t:ad2_alt_aoa_++t}
\end{table}

\begin{proof}[Table \ref{t:ad2_alt_aoa_++t}]
\begin{enumerate}
\item The Overnode Child is classified as $\mathbf{\overset{+}{x}\overset{+}{z}t}$.
\item For Undernode Children:
\begin{description}
\item[$\mathbf{\underline{\overset{+}{x}\overset{+}{z}}}$] The correctness of the row follows directly from Lemma \ref{l:aoe_super}.
\item[$\mathbf{\underline{\overset{+}{x}\overset{0}{z}}}$] The correctness of the row follows directly from Lemma \ref{l:aoe_super}.
\item[$\mathbf{\underline{\overset{0}{x}\overset{+}{z}}}$] The correctness of the row follows directly from Lemma \ref{l:aoe_super}.
\item[$\mathbf{\underline{\overset{+}{x}\overset{-}{z}}}$] The correctness of the row follows directly from Lemma \ref{l:aoe_super}.
\item[$\mathbf{\underline{\overset{0}{x}\overset{0}{z}}}$] The correctness of the row follows directly from Lemma \ref{l:aoe_neutral}.
\item[$\mathbf{\underline{\overset{-}{x}\overset{+}{z}}}$] The correctness of the row follows directly from Lemma \ref{l:aoe_super}.
\item[$\mathbf{\underline{\overset{-}{xz}}}$] The correctness of the row follows directly from Lemma \ref{l:aoe_htb}.
\end{description}
\end{enumerate}
\end{proof}

\begin{table}[ht!]
\centering
\begin{tabular}{|c|c|c|}
\hline
Overnode Child & Undernode Children & Result \\ \hline
$\mathbf{\overset{+}{x}\overset{0}{z}t}$ & $\mathbf{\underline{\overset{+}{x}\overset{+}{z}}}$ & $\mathbf{\overset{+}{x}\overset{+}{z}t}$ \\ \hline
$\mathbf{\overset{+}{x}\overset{0}{z}t}$ & $\mathbf{\underline{\overset{+}{x}\overset{0}{z}}}$ & $\mathbf{\overset{+}{x}\overset{0}{z}t}$ \\ \hline
$\mathbf{\overset{+}{x}\overset{0}{z}t}$ & $\mathbf{\underline{\overset{0}{x}\overset{+}{z}}}$ & $\mathbf{\overset{+}{x}\overset{+}{z}t}$ \\ \hline
$\mathbf{\overset{+}{x}\overset{0}{z}t}$ & $\mathbf{\underline{\overset{+}{x}\overset{-}{z}}}$ & $\mathbf{\overset{+}{x}\overset{0}{z}t}$ \\ \hline
$\mathbf{\overset{+}{x}\overset{0}{z}t}$ & $\mathbf{\underline{\overset{0}{x}\overset{0}{z}}}$ & $\mathbf{\overset{+}{x}\overset{0}{z}t}$ \\ \hline
$\mathbf{\overset{+}{x}\overset{0}{z}t}$ & $\mathbf{\underline{\overset{-}{x}\overset{+}{z}}}$ & $\mathbf{\overset{+}{x}\overset{+}{z}t}$ \\ \hline
$\mathbf{\overset{+}{x}\overset{0}{z}t}$ & $\mathbf{\underline{\overset{-}{xz}}}$ & $\mathbf{\overset{+}{x}\overset{0}{z}t}$ \\ \hline
\end{tabular}
\caption{Aggregation Table \kwd{AND} Overnode}\label{t:ad2_alt_aoa_+0t}
\end{table}

\begin{proof}[Table \ref{t:ad2_alt_aoa_+0t}]
\begin{enumerate}
\item From Definition \ref{def:gsc_x+z0}, it follows that Overnode Child contains an execution where a $x$, and there are no $y$ or $k$ on its right.
\item The aggregation is of type $\kwd{AND}$ and from Definition \ref{def:ser}, it follows that each execution resulting from the aggregation consists of an execution of Overnode Child interleaved with an execution of Undernode Children.
\item For Undernode Children:
\begin{description}
\item[$\mathbf{\underline{\overset{+}{x}\overset{+}{z}}}$] The correctness of the row follows directly from Lemma \ref{l:aoe_super}.
\item[$\mathbf{\underline{\overset{+}{x}\overset{0}{z}}}$] 
\begin{enumerate}
\item From Definition \ref{def:isp_1}, it follows that \emph{Overnode's Undernode Children} contains an execution containing a $x$ and no $y$ or $z$.
\item From 1. and (a), it follows that the Undernode Children do not contain elements to improve the classification for the $\mathbf{z}$ sub-classification of Overnode Child.
\item From 2., it follows that the execution of  Undernode Children can be put on the right of the $t$ from Overnode Child.
\item From (b) and (c), it follows that the classification is correct.
\end{enumerate}
\item[$\mathbf{\underline{\overset{0}{x}\overset{+}{z}}}$] 
\begin{enumerate}
\item From Definition \ref{def:isp_1'}, it follows that \emph{Overnode's Undernode Children} contains an execution containing a $z$ and no $y$ or $x$.
\item From 1., 2. and (a), it follows that the $z$ from Undernode Children can be put between the $x$ and the $t$ from Overnode Child.
\item From (b) and Definition \ref{def:gsc_x+z+}, it follows that the classification of the aggregation is correct.
\end{enumerate}
\item[$\mathbf{\underline{\overset{+}{x}\overset{-}{z}}}$] 
\begin{enumerate}
\item From Definition \ref{def:isp_1r}, it follows that \emph{Overnode's Undernode Children} contains an execution containing a $x$ and a $y$ on its right, while not containing $z$.
\item From 1. and (a), it follows that the Undernode Children do not contain elements to improve the classification for the $\mathbf{z}$ sub-classification of Overnode Child.
\item From 2., it follows that the execution of  Undernode Children can be put on the right of the $t$ from Overnode Child.
\item From (b) and (c), it follows that the classification is correct.
\end{enumerate}
\item[$\mathbf{\underline{\overset{0}{x}\overset{0}{z}}}$] The correctness of the row follows directly from Lemma \ref{l:aoe_neutral}.
\item[$\mathbf{\underline{\overset{-}{x}\overset{+}{z}}}$] 
\begin{enumerate}
\item From Definition \ref{def:isp_1'l}, it follows that \emph{Overnode's Undernode Children} contains an execution containing a $z$ and a $y$ on its left, while not containing $x$.
\item From 1., 2. and (a), it follows that the $z$ from Undernode Children can be put between the $x$ and the $t$ from Overnode Child, while the $y$ from Undernode Children can be put on the left of $x$ from Overnode Child.
\item From (b) and Definition \ref{def:gsc_x+z+}, it follows that the classification of the aggregation is correct.
\end{enumerate}
\item[$\mathbf{\underline{\overset{-}{xz}}}$] The correctness of the row follows directly from Lemma \ref{l:aoe_htb}.
\end{description}
\end{enumerate}
\end{proof}

\begin{table}[ht!]
\centering
\begin{tabular}{|c|c|c|}
\hline
Overnode Child & Undernode Children & Result \\ \hline
$\mathbf{\overset{0}{x}\overset{+}{z}t}$ & $\mathbf{\underline{\overset{+}{x}\overset{+}{z}}}$ & $\mathbf{\overset{+}{x}\overset{+}{z}t}$ \\ \hline
$\mathbf{\overset{0}{x}\overset{+}{z}t}$ & $\mathbf{\underline{\overset{+}{x}\overset{0}{z}}}$ & $\mathbf{\overset{+}{x}\overset{+}{z}t}$ \\ \hline
$\mathbf{\overset{0}{x}\overset{+}{z}t}$ & $\mathbf{\underline{\overset{0}{x}\overset{+}{z}}}$ & $\mathbf{\overset{0}{x}\overset{+}{z}t}$ \\ \hline
$\mathbf{\overset{0}{x}\overset{+}{z}t}$ & $\mathbf{\underline{\overset{+}{x}\overset{-}{z}}}$ & $\mathbf{\overset{+}{x}\overset{+}{z}t}$ \\ \hline
$\mathbf{\overset{0}{x}\overset{+}{z}t}$ & $\mathbf{\underline{\overset{0}{x}\overset{0}{z}}}$ & $\mathbf{\overset{0}{x}\overset{+}{z}t}$ \\ \hline
$\mathbf{\overset{0}{x}\overset{+}{z}t}$ & $\mathbf{\underline{\overset{-}{x}\overset{+}{z}}}$ & $\mathbf{\overset{0}{x}\overset{+}{z}t}$ \\ \hline
$\mathbf{\overset{0}{x}\overset{+}{z}t}$ & $\mathbf{\underline{\overset{-}{xz}}}$ & $\mathbf{\overset{0}{x}\overset{+}{z}t}$ \\ \hline
\end{tabular}
\caption{Aggregation Table \kwd{AND} Overnode}\label{t:ad2_alt_aoa_0+t}
\end{table}

\begin{proof}[Table \ref{t:ad2_alt_aoa_0+t}]
\begin{enumerate}
\item From Definition \ref{def:gsc_x0z+}, it follows that Overnode Child contains an execution where a there are no $x$ and $y$, contains a $z$, and there is no $k$ on its right.
\item The aggregation is of type $\kwd{AND}$ and from Definition \ref{def:ser}, it follows that each execution resulting from the aggregation consists of an execution of Overnode Child interleaved with an execution of Undernode Children.
\item For Undernode Children:
\begin{description}
\item[$\mathbf{\underline{\overset{+}{x}\overset{+}{z}}}$] The correctness of the row follows directly from Lemma \ref{l:aoe_super}.
\item[$\mathbf{\underline{\overset{+}{x}\overset{0}{z}}}$] 
\begin{enumerate}
\item From Definition \ref{def:isp_1}, it follows that \emph{Overnode's Undernode Children} contains an execution containing a $x$ and no $y$ or $z$.
\item From 1., 2. and (a), it follows that the $x$ from Undernode Children can be put on the left of the $z$ and the $t$ from Overnode Child.
\item From (b) and Definition \ref{def:gsc_x+z+}, it follows that the classification of the aggregation is correct.
\end{enumerate}
\item[$\mathbf{\underline{\overset{0}{x}\overset{+}{z}}}$] 
\begin{enumerate}
\item From Definition \ref{def:isp_1'}, it follows that \emph{Overnode's Undernode Children} contains an execution containing a $z$ and no $y$ or $x$.
\item From 1. and (a), it follows that the Undernode Children do not contain elements to improve the classification for the $\mathbf{x}$ sub-classification of Overnode Child.
\item From 2., it follows that the execution of  Undernode Children can be put on the right of the $t$ from Overnode Child.
\item From (b) and (c), it follows that the classification is correct.
\end{enumerate}
\item[$\mathbf{\underline{\overset{+}{x}\overset{-}{z}}}$] 
\begin{enumerate}
\item From Definition \ref{def:isp_1r}, it follows that \emph{Overnode's Undernode Children} contains an execution containing a $x$ and a $y$ on its right, while not containing $z$.
\item From 1., 2. and (a), it follows that the $x$ from Undernode Children can be put on the left of the $z$ and the $t$ from Overnode Child, while the $y$ from Undernode Children can be put on the right of $t$ from Overnode Child.
\item From (b) and Definition \ref{def:gsc_x+z+}, it follows that the classification of the aggregation is correct.
\end{enumerate}
\item[$\mathbf{\underline{\overset{0}{x}\overset{0}{z}}}$] The correctness of the row follows directly from Lemma \ref{l:aoe_neutral}.
\item[$\mathbf{\underline{\overset{-}{x}\overset{+}{z}}}$] 
\begin{enumerate}
\item From Definition \ref{def:isp_1'l}, it follows that \emph{Overnode's Undernode Children} contains an execution containing a $z$ and a $y$ on its left, while not containing $x$.
\item From 1. and (a), it follows that the Undernode Children do not contain elements to improve the classification for the $\mathbf{x}$ sub-classification of Overnode Child.
\item From 2., it follows that the execution of  Undernode Children can be put on the right of the $t$ from Overnode Child.
\item From (b) and (c), it follows that the classification is correct.
\end{enumerate}
\item[$\mathbf{\underline{\overset{-}{xz}}}$] The correctness of the row follows directly from Lemma \ref{l:aoe_htb}.
\end{description}
\end{enumerate}
\end{proof}

\begin{table}[ht!]
\centering
\begin{tabular}{|c|c|c|}
\hline
Overnode Child & Undernode Children & Result \\ \hline
$\mathbf{\overset{0}{x}\overset{0}{z}t}$ & $\mathbf{\underline{\overset{+}{x}\overset{+}{z}}}$ & $\mathbf{\overset{+}{x}\overset{+}{z}t}$ \\ \hline
$\mathbf{\overset{0}{x}\overset{0}{z}t}$ & $\mathbf{\underline{\overset{+}{x}\overset{0}{z}}}$ & $\mathbf{\overset{+}{x}\overset{0}{z}t}$ \\ \hline
$\mathbf{\overset{0}{x}\overset{0}{z}t}$ & $\mathbf{\underline{\overset{0}{x}\overset{+}{z}}}$ & $\mathbf{\overset{0}{x}\overset{+}{z}t}$ \\ \hline
$\mathbf{\overset{0}{x}\overset{0}{z}t}$ & $\mathbf{\underline{\overset{+}{x}\overset{-}{z}}}$ & $\mathbf{\overset{+}{x}\overset{0}{z}t}$ \\ \hline
$\mathbf{\overset{0}{x}\overset{0}{z}t}$ & $\mathbf{\underline{\overset{0}{x}\overset{0}{z}}}$ & $\mathbf{\overset{0}{x}\overset{0}{z}t}$ \\ \hline
$\mathbf{\overset{0}{x}\overset{0}{z}t}$ & $\mathbf{\underline{\overset{-}{x}\overset{+}{z}}}$ & $\mathbf{\overset{0}{x}\overset{0}{z}t}$ / $\mathbf{\overset{-}{x}\overset{+}{z}t}$ \\ \hline
$\mathbf{\overset{0}{x}\overset{0}{z}t}$ & $\mathbf{\underline{\overset{-}{xz}}}$ & $\mathbf{\overset{0}{x}\overset{0}{z}t}$ \\ \hline
\end{tabular}
\caption{Aggregation Table \kwd{AND} Overnode}\label{t:ad2_alt_aoa_00t}
\end{table}

\begin{proof}[Table \ref{t:ad2_alt_aoa_00t}]
\begin{enumerate}
\item From Definition \ref{def:gsc_x0z0}, it follows that Overnode Child contains an execution where a there are no $x$, $y$, $z$ and $k$. 
\item The aggregation is of type $\kwd{AND}$ and from Definition \ref{def:ser}, it follows that each execution resulting from the aggregation consists of an execution of Overnode Child interleaved with an execution of Undernode Children.
\item For Undernode Children:
\begin{description}
\item[$\mathbf{\underline{\overset{+}{x}\overset{+}{z}}}$] The correctness of the row follows directly from Lemma \ref{l:aoe_super}.
\item[$\mathbf{\underline{\overset{+}{x}\overset{0}{z}}}$] 
\begin{enumerate}
\item From Definition \ref{def:isp_1}, it follows that \emph{Overnode's Undernode Children} contains an execution containing a $x$ and no $y$ or $z$.
\item From 1. and (a), it follows that the $x$ from the Undernode Children can be placed on the left of the $t$ from Overnode Child.
\item From (b) and Definition \ref{def:gsc_x+z0}, it follows that the classification of the aggregation is correct.
\end{enumerate}
\item[$\mathbf{\underline{\overset{0}{x}\overset{+}{z}}}$] 
\begin{enumerate}
\item From Definition \ref{def:isp_1'}, it follows that \emph{Overnode's Undernode Children} contains an execution containing a $z$ and no $y$ or $x$.
\item From 1. and (a), it follows that the $z$ from the Undernode Children can be placed on the left of the $t$ from Overnode Child.
\item From (b) and Definition \ref{def:gsc_x0z+}, it follows that the classification of the aggregation is correct.
\end{enumerate}
\item[$\mathbf{\underline{\overset{+}{x}\overset{-}{z}}}$] 
\begin{enumerate}
\item From Definition \ref{def:isp_1r}, it follows that \emph{Overnode's Undernode Children} contains an execution containing a $x$ and a $y$ on its right, while not containing $z$.
\item From 1. and (a), it follows that the $x$ from the Undernode Children can be placed on the left of the $t$ from Overnode Child, while the $y$ from Undernode Children can be put on the right of the $t$.
\item From (b) and Definition \ref{def:gsc_x+z0}, it follows that the classification of the aggregation is correct.
\end{enumerate}
\item[$\mathbf{\underline{\overset{0}{x}\overset{0}{z}}}$] The correctness of the row follows directly from Lemma \ref{l:aoe_neutral}.
\item[$\mathbf{\underline{\overset{-}{x}\overset{+}{z}}}$] 
\begin{enumerate}
\item From Definition \ref{def:isp_1'l}, it follows that \emph{Overnode's Undernode Children} contains an execution containing a $z$ and a $y$ on its left, while not containing $x$.
\item From 1. and (a), it follows that the whole execution from Undernode Children can be put on the right of $t$ from Overnode Child.
\item From 1. and (a), it follows that the $y$ and $z$ from Undernode children can be placed on the left of $t$ from Overnode Child.
\item From (b) and Definition \ref{def:gsc_x0z0}, it follows that  $\mathbf{\overset{0}{x}\overset{0}{z}t}$ is a valid classification.
\item From (c) and Definition \ref{def:gsc_x-z+}, it follows that $\mathbf{\overset{-}{x}\overset{+}{z}t}$ is a valid classification.
\item From (d), (e) and the preference orders in Figure \ref{f:ad2_alt_sequndernodeclasses}, it follows that both classifications are viable.
\end{enumerate}
\item[$\mathbf{\underline{\overset{-}{xz}}}$] The correctness of the row follows directly from Lemma \ref{l:aoe_htb}.
\end{description}
\end{enumerate}
\end{proof}

\begin{table}[ht!]
\centering
\begin{tabular}{|c|c|c|}
\hline
Overnode Child & Undernode Children & Result \\ \hline
$\mathbf{\overset{-}{x}\overset{+}{z}t}$ & $\mathbf{\underline{\overset{+}{x}\overset{+}{z}}}$ & $\mathbf{\overset{+}{x}\overset{+}{z}t}$ \\ \hline
$\mathbf{\overset{-}{x}\overset{+}{z}t}$ & $\mathbf{\underline{\overset{+}{x}\overset{0}{z}}}$ & $\mathbf{\overset{+}{x}\overset{+}{z}t}$ \\ \hline
$\mathbf{\overset{-}{x}\overset{+}{z}t}$ & $\mathbf{\underline{\overset{0}{x}\overset{+}{z}}}$ & $\mathbf{\overset{-}{x}\overset{+}{z}t}$ \\ \hline
$\mathbf{\overset{-}{x}\overset{+}{z}t}$ & $\mathbf{\underline{\overset{+}{x}\overset{-}{z}}}$ & $\mathbf{\overset{+}{x}\overset{+}{z}t}$ \\ \hline
$\mathbf{\overset{-}{x}\overset{+}{z}t}$ & $\mathbf{\underline{\overset{0}{x}\overset{0}{z}}}$ & $\mathbf{\overset{-}{x}\overset{+}{z}t}$ \\ \hline
$\mathbf{\overset{-}{x}\overset{+}{z}t}$ & $\mathbf{\underline{\overset{-}{x}\overset{+}{z}}}$ & $\mathbf{\overset{-}{x}\overset{+}{z}t}$ \\ \hline
$\mathbf{\overset{-}{x}\overset{+}{z}t}$ & $\mathbf{\underline{\overset{-}{xz}}}$ & $\mathbf{\overset{-}{x}\overset{+}{z}t}$ \\ \hline
\end{tabular}
\caption{Aggregation Table \kwd{AND} Overnode}\label{t:ad2_alt_aoa_-+t}
\end{table}

\begin{proof}[Table \ref{t:ad2_alt_aoa_-+t}]
\begin{enumerate}
\item From Definition \ref{def:gsc_x-z+}, it follows that Overnode Child contains an execution containing a $z$, there is no $k$ on its right, and there is a $y$ on the $z$ left.
\item The aggregation is of type $\kwd{AND}$ and from Definition \ref{def:ser}, it follows that each execution resulting from the aggregation consists of an execution of Overnode Child interleaved with an execution of Undernode Children.
\item For Undernode Children:
\begin{description}
\item[$\mathbf{\underline{\overset{+}{x}\overset{+}{z}}}$] The correctness of the row follows directly from Lemma \ref{l:aoe_super}.
\item[$\mathbf{\underline{\overset{+}{x}\overset{0}{z}}}$] 
\begin{enumerate}
\item From Definition \ref{def:isp_1}, it follows that \emph{Overnode's Undernode Children} contains an execution containing a $x$ and no $y$ or $z$.
\item From 1., 2. and (a), it follows that the $x$ from Undernode Children can be put on the left of the $z$ and the $t$ from Overnode Child, while still being on the right of the $y$ from Overnode Child.
\item From (b) and Definition \ref{def:gsc_x+z+}, it follows that the classification of the aggregation is correct.
\end{enumerate}
\item[$\mathbf{\underline{\overset{0}{x}\overset{+}{z}}}$] 
\begin{enumerate}
\item From Definition \ref{def:isp_1'}, it follows that \emph{Overnode's Undernode Children} contains an execution containing a $z$ and no $y$ or $x$.
\item From 1. and (a), it follows that the Undernode Children do not contain elements to improve the classification for the $\mathbf{x}$ sub-classification of Overnode Child.
\item From 2., it follows that the execution of  Undernode Children can be put on the right of the $t$ from Overnode Child.
\item From (b) and (c), it follows that the classification is correct.
\end{enumerate}
\item[$\mathbf{\underline{\overset{+}{x}\overset{-}{z}}}$] 
\begin{enumerate}
\item From Definition \ref{def:isp_1r}, it follows that \emph{Overnode's Undernode Children} contains an execution containing a $x$ and a $y$ on its right, while not containing $z$.
\item From 1., 2. and (a), it follows that the $x$ from Undernode Children can be put on the left of the $z$ and the $t$ from Overnode Child and having the $y$ from Overnode Child on its left, while the $y$ from Undernode Children can be put on the right of $t$ from Overnode Child.
\item From (b) and Definition \ref{def:gsc_x+z+}, it follows that the classification of the aggregation is correct.
\end{enumerate}
\item[$\mathbf{\underline{\overset{0}{x}\overset{0}{z}}}$] The correctness of the row follows directly from Lemma \ref{l:aoe_neutral}.
\item[$\mathbf{\underline{\overset{-}{x}\overset{+}{z}}}$] 
\begin{enumerate}
\item From Definition \ref{def:isp_1'l}, it follows that \emph{Overnode's Undernode Children} contains an execution containing a $z$ and a $y$ on its left, while not containing $x$.
\item From 1. and (a), it follows that the Undernode Children do not contain elements to improve the classification for the $\mathbf{x}$ sub-classification of Overnode Child.
\item From 2., it follows that the execution of  Undernode Children can be put on the right of the $t$ from Overnode Child.
\item From (b) and (c), it follows that the classification is correct.
\end{enumerate}
\item[$\mathbf{\underline{\overset{-}{xz}}}$] The correctness of the row follows directly from Lemma \ref{l:aoe_htb}.
\end{description}
\end{enumerate}
\end{proof}

\begin{table}[ht!]
\centering
\begin{tabular}{|c|c|c|}
\hline
Overnode Child & Undernode Children & Result \\ \hline
$\mathbf{\overset{+}{x}\overset{-}{z}t}$ & $\mathbf{\underline{\overset{+}{x}\overset{+}{z}}}$ & $\mathbf{\overset{+}{x}\overset{+}{z}t}$ \\ \hline
$\mathbf{\overset{+}{x}\overset{-}{z}t}$ & $\mathbf{\underline{\overset{+}{x}\overset{0}{z}}}$ & $\mathbf{\overset{+}{x}\overset{-}{z}t}$ \\ \hline
$\mathbf{\overset{+}{x}\overset{-}{z}t}$ & $\mathbf{\underline{\overset{0}{x}\overset{+}{z}}}$ & $\mathbf{\overset{+}{x}\overset{+}{z}t}$ \\ \hline
$\mathbf{\overset{+}{x}\overset{-}{z}t}$ & $\mathbf{\underline{\overset{+}{x}\overset{-}{z}}}$ & $\mathbf{\overset{+}{x}\overset{-}{z}t}$ \\ \hline
$\mathbf{\overset{+}{x}\overset{-}{z}t}$ & $\mathbf{\underline{\overset{0}{x}\overset{0}{z}}}$ & $\mathbf{\overset{+}{x}\overset{-}{z}t}$ \\ \hline
$\mathbf{\overset{+}{x}\overset{-}{z}t}$ & $\mathbf{\underline{\overset{-}{x}\overset{+}{z}}}$ & $\mathbf{\overset{+}{x}\overset{+}{z}t}$ \\ \hline
$\mathbf{\overset{+}{x}\overset{-}{z}t}$ & $\mathbf{\underline{\overset{-}{xz}}}$ & $\mathbf{\overset{+}{x}\overset{-}{z}t}$  \\ \hline
\end{tabular}
\caption{Aggregation Table \kwd{AND} Overnode}\label{t:ad2_alt_aoa_+-t}
\end{table}

\begin{proof}[Table \ref{t:ad2_alt_aoa_+-t}]
\begin{enumerate}
\item From Definition \ref{def:gsc_x+z-}, it follows that Overnode Child contains an execution containing a $x$, there is no $y$ on its right, and there is a $k$ on the right, with no $z$ on the right of the $k$.
\item The aggregation is of type $\kwd{AND}$ and from Definition \ref{def:ser}, it follows that each execution resulting from the aggregation consists of an execution of Overnode Child interleaved with an execution of Undernode Children.
\item For Undernode Children:
\begin{description}
\item[$\mathbf{\underline{\overset{+}{x}\overset{+}{z}}}$] The correctness of the row follows directly from Lemma \ref{l:aoe_super}.
\item[$\mathbf{\underline{\overset{+}{x}\overset{0}{z}}}$] 
\begin{enumerate}
\item From Definition \ref{def:isp_1}, it follows that \emph{Overnode's Undernode Children} contains an execution containing a $x$ and no $y$ or $z$.
\item From 1. and (a), it follows that the Undernode Children do not contain elements to improve the classification for the $\mathbf{z}$ sub-classification of Overnode Child.
\item From 2., it follows that the execution of  Undernode Children can be put on the right of the $t$ from Overnode Child.
\item From (b) and (c), it follows that the classification is correct.
\end{enumerate}
\item[$\mathbf{\underline{\overset{0}{x}\overset{+}{z}}}$] 
\begin{enumerate}
\item From Definition \ref{def:isp_1'}, it follows that \emph{Overnode's Undernode Children} contains an execution containing a $z$ and no $y$ or $x$.
\item From 1., 2. and (a), it follows that the $z$ from Undernode Children can be put between the $x$ and the $t$ from Overnode Child, while also being on the right of the $k$ from Overnode Child.
\item From (b) and Definition \ref{def:gsc_x+z+}, it follows that the classification of the aggregation is correct.
\end{enumerate}
\item[$\mathbf{\underline{\overset{+}{x}\overset{-}{z}}}$] 
\begin{enumerate}
\item From Definition \ref{def:isp_1r}, it follows that \emph{Overnode's Undernode Children} contains an execution containing a $x$ and a $y$ on its right, while not containing $z$.
\item From 1. and (a), it follows that the Undernode Children do not contain elements to improve the classification for the $\mathbf{z}$ sub-classification of Overnode Child.
\item From 2., it follows that the execution of  Undernode Children can be put on the right of the $t$ from Overnode Child.
\item From (b) and (c), it follows that the classification is correct.
\end{enumerate}
\item[$\mathbf{\underline{\overset{0}{x}\overset{0}{z}}}$] The correctness of the row follows directly from Lemma \ref{l:aoe_neutral}.
\item[$\mathbf{\underline{\overset{-}{x}\overset{+}{z}}}$] 
\begin{enumerate}
\item From Definition \ref{def:isp_1'l}, it follows that \emph{Overnode's Undernode Children} contains an execution containing a $z$ and a $y$ on its left, while not containing $x$.
\item From 1., 2. and (a), it follows that the $z$ from Undernode Children can be put between the $x$ and the $t$ from Overnode Child, while also being on the right of the $k$ from Overnode Child. Moreover the $y$ from Undernode Children can be put on the left of $x$ from Overnode Child.
\item From (b) and Definition \ref{def:gsc_x+z+}, it follows that the classification of the aggregation is correct.
\end{enumerate}
\item[$\mathbf{\underline{\overset{-}{xz}}}$] The correctness of the row follows directly from Lemma \ref{l:aoe_htb}.
\end{description}
\end{enumerate}
\end{proof}

\begin{table}[ht!]
\centering
\begin{tabular}{|c|c|c|}
\hline
Overnode Child & Undernode Children & Result \\ \hline
$\mathbf{\overset{0}{x}\overset{-}{z}t}$ & $\mathbf{\underline{\overset{+}{x}\overset{+}{z}}}$ & $\mathbf{\overset{+}{x}\overset{+}{z}t}$ \\ \hline
$\mathbf{\overset{0}{x}\overset{-}{z}t}$ & $\mathbf{\underline{\overset{+}{x}\overset{0}{z}}}$ & $\mathbf{\overset{+}{x}\overset{0}{z}t}$ \\ \hline
$\mathbf{\overset{0}{x}\overset{-}{z}t}$ & $\mathbf{\underline{\overset{0}{x}\overset{+}{z}}}$ & $\mathbf{\overset{-}{x}\overset{+}{z}t}$ \\ \hline
$\mathbf{\overset{0}{x}\overset{-}{z}t}$ & $\mathbf{\underline{\overset{+}{x}\overset{-}{z}}}$ & $\mathbf{\overset{+}{x}\overset{0}{z}t}$ \\ \hline
$\mathbf{\overset{0}{x}\overset{-}{z}t}$ & $\mathbf{\underline{\overset{0}{x}\overset{0}{z}}}$ & $\mathbf{\overset{0}{x}\overset{-}{z}t}$ \\ \hline
$\mathbf{\overset{0}{x}\overset{-}{z}t}$ & $\mathbf{\underline{\overset{-}{x}\overset{+}{z}}}$ & $\mathbf{\overset{0}{x}\overset{-}{z}t}$ / $\mathbf{\overset{-}{x}\overset{+}{z}t}$ \\ \hline
$\mathbf{\overset{0}{x}\overset{-}{z}t}$ & $\mathbf{\underline{\overset{-}{xz}}}$ & $\mathbf{\overset{0}{x}\overset{-}{z}t}$ \\ \hline
\end{tabular}
\caption{Aggregation Table \kwd{AND} Overnode}\label{t:ad2_alt_aoa_0-t}
\end{table}

\begin{proof}[Table \ref{t:ad2_alt_aoa_0-t}]
\begin{enumerate}
\item From Definition \ref{def:gsc_x0z-}, it follows that Overnode Child contains an execution containing a $k$ with no $z$ on its right, and not containing any $x$ or $y$.
\item The aggregation is of type $\kwd{AND}$ and from Definition \ref{def:ser}, it follows that each execution resulting from the aggregation consists of an execution of Overnode Child interleaved with an execution of Undernode Children.
\item For Undernode Children:
\begin{description}
\item[$\mathbf{\underline{\overset{+}{x}\overset{+}{z}}}$] The correctness of the row follows directly from Lemma \ref{l:aoe_super}.
\item[$\mathbf{\underline{\overset{+}{x}\overset{0}{z}}}$] 
\begin{enumerate}
\item From Definition \ref{def:isp_1}, it follows that \emph{Overnode's Undernode Children} contains an execution containing a $x$ and no $y$ or $z$.
\item From 1., 2. and (a), it follows that the $x$ from Undernode Children can be placed immediately on the left of the $t$ from the Overnode Child.
\item From (b) and Definition \ref{def:gsc_x+z0}, it follows that the classification of the aggregation is correct.
\end{enumerate}
\item[$\mathbf{\underline{\overset{0}{x}\overset{+}{z}}}$] 
\begin{enumerate}
\item From Definition \ref{def:isp_1'}, it follows that \emph{Overnode's Undernode Children} contains an execution containing a $z$ and no $y$ or $x$.
\item From 1., 2. and (a), it follows that the $z$ from Undernode Children can be placed immediately on the left of the $t$ from the Overnode Child.
\item From (b) and Definition \ref{def:gsc_x-z+}, it follows that the classification of the aggregation is correct.
\end{enumerate}
\item[$\mathbf{\underline{\overset{+}{x}\overset{-}{z}}}$] 
\begin{enumerate}
\item From Definition \ref{def:isp_1r}, it follows that \emph{Overnode's Undernode Children} contains an execution containing a $x$ and a $y$ on its right, while not containing $z$.
\item From 1., 2. and (a), it follows that the $x$ from Undernode Children can be placed immediately on the left of the $t$ from the Overnode Child, while the $y$ from the Undernode Children is set on the right of the $t$.
\item From (b) and Definition \ref{def:gsc_x+z0}, it follows that the classification of the aggregation is correct.
\end{enumerate}
\item[$\mathbf{\underline{\overset{0}{x}\overset{0}{z}}}$] The correctness of the row follows directly from Lemma \ref{l:aoe_neutral}.
\item[$\mathbf{\underline{\overset{-}{x}\overset{+}{z}}}$] 
\begin{enumerate}
\item From Definition \ref{def:isp_1'l}, it follows that \emph{Overnode's Undernode Children} contains an execution containing a $z$ and a $y$ on its left, while not containing $x$.
\item From 1., 2. and (a), it follows that the $z$ from Undernode Children can be placed immediately on the left of the $t$ from the Overnode Child.
\item From 1., 2. and (a), it follows that the whole execution from Undernode Children can be placed on the right of the $t$ from the Overnode Child.
\item From (b) and Definition \ref{def:gsc_x-z+}, it follows that $\mathbf{\overset{-}{x}\overset{+}{z}t}$ is a valid classification of the aggregation.
\item From (c) and Definition \ref{def:gsc_x0z-}, it follows that $\mathbf{\overset{0}{x}\overset{-}{z}t}$ is a valid classification of the aggregation.
\item From (d), (e) and the preference orders in Figure \ref{f:ad2_alt_sequndernodeclasses}, it follows that both classifications are viable.
\end{enumerate}
\item[$\mathbf{\underline{\overset{-}{xz}}}$] The correctness of the row follows directly from Lemma \ref{l:aoe_htb}.
\end{description}
\end{enumerate}
\end{proof}

\begin{table}[ht!]
\centering
\begin{tabular}{|c|c|c|}
\hline
Overnode Child & Undernode Children & Result \\ \hline
$\mathbf{\overset{-}{xz}t}$ & $\mathbf{\underline{\overset{+}{x}\overset{+}{z}}}$ & $\mathbf{\overset{+}{x}\overset{+}{z}t}$ \\ \hline
$\mathbf{\overset{-}{xz}t}$ & $\mathbf{\underline{\overset{+}{x}\overset{0}{z}}}$ & $\mathbf{\overset{+}{x}\overset{0}{z}t}$ \\ \hline
$\mathbf{\overset{-}{xz}t}$ & $\mathbf{\underline{\overset{0}{x}\overset{+}{z}}}$ & $\mathbf{\overset{-}{x}\overset{+}{z}t}$\\ \hline
$\mathbf{\overset{-}{xz}t}$ & $\mathbf{\underline{\overset{+}{x}\overset{-}{z}}}$ & $\mathbf{\overset{+}{x}\overset{0}{z}t}$ \\ \hline
$\mathbf{\overset{-}{xz}t}$ & $\mathbf{\underline{\overset{0}{x}\overset{0}{z}}}$ & $\mathbf{\overset{-}{xz}t}$ \\ \hline
$\mathbf{\overset{-}{xz}t}$ & $\mathbf{\underline{\overset{-}{x}\overset{+}{z}}}$ &  $\mathbf{\overset{-}{x}\overset{+}{z}t}$ \\ \hline
$\mathbf{\overset{-}{xz}t}$ & $\mathbf{\underline{\overset{-}{xz}}}$ & $\mathbf{\overset{-}{xz}t}$ \\ \hline
\end{tabular}
\caption{Aggregation Table \kwd{AND} Overnode}\label{t:ad2_alt_aoa_-t}
\end{table}

\begin{proof}[Table \ref{t:ad2_alt_aoa_-t}]
\begin{enumerate}
\item From Definition \ref{def:gsc_x0z0}, it follows that every execution of the Overnode Child contains a $y$ and a $k$ with no $x$ and $z$ on their respective rights, OR contains a $y$ with no $x$ on its right, and no $z$ or $k$.
\item The aggregation is of type $\kwd{AND}$ and from Definition \ref{def:ser}, it follows that each execution resulting from the aggregation consists of an execution of Overnode Child interleaved with an execution of Undernode Children.
\item For Undernode Children:
\begin{description}
\item[$\mathbf{\underline{\overset{+}{x}\overset{+}{z}}}$] The correctness of the row follows directly from Lemma \ref{l:aoe_super}.
\item[$\mathbf{\underline{\overset{+}{x}\overset{0}{z}}}$] 
\begin{enumerate}
\item From Definition \ref{def:isp_1}, it follows that \emph{Overnode's Undernode Children} contains an execution containing a $x$ and no $y$ or $z$.
\item From 1., 2. and (a), it follows that the $x$ from Undernode Children can be placed immediately on the left of the $t$ from the Overnode Child.
\item From (b) and Definition \ref{def:gsc_x+z0}, it follows that the classification of the aggregation is correct.
\end{enumerate}
\item[$\mathbf{\underline{\overset{0}{x}\overset{+}{z}}}$] 
\begin{enumerate}
\item From Definition \ref{def:isp_1'}, it follows that \emph{Overnode's Undernode Children} contains an execution containing a $z$ and no $y$ or $x$.
\item From 1., 2. and (a), it follows that the $z$ from Undernode Children can be placed immediately on the left of the $t$ from the Overnode Child.
\item From (b) and Definition \ref{def:gsc_x-z+}, it follows that the classification of the aggregation is correct.
\end{enumerate}
\item[$\mathbf{\underline{\overset{+}{x}\overset{-}{z}}}$] 
\begin{enumerate}
\item From Definition \ref{def:isp_1r}, it follows that \emph{Overnode's Undernode Children} contains an execution containing a $x$ and a $y$ on its right, while not containing $z$.
\item From 1., 2. and (a), it follows that the $x$ from Undernode Children can be placed immediately on the left of the $t$ from the Overnode Child, while the $y$ from the Undernode Children is set on the right of the $t$.
\item From (b) and Definition \ref{def:gsc_x+z0}, it follows that the classification of the aggregation is correct.
\end{enumerate}
\item[$\mathbf{\underline{\overset{0}{x}\overset{0}{z}}}$] The correctness of the row follows directly from Lemma \ref{l:aoe_neutral}.
\item[$\mathbf{\underline{\overset{-}{x}\overset{+}{z}}}$] 
\begin{enumerate}
\item From Definition \ref{def:isp_1'l}, it follows that \emph{Overnode's Undernode Children} contains an execution containing a $z$ and a $y$ on its left, while not containing $x$.
\item From 1., 2. and (a), it follows that the $z$ from Undernode Children can be placed immediately on the left of the $t$ from the Overnode Child.
\item From (b) and Definition \ref{def:gsc_x-z+}, it follows that the classification of the aggregation is correct.
\end{enumerate}
\item[$\mathbf{\underline{\overset{-}{xz}}}$] The correctness of the row follows directly from Lemma \ref{l:aoe_htb}.
\end{description}
\end{enumerate}
\end{proof}

\newpage
\subsection{Interleaved Generic Pattern}

\begin{definition}[$\mathbf{\overset{+}{k}}$]\label{def:igp_+} generic pattern fulfilling state: there exists an execution belonging to the process block containing a task having $d$ annotated, and not containing the failing annotation.

\noindent\textbf{Formally}: 

\noindent Given a process block $B$, it belongs to this class if and only if:
\begin{itemize}
\item $\exists \exe \in \Exe{B}$ such that:
\begin{itemize}
\item $\exists t_k \in \exe$ and
\item $\not\exists t_y \in \exe$ such that $t_y$ is the same task as $t_k$.
\end{itemize}
\end{itemize}
\end{definition}

\begin{definition}[$\mathbf{\overset{0}{k}}$]\label{def:igp_0} neutral state: the block cannot be classified as $\mathbf{\overset{+}{k}}$.

\noindent\textbf{Formally}: 

\noindent Given a process block $B$, it belongs to this class if and only if:
\begin{itemize}
\item $\forall \exe \in \Exe{B}$ such that either:
\begin{itemize}
\item $\not\exists t_k \in \exe$ or,
\item $\forall t_k \in \exe, \exists t_y$ such that $t_k$ and $t_y$ are the same task.
\end{itemize}
\end{itemize}
\end{definition}

\begin{theorem}[Classification Completeness for Interleaved Generic Pattern]
The set of possible evaluations of Interleaved Generic Pattern is completely covered by the provided set of classifications. 
\end{theorem}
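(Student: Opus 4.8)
The plan is to follow exactly the template established by the earlier completeness results --- in particular the Classification Completeness theorem for the Left Sub-Pattern and its analogues for the Right and Interval Sub-Patterns --- while exploiting the fact that the Interleaved Generic Pattern is strictly simpler than any of those. First I would observe that, unlike the Interval Sub-Pattern whose two independent sides yield $3^2$ combinations, the Interleaved Generic Pattern $igp(k,y)$ imposes \emph{no} ordering constraint between its elements. The only question the pattern can ever ask of a process block is whether the block admits an execution that contributes a beneficial $k$ that is not neutralised by a co-located $y$. Hence the pattern admits exactly two possible evaluations: fulfilling or not fulfilling that single condition.

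Second, I would argue that these two evaluations correspond precisely to the two provided classes. The class $\mathbf{\overset{+}{k}}$ (Definition~\ref{def:igp_+}) formalises the affirmative case, namely an execution containing a task with $k$ annotated and no $y$ on that same task. The class $\mathbf{\overset{0}{k}}$ (Definition~\ref{def:igp_0}) is defined directly as ``the block cannot be classified as $\mathbf{\overset{+}{k}}$''; unpacking its formal statement shows that it is the exact logical negation of the existential condition in Definition~\ref{def:igp_+} --- either no execution carries a $k$-task, or in every execution each $k$-task is co-located with a $y$. Consequently the two classes are mutually exclusive and jointly exhaustive by construction.

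From these two observations the completeness follows immediately: every process block falls into precisely one of $\mathbf{\overset{+}{k}}$ or $\mathbf{\overset{0}{k}}$, so the provided set of classifications covers every possible evaluation of the Interleaved Generic Pattern. I structure the write-up as a short enumerated argument mirroring the Left Sub-Pattern completeness proof, first counting the admissible evaluations as $2$ and then matching them to the two classes. I do not expect a genuine obstacle here; the only point demanding care, and the step I would double-check, is the bookkeeping verification that the quantifier structure of Definition~\ref{def:igp_0} is literally the dual of Definition~\ref{def:igp_+} --- in particular that the ``same task as $t_k$'' clause negates correctly --- since this is exactly what guarantees that the two classes \emph{partition} the space of evaluations rather than merely cover it.
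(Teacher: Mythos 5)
Your proposal is correct and follows essentially the same route as the paper's own proof, which simply observes that the classification is binary and that Definition~\ref{def:igp_+} and Definition~\ref{def:igp_0} are mutually exclusive and complementary by construction. Your additional care in checking that the quantifier structure of $\mathbf{\overset{0}{k}}$ is the literal negation of $\mathbf{\overset{+}{k}}$ is a welcome elaboration of that one-line argument, but not a different method.
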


\begin{proof}
As the classification for the Interleaved Generic Pattern is binary, we can see that the completeness is given by the two mutually exclusive and complementary classes in Definition \ref{def:igp_+} and Definition \ref{def:igp_0}.
\end{proof}

\subsubsection{Aggregations}

\begin{table}[ht!]
\centering
\begin{tabular}{|c|c|c|c|}
\hline
A & B & \kwd{SEQ}(A, B) & \kwd{AND}(A, B) \\ \hline
$\mathbf{\overset{0}{k}}$ & $\mathbf{\overset{0}{k}}$ & $\mathbf{\overset{0}{k}}$ & $\mathbf{\overset{0}{k}}$\\ \hline
$\mathbf{\overset{0}{k}}$ & $\mathbf{\overset{+}{k}}$ & $\mathbf{\overset{+}{k}}$ & $\mathbf{\overset{+}{k}}$  \\ \hline
$\mathbf{\overset{+}{k}}$ & $\mathbf{\overset{0}{k}}$ & $\mathbf{\overset{+}{k}}$ & $\mathbf{\overset{+}{k}}$ \\ \hline
$\mathbf{\overset{+}{k}}$ & $\mathbf{\overset{+}{k}}$ & $\mathbf{\overset{+}{k}}$ & $\mathbf{\overset{+}{k}}$  \\ \hline
\end{tabular}
\caption{Interleaved Generic Pattern Aggregation}\label{t:md1_ataol_app}
\end{table}

\begin{proof}[Table \ref{t:md1_ataol_app}]
The table checks whether exists a task with a given property in a possible execution of the block being evaluated. Therefore either the task exists ($\mathbf{\overset{+}{k}}$) and it therefore keeps existing in every further aggregation, or it does not ($\mathbf{\overset{0}{k}}$) but the classification can still be overwritten when it is aggregated with a positive classification.
\end{proof}

\begin{table}[ht!]
\centering
\begin{tabular}{|c|c|c|}
\hline
Overnode's Overnode Child & Overnode's Undernode Children & Result \\ \hline
$\mathbf{\overset{-}{x/z}t}$ & $\mathbf{\overset{-}{x}}$ & $\mathbf{\overset{0}{x}t}$ \\ \hline
$\mathbf{\overset{-}{x/z}t}$ & $\mathbf{\overset{-}{x}}$ & $\mathbf{\overset{+}{x}t}$ \\ \hline
$\mathbf{\overset{0}{x/z}t}$ & $\mathbf{\overset{0}{k}}$ & $\mathbf{\overset{0}{x}t}$ \\ \hline
$\mathbf{\overset{0}{x/z}t}$ & $\mathbf{\overset{+}{k}}$ & $\mathbf{\overset{+}{x}t}$ \\ \hline
$\mathbf{\overset{+}{x/z}t}$ & $\mathbf{\overset{0}{k}}$ & $\mathbf{\overset{+}{x}t}$ \\ \hline
$\mathbf{\overset{+}{x/z}t}$ & $\mathbf{\overset{+}{k}}$ & $\mathbf{\overset{+}{x}t}$ \\ \hline
\end{tabular}
\caption{Aggregating Interleaved Generic Pattern with Left/Right Sub-Pattern Classes for Overnodes}\label{t:md1_ataol_stem_app}
\end{table}

\begin{proof}[Table \ref{t:md1_ataol_stem_app}]
Table \ref{t:md1_ataol_stem_app} shows how the Interleaved Generic Pattern classification of the undernodes is aggregated with the Left/Right Sub-Pattern classification of an overnode. The correctness of these aggregations follows directly from the correctness of the \kwd{AND} columns of Table \ref{tab:atsol} and Table \ref{tab:atsor} when considering one of the columns $A$ or $B$ containing the equivalent classification for the Interleaved Generic Pattern, and the other the respective version of the overnode class for the Left or Right Sub-Pattern respectively.
\end{proof}
\newpage

\end{document}